\documentclass[11pt,a4paper]{amsart}
\usepackage[foot]{amsaddr}
\setlength\marginparwidth{2cm}
\usepackage{ifxetex}
\ifxetex
  \usepackage[no-math]{fontspec}
\else
\fi
\usepackage{tikz}
\usetikzlibrary{arrows.meta, positioning, shapes, calc}
\usepackage{amsmath}
\usepackage{amsfonts}
\usepackage{amssymb}
\usepackage{amsthm}

\usepackage{fullpage}
\usepackage{microtype}

\ifxetex 
  \usepackage[libertine]{newtxmath}
\else
  \usepackage{newtxmath}
\fi
\usepackage[tt=false]{libertine} 
\usepackage{caption}
\usepackage{tcolorbox}
\usepackage{xspace}
\usepackage{bbm}
\usepackage{hyperref, color}
\hypersetup{colorlinks=true,citecolor=blue, linkcolor=blue, urlcolor=blue}
\usepackage[linesnumbered,boxed,ruled,vlined]{algorithm2e}
\usepackage{bm}
\usepackage{bbm}
\usepackage[numbers]{natbib}
\usepackage{xcolor}
\usepackage{enumerate} 
\usepackage{enumitem}
\usepackage{ragged2e}
\usepackage{tabularx}
\usepackage{array}
\usepackage{longtable}
\usepackage{hhline}
\usepackage{multirow}
\usepackage{cleveref}

\newcolumntype{L}[1]{>{\raggedright\arraybackslash}p{#1}}
\newcolumntype{C}[1]{>{\centering\arraybackslash}m{#1}}
\newcolumntype{R}[1]{>{\raggedleft\arraybackslash}p{#1}}

\usepackage{makecell}

\usepackage{footnote}
\usepackage{float}
\makesavenoteenv{tabular}

\renewcommand{\epsilon}{\varepsilon}

\newtheorem{theorem}{Theorem}[section]

\newtheorem*{claim*}{Claim}
\newtheorem{condition}[theorem]{Condition}

\newtheorem{lemma}[theorem]{Lemma}

\newtheorem{corollary}[theorem]{Corollary}
\theoremstyle{definition}

\newtheorem{definition}[theorem]{Definition}
\newtheorem{remark}[theorem]{Remark}
\newtheorem*{remark*}{Remark}

\renewcommand{\Pr}[2][]{ \ifthenelse{\isempty{#1}}
  {\mathop{\mathbf{Pr}}\left[#2\right]} {\mathop{\mathbf{Pr}}_{#1}\left[#2\right]} }

\newcommand{\abs}[1]{\left\vert#1\right\vert}
\newcommand{\set}[1]{\left\{#1\right\}}
\newcommand{\tuple}[1]{\left(#1\right)} 
\newcommand{\inbr}[1]{\left[#1\right]}

\newcommand{\eps}{\varepsilon}
\newcommand{\tp}{\tuple}

\newcommand{\defeq}{\triangleq}
\newcommand{\supp}{\mathrm{supp}}

\def\*#1{\bm{#1}} 
\def\+#1{\mathcal{#1}} 
\def\-#1{\mathrm{#1}} 
\def\=#1{\mathbb{#1}} 
\def\`#1{\mathfrak{#1}}

\usepackage[textsize=tiny]{todonotes}

\newcommand{\yxtodo}[1]{\todo[color=gray!30]{yyx: #1}}
\usepackage{xifthen}

\makeatletter
\def\prob#1#2#3{\goodbreak\begin{list}{}{\labelwidth\z@ \itemindent-\leftmargin
                        \itemsep\z@  \topsep6\p@\@plus6\p@
                        \let\makelabel\descriptionlabel}
                \item[\it Name]#1
               \item[\it Instance]                #2
                \item[\it Output]#3
                \end{list}}
\makeatother

\newcommand{\upd}{\textnormal{\textsf{pred}}}

\newcommand{\ts}{\textnormal{\textsf{TS}}}

\newcommand{\vbl}{\mathsf{vbl}}

\newcommand{\cbad}{\+C^{\-{bad}}}
\newcommand{\tbad}{\+T^{\-{bad}}}
  
\usepackage{enumitem}

\newcommand{\dist}{\mathrm{dist}}

\newcommand{\Lovasz}{Lov\'asz\xspace}

\newcommand{\BadTS}[1]{\textnormal{\textsf{BadTS}}\tuple{#1}} 

\newbool{doubleblind}
\setbool{doubleblind}{true}

\title{Phase Transitions via Complex Extensions of Markov Chains}

\author{Jingcheng Liu, Chunyang Wang, Yitong Yin, Yixiao Yu}
\address{State Key Laboratory for Novel Software Technology, New Cornerstone Science Laboratory, Nanjing University, 163 Xianlin Avenue, Nanjing, Jiangsu Province, 210023, China. \textnormal{E-mail: \texttt{liu@nju.edu.cn}, \texttt{wcysai@smail.nju.edu.cn}, \texttt{yinyt@nju.edu.cn}, \texttt{yixiaoyu@smail.nju.edu.cn}}. JL is supported by the National Science Foundation of China under Grant No. 62472212.}

\begin{document}

\begin{abstract}
We study algebraic properties of partition functions, particularly the location of zeros, through the lens of rapidly mixing Markov chains. The classical Lee-Yang program initiated the study of phase transitions via locating complex zeros of partition functions. Markov chains, besides serving as algorithms, have also been used to model physical processes tending to equilibrium. In many scenarios, rapid mixing of Markov chains coincides with the absence of phase transitions (complex zeros). Prior works have shown that the absence of phase transitions implies rapid mixing of Markov chains. We reveal a converse connection by lifting probabilistic tools for the analysis of Markov chains to study complex zeros of partition functions.

Our motivating example is the independence polynomial on $k$-uniform hypergraphs, where the best-known zero-free regime has been significantly lagging behind the regime where we have rapidly mixing Markov chains for the underlying hypergraph independent sets. Specifically, the Glauber dynamics is known to mix rapidly on independent sets in a $k$-uniform hypergraph of maximum degree $\Delta$ provided that $\Delta \lesssim 2^{k/2}$. On the other hand, the best-known zero-freeness around the point $1$ of the independence polynomial on $k$-uniform hypergraphs requires $\Delta \le 5$, the same bound as on a graph. 

By introducing a complex extension of Markov chains, we lift an existing percolation argument to the complex plane, and show that if $\Delta \lesssim 2^{k/2}$, the Markov chain converges in a complex neighborhood, and the independence polynomial itself does not vanish in the same neighborhood. In the same regime, our result also implies central limit theorems for the size of a uniformly random independent set, and deterministic approximation algorithms for the number of hypergraph independent sets of size $k \le \alpha n$ for some constant $\alpha$.
\end{abstract}	

\maketitle

\setcounter{tocdepth}{1}
\tableofcontents
\setcounter{page}{0}
\clearpage

\section{Introduction}

More than a few important recent advances in theoretical computer science, in combinatorics and probability theory, have been made possible through locating the zeros of suitably chosen multivariate polynomials.
These include improved approximation algorithms for the traveling salesman problem~\cite{gharan2011randomized,karlin2021slightly}, construction of Ramanujan graphs of every degree~\cite{marcus2015interlacing1,marcus2015interlacing2}, deterministic approximate counting algorithms for spin systems~\cite{barvinok2016combinatorics,patel2017deterministic,Liu2017TheIP,liu2019correlation}, an algebraic proof of a generalization of the van der Waerden Conjecture~\cite{gurvits2006van}, a resolution of the long-standing Kadison-Singer conjecture~\cite{marcus2018interlacing}, and notably the theory of negatively dependent random variables~\cite{Borcea2007NegativeDA}. 
Furthermore, there has been a fruitful line of work that exploits a more general form of geometry, notably the development of log-concave polynomials and Lorentzian polynomials, which have led to novel analyses of Markov chains and the resolution of Mason's conjecture~\cite{anari2018log,branden2020lorentzian}. 

The development of multivariate stability theory dates back to the famous Lee-Yang program~\cite{Lee1952StatisticalTO} in statistical physics. In their seminal work, Lee and Yang initiated the study of phase transitions through the location of complex zeros of the partition function while also establishing identities relating key physical quantities to the density function of zeros.
A key insight is that to understand the macroscopic properties of a system at the thermodynamic limit (that is, as the size of the system tends to infinity), one studies the complex zeros in a neighborhood for any finite systems so as to determine whether the quantities of interest remain analytic or can have a discontinuity. One key quantity of particular interest is the so-called free-energy density.
There have also been various generalizations and extensions of Lee-Yang type theorem in statistical physics and combinatorics~\cite{lieb1981general,Heilmann1972TheoryOM,wagner2009weighted},  Chernoff bounds~\cite{kyng2018matrix}, asymptotic normality~\cite{kahn2000normal} and central limit theorems~\cite{lebowitz2016central,michelen2019central,vishesh2022approximate}. Stability theory for a univariate polynomial is also extensively studied in control theory and can be traced back to the famous Routh-Hurwitz criterion~\cite{routh1877treatise,hurwitz1895conditions}. 

Roughly speaking, a phase transition occurs when the macroscopic property of a system is not fully determined by local interactions in the thermodynamic limit (that is, there could be multiple phases). To formalize such a notion, three types of mathematical definitions have been studied: 
\begin{enumerate}
    \item \emph{Probabilistic:} Conditions under which a Gibbs distribution exhibits decay of long-range correlations with respect to distance.
    \item \emph{Algebraic:} Conditions under which a partition function vanishes in the thermodynamic limit. This is also Lee-Yang's view of phase transition.
    \item \emph{Algorithmic:} Conditions under which a spin system out-of-equilibrium quickly returns to thermal equilibrium; in particular, when does a Glauber dynamics mix rapidly to the Gibbs distribution.
\end{enumerate}

Notably, Glauber dynamics can be seen as both a model of physical processes tending to equilibrium, and also an algorithm that can be efficiently simulated.
To this date, each of these distinct-looking definitions has seen fruitful algorithmic applications,  giving rise to algorithms based on the decay of correlations~\cite{weitz06counting}, the absence of zeros~\cite{barvinok2016combinatorics}, and  the direct simulation of Glauber dynamics. 
Numerous efforts have been made to understand the relationship between these three definitions and their relative strengths.
For amenable graphs such as lattices, 
Dobrushin and Shlosman~\cite{dobrushin1985completely,dobrushin1987completely} studied the first two types in the form of \emph{complete analyticity} and showed that they are equivalent.
 Stroock and Zegarlinski~\cite{stroock1992logarithmic} showed the equivalence of all three types via log-Sobolev inequalities.
These analyses crucially rely on the amenability of the lattices.

In more general settings, Barvinok~\cite{barvinok2019personal} posed an open question concerning establishing the absence of zeros from the analysis of \emph{any} rapidly mixing Markov chain. 
A key challenge,  as pointed out by Barvinok, is that while an inverse polynomial spectral gap is sufficient to prove the rapid mixing of Markov chains,
a constant radius of zero-free region is often desired for practical applications. 
Until now, little progress has been made in this specific direction.
In contrast, the other direction has seen more success.
Assuming decay of correlation in the form of \emph{contraction}, the absence of zeros follows from the contraction method~\cite{peter2019conjecture,liu2019correlation,Shao2019ContractionAU}.
Furthermore, \cite{anari2020spectral,chen2020rapid} showed that contraction also implies rapid mixing of Glauber dynamics. 
Additionally, the absence of zeros has been shown to imply the decay of correlations (in the form of strong spatial mixing) for self-reducible problems~\cite{gamarnik2023correlation,regts2023absence} and to imply rapid mixing of Glauber dynamics~\cite{Alimohammadi2021FractionallyLA,Chen2021SpectralIV} through a different form of correlation decay known as \emph{bounded total influence}.  Moreover, rapid mixing is known to imply spectral independence~\cite{anari2024universality}, which can be seen as a form of bounded correlations.

We give a rough summary of the state-of-the-art in~\Cref{fig:three-phase-transitions}. 

\begin{figure}[H]
    \centering
    \hspace{.5cm}
    \begin{tikzpicture}[node distance=3cm, thick][shift={(10,10)}]
        \tikzset{>={Latex[width=3mm,length=3mm]}}
          \node (zero) [draw, rounded corners, minimum width=2.5cm, minimum height=0.8cm, font=\large] {zero-freeness};
          \node (mixing) [draw, rectangle, below left=of zero, rounded corners, minimum width=2.5cm, minimum height=0.8cm, font=\large] {rapid mixing};
          \node (decay) [draw, rectangle, below right=of zero, rounded corners, minimum width=2.5cm, minimum height=0.8cm, font=\large] {decay of correlations};
        
          \draw[->] (zero) to[bend left=10] node[above, xshift=0.5cm] {\tiny{\cite{gamarnik2023correlation,regts2023absence}}} (decay);
          \draw[->] (decay) to[bend left=10] node[below, xshift=-0.8cm] {\tiny{\cite{peter2019conjecture,liu2019correlation,Shao2019ContractionAU}}} (zero);
          \draw[->] (decay) to[bend left=10] node[below] {\tiny{\cite{anari2020spectral, chen2020rapid}}} (mixing);
          \draw[->] (mixing) to[bend left=10] node[below] {\tiny{\cite{anari2024universality}}} (decay);
          \draw[->] (zero) to[bend left=10]  node[above, xshift=+1.5cm] {\tiny{\cite{Alimohammadi2021FractionallyLA,Chen2021SpectralIV}}} (mixing);
          \draw[->,color=red, dashed] (mixing) to[bend left=10]  node[above, xshift=-1.4cm] {This work} (zero);
    \end{tikzpicture}
    \caption{A rough summary of connections between three types of phase transitions. We do not distinguish the exact form of phase transitions within each type.}
    \label{fig:three-phase-transitions}
\end{figure}
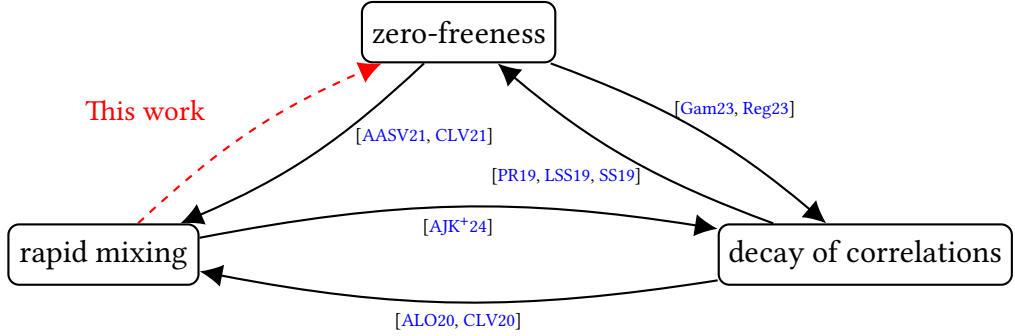

\subsection{Hypergraph independence polynomial}
Our motivating example is the independence polynomial on a $k$-uniform hypergraph.
Given a hypergraph $H=(V,\+E)$, we say that $H$ is $k$-uniform if every hyperedge $e\in \+E$ has size $\abs{e}=k$.
The maximum degree $\Delta$ is the maximum number of hyperedges incident to a vertex.
An \emph{independent set} in $H$ is a subset $S\subset V$ of vertices that does not contain any $e\in E$, that is, every hyperedge $e$ must have at least one endpoint not chosen by $S$. We use $\sigma \in \set{0,1}^n$ to indicate the set $S$, meaning that $\sigma(v)=1$ iff $v \in S$. Let $\+I(H)$ denote the set of independent sets in $H$. Then, the independence polynomial of $H$ is a generating polynomial in the variables $\*\lambda$:
\[
Z_H(\*\lambda) = \sum_{\sigma \in \+I(H)} \prod_{v: \sigma(v)=1} \lambda_v. 
\]
When $k=2$, this is the standard independence polynomial, which has been studied in many branches of mathematics, physics, and computer science.
To name a few, Shearer~\cite{shearer85} obtained instance-optimal sufficient criterion in \Lovasz{} local lemma using the largest root of the independence polynomial; a tight runtime analysis of the celebrated Moser-Tardos algorithm for the algorithmic \Lovasz{} local lemma is characterized by the independence polynomial~\cite{KS11}; independence polynomial is also known as the hardcore model for equilibrium of lattice-gas in statistical physics~\cite{Scott2003TheRL}; it is also the first example where a sharp computational complexity of approximate counting and sampling is known~\cite{weitz06counting,Sly2010computation}.
We will refer to the complex zeros of the independence polynomial in $\*\lambda$ as \emph{Lee-Yang} zeros, as $\*\lambda$'s are playing the role of external fields here.
Henceforth, we denote the above polynomial by $Z^\-{ly}_H(\*\lambda)$.

Scott and Sokal also proposed a soft-core version of independence polynomial~\cite{Scott2003TheRL}, with which they derived a weak dependency version of the local lemma.
This inspired us to study a soft-core independence polynomial parameterized by the interactions:
\[
Z_H^\-{fs}(\*\beta) =  \sum\limits_{S \subset V} \prod\limits_{\substack{e: e \subset S}} \beta_e.
\]
Intuitively, for every hyperedge $e$ completely contained in a set $S$, we assign a ``penalty'' $\beta_e$ for violating the ``hard-core'' constraint.
Zeros in the interaction parameter are also known as \emph{Fisher zeros}~\cite{fisher1965nature}. Compared to Lee-Yang zeros that have been extensively studied, general results for Fisher zeros have been limited until the recent introduction of contraction method~\cite{peter2019conjecture,liu2019correlation,Shao2019ContractionAU}. However, these contraction methods crucially rely on a self-avoiding walk construction, which breaks up the uniformity of hypergraphs and is therefore ill-suited for our purpose. 

The point $\*\lambda=1$ in $Z^\-{ly}_H(\*\lambda)$ and the point $\*\beta=0$ in $Z_H^\-{fs}(\*\beta)$ are of particular interests, as they correspond to the uniform enumeration of independent sets.
This is a prominent
example of models where the known regime of zero-freeness, corresponding to an algebraic phase transition, has significantly lagged behind the known regime where
efficient algorithms are available.
As one of the early examples where approximate counting and sampling algorithms were devised for a constraint satisfaction problem under a local lemma type condition, \cite{HSZ19} showed that the Glauber dynamics on independent sets of $k$-uniform hypergraph mixes rapidly provided the maximum degree $\Delta \lesssim 2^{k/2}$, and this is matched, up to the leading constants, by an earlier NP-hardness result for $\Delta \gtrsim 2^{k/2}$ in~\cite{NPH}.
Since then, several developments have followed suit, including perfect samplers \cite{HSW21,qiu2022perfect} and a local sampler \cite{feng2023towards}, all within similar regimes, albeit with poly$(k)$ factors.
Remarkably, the latter can be derandomized to yield a deterministic approximate counting algorithm.

For zero-freeness results, however,  progress has lagged significantly despite numerous efforts.
While there is a rich literature on models with pairwise interactions, 
understanding of the more physically  relevant regime involving higher-order interactions remains limited, 
and techniques for locating complex zeros for higher-order interactions are less developed. 
Only recently, Galvin, McKinley, Perkins, Sarantis and Tetali~\cite{galvin2024zeroes} established the existence of a zero-free disk for $\lambda$ centered at the origin with radius $\approx \frac{1}{\mathrm{e}\Delta}$ for hypergraphs of maximum degree $\Delta$.  
Later, Bencs and Buys~\cite{bencs2023optimal}  improved this result to match Shearer's bound for the independence polynomial on graphs. 
For zero-freeness in a complex neighborhood around the positive real axis, 
it implicitly follows  from the lifting paradigm of~\cite{liu2019correlation,Shao2019ContractionAU} applied to the contraction method of~\cite{liu2015fptas,lu2016fptas}, that zero-freeness holds around $\*\lambda=1$ for $\Delta \le 5$.
This is also carried out more explicitly by~\cite{li2021complex,bencs2023optimal}. 
Despite these advances, a significant gap remains compared to the algorithmic transition, which holds up to $\Delta \lesssim 2^{k/2}$. Essentially, existing techniques for proving zero-freeness are insufficient to exploit the uniformity of hyperedges. 

\subsection{Our contributions}
We demonstrate how one can establish the absence of complex zeros through a powerful probabilistic tool in the analysis of Markov chains: percolation applied to a complex extension of Markov chains. 
We show zero-free regions for both the Lee-Yang zeros of $Z^\-{ly}_H(\*\lambda)$, and the Fisher zeros of $Z_H^\-{fs}(\*\beta)$, in regimes that match the algorithmic transition of $\Delta \lesssim 2^{k/2}$ up to poly$(k)$ factors.
We also show convergence of a systematic scan Glauber dynamics with \emph{complex transition weights} in the same regime. To the best of our knowledge, this is the first such result for a complex dynamics.


\begin{theorem}[Lee-Yang zeros of hypergraph independence polynomial]\label{theorem:zero-freeness-his-lee-yang}
Fix $k\ge 2$ and $\Delta\ge 3$. 
Let $0<\epsilon<\frac{1}{9k^5\Delta^2}$,
and let $[0,\lambda_{c,\varepsilon})$ be the real segment such that the following holds for all $\lambda\in [0,\lambda_{c,\varepsilon})$:

\[
\left(\frac{\lambda+\varepsilon}{1+\lambda-\varepsilon}\right)^{k/2}<\frac{1}{2\sqrt{2}\mathrm{e}\Delta k^2}.
\]

Let $\+D_{\varepsilon}$ be the union of $\eps$-balls around the segment given by $\+D_{\varepsilon}=\left\{z\in\mathbb{C}\mid \exists \lambda\in[0,\lambda_{c,\varepsilon})\text{ s.t. }|z-\lambda|\le \varepsilon \right\}$.

Then, for any $k$-uniform hypergraph $H=(V,\+E)$ with maximum degree $\Delta$, the partition function $Z^\-{ly}_H(\*\lambda)$ is non-zero for all $\*\lambda \in \+D^V_{\varepsilon}$, i.e.~$\forall \*\lambda \in \+D^V_{\varepsilon}, Z^\-{ly}_H(\*\lambda)\neq 0$.

\end{theorem}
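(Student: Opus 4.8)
The plan is to realize the partition function $Z^{\-{ly}}_H(\*\lambda)$ as (up to a nonzero factor) the total weight of a complex-weighted systematic scan Glauber dynamics, and to show that this dynamics converges to a nonzero limit on the complex region $\+D_\varepsilon$ by lifting the percolation argument of \cite{HSZ19} to the complex plane. Concretely, I would fix an ordering $v_1,\dots,v_n$ of $V$ and define a sequence of ``partial partition functions'' $Z^{(0)},Z^{(1)},\dots,Z^{(n)}=Z^{\-{ly}}_H(\*\lambda)$, where $Z^{(t)}$ is obtained from $Z^{(t-1)}$ by a single-site update at $v_t$: conditionally on the spins already assigned, the update reweights the marginal at $v_t$ according to the (complex) hardcore weights $\lambda_{v_t}$ and the hyperedge constraints touching $v_t$. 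The key point is that each single-site update is governed by a conditional marginal probability $p_t \in \mathbb{C}$, and $Z^{\-{ly}}_H(\*\lambda)/Z^{(0)} = \prod_t (\text{normalizing factor}_t)$. Nonvanishing of $Z_H^{\-{ly}}$ is thus equivalent to nonvanishing of every conditional marginal normalizer encountered along the scan, which in turn follows once we control each complex marginal $p_t$ to lie in a small ball around its ``ideal'' real value.

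The technical heart is a \emph{coupling/percolation bound on the complex marginals}. For real parameters, \cite{HSZ19} shows rapid mixing by arguing that when one resamples a vertex $v$, the set of hyperedges that become ``unsatisfied'' or ``dangerous'' forms a subcritical percolation cluster, because $\Delta \lesssim 2^{k/2}$ makes the relevant branching process die out. I would lift this to complex $\*\lambda \in \+D_\varepsilon^V$ as follows. Write each complex weight as a real ``backbone'' value $\lambda \in [0,\lambda_{c,\varepsilon})$ plus a perturbation of size at most $\varepsilon$. Along the scan, maintain the invariant that the conditional marginal at the current vertex deviates from the corresponding real-parameter marginal by at most some $\delta_t$, and show that $\delta_t$ stays bounded (say below $\tfrac12$, keeping marginals away from $0$) by propagating the perturbation through the hyperedges. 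The propagation is exactly a percolation process on the hypergraph: a perturbation at $v$ affects a hyperedge $e \ni v$ only if $e$ is ``tight'' (all other $k-1$ vertices already chosen), and the condition $\bigl(\tfrac{\lambda+\varepsilon}{1+\lambda-\varepsilon}\bigr)^{k/2} < \tfrac{1}{2\sqrt2\, \mathrm{e}\Delta k^2}$ is precisely what guarantees that the expected number of hyperedges made tight, times the number of new vertices they touch, stays well below $1$, so the cluster of affected vertices is a.s.\ finite and the accumulated perturbation is a convergent geometric-type sum. The factors $k^2$ and the constant $\tfrac{1}{9k^5\Delta^2}$ bounding $\varepsilon$ are the slack needed to absorb (i) the $k$ choices of which vertex in $e$ is ``last'', (ii) union bounds over the $\le \Delta$ hyperedges at a vertex, and (iii) the worst-case ratio between a complex marginal and its real counterpart.

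Carrying this out, I would (1) set up the scan and the telescoping identity $Z^{\-{ly}}_H = Z^{(0)}\prod_t c_t$ with $c_t$ the per-step normalizer, and reduce nonvanishing to $c_t \ne 0$ for all $t$ and all conditionings; (2) define, for each vertex and each boundary condition, the ``ideal'' real marginal and prove a one-step contraction/stability estimate showing how a bounded perturbation of the incoming weights yields a bounded perturbation of the outgoing marginal, with the amplification factor controlled by the tightness probability of incident hyperedges; (3) organize these estimates as a subcritical branching/percolation process on hyperedges, using the displayed inequality on $\bigl(\tfrac{\lambda+\varepsilon}{1+\lambda-\varepsilon}\bigr)^{k/2}$ to certify subcriticality; (4) conclude that every $c_t$ lies in a ball around a positive real number not containing $0$, hence $Z^{\-{ly}}_H(\*\lambda) \ne 0$ on $\+D_\varepsilon^V$. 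The main obstacle I anticipate is step (2)–(3): on a hypergraph the single-site update is not a simple ratio as in the graph hardcore model — the conditional marginal at $v$ depends on whether \emph{some} incident hyperedge is already fully occupied by the other $k-1$ vertices, which couples $v$'s update to an event over a growing neighborhood — so making the perturbation bookkeeping match a genuinely subcritical percolation process (rather than merely a subcritical-in-expectation one, which would not suffice for a \emph{constant}-radius zero-free region) is where the real care is required, and is exactly the point where the $2^{k/2}$ threshold enters.
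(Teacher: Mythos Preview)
Your proposal correctly identifies information percolation as the key mechanism and correctly locates the $2^{k/2}$ threshold, but the concrete implementation you describe has two substantial gaps relative to what the paper actually does.

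First, your step (1) conflates two different objects. Running the systematic-scan Glauber dynamics does not produce a telescoping product $Z_H=Z^{(0)}\prod_t c_t$; the dynamics leaves $Z_H$ fixed. The paper's telescoping is \emph{edge-wise} self-reducibility: order the hyperedges $e_1,\dots,e_m$, set $H_i=(V,\{e_1,\dots,e_i\})$, and write $Z_H=Z_{H_0}\prod_i Z_{H_i}/Z_{H_{i-1}}$ with $Z_{H_i}/Z_{H_{i-1}}=1-\mu_{H_{i-1}}(\sigma_{e_i}=1^{e_i})$. The Glauber dynamics enters only as an analytic device for bounding this single marginal: assuming inductively $Z_{H_{i-1}}\ne 0$, one runs the \emph{complex} dynamics for $\mu_{H_{i-1}}$, proves it converges (this is itself nontrivial---there is no Perron--Frobenius for complex transition matrices), and extracts $|\mu_{H_{i-1}}(\sigma_{e_i}=1^{e_i})|<1$ from the limit.

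Second---and this is the real gap---your steps (2)--(3), ``compare to a real backbone and propagate a perturbation $\delta_t$ along tight hyperedges,'' is essentially the contraction method, which the paper explicitly notes gets stuck at $\Delta\le 5$: recursing through conditional marginals on the hypergraph forces you to control a $(k-1)$-variable joint event at each step and destroys the $k$-uniformity. The paper's replacement is percolation on the \emph{space-time witness graph} of the dynamics, not on $H$. Each transition is decomposed into an oblivious part (set $v\gets 0$, complex weight $\tfrac{1}{1+\lambda_v}$) and an adaptive part (weight $\tfrac{\lambda_v}{1+\lambda_v}$). A witness-graph node is a pair $(e,t)$, declared ``bad'' when \emph{all} $k$ vertices of $e$ drew the adaptive branch at their last update; this carries weight $\prod_{v\in e}\tfrac{\lambda_v}{1+\lambda_v}$, of modulus at most $N\le\bigl(\tfrac{\lambda+\varepsilon}{1+\lambda-\varepsilon}\bigr)^k$. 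The $2^{k/2}$ threshold thus enters as a clean product over a hyperedge, never through a recursive marginal estimate. Finally, your ``a.s.\ finite cluster, hence geometric sum'' heuristic does not transfer to complex measures (no monotonicity, no nonnegativity); the paper recovers a usable comparison via a zero-one law: conditioned on the bad component and on $\bm{r}_{\ts(S,0)}$, the event $\{\sigma_0(e_i)=1^{e_i}\}$ has conditional measure in $\{0,1\}$, which is exactly what lets the complex percolation bound on bad $2$-trees dominate the marginal.
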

As a corollary, there is no phase transition in $\lambda \in [0,1]$ up till the ``algorithmic transition'' at $\Delta \lesssim 2^{k/2}$: 
\begin{corollary}
    Let $\delta>0$, $k \ge 2$ and $\Delta\ge3$ be constants with $\Delta\le \frac{1-\delta}{2\sqrt{2}\mathrm{e} k^2} \cdot 2^{\frac{k}{2}}$.
    For any $k$-uniform hypergraph $H=(V,\+E)$ with maximum degree $\Delta$, $Z^\-{ly}_H(\lambda)\neq 0$  around an open strip containing $[0,1]$.
\end{corollary}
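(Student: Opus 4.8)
The plan is to deduce the corollary directly from \Cref{theorem:zero-freeness-his-lee-yang} by choosing the auxiliary parameter $\varepsilon>0$ small enough that the segment $[0,\lambda_{c,\varepsilon})$ strictly contains $[0,1]$, and then specialising the multivariate statement to the diagonal $\*\lambda=(\lambda,\dots,\lambda)$.

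First I would record a monotonicity fact: for any fixed $\varepsilon\in(0,\tfrac12)$ the map $\lambda\mapsto\frac{\lambda+\varepsilon}{1+\lambda-\varepsilon}$, and hence $\lambda\mapsto\bigl(\frac{\lambda+\varepsilon}{1+\lambda-\varepsilon}\bigr)^{k/2}$, is strictly increasing and continuous on $[0,\infty)$, since the derivative of the former equals $\frac{1-2\varepsilon}{(1+\lambda-\varepsilon)^2}>0$. Consequently, if the defining inequality of \Cref{theorem:zero-freeness-his-lee-yang} holds \emph{strictly} at $\lambda=1$, then by continuity it holds for all $\lambda$ in some segment $[0,1+\eta)$ with $\eta>0$, so we may take $\lambda_{c,\varepsilon}=1+\eta>1$ and obtain $[0,1]\subseteq[0,\lambda_{c,\varepsilon})$.

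It remains to choose $\varepsilon$. As $\varepsilon\to 0^{+}$ one has $\bigl(\frac{1+\varepsilon}{2-\varepsilon}\bigr)^{k/2}\to 2^{-k/2}$, while the hypothesis $\Delta\le\frac{1-\delta}{2\sqrt2\,\mathrm{e}k^2}\,2^{k/2}$ rearranges to $2^{-k/2}\le(1-\delta)\cdot\frac{1}{2\sqrt2\,\mathrm{e}\Delta k^2}<\frac{1}{2\sqrt2\,\mathrm{e}\Delta k^2}$; the factor $1-\delta<1$ is precisely the slack that allows a positive $\varepsilon$ to work. Hence there is $\varepsilon=\varepsilon(\delta,k,\Delta)>0$ with both $\varepsilon<\frac{1}{9k^5\Delta^2}$ and $\bigl(\frac{1+\varepsilon}{2-\varepsilon}\bigr)^{k/2}<\frac{1}{2\sqrt2\,\mathrm{e}\Delta k^2}$; fix such an $\varepsilon$. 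By the previous paragraph $\lambda_{c,\varepsilon}>1$, so \Cref{theorem:zero-freeness-his-lee-yang} applies, and the region $\+D_\varepsilon$ contains the open $\varepsilon$-neighbourhood $\{z\in\mathbb C:\dist(z,[0,1])<\varepsilon\}$ of the segment $[0,1]$, which is an open strip around $[0,1]$. Evaluating at the constant field $\*\lambda=(\lambda,\dots,\lambda)\in\+D_\varepsilon^V$ for $\lambda$ in this strip then gives $Z^{\-{ly}}_H(\lambda)=Z^{\-{ly}}_H(\lambda,\dots,\lambda)\neq 0$, which is the claim. The only substantive ingredient is \Cref{theorem:zero-freeness-his-lee-yang} itself; the rest is the elementary monotonicity observation together with verifying that one sufficiently small $\varepsilon$ meets both constraints, and the $(1-\delta)$ slack makes this routine, so I do not anticipate a real obstacle.
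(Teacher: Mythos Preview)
Your proposal is correct and is exactly the intended derivation: the paper does not give a separate proof of the corollary, treating it as an immediate consequence of \Cref{theorem:zero-freeness-his-lee-yang}, and your argument—choosing $\varepsilon$ small via the $(1-\delta)$ slack, using monotonicity of $\lambda\mapsto\frac{\lambda+\varepsilon}{1+\lambda-\varepsilon}$ to push $\lambda_{c,\varepsilon}$ past $1$, and specialising to the diagonal—is precisely how that deduction goes.
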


This is a significant improvement on~\cite{galvin2024zeroes} for $k$-uniform hypergraphs. 

We also prove a Fisher zero-free region.

\begin{theorem}[Fisher zeros of hypergraph independence polynomial]\label{theorem:zero-freeness-his-fisher}
Fix $k\ge 2$ and $\Delta\ge 3$. Let $0<\epsilon < \frac{1}{16(k+1)^5\Delta^2}$ and $\+D_{\varepsilon}=\left\{z\in\mathbb{C}\mid \exists \beta\in[0,1]\text{ s.t. }|z-\beta|\le \varepsilon\right\}$ be the union of $\eps$-balls around $[0,1]$. 
If the following condition holds:
\[
\sqrt{1+2\varepsilon}2^{-k/2}<\frac{1}{2\sqrt{2}\mathrm{e}\Delta (k+1)^2},
\]
then  for any $k$-uniform hypergraph $H=(V,\+E)$ with maximum degree $\Delta$, the partition function $Z^\-{fs}_H(\*\beta)$ is non-zero for all $\*\beta\in \+D^{\+E}_{\varepsilon}$, i.e.~$\forall \*\beta \in \+D^{\+E}_{\varepsilon}, Z^\-{fs}_H(\*\beta)\neq 0$.
\end{theorem}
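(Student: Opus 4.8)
The plan is to obtain $Z^\-{fs}_H(\*\beta)\ne 0$ on $\+D_\varepsilon^{\+E}$ from the convergence of a systematic-scan Glauber dynamics whose single-site transition ``probabilities'' are the complex conditional weights determined by $\*\beta\in\+D_\varepsilon^{\+E}$, by lifting to the complex plane the disagreement-percolation argument behind the rapid-mixing result of~\cite{HSZ19} at $\*\beta=0$. Throughout I would work with the unnormalized weights $w(\sigma)=\prod_{e\subseteq\sigma}\beta_e$ on $\sigma\in\{0,1\}^{V}$, so that $Z^\-{fs}_H(\*\beta)=\sum_\sigma w(\sigma)$, and with the pinned weights obtained by fixing a partial assignment $\tau$. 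The key structural point is that pinning a vertex to $0$ deletes it together with every incident hyperedge, preserving $k$-uniformity and the degree bound on a smaller vertex set, whereas pinning to $1$ merely shrinks the incident hyperedges; so one should run an induction on $\abs V$ over the class $\+H$ of hypergraphs with hyperedges of size between (roughly) $\tfrac k2$ and $k$ and maximum degree at most $\Delta$ satisfying the hypothesis --- this enlargement of the instance class, forced by closure under pinning, is exactly what degrades $k$ to $k+1$ in the constants.

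Granting the inductive hypothesis, every ``one-step normalizer'' --- the partition function $Z^\-{fs}$ of the instance in $\+H$ obtained by pinning all of $V\setminus\{v\}$ arbitrarily --- is nonzero on the set of configurations of nonzero weight (which always contains the all-zero configuration $\mathbf{0}$ and is all that matters), so the single-site operators $P_v$ and the full scan $P=P_{v_1}\cdots P_{v_n}$ are well-defined complex-linear maps on (complex) measures, and a direct computation shows that the weight vector $w(\cdot)$ is $P$-fixed. Since each $P_v$ preserves total mass, the subspace of mass-zero vectors is $P$-invariant, and the crux becomes proving that $P$ restricted to this subspace is a strict contraction in a suitable norm. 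Granting that, $w$ --- a nonzero $P$-fixed vector --- cannot lie in the mass-zero subspace, i.e.\ its total mass $Z^\-{fs}_H(\*\beta)$ is nonzero, which is the theorem; moreover the iterates $\mu_t=\mu_{t-1}P$ from $\mu_0=\delta_{\mathbf{0}}$ satisfy $\mu_t-\mu_{t+1}=(\delta_{\mathbf{0}}-\mu_1)P^{t}$, which is the image under $P^{t}$ of a mass-zero vector and hence decays geometrically, so $\mu_t$ converges to the (now unique, total-mass-one) fixed vector $w(\cdot)/Z^\-{fs}_H(\*\beta)$ --- the asserted convergence of the complex systematic scan.

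The contraction of $P$ on the mass-zero subspace is where the percolation argument gets lifted. Coupling two runs of the complex scan that initially disagree at a single vertex, one tracks the set of vertices whose conditional weights ever differ; as in~\cite{HSZ19} this set is dominated by a cluster that branches along hyperedges, except that now each ``bad'' hyperedge $e$ enters with a weight bounded by $\abs{\beta_e}\le 1+\varepsilon$ rather than by $\beta_e\in[0,1]$, so the induced bound on the (complex) influence of the initial disagreement is a branching sum in which every half-hyperedge carries a factor $\sqrt{1+2\varepsilon}\cdot 2^{-k/2}$ in place of $2^{-k/2}$, together with a combinatorial branching factor $O(\Delta k)$ per level. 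The stated inequality $\sqrt{1+2\varepsilon}\,2^{-k/2}<\tfrac{1}{2\sqrt2\,\mathrm{e}\,\Delta(k+1)^2}$ is precisely the subcriticality condition that makes this branching sum converge with enough room for the $\poly(k)$ overhead and for the shrunken hyperedges of size down to $\approx k/2$ that the class $\+H$ must contain; this yields the contraction, hence the convergence above, hence the theorem. Unlike the Lee--Yang case of \Cref{theorem:zero-freeness-his-lee-yang}, a single inequality suffices here rather than a sub-segment $[0,\lambda_{c,\varepsilon})$, because the per-hyperedge percolation weight for $Z^\-{fs}_H$ is monotone in $\*\beta$ over $[0,1]$: it is largest at $\*\beta=0$, where the model is exactly the uniform distribution on independent sets, and it degenerates entirely as $\*\beta\to\ones$, where $w$ is the trivial product measure.

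I expect the main obstacle to be making the ``complex coupling'' rigorous. In the complex regime the coupled objects are signed/complex measures rather than probability distributions, so there is no monotone coupling, no honest percolation event, and no nonnegativity: the influence of an initial disagreement must be bounded by a \emph{deterministic} weighted sum over hyperedge clusters rather than by the expectation of a genuine cluster size, and one must simultaneously keep the conditional normalizers bounded away from $0$ all the way through --- that is, control the interplay between ``$P$ is well-defined'' and ``$Z^\-{fs}_H(\*\beta)\ne 0$'' across every pinning in $\+H$. The second delicate point is bounding the accumulation of the $O(\varepsilon)$ perturbations along the whole induction, which is what pins down the explicit thresholds $\varepsilon<\tfrac{1}{16(k+1)^5\Delta^2}$ and the $(k+1)$-dependence; past these two issues, the remainder is a faithful transcription of the real-valued percolation analysis (shared with the proof of \Cref{theorem:zero-freeness-his-lee-yang}).
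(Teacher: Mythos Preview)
Your approach is genuinely different from the paper's, and considerably harder. The paper does \emph{not} run a Glauber dynamics on the Fisher-parameterized model at all. Instead it gives a short reduction to the Lee--Yang setting already handled by \Cref{lemma:unified}: for each hyperedge $e$ with $\beta_e\neq 0$, introduce a fresh auxiliary vertex $v_e$ with weight $\lambda_{v_e}=(1-\beta_e)/\beta_e$, enlarge $e$ to $e\cup\{v_e\}$, and set $\lambda_v=1$ for all original vertices. A direct computation gives $Z_{H'}^{\-{ly}}(\*\lambda)=Z_H^{\-{fs}}(\*\beta)\cdot\prod_{e:\beta_e\neq 0}\beta_e^{-1}$, so the two zero-freeness statements are equivalent. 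The $(k+1)$ in every constant comes from this gadget --- $H'$ has maximum edge size $k+1$ --- and \emph{not} from any closure-under-pinning enlargement of the instance class as you suggest. The factor $\sqrt{1+2\varepsilon}$ arises because $\abs{\lambda_{v_e}/(1+\lambda_{v_e})}=\abs{1-\beta_e}\le 1+2\varepsilon$ contributes to $N(H',\*\lambda)$, while the $k$ original vertices of each edge contribute $2^{-k}$. After these bounds one simply checks \eqref{eq:bound-alpha} for $(H',\*\lambda)$ and reruns the edge-wise induction from the proof of \Cref{theorem:zero-freeness-his-lee-yang}.

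Your direct route has two concrete gaps that the reduction sidesteps. First, your class $\+H$ with edge sizes in $[k/2,k]$ is not closed under pinning to $1$: repeatedly pinning vertices of a single edge to $1$ shrinks that edge to arbitrary size, so the induction as stated does not close. Second --- and this is the more serious obstacle hiding behind your ``complex coupling'' caveat --- the paper's percolation machinery for Lee--Yang rests on a zero-one law (\Cref{remark:zero-one-law}): with the decomposition \eqref{eq:base-measure-his}, the adaptive measure $\mu_v^{\tau,\bot}$ is always a point mass, so conditioning on the bad component makes the outcome deterministic and lets one compare moduli of complex measures. In the Fisher model the conditional marginal at $v$ is $1/(1+\prod_{e\in F}\beta_e)$ with $F$ the set of nearly-full incident edges; under the natural choice $b_v(0)=b_v(\bot)=1/2$ the adaptive measure is genuinely $\*\beta$-dependent and non-deterministic, so the zero-one law fails and you would need an honest operator-norm contraction bound on $P$ restricted to mass-zero vectors --- a statement the paper neither proves nor needs.
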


\begin{remark}[implications for deterministic counting]
FPTASes for the partition functions $Z^\-{ly}_{H}(\*\lambda)$ and $Z^\-{fs}_H(\*\beta)$ can be derived from the above zero-freeness results by applying Barvionk's interpolation method~\cite{barvinok2016combinatorics,patel2017deterministic,Liu2017TheIP}. 
The cumulants, such as the average size and variance of a random independent set, can also be approximated through a similar interpolation~\cite{vishesh2022approximate}. 
We note that in the regime of \Cref{theorem:zero-freeness-his-lee-yang} specifically at the point $\*\lambda=1$ 
(or in the regime of \Cref{theorem:zero-freeness-his-fisher} at the point $\*\beta=0$),
an FPTAS for the partition function that counts the number of independent sets in the hypergraph $H$ has already been found in \cite{feng2023towards}, 
 utilizing a different approach based on derandomization.
However, as showcased by the following examples, zero-freeness have much broader applications beyond deterministic approximation of partition functions.
\end{remark}

Through the well-known connection between central limit theorems and the zero-freeness of univariate polynomials \cite{lebowitz2016central,michelen2019central,vishesh2022approximate}, 
we derive central limit theorems for hypergraph independent sets. 
We consider the \emph{Gibbs measure} $\mu_{H,\*\lambda}$ associated with $Z^\-{ly}_H(\*\lambda)$, defined as follows:
\[
\forall \sigma\in \+I(H), \quad \mu_{H,\*\lambda} (\sigma) = \frac{1}{Z^\-{ly}_H(\*\lambda)}\prod_{v:\sigma(v)=1} \lambda_v.
\]
The measure  $\mu_{H,\*\lambda}$ can be analytically continued to the complex plane through a connected zero-free region, that is, regions where $Z^\-{ly}_H(\*\lambda)\neq 0$.  
Our multivariate zero-freeness for the hypergraph independence polynomial is especially powerful. In principle, one can derive a central limit theorem for any univariate projection of the polynomial.
We demonstrate a natural example by giving a quantitative central limit theorem (also known as Berry-Esseen inequality) for the size of a random independent set, drawn from the Gibbs measure of hypergraph independent sets.
Our new zero-free region and central limit theorem (CLT) can be lifted to a local CLT as in~\cite{vishesh2022approximate}. 
We defer the proof to \Cref{sec:clt}. 
\begin{theorem}[Central limit theorem for hypergraph independent sets]
\label{theorem:clt}
Fix $k\ge 2$ and $\Delta \ge 3$.
Let $H = (V, \+E)$ be a $k$-uniform hypergraph with maximum degree $\Delta$. Let $n = \abs{V}$.
Fix any $\delta>0$, $\varepsilon\in\left(0,\frac{1}{9k^5\Delta^2}\right)$. Let $\lambda_{c, \varepsilon}$ be defined as in \Cref{theorem:zero-freeness-his-lee-yang}. 
For any $\lambda \in (\delta, \lambda_{c, \varepsilon}]$, let $I\sim \mu_{H,\lambda}$, and define $X = \abs{I}$, $\bar{\mu} = \=E[X]$ and $\sigma^2 = \-{Var}[X]$. Then we have $\sigma^2 = \Theta_{k, \Delta, \varepsilon}(\lambda n)$ and
\[
\sup_{t \in \=R} \abs{\=P[(X - \bar{\mu})/\sigma \le t] - \=P[\+Z \le t]} = O_{k,\Delta,\delta,\varepsilon}\left(\frac{\log n}{\sqrt{n}}\right),
\]
where $\+Z \sim N(0, 1)$ is a standard Gaussian random variable.

Furthermore, let $\+N(x) = \mathrm{e}^{-x^2/2}/\sqrt{2\pi}$ denote the density of the standard normal distribution, we have
    \[
    \sup\limits_{t \in \=Z}\abs{\=P[X = t] - \sigma^{-1}\+N((t - \bar{\mu})/\sigma)} = O_{k,\Delta, \varepsilon}\tp{\min\tp{\frac{(\log n)^{5/2}}{\sigma^2},\frac{1}{\sigma^2}+\frac{\sigma^{2k}(\log n)^2}{n^{k-1}}}}.
    \]
\end{theorem}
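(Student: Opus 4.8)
The plan is to pass to the univariate polynomial and then run the standard ``zeros $\Rightarrow$ limit theorems'' machinery, the only genuinely model-specific input being a two-sided variance estimate. Fix $\lambda\in(\delta,\lambda_{c,\varepsilon}]$ and write $P(t)\defeq Z^{\-{ly}}_H(t,\dots,t)=\sum_{j=0}^{n}a_j t^{j}$, where $a_j$ is the number of size-$j$ independent sets of $H$; then the probability generating function of $X=\abs I$ is $\=E[t^{X}]=P(\lambda t)/P(\lambda)$. Assigning the common value $t$ to every vertex puts $\*\lambda$ into the polydisk $\+D_\varepsilon^{V}$ whenever $t\in\+D_\varepsilon$, so \Cref{theorem:zero-freeness-his-lee-yang} gives $P(t)\ne 0$ for all $t\in\+D_\varepsilon$; since $\lambda\in(\delta,\lambda_{c,\varepsilon}]$, the disk $\{\abs{t-\lambda}\le\varepsilon/2\}$ lies in $\+D_\varepsilon$. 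Hence $\psi(z)\defeq\log\!\bigl(P(\lambda\mathrm e^{z})/P(\lambda)\bigr)$ is analytic on a disk $\abs z<\rho_0$, $\rho_0=\rho_0(k,\Delta,\delta,\varepsilon)>0$, and is exactly the cumulant generating function of $X$: $\psi(0)=0$, $\bar\mu=\psi'(0)$, $\sigma^{2}=\psi''(0)$, and $\kappa_j\defeq\psi^{(j)}(0)$ is the $j$-th cumulant of $X$.

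For $\abs z<\rho_0$ one has $\abs{P(\lambda\mathrm e^{z})}\le(1+\lambda\mathrm e^{\rho_0})^{n}$ and $P(\lambda)\ge 1$, so $\-{Re}\,\psi(z)\le C_1 n$ on this disk; since $\psi(0)=0$, the Borel--Carathéodory inequality followed by a Cauchy estimate on a slightly smaller disk --- the standard ``zero-free region $\Rightarrow$ bounded cumulants'' step of \cite{lebowitz2016central,michelen2019central,vishesh2022approximate} --- yields
\[
\abs{\kappa_j}\ \le\ j!\,C^{j}\,n\qquad(j\ge 1),
\]
with $C_1$ and $C$ depending only on $k,\Delta,\delta,\varepsilon$.

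The crux is the two-sided estimate $\sigma^{2}=\Theta_{k,\Delta,\varepsilon}(\lambda n)$, the only place where the hypergraph combinatorics really enters. For the lower bound I would greedily pick $N=\{v_1,\dots,v_m\}$ with $m=\Omega_{k,\Delta}(n)$, no two of whose vertices share a hyperedge (each pick removes only $v_i$ and the $O_{k,\Delta}(1)$ vertices sharing a hyperedge with it). Conditioned on $\sigma|_{V\setminus N}=\tau$, the only constraint touching $N$ is that $\sigma(v_i)=0$ is forced exactly when some $e\ni v_i$ already has $e\setminus\{v_i\}\subseteq\tau$; thus $v_1,\dots,v_m$ are conditionally independent, each $\mathrm{Bernoulli}(q_i)$ with $q_i\in\{0,\tfrac{\lambda}{1+\lambda}\}$, and by the law of total variance
\[
\sigma^{2}\ \ge\ \=E\bigl[\,\-{Var}[\,X\mid\sigma|_{V\setminus N}\,]\,\bigr]\ =\ \frac{\lambda}{(1+\lambda)^{2}}\sum_{i=1}^{m}\=P\!\bigl[\,q_i=\tfrac{\lambda}{1+\lambda}\,\bigr].
\]
A chain-rule bound gives $\=P[\,S\subseteq\sigma\,]\le(\tfrac{\lambda}{1+\lambda})^{\abs S}$ for every $S\subseteq V$ (each single-site conditional marginal is at most $\tfrac{\lambda}{1+\lambda}$, since deleting an occupied vertex from an independent set keeps it independent); applying this with $S=e\setminus\{v_i\}$, union-bounding over the $\le\Delta$ hyperedges through $v_i$, and invoking the hypothesis of \Cref{theorem:zero-freeness-his-lee-yang} together with $k-1\ge k/2$,
\[
\=P\!\bigl[\,q_i\ne\tfrac{\lambda}{1+\lambda}\,\bigr]\ \le\ \Delta\bigl(\tfrac{\lambda}{1+\lambda}\bigr)^{k-1}\ \le\ \Delta\bigl(\tfrac{\lambda}{1+\lambda}\bigr)^{k/2}\ <\ \frac{1}{2\sqrt2\,\mathrm e\,k^{2}}\ <\ \tfrac12 ,
\]
so $\sigma^{2}\ge\tfrac{\lambda}{2(1+\lambda)^{2}}\,m=\Omega_{k,\Delta,\varepsilon}(\lambda n)$. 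The matching upper bound $\sigma^{2}=O_{k,\Delta,\varepsilon}(\lambda n)$ follows from $\-{Var}[X]=\sum_{u,v}\mathrm{Cov}(\mathbf{1}_u,\mathbf{1}_v)$, the bound $\bar\mu=\sum_u\=P[\sigma(u)=1]\le\lambda n$, and the exponential decay of $\mathrm{Cov}(\mathbf{1}_u,\mathbf{1}_v)$ in $\dist(u,v)$ available throughout this regime (summable against the $(\Delta k)^{\dist}$ ball growth).

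Finally I would transfer from the cumulant bound and $\sigma^{2}=\Theta(\lambda n)$ to the limit theorems along the now-standard route \cite{lebowitz2016central,michelen2019central,vishesh2022approximate}. The Berry--Esseen bound follows from Esseen's smoothing inequality: for $\abs t=O(1/\sqrt{\log n})$ the estimates $\abs{\kappa_j}\le j!\,C^{j}n$ and $\sigma^{2}=\Theta(n)$ give $\=E[\mathrm e^{\mathrm i t(X-\bar\mu)/\sigma}]=\mathrm e^{-t^{2}/2}\bigl(1+O(\abs t^{3}/\sqrt n)\bigr)$ with Gaussian-type decay, and truncating at $T\asymp\sqrt{n/\log n}$ yields the rate $O_{k,\Delta,\delta,\varepsilon}(\log n/\sqrt n)$. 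The local limit theorem in addition needs the characteristic function inverted over the full range $\abs t\le\pi$, hence a bound on $\abs{\=E[\mathrm e^{\mathrm i t X}]}$ for $t$ bounded away from $0$ (the law of $X$ is aperiodic, its support meeting every residue class in a window around $\bar\mu$); such a bound can be produced either from the zero-free region of \Cref{theorem:zero-freeness-his-lee-yang}, at the cost of a factor $(\log n)^{5/2}$ --- giving the term $(\log n)^{5/2}/\sigma^{2}$ --- or from a direct combinatorial estimate on $\=E[\mathrm e^{\mathrm i t X}]$ exploiting the $k$-uniformity of the hyperedges --- giving the term $\sigma^{-2}+\sigma^{2k}(\log n)^{2}/n^{k-1}$ --- and substituting $\sigma^{2}=\Theta(\lambda n)$ then gives the claimed bound. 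I expect the main obstacle to be precisely this control of the characteristic function at frequencies bounded away from $0$, where the zero-free region of \Cref{theorem:zero-freeness-his-lee-yang} no longer applies directly; the variance estimate, though the only genuinely hypergraph-combinatorial step, is comparatively short given \Cref{theorem:zero-freeness-his-lee-yang}.
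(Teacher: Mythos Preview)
Your high-level plan is correct and matches the paper's: extract a zero-free disk for the univariate pgf from \Cref{theorem:zero-freeness-his-lee-yang}, feed it into the Michelen--Sahasrabudhe theorem (the paper's \Cref{lemma:zero-free-to-clt}) for the Berry--Esseen bound, and handle the local CLT by Fourier inversion with separate control of low and high frequencies. Your variance \emph{lower} bound (greedy packing of vertices no two of which share a hyperedge, then law of total variance) is exactly the paper's argument in \Cref{lemma:bound-variance}. Two points, however, diverge.

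\textbf{Variance upper bound.} You invoke exponential decay of $\mathrm{Cov}(\mathbf 1_u,\mathbf 1_v)$ in $\dist(u,v)$, but this is not established anywhere in the paper; in fact the technical overview explicitly notes that strong spatial mixing fails for hypergraph independent sets because pinnings destroy $k$-uniformity. Even if two-point decay without pinnings holds in this regime, you would have to prove it, and getting the correct $\lambda$ dependence in the decay would take care. The paper's route is shorter and uses only the already-established zero-freeness: factor $f(x)=\prod_j(1-r_jx)$, compute $\-{Var}[X]=\sum_j r_j/(1-r_j)^2$, and bound each summand using that every root $1/r_j$ lies outside the disk $\abs{x-1}\le\varepsilon/(\sqrt2\,\lambda)$; this gives $\-{Var}[X]=O_{k,\Delta,\varepsilon}(\lambda n)$ immediately, with no correlation-decay input.

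\textbf{Second term of the local CLT.} The bound $\sigma^{-2}+\sigma^{2k}(\log n)^2/n^{k-1}$ is \emph{not} obtained from a combinatorial estimate on the characteristic function. The paper restricts to $\sigma\le\log n$ (otherwise the first term is smaller), observes that $\mu_{H,\lambda}$ is the product measure $\mathrm{Ber}\bigl(\tfrac{\lambda}{1+\lambda}\bigr)^{\otimes V}$ conditioned on the event ``independent set'', and that this conditioning perturbs every probability by a factor $1\pm O(\lambda^k\Delta n)=1\pm O_{k,\Delta,\varepsilon}(\sigma^{2k}/n^{k-1})$ by a union bound over hyperedges; it then applies the classical DeMoivre--Laplace local limit theorem to the product measure and transfers the error back. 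The $k$-uniformity enters only as the exponent $k$ in $\lambda^k$, not through any Fourier argument. Your account of the \emph{first} term is closer to correct --- it does combine zero-freeness (for low phases, via the CLT, \Cref{lemma:characterized-func-diff-bound}) with a combinatorial high-phase bound --- but that high-phase bound comes from a $4$-separated-set argument \`a la Dobrushin--Tirozzi (\Cref{lemma:scattered-vertices,lemma:characterized-func-bound}), not from the zero-free region.
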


As a side note, while there is rich literature on Markov chain central limit theorems (CLT), these do not seem to apply to our context. Specifically, our CLT crucially captures the \emph{unimodality} of the stationary distribution itself, while Markov chain CLT concerns the sum of samples generated by a Markov chain, and does not seem to distinguish between unimodal and multimodal distributions.
Log-Sobolev type inequalities (LSI), if available, would also give concentration tail estimates. But the recent spectral independence framework for establishing LSI for Markov chains requires \emph{arbitrary pinnings}, which breaks the uniformity of hyperedges. 

Inspired by~\cite{vishesh2022approximate}, we give an FPTAS based on zero-freeness and local CLT, to approximate the number of hypergraph independent sets of size $t$. The proofs are deferred to \Cref{section:lclt}.
\begin{theorem}\label{theorem:fptas-exact-k}
Fix $k\ge2$, $\Delta\ge 3$. Let $H=(V, \+E)$ be a $k$-uniform hypergraph with maximum degree $\Delta$ and $n = \abs{V}$. Let $\varepsilon$, $\lambda_{c, \varepsilon}$ be defined as in \Cref{theorem:zero-freeness-his-lee-yang}.
There exists a deterministic algorithm which, on input $H$, an integer $1\le t\le n\left(1-\frac{1}{1+\lambda_{c, \varepsilon}}\cdot \left(1+\frac{1}{4\mathrm{e}\Delta k^3}\right)\right)$, and an error parameter $\eta \in (0, 1)$, outputs an $\eta$-relative approximation to the number of hypergraph independent sets of size $t$ in time $(n/\eta)^{O_{k, \Delta, \varepsilon}(1)}$.
\end{theorem}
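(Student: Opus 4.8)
Write $I_t$ for the number of independent sets of size $t$ in $H$, so that when every vertex carries the same fugacity $\lambda$ we have $Z^\-{ly}_H(\lambda)=\sum_{s\ge0}I_s\lambda^s$. Letting $X$ denote the size of an independent set drawn from $\mu_{H,\lambda}$, the identity
\[
I_t \;=\; \frac{Z^\-{ly}_H(\lambda)}{\lambda^{t}}\,\=P\!\left[X=t\right]
\qquad(0<\lambda<\lambda_{c,\varepsilon})
\]
is the backbone of the algorithm. By \Cref{theorem:zero-freeness-his-lee-yang}, $Z^\-{ly}_H$ is zero-free on the neighbourhood $\+D_{\varepsilon}^{V}$ of $[0,\lambda_{c,\varepsilon})$, so Barvinok's interpolation method gives, for any real $\lambda\in[0,\lambda_{c,\varepsilon})$ bounded away from the endpoint, an $(1\pm\eta)$-relative approximation to $Z^\-{ly}_H(\lambda)$, and — by differentiating the interpolating polynomial for $\log Z^\-{ly}_H$ — to the cumulants $\bar\mu(\lambda)=\=E[X]$, $\sigma^2(\lambda)=\-{Var}[X]$, and to any fixed number of further cumulants, each to any prescribed inverse-polynomial relative precision, all in time $(n/\eta)^{O_{k,\Delta,\varepsilon}(1)}$. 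It thus remains to pick a good $\lambda=\lambda(t)$ and to approximate $\=P[X=t]$.

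\textbf{Choosing $\lambda$.} The map $\lambda\mapsto\bar\mu(\lambda)$ is continuous and strictly increasing on $[0,\lambda_{c,\varepsilon}]$ (indeed $\tfrac{\mathrm d}{\mathrm d\lambda}\bar\mu=\sigma^2/\lambda>0$), running from $0$ to $\bar\mu(\lambda_{c,\varepsilon})$. A per-vertex occupancy estimate shows $\bar\mu(\lambda)\ge n\cdot\tfrac{\lambda}{1+\lambda}\bigl(1-O(\Delta 2^{-k})\bigr)$: conditioned on the configuration off a vertex $v$, the probability $v$ is occupied is at least $\tfrac{\lambda}{1+\lambda}$ times the probability that none of the at most $\Delta$ hyperedges through $v$ has all its other $k-1$ endpoints occupied, and each such event has probability at most $(\tfrac{\lambda}{1+\lambda})^{k-1}<2^{-(k-1)}$. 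In the regime $\Delta\lesssim2^{k/2}$ underlying \Cref{theorem:zero-freeness-his-lee-yang} the correction $O(\Delta 2^{-k})$ is below $\tfrac{1}{4\mathrm e\Delta k^{3}}$, whence $\bar\mu(\lambda_{c,\varepsilon})\ge n\bigl(1-\tfrac{1}{1+\lambda_{c,\varepsilon}}(1+\tfrac{1}{4\mathrm e\Delta k^{3}})\bigr)\ge t$ for every admissible $t$ — this is the role of the factor $1+\tfrac{1}{4\mathrm e\Delta k^{3}}$ in the hypothesis. A binary search evaluating $\bar\mu$ through the cumulant interpolation then produces $\lambda(t)\in(0,\lambda_{c,\varepsilon}]$ with $\abs{\bar\mu(\lambda(t))-t}=O(1)$, staying bounded away from $\lambda_{c,\varepsilon}$.

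\textbf{Approximating $\=P[X=t]$ and assembling.} With $\lambda=\lambda(t)$ we have $\sigma^2(\lambda)=\Theta_{k,\Delta,\varepsilon}(\lambda n)=\Theta(t)$ and $\abs{t-\bar\mu(\lambda)}=O(1)=O(\sigma)$, so \Cref{theorem:clt} yields $\=P[X=t]=\sigma^{-1}\+N\!\bigl(\tfrac{t-\bar\mu(\lambda)}{\sigma}\bigr)\bigl(1+O_{k,\Delta,\varepsilon}(\sigma\cdot\mathrm{err})\bigr)$, where $\mathrm{err}=\min\bigl(\tfrac{(\log n)^{5/2}}{\sigma^{2}},\,\tfrac{1}{\sigma^{2}}+\tfrac{\sigma^{2k}(\log n)^{2}}{n^{k-1}}\bigr)$; substituting the interpolated $\bar\mu,\sigma^2$ gives an approximation $\hat p$ of $\=P[X=t]$, and the output $\hat I_t=\hat Z\,\hat p/\lambda^{t}$ inherits relative error $(1\pm\eta)$ from the three factors. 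For small $t$, where $O(\sigma\cdot\mathrm{err})=O_{k,\Delta,\varepsilon}\bigl(\min((\log n)^{5/2}/\sigma,\,1/\sigma)\bigr)$ need not be below $\eta$, we argue directly: for $t<k$ every $t$-subset is independent, so $I_t=\binom{n}{t}$ exactly; for $t\le C_{k,\Delta}\log(n/\eta)$, Barvinok's coefficient extraction recovers $I_0,\dots,I_t$ exactly from the degree-$t$ truncation of the Taylor series of $\log Z^\-{ly}_H$ in time $n\cdot C_{k,\Delta}^{\,t}=(n/\eta)^{O_{k,\Delta}(1)}$; and for the remaining intermediate range we upgrade the first-order local CLT to an Edgeworth-type expansion using $O(\log(n/\eta))$ cumulants (again computed by interpolation), whose relative truncation error can be driven below $\eta$ within the same time budget because at small fugacity the model is approximately Poisson and the normalised cumulants are $O(t^{1-j/2})$.

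\textbf{Main obstacle.} The delicate point is the \emph{uniform} control of the relative error of $\hat p$ over the \emph{entire} range $1\le t\le t_{\max}$ and \emph{all} $\eta\in(0,1)$: \Cref{theorem:clt} only becomes useful once $\sigma=\Theta(\sqrt t)$ exceeds a threshold of order $\mathrm{poly}(\log n)/\eta$ (respectively of order $1/\eta$ via its second error form), so one must show that the "local-CLT'', "coefficient-extraction'', trivial ($t<k$), and "Edgeworth'' regimes jointly exhaust $[1,t_{\max}]$, each running in time $(n/\eta)^{O_{k,\Delta,\varepsilon}(1)}$. Establishing the cumulant bounds needed for the Edgeworth step, and the marginal lower bound that guarantees $\lambda(t)$ exists, are the secondary technical ingredients; everything else is bookkeeping of multiplicative errors through $\hat I_t=\hat Z\hat p/\lambda^t$.
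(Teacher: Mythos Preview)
Your overall framework—find $\lambda^*$ with $\bar\mu(\lambda^*)\approx t$, then recover $I_t = Z_H(\lambda^*)\cdot\=P[X=t]/(\lambda^*)^t$—matches the paper (and its source~\cite{vishesh2022approximate}), as do the occupancy-fraction lower bound, the cumulant approximation via interpolation, and the partition-function approximation.

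The substantive gap is in how you approximate $\=P[X=t]$. You propose to plug interpolated $\bar\mu,\sigma$ into the local-CLT formula $\sigma^{-1}\+N((t-\bar\mu)/\sigma)$ and invoke the error bound of \Cref{theorem:clt}. But that error bound is \emph{fixed}: it does not shrink with additional computation. The resulting relative error in $\hat p$ is $O_{k,\Delta,\varepsilon}((\log n)^{5/2}/\sigma)$, which exceeds $\eta$ whenever $\sigma\lesssim(\log n)^{5/2}/\eta$, i.e.\ for all $t\lesssim(\log n)^5/\eta^2$. Your patches for this range—exact coefficient extraction for $t\le C\log(n/\eta)$, and an Edgeworth-type expansion for the remainder—leave the Edgeworth step essentially unproven: you would need uniform control of $O(\log(1/\eta))$ cumulants \emph{and} a convergent local Edgeworth series with quantified remainder, and neither your one-line sketch (``approximately Poisson, normalised cumulants $O(t^{1-j/2})$'') nor anything available in the paper supplies this. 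As written this does not cover all $(t,\eta)$ in the claimed time.

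The paper sidesteps the issue entirely. It uses \Cref{theorem:clt} only for the \emph{lower bound} $\=P[X=t]=\Omega_{k,\Delta,\varepsilon}(1/\sigma)$, and computes $\=P[X=t]$ itself by Fourier inversion,
\[
\=P[X=t]=\frac{1}{2\pi\sigma}\int_{-\pi\sigma}^{\pi\sigma}\=E\!\left[\mathrm{e}^{isY}\right]\mathrm{e}^{-isy}\,ds,
\]
truncating at $|s|\le C\sqrt{\log(1/\eta)}$ (the tail is $O(\mathrm{e}^{-cs^2})$ by the high-phase bound \Cref{lemma:characterized-func-bound}, hence negligible relative to $1/\sigma$), replacing the remaining integral by a Riemann sum, and approximating each value $\=E[\mathrm{e}^{isY}]=\mathrm{e}^{-is\bar\mu/\sigma}\,Z_H(\lambda^*\mathrm{e}^{is/\sigma})/Z_H(\lambda^*)$ by interpolation (\Cref{lemma:low-Fourier-phases-approx}), using that $\lambda^*\mathrm{e}^{is/\sigma}$ stays in the zero-free region for $|s|$ in this range. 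This yields $\eta$-relative precision for \emph{every} $\eta\in(0,1)$ once $\sigma$ exceeds an absolute constant $c=c_{k,\Delta,\varepsilon}$; the residual case $\sigma\le c$ forces $t=O_{k,\Delta,\varepsilon}(1)$, which is handled by brute-force enumeration in time $n^{O(1)}$. No Edgeworth expansion and no intermediate regime are needed.
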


A consequence of the Perron-Frobenius theorem for nonnegative matrices is that any ergodic Markov chain converges to a unique stationary distribution. However, the convergence behavior for complex transition matrices is much less understood.
Central to our analysis is the systematic scan Glauber dynamics with complex transition weights
(see \Cref{definition:complex-MCMC-transition-form} for a formal definition). 

We show that it converges to the stationary measure in the same regime of \Cref{theorem:zero-freeness-his-lee-yang}. 
\begin{theorem}[Convergence of systematic scan Glauber dynamics with complex transitions]
\label{theorem:convergence-lee-yang}
    Under the condition of \Cref{theorem:zero-freeness-his-lee-yang}, the systematic scan Glauber dynamics for the complex measure associated with the independence polynomial 
 $Z^\-{ly}_H(\*\lambda)$ converges.
\end{theorem}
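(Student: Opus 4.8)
The plan is to view a single sweep of the systematic scan as a $\mathbb{C}$-linear operator $T$ on complex measures supported on $\+I(H)$, and to show that $T^m$ converges to the rank-one map onto the complex Gibbs measure $\mu^\ast\defeq\mu_{H,\*\lambda}$. Write $T=T_{v_n}\circ\cdots\circ T_{v_1}$ for the fixed scan order, where $T_v$ is the complex single-site update of \Cref{definition:complex-MCMC-transition-form}. The hard-constraint structure makes $T_v$ transparent: given the configuration off $v$, it sets $v$ to occupied with ``weight'' $0$ when some hyperedge through $v$ already has its other $k-1$ vertices occupied (say $v$ is \emph{blocked}), and with the fixed weight $\lambda_v/(1+\lambda_v)$ otherwise; hence $T_v$ is a well-defined bounded operator whenever $1+\lambda_v\neq 0$, which holds on $\+D_\varepsilon$, and $\mu^\ast$ is a genuine complex measure because $Z^\-{ly}_H(\*\lambda)\neq 0$ on $\+D^V_\varepsilon$ by \Cref{theorem:zero-freeness-his-lee-yang}. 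A one-line algebraic computation (a complex analogue of detailed balance) gives $T_v\mu^\ast=\mu^\ast$ for each $v$, hence $T\mu^\ast=\mu^\ast$; so it suffices to prove $\norm{T^m\xi}\to 0$ for every complex measure $\xi$ of total mass $0$, which then gives $T^m\mu\to\mu^\ast$ from any starting measure.

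To prove $\norm{T^m\xi}\to 0$ I would lift the percolation argument underlying rapid mixing at $\Delta\lesssim 2^{k/2}$. Split each single-site update as $T_v=C_v+R_v$, where $C_v$ is the \emph{clean} update that resamples $v$ to the fixed weight $\lambda_v/(1+\lambda_v)$ ignoring the current configuration --- an idempotent, mass-preserving operator --- and the \emph{correction} $R_v=T_v-C_v$ is supported on $v$-blocked configurations. Expanding $T^m$ over its $mn$ single-site updates, $T^m=\sum_S\prod_{(v,t)}(C_v\text{ or }R_v)$, indexed by the set $S$ of update-slots where the correction is kept. A full clean sweep equals the rank-one map $\eta\mapsto(\sum\eta)\,\pi_0$ with $\pi_0=\bigotimes_v\Bern(\lambda_v/(1+\lambda_v))$; since every $C_v$ and $R_v$ preserves total mass and $\sum\xi=0$, any term $S$ that lacks a correction in some one of the $m$ sweeps contains a full clean sweep and therefore annihilates $\xi$. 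Thus the surviving terms have at least one correction per sweep, and --- tracking that a correction at $v$ can only feed a later blocking at a vertex sharing a hyperedge with $v$ --- their corrections must form a connected space-time cluster reaching from sweep $1$ to sweep $m$.

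It then remains to bound the total absolute weight of these cluster terms. The point is that a correction $R_v$ is rare: it lives on configurations where all $k-1$ non-$v$ vertices of some hyperedge through $v$ are occupied. Here I would invoke the boundary-splitting idea (\bdsplit) --- examine only a half (of size $\ge k/2$) of each hyperedge --- so that the weight charged to a hyperedge through $v$ forcing a correction is $\abs{\lambda_v/(1+\lambda_v)}^{k/2}$ rather than $\abs{\lambda_v/(1+\lambda_v)}^{k-1}$. On $\+D_\varepsilon$ one has $\abs{z/(1+z)}^{k/2}\le\bigl((\lambda+\varepsilon)/(1+\lambda-\varepsilon)\bigr)^{k/2}<\tfrac{1}{2\sqrt2\,\mathrm{e}\Delta k^2}$ by the hypothesis of \Cref{theorem:zero-freeness-his-lee-yang}, so the effective per-vertex branching $O\!\bigl(\Delta\abs{z/(1+z)}^{k/2}\bigr)$ is a constant strictly below $1$, the constants in the hypothesis absorbing the cluster-enumeration and splitting overheads. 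A standard subcritical cluster-expansion count then bounds the total weight of clusters spanning $m$ sweeps by $C\gamma^{m}$ for some $\gamma<1$, giving $\norm{T^m\xi}\le C\gamma^m\norm{\xi}\to 0$ and hence $T^m\mu\to\mu^\ast$ at a geometric rate.

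The main obstacle is making this accounting work with complex rather than probabilistic weights. In the probabilistic case the clean step is a genuine conditional expectation and does not distort the $\ell^1$-norm, whereas a complex $C_v$ can inflate it by $1+O(\varepsilon)$, which would compound over a sweep; the fix is to measure the mass-zero part in a norm adapted to the idempotent clean projections, in which each $C_v$ is a contraction and only the corrections carry weight. One must also control the operator ``multiply by $\ones[v\text{ blocked}]$'' in that norm, and reconcile the half-hyperedge charging (\bdsplit) with the linear space-time geometry of the systematic scan, so that a surviving cluster genuinely spans all $m$ sweeps rather than localizing; and one must check that inflating the real segment to $\+D_\varepsilon$ only perturbs constants --- which is precisely why the hypothesis carries the $\lambda\pm\varepsilon$ and the restriction $\varepsilon<\tfrac{1}{9k^5\Delta^2}$. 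Since this is the same technical core that establishes \Cref{theorem:zero-freeness-his-lee-yang}, in practice I would isolate a single ``geometric contraction of the complex scan'' lemma and read both statements off it.
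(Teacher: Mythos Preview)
Your high-level plan --- split each update into an oblivious part and a correction, expand, and show that surviving terms correspond to a subcritical space-time percolation --- is the paper's plan too. But the execution diverges in two places, and the second is a genuine gap.

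First, the decomposition. The paper does \emph{not} take the oblivious part to be the full product-measure resample $C_v$; it sets $b_v(0)=\tfrac{1}{1+\lambda_v}$, $b_v(1)=0$, $b_v(\bot)=\tfrac{\lambda_v}{1+\lambda_v}$ (see \eqref{eq:base-measure-his}), so the oblivious path can only set $v\gets 0$. This matters: with that choice, the only way any vertex can become occupied is through the adaptive path, so ``hyperedge $e$ fully occupied at time $t$'' forces $r_s=\bot$ for every $s\in\ts(e,t)$, which is exactly what makes the bad-vertex structure of \Cref{definition:bad-vertices-his} emerge. Under your $C_v$, by contrast, a correction $R_v$ can be triggered entirely by earlier \emph{clean} updates that happened to set the other $k-1$ vertices of some hyperedge to $1$; consequently your claim that the surviving corrections ``form a connected space-time cluster reaching from sweep $1$ to sweep $m$'' is not justified (two disjoint hyperedges can host $R$'s in alternating sweeps with no path between them), and without connectivity the bare operator norm $\lVert R_v\rVert\asymp|\lambda_v/(1+\lambda_v)|$ carries no $k$-dependence and is far too large.

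Second, the norm-inflation obstacle you flag is the crux, and the paper does not resolve it by finding an adapted norm. Its fix is a \emph{zero-one law} (\Cref{remark:zero-one-law}, from \Cref{lemma:characterization}): conditioned on the bad component $\cbad(\*r)$ and on $\*r_{\ts(S,0)}$, the event $\{\sigma_0\in A\}$ is deterministic. This recovers the monotonicity $|\mu(A\cap B)|\le|\mu(B)|$ that complex measures generically lack, and reduces everything to bounding a genuine \emph{product} measure on the oblivious bits $\*r$, which is then handled by a $2$-tree enumeration in the witness graph (\Cref{lemma:modulus-bound-subset,lemma:bad-tree-num-bound,lemma:diminishing-bad-tree}) and fed into \Cref{condition:convergence} and \Cref{lemma:convergence}. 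Your ``norm adapted to the idempotent clean projections'' would need all the non-orthogonal complex projections $C_v$ to be simultaneous contractions while keeping the $R_v$'s small in the same metric; absent a concrete construction this is not a proof. Finally, the \bdsplit\ half-hyperedge charging is a red herring here: the paper's $N$ in \Cref{definition:parameters-his} is the \emph{full} product $\prod_{v\in e}|\lambda_v/(1+\lambda_v)|$, and the exponent $k/2$ in the theorem statement is just the square root of the condition $8\mathrm{e}\Delta^2k^4\alpha<1$ after substituting $\alpha\approx N\approx|\lambda/(1+\lambda)|^{k}$. The double-counting you are reaching for with half-hyperedges is handled instead by the $2$-tree structure, which guarantees disjoint timestamp sets across its nodes.
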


\subsection{Technical overview}

A few challenges arise when trying to locate complex zeros through a percolation-type argument. To extend the notion of probability measures to the complex plane, one can formally define \emph{complex normalized measures} as ratios between partition functions. However, a generalization of statements such as ``stochastically dominated by a sub-critical branching process'' for complex measures appears very challenging. In particular, the monotonicity of probability measures crucially relies on the non-negativity axiom. Our key observation is that a factorization property, which arises in decomposing the Glauber dynamics, can be translated to the complex plane.

Our starting point for locating complex zeros of $Z^\-{ly}_H(\*\lambda)$ is an induction on marginal measures. This approach is implicit in the Lee-Yang theorem and the Asano-Ruelle lemma~\cite{Lee1952StatisticalTO,Asano1970LeeYangTA,ruelle1971extension}, and is applied more explicitly in the contraction method~\cite{peter2019conjecture,liu2019correlation,Shao2019ContractionAU}.
We give a quick review below.
\subsubsection{Locating complex zeros through marginal measures}
\label{section:self-reducibility}
Here we use the standard edge-wise self-reducibility, consider a hypergraph $H = (V, \+E)$ with $\+E = \{e_1, e_2, \dots, e_m\}$, and let $H_i=(V, \+E_i)$ where $\+E_i = \{e_1, e_2, \dots, e_i\}$. 
We write the partition function as: $Z_H = Z_{H_0} \prod_{i=1}^{m} \frac{Z_{H_i}}{Z_{H_{i-1}}} $. 
To establish $Z_H\neq 0$, it suffices to show $\frac{Z_{H_i}}{Z_{H_{i-1}}} \neq 0$ as it is clear that $Z_{H_0} \neq 0$.
The ratio $\frac{Z_{H_i}}{Z_{H_{i-1}}}$ corresponds to a marginal measure, which we explain in the context of hypergraph independence polynomial.
A hypergraph independent set $\sigma$ in $H_{i-1}$ is an independent set in $H_i$ if and only if $\sigma_{e_i} \neq 1^{k}$. Thus,
$\frac{Z_{H_i}}{Z_{H_{i-1}}} = 1 - \mu_{H_{i-1}}\tp{\sigma_{e_i} = 1^{k}}$, where $\mu_{H_{i-1}}$ is the measure associated with $Z_{H_{i-1}}$. Then, one can set up an induction on $i$: assuming that $Z_{H_{i-1}}\neq 0$, one shows that the marginal measure $\mu_{H_{i-1}}\tp{\sigma_{e_i} = 1^{k}}\neq 1$, this implies $Z_{H_{i}}\neq 0$.

\subsubsection{Marginal measures through information percolation on complex Markov chains}
Our departure from previous works on the absence of zeros is that we introduce a systematic scan Glauber dynamics to analyze the marginal measures.
Introducing Glauber dynamics is crucial in bypassing a barrier to a better zero-free region for the hypergraph independence polynomial: strong spatial mixing does not hold, and a computational tree construction does not preserve the uniformity of hyperedges.

Given a measure $\mu_{H,\*\lambda}$, Glauber dynamics is a canonical way of constructing a Markov chain with stationary measure $\mu_{H,\*\lambda}$. In particular, the transition matrix of the Glauber dynamics, denoted by $P_{\*\lambda}$, can also be analytically continued to the complex plane through a connected zero-free region as $\mu_{H,\*\lambda}$ is well-defined. In particular, $\mu_{H,\*\lambda}$ is a left eigenvector for $P_{\*\lambda}$ with eigenvalue $1$.
The analysis of Markov chains for $\lambda \in \=R$ mainly concerns the spectral gap of $P_{\*\lambda}$, but the spectral gap usually tends to zero as $n$ goes to infinity (in the thermodynamics limit).
Instead of attempting a complex extension of spectral theory, we work with the marginal measures generated by powers of the transition matrix $P_{\*\lambda}$.

To get a handle on the marginal measures, we take inspirations from the decomposition of Glauber dynamics that arises in information percolation arguments for Markov chains~\cite{lubetzky2016information,HSZ19, HSW21,qiu2022perfect,feng2023towards}. In these applications, one starts by formulating the Markov chain on a space-time slab (also known as a \emph{witness graph}) so that updates, when viewed backward in time, behave like a subcritical percolation. To do so, each step of the dynamics is decomposed into an \emph{oblivious} update part, which updates a site independent of its neighbors, and an adaptive (non-oblivious) part in which one tries to make up the correct transition probability. 
By revealing the randomness used in these updates backward in time, 
we either continue the revealing process due to an adaptive update or terminate it upon encountering an oblivious update.
Previously, this percolation argument has primarily been used to bound the mixing rates of classical Markov chains \cite{lubetzky2016information,HSZ19} 
and to analyze the time required for coalescence in grand coupling processes, 
such as coupling from the past (CFTP)~\cite{HSW21,qiu2022perfect} and its variant, coupling towards the past (CTTP)~\cite{feng2023towards}.

Our idea is to interpret a decomposition of Glauber dynamics as implicitly a decomposition of the transition matrix $P_{\*\lambda}$, also into an oblivious part and an adaptive part. Say we ``initialize'' the Glauber dynamics with a complex measure $\mu$, viewed as a row vector, and we consider the measure generated by $T$ steps of Glauber dynamics, which is the vector-matrix product $\mu P_{\*\lambda}^T$. By expanding this summation, one can see that, upon encountering an oblivious part, the contribution to the sum ``factorizes''. In fact, the result of  $\mu P_{\*\lambda}^T$ formally corresponds to summing over walks of length $T$ over a space-time slab, where each node is weighted by the corresponding entry in the transition matrix. And the factorization is what leads us to define ``independence'' for complex measures, which effectively allows us to ``terminate the percolation'' just as in a standard argument. 
Central to our analysis is to show that, after running the dynamics for sufficiently long, we can use the ``oblivious updates'' as a certificate/witness for the measure of any event, in the sense that these \emph{witness sequences} dominate the complex measure $\mu P_{\*\lambda}^T$. This is formalized as~\Cref{condition:convergence}. These oblivious updates themselves are much easier to analyze as they correspond to a product of complex measures.

By identifying the measure generated by $\mu P_{\*\lambda}^T$ as contributions from an information percolation on a space-time slab (formally defined as \emph{witness graphs} in \Cref{definition:witness-graph-indset})
, we introduce several dynamics-related quantities --- bad vertices, bad components, bad trees (\Cref{definition:bad-vertices-his})
--- to trace the information percolation process (formally through \Cref{lemma:characterization,lemma:modulus-bound-subset,lemma:diminishing-bad-tree}).
Then, we express the measure of any configuration by these quantities. When the information percolation process terminates quickly (in the sense of~\Cref{condition:convergence}), we can control the marginal measure using a product of complex measures.

\subsubsection{Convergence of the complex systematic scan Glauber dynamics}

The convergence of Markov chains in the real case is well understood thanks to the Perron-Frobenius theory and the coupling method. It is unclear what the right generalizations to the complex plane should be. 
Using the information percolation framework, we categorize the percolation processes as follows: 
\begin{enumerate}
    \item processes that terminate before reaching the starting time (\Cref{lemma:characterization});
    \item processes that do not terminate before reaching the starting time (\Cref{lemma:diminishing-bad-tree}).
\end{enumerate}
To establish convergence, it suffices to show that the contributions from type (2) processes diminish to zero. Unlike standard percolation theory where the existence of limits are guaranteed by monotone events, we have to give non-asymptotic bounds before taking an appropriate limit (see \Cref{lemma:diminishing-bad-tree}).  
Combined, this allows us to show that the measure of any event is dominated by witness sequences (\Cref{condition:convergence}), and we give a proof of convergence in \Cref{lemma:convergence}.

\section{Preliminaries and notations}\label{sec-pre}

\subsection{Complex normalized measures}

Our technique involves dealing with complex measures, so we provide some measure theory basics for our presentation.

Let $\mu:\Omega\to \=C$ be a complex measure over a measurable space $(\Omega,\+F)$, where $\Omega$ is a finite set and elements in $\+F$ are called events. The support of $\mu$ is defined as $\supp(\mu)\defeq\{x\in \Omega\mid \mu(x) \neq 0\}$.
We say $\mu$ is \emph{normalized} if
$\sum\limits_{\omega\in \Omega}\mu(\omega)=1$.
The measure on $A \in \+F$ is given by 
$\mu(A) = \sum_{\omega \in A} \mu(\omega)$.
Similar to probability, for any event $A\in \+F$ with $\mu(A)\neq 0$, we can define the \emph{conditional measure} of $\mu$ on $A$ as a restricted measure $\mu(\cdot \mid A)$ over the measure space $(\Omega,\+F_{A})$ where $\+F_{A}=\{B\cap A:B\in \+F\}$ such that for any $B\in \+F$,
\[
\mu(B\mid A)=\frac{\mu(B\cap A)}{\mu(A)}.
\]

Note that the conditional measure $\mu(\cdot \mid A)$ is always normalized when well-defined.

We say that two events $A_1, A_2 \in \+F$ are \emph{independent} if and only if,
$\mu(A_1 \cap A_2) = \mu(A_1) \cdot \mu(A_2)$.

More generally, for a finite sequence of events $A_1, A_2, \ldots A_m \in \+F$, we say they are \emph{mutually independent} if and only if, for any finite subset $I \subset \= \{1, 2, \ldots, m\}$, it holds that
\[
\mu\left( \bigcap_{i\in I} A_i \right) = \prod_{i\in I} \mu(A_i).
\]

For a finite sequence of events $A_1, A_2, \ldots, A_m \in \+F$ we say that they are \emph{mutually disjoint} if for any $i\neq j$, $A_i \cap A_j = \emptyset$.
We also define the \emph{law of total measure}. Let $A_1, A_2, \ldots, A_m \in \+F$ be a finite sequence of mutually disjoint events and let $\bigcup_{i=1}^m A_i = \Omega$. Then for any $B \in \+F$, we have that,

\[
\mu(B) = \sum\limits_{i=1}^m \mu(B \cap A_i).
\]

\subsection{Graphical model and complex zeros}

Let $H = (V,\+E)$ be a hypergraph, where each vertex $v \in V$ represents a random variable that takes its value from a finite domain $[q]=\{1,2,\ldots,q\}$ and each hyperedge $e \in \+E$ represents a local constraint on the set of variables $e \subseteq V$.
For each $v \in V$, there is a function $\phi_v: [q]\to \={C}$ that expresses vertex activity (external fields), 
and for each $e \in \+E$, there is a function $\phi_e: [q]^{e} \to \={C}$ that expresses (hyper)edge activity (nearest-neighbor interactions).
A graphical model is specified by the tuple $\+G = (H, (\phi_v)_{v \in V}, (\phi_e)_{e \in \+E})$, namely the hypergraph associated with the family of vertex and edge activities.
For each configuration $\sigma \in [q]^V$, define its weight by
\begin{align*}
	w_{\+G}(\sigma) \defeq \prod_{v \in V}\phi_v(\sigma_v)\prod_{e \in \+E}\phi_e(\sigma_e).
\end{align*}
Then the partition function $Z_{\+G}$ of the graphical model $\+G$ is given by 
\begin{align*}
	Z=Z_{\+G}\defeq \sum_{\sigma \in [q]^V}w_{\+G}(\sigma).
\end{align*}

We study the complex zeros of the partition function $Z$ of the graphical model in the following aspects:
\begin{itemize}
    \item with respect to the vertex activities $(\phi_v)_{v \in V}$, also known as Lee-Yang zeros;
    \item with respect to the edge activities $(\phi_e)_{e \in \+E}$, also known as Fisher zeros.
\end{itemize}

When the partition function $Z_{\+G}$ is non-zero, we can naturally associate it with a complex normalized measure $\mu=\mu_{\+G}$, called the \emph{Gibbs measure}, defined on the measurable space $([q]^V,2^{[q]^V})$, where
\[
\forall \sigma\in [q]^V,\quad \mu(\sigma)=\frac{w(\sigma)}{Z_{\+G}}.
\]

For any subset of variables $\Lambda\subseteq V$ and a partial restriction $\sigma\in [q]^\Lambda$ over $\Lambda$, we say $\sigma$ is \emph{feasible} if its measure is non-zero.
For any disjoint $S, \Lambda\subseteq V$  and any feasible $\sigma\in [q]^\Lambda$, 
we use $\mu_S^{\sigma}$ to denote the marginal measure induced by $\mu$ on $S$ conditioned on $\sigma$, i.e.,
\[
\forall \tau\in [q]^S,\quad \mu_S^{\sigma}(\tau)=\frac{\mu(X_S=\tau\mid X_{\Lambda}=\sigma)}{\mu(\sigma)}.
\]

\subsection{Glauber dynamics: random scan and systematic scan}
We need to work with a {systematic scan} variant of the {Glauber dynamics}, which is a fundamental Markov chain for high-dimensional measures.
We recall the definitions here.
Let $\mu$ be a distribution over $[q]^V$, with $V = \{v_1,v_2,\ldots,v_n\}$.
The \emph{Glauber dynamics} is a canonical construction of Markov chains with stationary distribution $\mu$. 
Starting from an initial state $X_0 \in [q]^V$ with $\mu(X_0)>0$,
the chain proceeds as follows at each step $t$:
\begin{itemize}
	\item pick a variable $v \in V$ uniformly at random and set $X_t(u) = X_{t-1}(u)$ for all $u \neq v$;
	\item update $X_t(v)$ by sampling from the distribution $\mu_{v}^{X_{t-1}(V \setminus \{v\})}$.
\end{itemize}

The \emph{systematic scan Glauber dynamics} is a variant of Glauber dynamics that, instead of updating a variable at random, one updates them in a canonical order. 
Specifically, at each step $t$, we choose the variable $v=v_{i(t)}$, where 
\begin{align}
    i(t) \defeq (t \mod n) + 1.\label{eq:scan-i(t)}
\end{align}
Then $X_{t-1}$ is updated to $X_{t}$ using the same rule as in  Glauber dynamics, based on the chosen $v$.

The Glauber dynamics is well known to be both aperiodic and reversible with respect to $\mu$. 
The systematic scan Glauber dynamics is not time-homogeneous, as variables are accessed in a cyclic order.
However, by bundling $n$ consecutive updates, we obtain a time-homogeneous Markov chain that is aperiodic and reversible. 

\subsection{$2$-tree}

We also need the notion of $2$-trees~\cite{Alon91}. 
Given a graph $G=(V,E)$, its square graph $G^2=(V,E_2)$ has the same vertex set, while an edge $(u,v)\in E_2$ if and only if $1\leq \dist_G(u,v)\leq 2$. 

\begin{definition}[$2$-tree]\label{definition:2-tree}
Let $G=(V, E)$ be a graph. A set of vertices $T\subseteq V$ is called a \emph{$2$-tree} of $G$, if
\begin{itemize}
    \item  for any $u,v\in T$, $\text{dist}_G(u,v)\geq 2$, and
    \item  $T$ is connected on $G^2$.
\end{itemize}
\end{definition}

Intuitively, a $2$-tree is an independent set that does not spread far away. We can construct a large $2$-tree in any connected graph as follows.

\begin{definition}[construction of a maximal $2$-tree in a connected graph {\cite[Lemma 4.5]{Vishesh21towards}}]\label{definition:2-tree-construction}
Let $G = (V, E)$ be a connected graph of maximum degree $D$ and $v\in V$. We can deterministically construct a $2$-tree $T$ of $V$ containing $v$
such that $\abs{T} \geq \lfloor |V|/(D + 1) \rfloor$ as follows:
\begin{itemize}
\item order the vertices in $V$ in lexicographical order. Start with $T=\{v\}$ and $U=V\setminus N^+(v)$, where $N^+(v)\defeq N(v)\cup \{v\}$ and $N(v)\defeq \{u\in V\mid (u,v)\in E\}$ ;
\item repeat until $U=\emptyset$: let $u$ be the vertex in $U$ with the smallest distance to $T$, with ties broken by the order on $V$. Set $T\gets T\cup \{u\}$ and $U\gets U\setminus N^+(v)$.
\end{itemize}
\end{definition}
The following two lemmas bound the number of subtrees and $2$-trees of a certain size containing a given vertex, respectively.

\begin{lemma}[\text{\cite[Lemma 2.1]{borgs2013left}},{\cite[Corollary 5.7]{feng2021rapid}}]\label{lemma:2-tree-number-bound}
Let $G=(V, E)$ be a graph with maximum degree $D$, and $v\in V$ be a vertex. 
The number of subtrees in $G$ of size $k \ge 2$ containing $v$ is at most $\frac{(\mathrm{e}D)^{k-1}}{2}$, and the number of 2-trees in $G$ of size $k \geq 2$ containing $v$ is at most $\frac{(\mathrm{e}D^2)^{k-1}}{2}$.
\end{lemma}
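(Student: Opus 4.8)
The plan is to prove \Cref{lemma:2-tree-number-bound} by reducing both counting problems to a standard bound on the number of subtrees of bounded degree. First I would recall the well-known fact (essentially due to the Lagrange inversion / Otter-type counting, and used in the cited references) that in a graph $G$ of maximum degree $D$, the number of (vertex) subtrees of size $k$ containing a fixed vertex $v$ is at most $\frac{(\mathrm{e}D)^{k-1}}{2}$. The cleanest route is: any subtree $T$ of size $k$ containing $v$ can be encoded by a rooted tree structure rooted at $v$; the number of rooted tree shapes on $k$ nodes where every node has at most $D$ children is bounded by the number of such plane trees, which is at most $\binom{Dk}{k}/( (D-1)k+1)$ by the standard formula for $D$-ary trees, and then one checks $\binom{Dk}{k}/((D-1)k+1) \le \frac{(\mathrm{e}D)^{k-1}}{2}$ using $\binom{a}{b}\le (\mathrm{e}a/b)^b$. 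Since each shape determines at most one actual subtree once we fix how children map to neighbors — more precisely each embedding is determined by the shape together with, at each vertex, an injection of its children into its $\le D$ neighbors, but the standard argument instead just counts the shapes times the ways to extend and absorbs everything into the $(\mathrm{e}D)^{k-1}$ factor — we get the first bound. I would cite \cite{borgs2013left,feng2021rapid} for this and only sketch it.

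For the second statement, the key observation is that a $2$-tree $T$ of $G$ of size $k$ is, by \Cref{definition:2-tree}, a connected subset of the square graph $G^2$. Now $G^2$ has maximum degree at most $D^2$: a vertex $u$ has at most $D$ neighbors at distance $1$, and each of those has at most $D-1$ further neighbors, so $\dist_G(u,\cdot)\le 2$ is satisfied by at most $D + D(D-1) = D^2$ vertices other than $u$. Therefore the number of $2$-trees of size $k$ containing $v$ is at most the number of connected subsets of size $k$ of $G^2$ containing $v$, which in turn is at most the number of subtrees of $G^2$ of size $k$ containing $v$ (every connected subset of size $k$ contains at least one spanning subtree, and distinct connected subsets give distinct vertex sets hence are counted; more carefully, the number of connected $k$-subsets is at most the number of $k$-vertex subtrees since a connected $k$-subset has $\ge 1$ spanning tree and the map subtree $\mapsto$ its vertex set is onto the connected $k$-subsets). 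Applying the first bound with $D$ replaced by $D^2$ yields at most $\frac{(\mathrm{e}D^2)^{k-1}}{2}$, as claimed.

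I should be slightly careful about the direction of the last inequality: to bound the number of connected $k$-subsets by the number of $k$-subtrees I want an \emph{injection} from connected subsets to subtrees, which I can get by fixing a canonical spanning tree (e.g. the lexicographically-first BFS tree rooted at $v$) for each connected subset; this map is injective because the subtree determines its vertex set. So the chain of inequalities is: $\#\{2\text{-trees of size }k\ni v\} \le \#\{\text{connected }k\text{-subsets of }G^2 \ni v\} \le \#\{k\text{-subtrees of }G^2 \ni v\} \le \frac{(\mathrm{e}(D^2))^{k-1}}{2}$.

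The main obstacle is purely expository rather than mathematical: getting the constant exactly $\mathrm{e}D$ (and not, say, $\mathrm{e}D$ up to lower-order terms) in the subtree count requires quoting the precise combinatorial lemma from \cite{borgs2013left} or \cite{feng2021rapid} rather than reproving it, and one must make sure the degree bound fed into that lemma is genuinely $D^2$ for $G^2$ and not $D^2+D$ or similar — hence the careful count $D + D(D-1) = D^2$ above. Given that both references are cited in the statement, the intended proof is surely just: (i) invoke the $k$-subtree bound, (ii) observe $\Delta(G^2)\le D^2$, (iii) observe every $2$-tree is a connected subset of $G^2$ and pass to a spanning subtree. I would write it in exactly that order and keep it to a few lines.
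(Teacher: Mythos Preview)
The paper does not give its own proof of this lemma; it is stated with citations to \cite{borgs2013left} and \cite{feng2021rapid} and used as a black box. Your proposal is correct and matches the standard derivation found in those references: the subtree bound is the classical combinatorial estimate, and the $2$-tree bound follows by observing that $G^2$ has maximum degree at most $D+D(D-1)=D^2$ and that every $2$-tree of $G$ is a connected subset of $G^2$, hence is dominated by the subtree count in $G^2$.
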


\section{Convergence of complex Markov chains}\label{sec:MCMC}

In this section, we present our framework for establishing new zero-free regions for certain polynomials under local lemma conditions,
as well as for proving the convergence of complex Markov chains. 
We begin by defining a complex Markov chain: the complex systematic scan Glauber dynamics.
Then, our proof is carried out by lifting the information percolation argument,
a powerful technique commonly used to prove the rapid mixing of Markov chains~\cite{HSZ19, HSW21,qiu2022perfect,feng2023towards},
to the complex plane.
Both zero-freeness and convergences are intrinsically related to bounding the marginal measures,
which can be interpreted as limiting the ``randomness" of the Markov chain.
We show that if transition measures can be decomposed (as captured by \Cref{definition:decomposition-scheme})
we can apply a complex variant of the information percolation argument to effectively bound the norm of the marginal measures.

\subsection{Complex extensions of Markov chains}

\subsubsection{Complex Markov chain}
We first define complex-valued transition matrices on a finite state space.
Let $\Omega$ be a finite state space. 
We say $P \in \=C^{\Omega \times \Omega}$ is a complex-valued transition matrix if 
\[\forall \sigma \in \Omega, \quad \sum_{\tau \in \Omega} P(\sigma, \tau) = 1,\] 
which is a direct extension of the classical row-stochastic matrix.

Fix $T \ge 1$. For a measurable space $(\Omega, \+F)$ with finite $\Omega$, 
we write $\Omega^T$ for the Cartesian product, and $\+F^{T}$ for the product $\sigma$-algebra. 
Let $\`P$ be a  complex normalized measure on $(\Omega^{T}, \+F^{T})$  and 
let $X_1, X_2, \ldots, X_T$  be a sequence  of measurable functions taking values over $\Omega$ following the measure~$\`P$. 
The sequence $(X_t)_{t=1}^T$ is said to be a $T$-step discrete-time complex Markov chain if there exists a complex-valued transition matrix $P \in \=C^{\Omega \times \Omega}$ such that
for any $1< j \le T$ and any $x_1, x_2, \dots, x_{j} \in \Omega$,  
\[
\`P(X_{j} = x_{j} \mid X_{1}=x_1, X_{2}=x_2, \dots, X_{j-1} = x_{j-1}) = \`P(X_{j} = x_{j} \mid X_{j-1} = x_{j-1}) = P(x_{j-1}, x_{j}),
\]

We often use $P$ to refer to the corresponding Markov chain. 
For a complex normalized measure  $\nu \in \=C^{\Omega}$ on~$\Omega$,
the measure $\nu P$ obtained via  a one-step transition of the Markov chain from $\nu$ is given by
\[\forall x \in \Omega,\quad (\nu P)(x) = \sum\limits_{y \in \Omega} \nu(y) P(y, x).\]
A complex normalized measure $\pi$  over $\Omega$  is a \emph{stationary measure} of $P$ if $\pi = \pi P$. 
It is important to note that for a generic complex row-stochastic matrices $P$,  it may not have a stationary measure\footnote{While there is a left-eigenvector with eigenvalue $1$, it can sum up to $0$, and cannot be normalized to a complex measure. This is also the main reason why convergence alone does not imply zero-freeness, as we need to rule out the possibility of converging to an eigenvector that cannot be normalized.}, and even if it does, it may not be unique.

Next, we define the convergence of the Markov chains with complex-valued transition matrices.

\begin{definition}[convergence of the complex Markov chains]
\label{definition:convergence-MC}
    A Markov chain with a complex-valued transition matrix $P$ and state space $\Omega$ is said to be \emph{convergent} if,
    for any two complex normalized measures $\mu$ and $\mu^*$ over $\Omega$, it holds that
        \[
    \lim_{T \to \infty} \abs{\mu P^T - \mu^* P^T}_1 = 0.
    \]
\end{definition}

\subsubsection{Complex Glauber dynamics}
 We introduce a complex extension of systematic scan Glauber dynamics for complex normalized measures. We do so through two equivalent viewpoints: formulating the transition matrices with complex transition weights, and also a dynamics-based formulation. The latter is more convenient for our analysis, and we note that the two are equivalent in the sense that they eventually generate the same complex normalized measures.

\begin{definition}[Complex extension of systematic scan Glauber dynamics] 
\label{definition:complex-MCMC-transition-form}
Let $\mu\in \=C^{[q]^V}$ be a complex normalized measure.
The complex systematic scan Glauber dynamics for the target measure $\mu$ is defined by a sequence of complex-valued transition matrices $P_t \in \=C^{\Omega \times \Omega}$ for $t\ge 1$, 
where with $v = v_{i(t)}$ (and $i(t)$ is as defined in \eqref{eq:scan-i(t)}), the transition matrix  $P_t$ is defined as
\[
P_t(\sigma, \tau) \defeq \begin{cases}
    \mu_v^{\sigma(V \backslash \{v\})}(\tau_v) & \text{ if } \forall u \neq v, \sigma_u = \tau_u,\\
    0 & \text{ otherwise.}
\end{cases}
\]

\end{definition}
Starting from an initial state $\tau\in \supp(\mu)$, 
the complex Markov chain generates an induced complex measure $\mu_t\in \=C^{[q]^V}$ which we define next.   At time $t=0$, we define $\mu_0(\tau) = 1$ and $\mu_0(\sigma) = 0$ for all $\sigma \in [q]^V\setminus\{\tau\}$, and for $t\ge 1$, we define 
$\mu_{t} \defeq \mu_{t-1} P_t$.

\begin{remark}[well-definedness of the complex systematic scan Glauber dynamics]\label{remark:MCMC-well-defined}
The complex systematic scan Glauber dynamics, as defined in \Cref{definition:complex-MCMC-transition-form}, 
is well-defined as long as the conditional measures $\mu_{v}^{\sigma(V\setminus \{v\})}$ are well-defined for each $\sigma\in \supp(\mu)$ and every $v\in V$.
Then, the induced complex measures $\mu_t$ remain normalized at any time.
\end{remark}

We now present the dynamics-based formulation of the complex systematic scan Glauber dynamics in \Cref{Alg:complex-GD}. Recall that the dynamics have a stationary measure $\mu$, an initial starting state $\tau$, and we denote the associated induced complex measure by $\mu^{\-{GD}}_{T,\tau}$. For technical convenience, we shift the timeline of the dynamics so that we are starting with a state $\sigma_{-T}$ and the final state is $\sigma_0$.

\begin{algorithm}
\caption{Complex systematic scan Glauber dynamics} \label{Alg:complex-GD}
\SetKwInOut{Input}{Input}
\Input{An arbitrary initial configuration $\tau\in\supp(\mu)\subseteq[q]^V$ and an integer $T\ge 1$.} 
Set $\sigma_{-T}\gets\tau$\;
\For{$t=-T+1,-T+2,\ldots,0$}{
let $\sigma_{t}\gets\sigma_{t-1}$ and $v\gets v_{i(t)}$, where $i(t)=(t \mod \abs{V}) + 1$\;
let $c_t$ follow the conditional measure $\mu_{v}^{\sigma_{t-1}(V\setminus \{v\})}$\;
update $\sigma_t(v)\gets c_t$\;
}
\end{algorithm}

\begin{remark} 
It is important to emphasize that \Cref{Alg:complex-GD} (as well as \Cref{Alg:complex-GD-decomposed,Alg:complex-GD-decomposed-his}, which are introduced later) is not meant to be an efficient algorithm for sampling from a complex measure.
Rather, the ``algorithms'' described in this paper serve as analytical tools for establishing zero-freeness.
When we state ``let $c$ follow a complex normalized measure $\mu$'' or ``$c$ is drawn from a complex normalized measure $\mu$'', we mean that the measure of $c$ is the same as $\mu$. 
This statement is conceptual rather than operational;
we do not attempt to explicitly generate samples during runtime. 
Any subsequent operation on $c$ should be understood as a transformation applied to the complex normalized measure of $c$.
%
%
It is worth noting that any finite segment of the complex measures is computable on a deterministic Turing machine in exponential time,
provided all the involved measures can be described using Gaussian rational numbers. 
This can be achieved by explicitly enumerating all outcomes of the process.
    \end{remark}

\begin{remark} \label{remark:remark-equivalence-matrix-dynamics}
It is straightforward to verify that the processes described in \Cref{definition:complex-MCMC-transition-form} and \Cref{Alg:complex-GD} are essentially equivalent in the following sense: for any $\tau^* \in \supp(\mu)$,  we have $\mu^\-{GD}_{T, \tau}(\sigma_t = \tau^*) = \mu_{t+T}(\tau^*)$ for all $-T \le t \le 0$.
This can be routinely verified through induction on $t$.
    \end{remark}

In general, we lack convergence theorems for Markov chains with complex-valued transition matrices in the literature,  
so we cannot immediately assert whether the complex measure after $T$ steps converges to the target measure $\mu$ as $T\to\infty$.
However, we can consider a complex process initialized with the stationary measure $\mu$. 

\begin{definition}[stationary systematic scan Glauber dynamics]\label{definition:stationary-start}
Consider the process defined in \Cref{Alg:complex-GD}, 
but now with the initial state $\sigma_{-T}$ following the measure $\mu$. 
We call this modified process the $T$-step stationary systematic scan Glauber dynamics,
and denote its induced measure as $\mu^{\-{GD}}_{T}$.
\end{definition}

It is straightforward to verify that for all $t\in [-T,0]$, the measure induced on $\sigma_{t}$  under $\mu^{\-{GD}}_{T}$ precisely follows  the measure $\mu$.
\Cref{definition:stationary-start} will play an essential intermediate role in our proof of the convergence of Glauber dynamics. Note that the measure $\mu^{\-{GD}}_{T}$ is a linear combination of the measures $\mu^{\-{GD}}_{T,\sigma}$ over all starting states $\sigma\in \supp(\mu)$.   Later, by comparing with the stationary process in \Cref{definition:stationary-start}, we will identify a sufficient condition (\Cref{condition:convergence}) and prove (in \Cref{lemma:convergence}) that, under this condition, the Glauber dynamics starting from any initial state converges to a unique limiting measure, which is precisely the stationary measure $\mu$.

\subsection{Decomposition of transition measure}
\label{subsection:decomposition-schemes}
Inspired by the information percolation approach for the real case, 
we consider the decomposition of the transition measure for a complex Markov chain.
Specifically, in the context of complex systematic scan Glauber dynamics,  
each transition measure can be decomposed into two parts: 
an oblivious part, where the transition does not depend on the current configuration; 
and an adaptive part, where the transition measure depends on the current configuration. 
This leads to the following formal definition of a decomposition scheme.

\begin{definition}[decomposition scheme]\label{definition:decomposition-scheme}
Let $\mu\in \mathbb{C}^{[q]^V}$ be a complex normalized measure. For each $v\in V$, we associate a complex normalized complex measure $b_v:[q]\cup \{\bot\}\to \mathbb{C}$, and let $\bm{b}=(b_v)_{v\in V}$. 

We define the $\bm{b}$-decomposition scheme on $\mu$ as follows.
For each $v\in V$ and each feasible $\tau\in [q]^{V\setminus \{v\}}$ (meaning that $\tau$ can be extended to a $\sigma\in\supp(\mu)$), 
we define the measure $\mu_{v}^{\tau,\bot}$ as
\[
\forall c \in [q], \quad \mu_v^{\tau,\bot}(c) \defeq \frac{\mu^{\tau}_v(c) - b_v(c)}{b_v(\bot)}.
\]
Then, the marginal measure $\mu^{\tau}_v$ can be decomposed as:
\begin{equation}\label{eq:decomposition}
\forall c \in [q], \quad \mu^{\tau}_v(c)=b_v(c)+b_v(\bot)\cdot \mu_v^{\tau,\bot}(c),
\end{equation}
where we assume the convention $0\cdot \infty =0$ to ensure that \eqref{eq:decomposition} still holds when $b_v(\bot)=0$.
\end{definition}

\begin{remark}
Intuitively, for each $v\in V$, the measure $b_v$ serves as a ``baseline measure'' for all transition measures $\mu_v^{\tau}$ at $v$. 
Ideally, $b_v$ should be re-normalized from the lower envelope of all transition measures $\mu_v^{\tau}$ conditioned on an arbitrary feasible $\tau\in [q]^{V\setminus \{v\}}$, making it oblivious to the boundary condition~$\tau$. 
In contrast, the measure $\mu_v^{\tau,\bot}$ captures the ``excess'' of $\mu_v^{\tau}$ over this lower envelop, adapting to the boundary~$\tau$.
It can be verified that as long as each marginal measure $\mu_v^{\tau}$ is well-defined, the decomposition in \eqref{eq:decomposition} is always well-defined.  
\end{remark}

With this decomposition scheme, the complex systematic scan Glauber dynamics described in \Cref{Alg:complex-GD} can be reinterpreted as the process in \Cref{Alg:complex-GD-decomposed}.
We denote its induced measure as $\mu^{\-{GD}}_{T,\tau,\bm{b}}$. 
By equation~\eqref{eq:decomposition}, it is straightforward to verify that $\mu^{\-{GD}}_{T,\tau,\bm{b}} (\sigma_t = \cdot) = \mu^{\-{GD}}_{T,\tau} (\sigma_t = \cdot)$. 
Similarly, as in \Cref{definition:stationary-start},
we also consider the $\bm{b}$-decomposed stationary systematic scan Glauber dynamics, and denote its induced measure as $\mu^{\-{GD}}_{T,\bm{b}}$.

\begin{algorithm}
\caption{$\bm{b}$-decomposed complex systematic scan Glauber dynamics} \label{Alg:complex-GD-decomposed}
\SetKwInOut{Input}{Input}
\Input{ An arbitrary initial configuration $\tau\in\supp(\mu)\subseteq[q]^V$ and an integer $T\ge 1$.} 
Set $\sigma_{-T}\gets\tau$\;
\For{$t=-T+1,-T+2,\ldots,0$}{
let $\sigma_{t}\gets\sigma_{t-1}$ and $v\gets v_{i(t)}$, where $i(t)=(t \mod \abs{V}) + 1$\;
let $r_t$ follow the measure $b_{v}$\;
\eIf{$r_t\neq \bot$}{let $c_t\gets r_t$\;}{
      let $c_t$ follow the measure $ \mu_{v}^{\sigma_t,\bot}$\;}
update $\sigma_t(v)\gets c_t$\;
}
\end{algorithm}

This decomposition of Markov chain transitions is a complex extension of similar decompositions based on unconditional marginal lower bounds in the real case~\cite{anand2021perfect,he2022sampling,feng2023towards}.

The purpose of this decomposition of transition measures
is to determine each update's outcome without relying on knowledge of the current configuration. 
By carefully selecting the baseline complex measures $\bm{b}$, 
this approach allows us to infer the outcome of Glauber dynamics in the following scenarios:
\begin{itemize}
    \item Infer the final outcome $\sigma_0$ of the chain without knowing the initial configuration $\sigma_{-T}$ as $T\to\infty$, showing convergence of the complex Glauber dynamics.
    \item Infer $\sigma_0(v)$ in the complex Glauber dynamics,
    through independent measurable functions $r_t$ following the baseline measures $\bm{b}$, 
    which enables the application of information percolation analysis to bound marginal measures and establish zero-freeness.
\end{itemize}

\subsection{Convergence of systematic scan Glauber dynamics}
We now formalize the above arguments. 
We first define the situations in which a realization $\bm{\rho}$ of $\bm{r}=(r_t)_{t=-T+1}^{0}$, as used in \Cref{Alg:complex-GD-decomposed}, can certify the induced measure over the occurrence of a particular event regarding the final outcome $\sigma_0$, regardless of what the initial state is.

\begin{definition}[witness sequence]\label{definition:witness-sequence}
Fix $T\geq 1$. 
Let $\bm{b}=(b_v)_{v\in V}$ be a collection of complex normalized measures. 
Consider a $\bm{b}$-decomposed complex systematic scan Glauber dynamics as in \Cref{Alg:complex-GD-decomposed}.
For any event $A\subseteq [q]^V$, we say a sequence $\bm{\rho}=(\rho_t)_{t=-T+1}^{0}\in ([q]\cup \{\bot\})^{T}$ is a \emph{witness sequence} with respect to $A$ if
\[
\forall \sigma,\tau\in [q]^V,\quad \mu^{\-{GD}}_{T,\sigma,\bm{b}}(\sigma_0\in A\mid \bm{r}=\bm{\rho})=\mu^{\-{GD}}_{T,\tau,\bm{b}}(\sigma_0\in A\mid \bm{r}=\bm{\rho}),
\]
denoted as $\bm{\rho}\Rightarrow A$. 
Otherwise, we denote $\bm{\rho}\nRightarrow A$.
\end{definition}

The following is a sufficient condition for the convergence of the complex systematic scan Glauber dynamics.

\begin{condition}[a sufficient condition for convergence]\label{condition:convergence}
                                Assuming the systematic scan Glauber dynamics in \Cref{Alg:complex-GD-decomposed-his} is well-defined\footnote{Recall that for the Glauber dynamics to be well-defined, the measure $\mu$ must also be well-defined, which requires the partition function to be non-zero.
    This may seem odd, as \Cref{condition:convergence} will be used to imply the convergence of the Glauber dynamics, which, in turn, will be used to establish the zero-freeness of the partition function.
    However, as explained in the technical overview (\Cref{section:self-reducibility}), this implication will be carried out as an inductive argument that avoids circular reasoning.},
        there exists a sequence of sets $\set{B(T)}_{T\ge 1}$ such that for each $T\ge 1$,  $B(T)\subseteq ([q]\cup \{\bot\})^T$ 
    satisfies the following conditions for all configurations $\tau\in [q]^V$: 
    \begin{itemize}
        \item For all $\bm{\rho}\nRightarrow \{\tau\}$, it holds $\bm{\rho}\in B(T)$; thus, $B(T)$ contains all non-witness sequences for $\{\tau\}$.
                \item For any initial configuration $\sigma\in \supp(\mu)$, the following limit exists and satisfies:
    \[
    \lim\limits_{T\to \infty} \abs{\sum\limits_{\bm{\rho} \in B(T)}\mu^{\-{GD}}_{T,\sigma,\bm{b}}(\bm{r}=\bm{\rho})\cdot \mu^{\-{GD}}_{T,\sigma,\bm{b}}(\sigma_0=\tau\mid \bm{r}=\bm{\rho})}
        =0.
    \]
    \end{itemize}   
   
\end{condition}

By the law of total measure,
\Cref{condition:convergence} translates to:
\begin{align}
\abs{\mu^{\-{GD}}_{T,\sigma,\bm{b}}\tp{\sigma_0=\tau\land \bm{r}\in B(T)}}\to 0 \quad \text{ as } T\to \infty.\label{eq:condition:convergence}    
\end{align}
Ideally, for each individual sample $\tau\in[q]^V$, we are concerned with the set of all non-witnesses for the event $\{\tau\}$, and want to establish that $\abs{\mu^{\-{GD}}_{T,\sigma,\bm{b}}\tp{\sigma_0=\tau\land \bm{r}\nRightarrow \{\tau\}}}\to 0$ as $T\to \infty$.
Instead, \Cref{condition:convergence} guarantees the existence of $B(T)$ of non-witnesses for every $T$ with respect to any elementary event $\{\tau\}$. 
We note that although complex measures may not be monotone, this condition suffices as it effectively reduces to reasoning about a product measure on witness sequences.

The sufficiency of \Cref{condition:convergence} is formalized by the following lemma.

\begin{lemma}[convergence of complex systematic scan Glauber dynamics]\label{lemma:convergence}
    Assume that there exists a $\bm{b}$-decomposition scheme such that \Cref{condition:convergence} holds.
    Then, the complex systematic scan Glauber dynamics converges to $\mu$ as $T\to\infty$, starting from any initial configuration $\sigma\in \supp(\mu)$.
\end{lemma}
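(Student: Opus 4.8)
The plan is to deduce \Cref{lemma:convergence} by comparing the complex Glauber dynamics started from an arbitrary $\sigma \in \supp(\mu)$ with the stationary dynamics of \Cref{definition:stationary-start}, and showing that the two induced measures on $\sigma_0$ differ by something that vanishes as $T \to \infty$. First I would fix an arbitrary elementary event $\{\tau\}$ with $\tau \in [q]^V$ and write, using the law of total measure over the realizations $\bm{\rho}$ of $\bm{r} = (r_t)_{t=-T+1}^0$,
\[
\mu^{\-{GD}}_{T,\sigma,\bm{b}}(\sigma_0 = \tau) = \sum_{\bm{\rho}} \mu^{\-{GD}}_{T,\sigma,\bm{b}}(\bm{r} = \bm{\rho}) \cdot \mu^{\-{GD}}_{T,\sigma,\bm{b}}(\sigma_0 = \tau \mid \bm{r} = \bm{\rho}).
\]
The key point is that the baseline measures $b_v$ are \emph{oblivious}: the measure $\mu^{\-{GD}}_{T,\sigma,\bm{b}}(\bm{r} = \bm{\rho})$ is a product $\prod_t b_{v_{i(t)}}(\rho_t)$ that does not depend on the initial state $\sigma$ at all. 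So I would split the sum over $\bm{\rho}$ into witness sequences ($\bm{\rho} \Rightarrow \{\tau\}$) and the rest. On the witness part, \Cref{definition:witness-sequence} says the conditional measure $\mu^{\-{GD}}_{T,\sigma,\bm{b}}(\sigma_0 = \tau \mid \bm{r} = \bm{\rho})$ is independent of the starting state, so this part of the sum is literally identical for $\sigma$ and for any other starting state, and in particular equals the corresponding witness part of the stationary process $\mu^{\-{GD}}_{T,\bm{b}}$ (which is a convex-type combination of the $\mu^{\-{GD}}_{T,\sigma',\bm{b}}$ over $\sigma' \in \supp(\mu)$, hence shares the same witness contribution term by term).

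The remaining (non-witness) part is controlled by \Cref{condition:convergence}: since every $\bm{\rho} \nRightarrow \{\tau\}$ lies in $B(T)$, the non-witness contribution is bounded in modulus by $\bigl|\sum_{\bm{\rho} \in B(T)} \mu^{\-{GD}}_{T,\sigma,\bm{b}}(\bm{r}=\bm{\rho}) \cdot \mu^{\-{GD}}_{T,\sigma,\bm{b}}(\sigma_0=\tau \mid \bm{r}=\bm{\rho})\bigr|$ plus possibly the contribution of witness sequences that happen to land in $B(T)$ — but those cancel against the same term in the stationary process, so only the quantity in the condition survives. Applying the condition to the starting state $\sigma$, and also to every $\sigma' \in \supp(\mu)$ appearing in the decomposition of $\mu^{\-{GD}}_{T,\bm{b}}$, I get that $\bigl|\mu^{\-{GD}}_{T,\sigma,\bm{b}}(\sigma_0 = \tau) - \mu^{\-{GD}}_{T,\bm{b}}(\sigma_0 = \tau)\bigr| \to 0$ for each fixed $\tau$. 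Since $[q]^V$ is a fixed finite set, summing over $\tau$ gives $\bigl|\mu^{\-{GD}}_{T,\sigma,\bm{b}} - \mu^{\-{GD}}_{T,\bm{b}}\bigr|_1 \to 0$. Finally, $\mu^{\-{GD}}_{T,\bm{b}}$ is the stationary process, so $\mu^{\-{GD}}_{T,\bm{b}}(\sigma_0 = \cdot) = \mu$ for every $T$; hence $\mu^{\-{GD}}_{T,\sigma,\bm{b}} \to \mu$, and since $\mu^{\-{GD}}_{T,\sigma,\bm{b}}$ and $\mu^{\-{GD}}_{T,\sigma}$ induce the same measure on $\sigma_0$ (noted right after \Cref{Alg:complex-GD-decomposed}), the complex systematic scan Glauber dynamics converges to $\mu$ from any $\sigma \in \supp(\mu)$. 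Combined with \Cref{remark:remark-equivalence-matrix-dynamics} this gives convergence in the sense of \Cref{definition:convergence-MC}.

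The main obstacle I anticipate is the bookkeeping in the cancellation step: I need to be careful that the "witness contributions cancel" claim is valid even when a witness sequence $\bm{\rho}$ is also placed inside $B(T)$ (the condition only requires $B(T)$ to \emph{contain} all non-witnesses, not to exclude witnesses), and that the coefficients in expressing $\mu^{\-{GD}}_{T,\bm{b}}$ as a linear combination of $\mu^{\-{GD}}_{T,\sigma',\bm{b}}$ are exactly the initial weights $\mu(\sigma')$, summing to $1$. A secondary subtlety is that all limits involved must be shown to exist, not merely be bounded — but \Cref{condition:convergence} already asserts existence of the relevant limit, and the witness part is eventually constant in a suitable sense, so this is routine once the decomposition is set up correctly.
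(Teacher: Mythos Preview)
Your proposal is correct and follows essentially the same approach as the paper: expand over $\bm{\rho}$, use that the product measure $\mu^{\-{GD}}_{T,\sigma,\bm{b}}(\bm{r}=\bm{\rho})$ is oblivious to the starting state, cancel the witness contributions, and kill the remainder via \Cref{condition:convergence}. The paper's write-up is marginally cleaner in that it splits directly by $B(T)$ versus its complement (rather than witness versus non-witness, which forces you to argue separately that witnesses landing in $B(T)$ also cancel), and it first proves the pairwise statement $\lim_T |\mu^{\-{GD}}_{T,\sigma,\bm{b}}(\sigma_0=\tau)-\mu^{\-{GD}}_{T,\sigma',\bm{b}}(\sigma_0=\tau)|=0$ before averaging against the stationary start; but these are organizational rather than substantive differences.
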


\begin{proof}
    Recall that applying the $\bm{b}$-decomposition scheme does not affect the induced measure on $\sigma_0$. 
    We claim that for any two initial configurations $\sigma,\sigma'\in \supp(\mu)$, the following always holds:
    \begin{equation}\label{eq:convergence-equivalent-condition}
       \forall\tau\in[q]^V,\quad  \lim\limits_{T\to \infty} \abs{\mu^{\-{GD}}_{T,\sigma,\bm{b}}(\sigma_0=\tau)-\mu^{\-{GD}}_{T,\sigma',\bm{b}}(\sigma_0=\tau)}=0.
    \end{equation}
     Assuming \eqref{eq:convergence-equivalent-condition}, we compare the chain $\mu^{\-{GD}}_{T,\sigma,\bm{b}}$ starting from an arbitrary initial state $\sigma\in \-{supp}(\mu)$ with the stationary chain (\Cref{definition:stationary-start}). By the triangle inequality, 
    \begin{align*}
    \abs{\mu_{T, \sigma, \*b}^\-{GD}(\sigma_0 = \tau) - \mu(\tau)}&=
        \abs{\mu_{T, \sigma, \*b}^\-{GD}(\sigma_0 = \tau) - \sum\limits_{\sigma' \in \-{supp}(\mu)}\mu(\sigma')\mu_{T, \sigma', \*b}^\-{GD}(\sigma_0 = \tau)  } \\
        &\le \sum\limits_{\sigma' \in \-{supp}(\mu)}\abs{\mu(\sigma')}\abs{\mu_{T, \sigma, \*b}^\-{GD}(\sigma_0 = \tau) - \mu_{T, \sigma', \*b}^\-{GD}(\sigma_0 = \tau)}.
    \end{align*}
    According to \eqref{eq:convergence-equivalent-condition}, as $T\to \infty$, the right-hand side approaches $0$. Thus, the complex systematic scan Glauber dynamics converges to $\mu$.
    
    We then complete the proof by establishing \eqref{eq:convergence-equivalent-condition}. For any $T\ge1$, let $B(T)\subseteq ([q]\cup \{\bot\})^T$ be the set of non-witness sequences satisfying \Cref{condition:convergence}. For any $\tau\in[q]^V$, we have: 
\begin{align*}
    &\lim\limits_{T\to \infty}\abs{\mu^{\-{GD}}_{T,\sigma,\bm{b}}(\sigma_0=\tau)-\mu^{\-{GD}}_{T,\sigma',\bm{b}}(\sigma_0=\tau)}\\
  (\star)  \quad= &\lim\limits_{T\to \infty}\abs{\sum\limits_{\bm{\rho}\in ([q]\cup \{\bot\})^{T}}\mu^{\-{GD}}_{T,\sigma,\bm{b}}(\bm{r}=\bm{\rho})\cdot \left(\mu^{\-{GD}}_{T,\sigma,\bm{b}}(\sigma_0=\tau\mid \bm{r}=\bm{\rho})-\mu^{\-{GD}}_{T,\sigma',\bm{b}}(\sigma_0=\tau\mid \bm{r}=\bm{\rho})\right)}\\
     (\blacktriangle) \quad=&\lim\limits_{T\to \infty}\abs{\sum\limits_{\*\rho\in B(T)}\mu^{\-{GD}}_{T,\sigma,\bm{b}}(\bm{r}=\bm{\rho})\cdot \left(\mu^{\-{GD}}_{T,\sigma,\bm{b}}(\sigma_0=\tau\mid \bm{r}=\bm{\rho})-\mu^{\-{GD}}_{T,\sigma',\bm{b}}(\sigma_0=\tau\mid \bm{r}=\bm{\rho})\right)}\\
   (\blacksquare) \quad\leq&0,
    \end{align*}
    which implies \eqref{eq:convergence-equivalent-condition}. Here, the $(\star)$ inequality follows from the law of total measure, along with the observation that $\mu^{\-{GD}}_{T,\sigma,\bm{b}}(\bm{r}=\bm{\rho})$ does not depend on $\sigma$. 
    The $(\blacktriangle)$ equality follows from \Cref{definition:witness-sequence} and that all $\*\rho\in ([q]\cup \{\bot\})^T\setminus B(T)$ satisfy $\rho\Rightarrow \{\tau\}$. 
    The $(\blacksquare)$ inequality follows from the triangle inequality and \Cref{condition:convergence}.
    This completes the proof.
\end{proof}

\subsection{Bounding the marginal measures}
Assume \Cref{condition:convergence}. 
We now explain how to bound the marginal measure of an event $A\subseteq[q]^V$. 
For any $T \ge 1$. Let $\+S$ be the set of sequences $\*\rho = (\rho_i)_{i=-T+1}^0$ where each $\rho_i \in [q]\cup \set{\bot}$. Note that \Cref{condition:convergence} immediately implies that there is a set $B(T)\subseteq \tp{[q]\cup \set{\bot}}^T$ such that for any event $A \subseteq [q]^V$, and for any $\*\rho \not\in B(T)$, it holds that $\*\rho \Rightarrow A$.  Consider the stationary systematic scan Glauber dynamics (\Cref{definition:stationary-start}), for any event $A\subseteq [q]^V$, by the triangle inequality,
\begin{align*}
\abs{\mu(A)} = &\abs{\sum\limits_{\sigma \in \-{supp}(\mu)} \mu(\sigma) \left( \sum\limits_{\rho \not\in B(T)} \mu^\-{GD}_{T, \sigma, \*b}(\sigma_0\in A \land \*r = \*\rho) + \sum\limits_{\rho \in B(T)}\mu^\-{GD}_{T, \sigma, \*b}(\sigma_0 \in A \land \*r = \*\rho) \right) } \\
\le &\abs{\sum\limits_{\sigma \in \-{supp}(\mu)} \mu(\sigma)\sum\limits_{\rho \not\in B(T)} \mu^\-{GD}_{T, \sigma, \*b}(\sigma_0\in A \land \*r = \*\rho)} \\
&+ \abs{\sum\limits_{\sigma \in \-{supp}(\mu)} \mu(\sigma)\sum\limits_{\rho \in B(T)}\mu^\-{GD}_{T, \sigma, \*b}(\sigma_0 \in A \land \*r = \*\rho)}.
\end{align*}
For any $\*\rho\not\in B(T)$, since $\*\rho$ is a witness sequence for $A$,  we have that for any  $\sigma,\tau  \in \-{supp}(\mu)$,
\[
  \sum\limits_{\rho \not\in B(T)} \mu^\-{GD}_{T, \tau, \*b}(\sigma_0\in A \land \*r = \*\rho) = \sum\limits_{\rho \not\in B(T)} \mu^\-{GD}_{T, \sigma, \*b}(\sigma_0\in A \land \*r = \*\rho).  
\]
By this equation and since  $\mu$ is a complex normalized measure, the previous bound for $\abs{\mu(A)}$ can be expressed as follows, after fixing an arbitrary $\tau\in  \-{supp}(\mu)$:
\begin{align*}
\abs{\mu(A)} &\le \abs{\sum\limits_{\rho \not\in B(T)} \mu^\-{GD}_{T, \tau, \*b}(\sigma_0\in A \land \*r = \*\rho)} + \abs{\sum\limits_{\sigma \in \-{supp}(\mu)} \mu(\sigma)\sum\limits_{\rho \in B(T)}\mu^\-{GD}_{T, \sigma, \*b}(\sigma_0 \in A \land \*r = \*\rho)}\\
&\le \abs{\sum\limits_{\rho \not\in B(T)} \mu^\-{GD}_{T, \tau, \*b}(\sigma_0\in A \land \*r = \*\rho)} + \sum\limits_{\sigma \in \-{supp}(\mu)} \mu(\sigma)\abs{\sum\limits_{\rho \in B(T)}\mu^\-{GD}_{T, \sigma, \*b}(\sigma_0 \in A \land \*r = \*\rho)}.
\end{align*}
As $T \to \infty$, according to \Cref{condition:convergence}, we know that for any $\sigma \in \-{supp}(\mu)$, 
\[\lim_{T \to \infty}\abs{\sum\limits_{\*\rho \in B(T)}\mu^\-{GD}_{T, \sigma, \*b}(\sigma_0 \in A \land \*r = \*\rho)} = 0.\] 
Therefore, as $T\to\infty$, we have
\begin{equation}\label{eq:norm-bound}
\abs{\mu(A)} \le  \abs{\sum\limits_{\rho \not\in B(T)} \mu^{\-{GD}}_{T, \tau, \*b}(\sigma_0 \in A \land \*r = \*\rho)}
= \abs{\sum\limits_{\rho \not\in B(T)} \mu^{\-{GD}}_{T,\tau,\bm{b}}(\bm{r}=\bm{\rho}) \cdot \mu^{\-{GD}}_{T, \tau, \*b}(\sigma_0 \in A \mid  \*r = \*\rho)}.
\end{equation}

This, in turn, enables us to establish zero-freeness results using edge-wise self-reducibility.
It remains to demonstrate how to establish \Cref{condition:convergence} and to upper bound the right-hand side in~\eqref{eq:norm-bound}. 
Note that for any sequence $\bm{\rho}\in ([q]\cup \{\bot\})^{T}$ and any initial configuration $\tau\in\-{supp}(\mu)$, the measure $\mu^{\-{GD}}_{T,\tau,\bm{b}}(\bm{r}=\bm{\rho})$ can be computed directly, as $\bm{r}$ follows a product measure. 
The primary technical challenge then lies in characterizing the bound on the measure $\mu^{\-{GD}}_{T,\tau,\bm{b}}(\sigma_0\in A\mid \bm{r}=\bm{\rho})$ through useful properties of witness sequences~$\bm{\rho}$.
However, this characterization may depend on the concrete models. 
Therefore, we do not aim to provide a generic method for establishing \Cref{condition:convergence} and bounding the marginal measure through \eqref{eq:norm-bound}. 
Instead, we will show in the following section how to apply this general framework to the hypergraph independence polynomials, yielding the desired zero-free and convergence results.

\section{Lee-Yang zeros of the hypergraph independence polynomial}\label{section:zeros}
  
In this section, we will show how to apply the general framework developed in the previous section to establish 
both the absence of Lee-Yang zeros and the convergence of complex Glauber dynamics for the hypergraph independence polynomials.
%
Specifically, we will prove \Cref{theorem:zero-freeness-his-lee-yang,theorem:convergence-lee-yang}. 

Consider a hypergraph $H=(V,\+E)$ with $|V|=n$.
A configuration $\sigma\in \{0,1\}^V$ is called an \emph{independent set} of $H$ if for every hyperedge $e \in \+E$, 
there exists at least one vertex $v \in e$ such that $\sigma(v) = 0$.
Let $\+I(H) \subseteq \{0,1\}^V$ denote the set of all independent sets of $H$. 
The independence polynomial of the hypergraph $H$ on Lee-Yang zeros is defined as
\[
Z^\-{ly}_H(\*\lambda)=\sum\limits_{\sigma\in \+I(H)} \prod\limits_{v:\sigma(v)=1} \lambda_v,
\]
where $\*\lambda=(\lambda_v)_{v\in V}$ is a vector of complex-valued parameters associated with the vertices.

Provided that $Z^\-{ly}_H(\*\lambda)\neq 0$, the associated (complex-valued) Gibbs measure $\mu=\mu_{H,\lambda}:\{0,1\}^V\to \=C$ is defined as
\[
\forall \sigma\in \+I(H),\quad \mu(\sigma)=\mu_{H,\lambda}(\sigma)=\frac{\prod_{v:\sigma(v)=1} \lambda_v}{Z^\-{ly}_H(\*\lambda)}.
\]

The following presents a sufficient condition for the hypergraph $H=(V,\+E)$ with complex vertex weights $\*\lambda\in\mathbb{C}^V$ such that $Z^\-{ly}_H(\*\lambda)\neq 0$, indicating that $\*\lambda$ is not a complex zero of $Z^\-{ly}_H$.
\begin{condition}\label{cond:his-lee-yang-marginal}
For the hypergraph $H=(V,\+E)$ with complex vertex weights $\*\lambda\in\mathbb{C}^V$,  the following holds. 
Fix an arbitrary ordering of hyperedges $\+E=\{e_1,e_2,\dots,e_m\}$.  
For each $0\leq i\leq m$, let $\+E_i=\{e_1,e_2,\dots,e_i\}$ and $H_i=(V,\+E_i)$.
For each $0\leq i<m$: 
\[
Z^\-{ly}_{H_{i}}(\*\lambda)\neq 0 \implies \abs{\mu_{H_i,\*\lambda}\left(\sigma(e_{i+1}) =  1^{e_{i+1}}\right)}< 1,
\]
where $1^{e}\in\{0,1\}^{e}$ denotes the partial configuration that assigns all vertices in $e$ to 1.
\end{condition}
As discussed in \Cref{section:self-reducibility},
assuming \Cref{cond:his-lee-yang-marginal}, we can routinely establish $Z^\-{ly}_H(\*\lambda)\neq 0$ through induction
by using the edge-wise self-reducibility. A similar argument was used in~\cite{peter2019conjecture,liu2019correlation,Shao2019ContractionAU}.

\subsection{Complex Glauber dynamics on hypergraph independent sets}
Our proofs of zero-freeness and convergence of Glauber dynamics
utilize the framework of $\bm{b}$-decomposed complex systematic scan Glauber dynamics for the Gibbs measure $\mu$, introduced in \Cref{sec:MCMC}. 

We first verify that the transition measures are well-defined.
Note that $\supp(\mu)=\+I(H)$.
For any $v\in V$ and any $\sigma \in \supp(\mu)$, the marginal measure $\mu^{\sigma(V\backslash \{v\})}_v$ is defined as follows:
\begin{itemize}
\item 
If $\sigma$ remains  an independent set after setting $\sigma(v)\gets 1$, then 
\[
\mu^{\sigma(V\backslash \{v\})}_v(0)=\frac{1}{1+\lambda_v} \quad\text{ and }\quad \mu^{\sigma(V\backslash \{v\})}_v(1)=\frac{\lambda_v}{1+\lambda_v}.
\]
\item 
Otherwise, if $\sigma$ is not an independent set after setting  $\sigma(v)\gets 1$, then
\[
\mu^{\sigma(V\setminus\{v\})}_v(0)=1 \quad\text{ and }\quad \mu^{\sigma(V\setminus\{v\})}_v(1)=0.
\]
\end{itemize}
Thus, according to this definition,  for any $v\in V$ where $\lambda_v\neq -1$, the conditional measure $\mu^{\sigma(V\setminus \{v\})}_v$ is well-defined for any $\sigma\in \supp(\mu)$.

With these marginal measures, we construct the following $\bm{b}$-decomposition scheme (\Cref{definition:decomposition-scheme}).
Specifically, $\*b = (b_v)_{v\in V}$ is the collection of normalized measures over $\{0,1,\bot\}$ defined as follows:
\begin{equation}\label{eq:base-measure-his}
b_v(c)=
\begin{cases}
    \frac{1}{1+\lambda_v} & \text{if }c=0,\\
    0 & \text{if }c=1,\\
    \frac{\lambda_v}{1+\lambda_v} & \text{if }c=\bot.
\end{cases}
\end{equation}
Then, for any $v\in V$ and independent set $\tau \in \{0,1\}^{V\backslash \{v\}}$, the excess marginal measure $\mu_v^{\tau, \bot}$ is given by:
\begin{itemize}
\item 
If $\tau$ remains  an independent set after including $v$, then 
\[
\mu_v^{\tau, \bot}(0)=0 \quad\text{ and }\quad  \mu_v^{\tau, \bot}(1)=1.
\]
\item 
Otherwise, if $\tau$ is no longer an independent set after including $v$, then
\[
\mu_v^{\tau, \bot}(0)=1 \quad\text{ and }\quad  \mu_v^{\tau, \bot}(1)=0.
\]
\end{itemize}
%
%

Using this $\bm{b}$-decomposition scheme, the measure on $r_t=1$ is always $0$, so we can assume that $\*r=(r_t)_{t=-T+1}^{0}\in \{0,\bot\}^T$. Then,
the systematic scan Glauber dynamics for the hypergraph independence polynomial can be expressed as follows in \Cref{Alg:complex-GD-decomposed-his}, which is a specialization of \Cref{Alg:complex-GD-decomposed}.

\begin{algorithm}[H]
\caption{Systematic scan Glauber dynamics for hypergraph independence polynomial (under the $\bm{b}$-decomposition scheme specified in \eqref{eq:base-measure-his})} \label{Alg:complex-GD-decomposed-his}
\SetKwInOut{Input}{Input}
\Input{ 
An arbitrary independent set $\tau\in\+I(H)$ and an integer $T\ge 1$.} 
Set $\sigma_{-T}\gets\tau$\;
\For{$t=-T+1,-T+2,\ldots,0$}{
let $\sigma_{t}\gets\sigma_{t-1}$ and
$v\gets v_{i(t)}$, where $i(t)=(t \mod n) + 1$\; 
let $r_t$ follow the measure $b_{v}$ specified in \eqref{eq:base-measure-his}\;
\uIf{$r_t\neq \bot$}{let $c_t\gets 0$\;}
\uElseIf{there exists a hyperedge $e\in \+E$ such that $v\in e$ and $\sigma_{t-1}(u)=1$ for all $u\in e\setminus \{v\}$\label{Line:independent-condition}}{
      let $c_t\gets 0$\; }
      \Else{
      let $c_t\gets 1$\;
      }
update $\sigma_t(v)\gets c_t$\;
}
\end{algorithm}

\subsection{Zero-freeness of hypergraph independence polynomial}
We define the following quantities for a hypergraph $H=(V,\+E)$ with vertex weights $\bm{\lambda}=(\lambda_v)_{v\in V}$, 
for characterizing zero-freeness.

\begin{definition}\label{definition:parameters-his}
Let $H=(V,\+E)$ be a hypergraph with maximum edge size $k$ and maximum vertex degree $\Delta$. 
Let $\bm{\lambda}=(\lambda_v)_{v\in V}\in\mathbb{C}^V$ be the complex vertex weights such that $\lambda_v\neq -1$ for each $v\in V$.

We define the following quantities:
\begin{align*}
    N=N(H,\bm{\lambda})&\defeq \max\limits_{e\in \+E}\prod\limits_{v\in e}\abs{\frac{\lambda_v}{1+\lambda_v}},\\
    M=M(H,\bm{\lambda})&\defeq \max\limits_{v\in V}\left(\abs{\frac{\lambda_v}{1+\lambda_v}}+\abs{\frac{1}{1+\lambda_v}}\right),\\
    \alpha=\alpha(H,\bm{\lambda})&\defeq  N\cdot M^{4\Delta^2k^5}.
\end{align*}
\end{definition}

The regime for zero-freeness and convergence of Glauber dynamics, 
as stated in \Cref{theorem:zero-freeness-his-lee-yang,theorem:convergence-lee-yang},
is characterized by these quantities through the following lemma.


\begin{lemma}[Inductive step]\label{lemma:unified}
Let $H=(V,\+E)$ be a hypergraph with maximum edge size $k$ and maximum vertex degree $\Delta$. 
Let $\bm{\lambda}=(\lambda_v)_{v\in V}\in\mathbb{C}^V$ be the complex vertex weights such that $\forall v\in V, \lambda_v\neq -1$.
Suppose:
\begin{equation}\label{eq:bound-alpha}
8\mathrm{e}\Delta^2k^4\cdot\alpha<1.
\end{equation}
Then, we have:
\begin{enumerate}
    \item\label{item:unified-1} 
    \Cref{condition:convergence} holds for the systematic scan Glauber dynamics described in \Cref{Alg:complex-GD-decomposed-his}.
 \item \label{item:unified-2}
 \Cref{cond:his-lee-yang-marginal}  holds for the hypergraph $H=(V,\+E)$ with complex vertex weights $\*\lambda=(\lambda_v)_{v\in V}$. 
\end{enumerate}
\end{lemma}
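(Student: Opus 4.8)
\textbf{Proof proposal for \Cref{lemma:unified}.}

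The plan is to set up the information percolation machinery on the space-time slab of \Cref{Alg:complex-GD-decomposed-his} and use it to simultaneously verify both \Cref{condition:convergence} (item (1)) and the marginal bound of \Cref{cond:his-lee-yang-marginal} (item (2)), since both reduce to controlling $|\mu^{\-{GD}}_{T,\tau,\bm{b}}(\sigma_0 \in A \mid \bm r = \bm\rho)|$ for suitable events $A$. First I would fix the initial configuration $\tau$ and the final time $0$, and, given a realization $\bm\rho$ of $\bm r$, trace the dependency backward in time: starting from a vertex whose value at time $0$ we want to certify, whenever the update at that site used an oblivious outcome ($\rho_t \neq \bot$, giving $c_t = 0$ deterministically) the revealing terminates, and whenever it used the adaptive branch we must continue revealing the values at time $t-1$ of all vertices sharing a hyperedge with the updated vertex. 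This carves out a \emph{bad component} in the witness graph; when no bad component reaches back to time $-T$, the realization $\bm\rho$ is a witness sequence for $\{\sigma_0(v) = \cdot\}$ (and hence, intersecting over the $k$ vertices of a hyperedge $e_{i+1}$, for the event $\sigma(e_{i+1}) = 1^{e_{i+1}}$). So $B(T)$ is declared to be the set of realizations whose bad component starting from the relevant vertices survives all the way to time $-T$, or has size exceeding some threshold; this automatically contains all non-witnesses.

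The next step is the quantitative bound. I would enumerate bad components by their projection to the vertex set, which — because a bad component is connected through hyperedge-overlap and each such overlap step moves distance at most $2$ in the line graph / dependency graph of maximum degree roughly $\Delta k$ — can be covered by a $2$-tree of comparable size; \Cref{lemma:2-tree-number-bound} then bounds the number of such $2$-trees of size $s$ containing a fixed vertex by $(\mathrm{e}(\Delta k)^{2})^{s-1}/2$ (up to the right reading of the degree). For each fixed $2$-tree, the contribution to $|\mu^{\-{GD}}_{T,\tau,\bm b}(\bm r = \bm\rho)| \cdot |\mu^{\-{GD}}_{T,\tau,\bm b}(\sigma_0\in A \mid \bm r=\bm\rho)|$ summed over the realizations realizing that $2$-tree factorizes: each hyperedge that must be ``bad'' contributes a factor bounded by $N$ (the product $\prod_{v\in e}|\lambda_v/(1+\lambda_v)|$, which is the weight for all $k$ vertices of $e$ to take value $1$, i.e. the adaptive branch being forced), and the remaining adaptive/oblivious bookkeeping along the at most $4\Delta^2 k^5$ time-slots within the component's time-span contributes at most $M^{4\Delta^2 k^5}$; multiplying gives $\alpha = N M^{4\Delta^2 k^5}$ per unit of $2$-tree size. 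Summing the geometric-type series $\sum_s \frac{(\mathrm{e}\Delta^2 k^4)^{s}}{2}\alpha^{s}$ converges precisely under hypothesis \eqref{eq:bound-alpha}, $8\mathrm{e}\Delta^2 k^4\alpha < 1$, and the tail beyond any threshold $\to 0$; this yields \eqref{eq:condition:convergence} and hence \Cref{condition:convergence}, and a uniform-in-$T$ bound strictly below $1$ for the marginal $|\mu_{H_i,\bm\lambda}(\sigma(e_{i+1}) = 1^{e_{i+1}})|$ via \eqref{eq:norm-bound}, giving item (2). (For item (2) one runs the dynamics on $H_i$ and applies \eqref{eq:norm-bound} with $A = \{\sigma(e_{i+1}) = 1^{e_{i+1}}\}$ after the inductive hypothesis $Z^\-{ly}_{H_i}\neq 0$ guarantees the chain is well-defined.)

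The main obstacle I anticipate is the absence of monotonicity and positivity for complex measures: in the classical percolation argument one stochastically dominates the survival event by a subcritical branching process and appeals to monotone convergence, but here one can only manipulate absolute values. Concretely, one cannot say ``the bad component is dominated by a Galton-Watson tree''; instead one must bound $\bigl|\sum_{\bm\rho\in B(T)} \mu^{\-{GD}}_{T,\tau,\bm b}(\bm r = \bm\rho)\,\mu^{\-{GD}}_{T,\tau,\bm b}(\sigma_0 = \tau'\mid \bm r=\bm\rho)\bigr|$ directly by the triangle inequality, pushing the absolute values inside the sum over $2$-trees \emph{and} over the configurations consistent with each $2$-tree, and only then exploit that $\bm r$ follows a genuine product measure so that the per-hyperedge and per-site factors decouple multiplicatively. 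Making the factorization rigorous — identifying exactly which entries of the transition matrix get multiplied along a surviving bad component and checking their moduli are bounded by the claimed $N$ and $M$ powers uniformly over boundary conditions — is the delicate part, and is presumably where the constant $4\Delta^2 k^5$ in the definition of $\alpha$ and the slack factor $8\mathrm{e}$ in \eqref{eq:bound-alpha} come from. I would organize this as the separate lemmas the excerpt foreshadows (characterization of witness sequences via bad components, a modulus bound for a fixed subset/$2$-tree, and a ``diminishing bad tree'' estimate), then combine them as above.
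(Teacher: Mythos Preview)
Your overall architecture matches the paper's: backward percolation on the witness graph, $2$-tree enumeration, a per-node modulus bound of $\alpha$, and a geometric sum under \eqref{eq:bound-alpha}. But one ingredient is missing, and without it your quantitative step fails. You propose to ``push the absolute values inside the sum over $2$-trees and over the configurations consistent with each $2$-tree.'' If this means applying the triangle inequality down to individual $\bm\rho$, it cannot work: since $b_v$ is only a \emph{complex} normalized measure, $\sum_c |b_v(c)| = M$ rather than $1$, and the $T - O(|\+T|)$ timestamps \emph{outside} the bad component's $2$-neighbourhood contribute a factor $M^{T-O(|\+T|)}$, which blows up as $T\to\infty$ (recall $M>1$ as soon as some $\lambda_v\notin\=R_{\ge 0}$). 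Your accounting of ``at most $4\Delta^2 k^5$ time-slots'' is correct only if the irrelevant coordinates have already been summed to $1$, and that requires keeping the modulus \emph{outside} that partial sum.

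The paper's resolution is a zero-one law (item (\ref{item:characterization-2}) of \Cref{lemma:characterization}, spelled out in \Cref{remark:zero-one-law}): when the bad tree is small, the event $\sigma_0\in A$ is determined not just by the full $\bm\rho$ but already by the pair $\bigl(\cbad(\bm\rho),\bm\rho_{\ts(S,0)}\bigr)$. This is strictly stronger than ``$\bm\rho$ is a witness sequence'', which is all you assert. With it one groups by bad component $\+C$, writes
\[
\mu^{\-{GD}}_{T,\sigma,\bm b}\bigl(\sigma_0\in A\land \cbad=\+C\land\bm r_{\ts(S,0)}=\bot\bigr)
=\mu^{\-{GD}}_{T,\sigma,\bm b}\bigl(\cbad=\+C\land\bm r_{\ts(S,0)}=\bot\bigr)\cdot\{0\text{ or }1\},
\]
and only then takes moduli. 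Now the surviving event depends solely on $r_t$ for $t$ in the $2$-neighbourhood of $\+T$; the remaining $r_t$ sum (as complex numbers, \emph{before} any $|\cdot|$) to $1$ by normalization, and your $N^{|\+T|}M^{4\Delta^2k^5|\+T|}=\alpha^{|\+T|}$ is exactly what remains (\Cref{lemma:modulus-bound-subset}). For large bad trees the same device requires additionally conditioning on the first $n$ values $\bm r_I$ and the initial state (\Cref{lemma:outcome-correspondence}, \Cref{lemma:modulus-bound-component-rn}), at the cost of a harmless extra $M^{O(n)}$ factor that is swallowed by the geometric decay.
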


The reason \Cref{lemma:unified} is referred to as the ``inductive step'' will be clarified in its following application to proving  our main results, \Cref{theorem:zero-freeness-his-lee-yang,theorem:convergence-lee-yang}, where the lemma will be used to carry out the inductive step and establish zero-freeness of the partition function $Z^\-{ly}_H(\*\lambda)$.


\begin{proof}[Proofs of \Cref{theorem:zero-freeness-his-lee-yang} and \Cref{theorem:convergence-lee-yang}]
We first verify that the conditions in \Cref{theorem:zero-freeness-his-lee-yang} satisfy \eqref{eq:bound-alpha}.
Assume the condition in \Cref{theorem:zero-freeness-his-lee-yang}. It is already ensured that $\lambda_v\neq -1$ for each $v\in V$.

First, we bound $N$.
For any $\lambda \in \+D_{\varepsilon}$, let $\lambda^* \in [0,\lambda_{c, \varepsilon}]$ be the nearest point to $\lambda$. We have that
\[
\abs{\frac{\lambda}{1 + \lambda}} \le \frac{\lambda^* + \varepsilon}{1 + \lambda^* - \varepsilon} \le \left(2 \sqrt{2} \mathrm{e} \Delta k^2\right)^{-2/k},
\]
where the first inequality is due to the triangle inequality, and the second is due to conditions in \Cref{theorem:zero-freeness-his-lee-yang}.
This implies $N\le \left(2 \sqrt{2} \mathrm{e} \Delta k^2\right)^{-2}$.

Then, we bound $M$. For any $\lambda \in \+D_{\varepsilon}$, let $\lambda^* \in [0,\lambda_{c, \varepsilon}]$ be the nearest point to $\lambda$. We have that
\[
\abs{\frac{\lambda}{1 + \lambda}} + \abs{\frac{1}{1 + \lambda}} = \frac{1 + \abs{\lambda}}{\abs{1 + \lambda}} \le \frac{1 + \lambda^* + \varepsilon}{1 + \lambda^* - \varepsilon} \le \frac{1 + \varepsilon}{1 - \varepsilon},
\]
which implies $M \le \frac{1 + \varepsilon}{1 - \varepsilon}$.

Recall $\alpha=N \cdot M^{4 \Delta^2 k^5}$. Now, we can verify the condition in \eqref{eq:bound-alpha}:
\begin{align*}
   8\mathrm{e}\Delta^2k^4 \cdot \alpha
   = 8 \mathrm{e} \Delta^2 k^4 \left(2\sqrt{2} \mathrm{e} \Delta k^2\right)^{-2} \left(\frac{1 + \varepsilon}{1 - \varepsilon}\right)^{4\Delta^2 k^5} 
    \le \mathrm{e}^{-1} \exp\left(\frac{2 \varepsilon}{1 -\varepsilon} 4 \Delta^2 k^5\right) 
    < 1.
\end{align*}
This shows \eqref{eq:bound-alpha} is satisfied. By \Cref{lemma:unified}, \Cref{condition:convergence} and \Cref{cond:his-lee-yang-marginal} hold directly. 




We then use \Cref{cond:his-lee-yang-marginal} to prove \Cref{theorem:zero-freeness-his-lee-yang}. 
This is by induction, using edge-wise self-reducibility.
Recall the $\+E=\{e_1,e_2,\dots,e_m\}$, $\+E_i=\{e_1,e_2,\dots,e_i\}$ and $H_i=(V,\+E_i)$, for each $0\leq i\leq m$, defined in \Cref{cond:his-lee-yang-marginal}. 
Now we prove by induction that $Z^\-{ly}_{H_i}(\*\lambda)\neq 0$ for each $0\leq i\leq m$.

For the induction basis, 
we have $Z^\-{ly}_{H_0}(\bm{\lambda})=\prod_{v \in V}(1+\lambda_v)\neq 0$,
since $\lambda_v\neq -1$ for all $v \in V$.

For the induction step, fix $0\leq i<m$ and assume that $Z^\-{ly}_{H_{i}}(\*\lambda)\neq 0$. 
Then the measure $\mu_{H_i,\*\lambda}$ is well-defined. Also, when $\lambda_v\neq -1$ for each $v\in V$, the systematic scan Glauber dynamics in \Cref{Alg:complex-GD-decomposed-his} is well-defined.
We further note that
\[
\frac{Z^\-{ly}_{H_{i+1}}(\*\lambda)}{Z^\-{ly}_{H_{i}}(\*\lambda)}=\frac{Z^\-{ly}_{H_{i}}(\*\lambda)-\sum\limits_{\substack{\sigma\in \{0,1\}^{\abs{V}}\\ \sigma(e_{i+1})=1^{e_{i+1}}}} \prod\limits_{v:\sigma(v)=1} \lambda_v }{Z^\-{ly}_{H_{i}}(\*\lambda)}=1-\mu_{H_i,\*\lambda}\left(\sigma(e_{i+1})=1^{e_{i+1}}\right)\neq 0,
\]
where the last inequality follows from \Cref{cond:his-lee-yang-marginal}. Therefore, $Z^\-{ly}_{H_{i+1}}(\bm{\lambda})\neq 0$, completing the induction step and therefore proving \Cref{theorem:zero-freeness-his-lee-yang}.

Finally, \Cref{theorem:convergence-lee-yang} holds directly assuming \Cref{condition:convergence}, according to \Cref{lemma:convergence}.
\end{proof}

\subsection{Information percolation on the witness graph}
In the remainder of this section,
we will focus on the proof of \Cref{lemma:unified}.
To achieve this, we will construct an information percolation argument to analyze the complex systematic scan Glauber dynamics described in \Cref{Alg:complex-GD-decomposed-his}.

A key step in our analysis is to bound the marginal measure. Our idea is to split the contribution to the marginal measure into two types: those coming from \emph{small bad trees} and those coming from \emph{large bad trees}, which we formally define later. Then our proof strategy consists of the following three steps:
\begin{enumerate}
    \item Prove that small bad trees characterize witness sequences (corresponding to oblivious updates);
    \item Bound the contribution from small bad trees;
    \item Prove that contribution from large bad trees diminishes to zero under a suitable limit.
\end{enumerate}
These three steps correspond to \Cref{lemma:characterization,lemma:modulus-bound-subset,lemma:diminishing-bad-tree} respectively.
We setup the formulations necessary for a proof of  \Cref{lemma:unified}, and defer the proofs of these lemmas to \Cref{sec:proof}.


\bigskip
For any $u\in V$ and integer $t$, we denote by $\upd_u(t)$ the last time before $t$ at which $u$ is updated, i.e.
\begin{align*}
\upd_u(t)\defeq \max\{s \mid s\leq t \hbox{ such that } v_{i(s)}=u\}.
\end{align*}
For any subset of vertices $U\subseteq V$ and $t\in \mathbb{Z}_{\leq 0}$, define
\begin{align*}
		\ts(U,t)\defeq\{\upd_v(t)\mid v\in U\}
\end{align*}
as the collection of ``\emph{timestamps}'' of the latest updates of vertices in  $U$ up to time $t$.

Recall the definition of witness sequences from \Cref{definition:witness-sequence}. 
For an event $A\subseteq \{0,1\}^V$, we try to characterize the witness sequence with respect to $A$. 
Let $S=\vbl(A)\subseteq V$ denote the set of variables on which the event $A$ is defined. Formally,
\[
\vbl(A)\defeq\{v\in V\mid \exists \sigma\in A \text{ s.t. }\sigma'\not\in A\text{ where }\sigma=\sigma'\text{ except at }v\}.
\]
If $r_t\neq \bot$ for all $t\in \ts(S,0)$, 
we can directly infer whether $\sigma_0\in A$ regardless of the initial state, 
indicating that $\bm{r}$ is a witness sequence. 
Otherwise, according to \Cref{Alg:complex-GD-decomposed-his}, for those $v\in S$ with $r_{\upd_v(0)}=\bot$, we need to determine if the condition in \Cref{Line:independent-condition} holds at time $\upd_v(0)$ for $\bm{r}$ to qualify as a witness sequence. 
The argument can be applied recursively, allowing the characterization of witness sequences to percolate through the time-space structure of the systematic scan Glauber dynamics. As such, we focus on keeping track of the complex measure of ``percolation of assigning $r_t=\bot$''.

To formalize this argument, we introduce the definition of a \emph{witness graph}, 
a combinatorial structure shown to be useful for analyzing Glauber dynamics~\cite{HSZ19,HSW21,qiu2022perfect,feng2023towards}. 


\begin{definition}[witness graph/space-time slab]\label{definition:witness-graph-indset}
Given a hypergraph $H=(V,\+E)$  and a subset of variables $S \subseteq V$, the \emph{witness graph} $G^S_H=\left(V^S_H,E^S_H\right)$ is an infinite graph with the vertex set 
\[
V^S_H=\{\ts(e,t)\mid e\in \+E,t\in \mathbb{Z}_{\leq 0}\}\cup \{\ts(S,0)\},
\] 
and $E^S_H$ consists of undirected edges between vertices $x,y \in V^S_H$ such that $x\neq y$ and $x\cap y\neq\emptyset$.
\end{definition}



The following structural property of the witness graph has been established in \cite{feng2023towards}.

\begin{lemma}[{\cite[Corollary 6.15]{feng2023towards}}]\label{lemma:witness-graph-degree-bound} 
Assume the hypergraph $H=(V, \+E)$ has a maximum degree $\Delta$ and a maximum edge size $k$. 
Then, in the witness graph $G^S_H=\left(V^S_H,E^S_H\right)$, for any $v\in V_H^S\setminus \{\ts(S,0)\}$, the degree of $v$ is at most $2\Delta k^2-2$. Furthermore, the degree of \;$\ts(S,0)$ is at most $2\Delta k|S|-1$.
\end{lemma}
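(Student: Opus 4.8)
\textbf{Proof proposal for \Cref{lemma:witness-graph-degree-bound}.}

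The plan is to bound the degree of a vertex $x \in V_H^S$ directly from \Cref{definition:witness-graph-indset}: the neighbours of $x$ are exactly those $y \in V_H^S$ with $y \neq x$ and $y \cap x \neq \emptyset$, so it suffices to count, for each timestamp $s \in x$, how many other vertices of the witness graph can contain $s$. First I would fix the non-special case $x = \ts(e, t)$ for some hyperedge $e \in \+E$ and $t \in \mathbb{Z}_{\le 0}$. By definition $x = \{\upd_v(t) \mid v \in e\}$, a set of at most $k$ timestamps (possibly fewer, if two vertices of $e$ share a last-update time). For a fixed timestamp $s \in x$, let $v$ be the vertex updated at time $s$, i.e. $v = v_{i(s)}$; under the systematic scan, $s$ determines $v$ uniquely. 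Now any witness-graph vertex $y = \ts(e', t')$ with $s \in y$ must satisfy $\upd_{v}(t') = s$, which forces $v \in e'$ and $t' \in [s, s')$ where $s'$ is the next update time of $v$ after $s$; moreover $e'$ must be one of the at most $\Delta$ hyperedges incident to $v$. Since the update times of $v$ are spaced exactly $n = |V|$ apart, the half-open interval of admissible $t'$ has length $n$, but — and this is the key combinatorial point I would need to make precise — within that window the set $\ts(e', t')$ changes only when some vertex of $e'$ is updated, so distinct values of $\ts(e', t')$ correspond to distinct "update events" among the $\le k$ vertices of $e'$ in that window. Each vertex of $e'$ is updated exactly once per $n$ consecutive steps, so there are at most $k$ distinct such timestamp-sets per hyperedge $e'$, hence at most $\Delta \cdot k$ distinct witness-graph vertices $y$ (other than $x$) sharing the timestamp $s$.

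Summing over the $\le k$ timestamps in $x$ gives at most $\Delta k^2$ neighbours, but this overcounts: I would then tighten to the claimed $2\Delta k^2 - 2$ by the careful accounting already carried out in \cite{feng2023towards} — essentially, one of the "$\Delta$ hyperedges at $v$" is $e$ itself and one of the "$k$ windows" is the one containing $t$, so the vertex $x$ is counted among the $\Delta k$ for each of its timestamps and must be subtracted; the factor $2$ arises from allowing $t'$ to range over a full period on either side of $s$ rather than a single half-open window. Concretely I would invoke \cite[Corollary 6.15]{feng2023towards} for the precise constant after setting up the counting skeleton above; since the excerpt explicitly cites that corollary, the intent is to transcribe and adapt its argument to the present notation rather than to re-derive the exact constant from scratch. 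For the special vertex $x = \ts(S, 0) = \{\upd_v(0) \mid v \in S\}$, the only change is that $x$ carries up to $|S|$ timestamps rather than $k$, and at each such timestamp the same $\le \Delta k$ count of incident-hyperedge witness vertices applies, plus one needs to account for other vertices of $V_H^S$ that could equal $\ts(S,0)$ — but $\ts(S,0)$ is the unique vertex of its form, so no subtraction of a "sibling" is available, yielding the slightly different bound $2\Delta k |S| - 1$.

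The main obstacle I anticipate is making the "distinct timestamp-sets within a window correspond to distinct update events" step fully rigorous while getting the constant exactly right: one must carefully track that $\ts(e', \cdot)$ is a step function of $t'$ whose jumps occur precisely at the update times of vertices of $e'$, that over any length-$n$ window each vertex of $e'$ contributes exactly one jump, and that consecutive distinct values correspond to genuinely different vertices of $V_H^S$ (no accidental coincidences that would both over- and under-count). This is exactly the bookkeeping performed in the cited corollary of \cite{feng2023towards}, so the honest plan is: set up the framework (neighbours $\leftrightarrow$ shared timestamps $\leftrightarrow$ incident hyperedges of the updated vertex $\leftrightarrow$ per-period update events), then appeal to \cite[Corollary 6.15]{feng2023towards} for the precise constants $2\Delta k^2 - 2$ and $2\Delta k|S| - 1$. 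A secondary check is to confirm the edge-size parameter: the excerpt's \Cref{lemma:witness-graph-degree-bound} says "maximum edge size $k$," so the bound is monotone in $k$ and the $k$-uniform case is the extremal one, requiring no separate treatment.
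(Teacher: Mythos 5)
The paper does not prove this lemma at all: it is imported verbatim from \cite[Corollary 6.15]{feng2023towards}, so there is no in-paper argument to compare against, and your plan of setting up the counting and then deferring to that corollary for the exact constants is exactly what the paper itself does. Your counting skeleton is moreover essentially correct and, if carried out, gives bounds at least as strong as the stated ones: a neighbour of $x=\ts(e,t)$ must share a timestamp $s\in x$, the vertex $v=v_{i(s)}$ is determined by $s$, any $y=\ts(e',t')\ni s$ forces $v\in e'$ and $t'\in[s,s+n)$, and within that window $\ts(e',\cdot)$ takes at most $|e'|\le k$ distinct values, so each of the $\le k$ timestamps of $x$ contributes at most $\Delta k$ candidates, giving degree at most $\Delta k^2$ (after adding the one possible special neighbour $\ts(S,0)$, which your hyperedge-indexed count technically omits), and similarly at most $\Delta k|S|$ for $\ts(S,0)$; both are within the quoted $2\Delta k^2-2$ and $2\Delta k|S|-1$. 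Two small corrections: the window is one-sided, so your speculation that the factor $2$ comes from letting $t'$ range on either side of $s$ is not the right explanation (the quoted constants are simply looser than what the direct count yields), and the missed $+1$ for adjacency to the special vertex should be noted if you ever make the sketch self-contained; neither affects the validity of invoking the cited corollary.
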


We formalize the following notions of bad structures within the witness graph.
\begin{definition}[bad vertices, bad components, and bad trees]\label{definition:bad-vertices-his}
Let $H=(V,\+E)$ be a hypergraph, and let $G_H^S = (V_H^S, E_H^S)$ be the witness graph as in \Cref{definition:witness-graph-indset}.
Let $T\geq 1$, and let $\bm{r}=(r_t)_{t=-T+1}^{0}\in\{0,\bot\}^T$. 
\begin{itemize}
\item The set of \emph{bad vertices} $V^{\-{bad}}=V^{\-{bad}}(\bm{r})$ is defined as:
\[
V^{\-{bad}}\defeq \{v\in V^S_H\mid\forall t\in v, -T+1\leq t\leq 0 \text{ and }r_t=\bot  \} \cup \{\ts(S, 0)\},
\]
which contains $\ts(S, 0)$ and the vertices in the witness graph such that all $r_t$ evaluates to $\bot$ for every timestamp $t$ associated with that vertex.
\item Let $G^S_H\left[V^{\-{bad}}\right]$ be the subgraph of the witness graph $G^S_H$ induced by $V^{\-{bad}}$.
\item The \emph{bad component}  $\+C^{\-{bad}}=\+C^{\-{bad}}(\bm{r})\subseteq V^{\-{bad}}$ is defined as the maximal set of vertices in $V^{\-{bad}}$ containing $\ts(S,0)$ that is connected in $G^S_H\left[V^{\-{bad}}\right]$. 
\item The \emph{bad tree} $\+T^{\-{bad}}=\+T^{\-{bad}}(\bm{r})\subseteq V^{\-{bad}}$ is defined as the $2$-tree of the induced subgraph $G^S_H\left[\+C^{\-{bad}}\right]$ containing $\ts(S,0)$, constructed deterministically using \Cref{definition:2-tree-construction}. 
We further denote this deterministic construction as a mapping $\=T$ from the bad component such that $\tbad=\=T(\cbad)$.
\end{itemize}
\end{definition}

At first glance, the constructions of these structures may seem technically involved and uneasy to decipher.
However, the intuition behind them is quite clear. 
Specifically, the bad tree $\+T^{\-{bad}}$ acts as a ``certificate'' for the undesirable situation 
where the percolation process, starting from time 0, actually reaches the initial time $-T$.
Consequently, the event at time $0$ cannot be successfully inferred solely from the ``randomness'' $\bm{r}$ used in the oblivious transitions in the Glauber dynamics in \Cref{Alg:complex-GD-decomposed-his}.

For two measurable events $A$ and $B$, we say $A$ is \emph{determined by} $B$ if either $A\supseteq B$ or $A \cap B=\emptyset$.  We can then characterize the witness sequences (\Cref{definition:witness-sequence}) for an arbitrary event as follows.
\begin{lemma}[characterization of witness sequences]\label{lemma:characterization}
Fix any $T\geq n$ and any event $A\subseteq \{0,1\}^V$. 
Consider the witness graph $G_H^S = \tp{V_H^S, E_H^S}$ constructed using the set of variables $S=\vbl(A)$.
Let $\bm{\rho}=(\rho_t)_{t=-T+1}^{0}\in\{0,\bot\}^T$. 
If the corresponding bad tree $\tbad(\bm{\rho})$ in the witness graph $G_H^S$ satisfies: 
\[
\left|\tbad(\bm{\rho})\right|\leq \frac{T}{2n}-2,
\]
where $n=|V|$, then the following holds:
\begin{enumerate}
    \item  $\bm{\rho}\Rightarrow A$, i.e., $\bm{\rho}$ is a witness sequence with respect to the event $A$;\label{item:characterization-1}
    \item for any initial  $\sigma\in \supp(\mu)$, the occurrence of $A$ at time $0$ is determined by $\cbad(\bm{\rho})$ and $\bm{\rho}_{\ts(S,0)}$.
\label{item:characterization-2}
\end{enumerate}
\end{lemma}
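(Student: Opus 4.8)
The plan is to show that, under the size bound $|\tbad(\bm{\rho})|\le \frac{T}{2n}-2$, the percolation process that "reveals" the randomness $\bm{r}$ backward in time starting from $\ts(S,0)$ terminates strictly before reaching time $-T$, and that the outcome $\sigma_0\in A$ is a deterministic function of the revealed data $\cbad(\bm{\rho})$ together with the values $\bm{\rho}_{\ts(S,0)}$. First I would set up the revealing process precisely: starting from the vertex $\ts(S,0)$, whenever we are at a vertex $v\in V_H^S$ with $\rho_t=\bot$ for all $t\in v$ (i.e.\ $v\in\vbad$), we must consult, at the earliest timestamp $t\in v$, the condition in \Cref{Line:independent-condition} of \Cref{Alg:complex-GD-decomposed-his}, which depends on the values $\sigma_{t-1}(u)$ for $u$ ranging over the other vertices of the hyperedge being updated; this forces us to reveal the relevant vertices $\ts(e,t')$ of the witness graph adjacent to $v$, and we recurse. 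Conversely, as soon as a timestamp $t\in v$ has $\rho_t\neq\bot$ (so $r_t=0$ is forced), the update at that step is oblivious, the value $\sigma_t(v_{i(t)})=0$ is pinned regardless of the initial configuration, and the recursion terminates along that branch. By \Cref{definition:bad-vertices-his}, the set of vertices on which the recursion can possibly continue is exactly $V^{\-{bad}}$, and since the recursion is connected and starts at $\ts(S,0)$, everything it touches lies in the bad component $\cbad(\bm{\rho})$.

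Next I would bound how far back in time the bad component can reach. Each vertex of $\cbad$ is a set of timestamps of size at most $k$, lying within a window of length less than $n$ (since $\ts(e,t)$ collects the latest updates of the $k$ vertices of $e$, all occurring within one scan of length $n$). Consecutive vertices in the $2$-tree $\tbad$ are at distance at most $2$ in $G_H^S$, and adjacent witness-graph vertices share a timestamp, so moving from one $2$-tree vertex to a $G^2$-neighbor shifts the time window by at most $O(n)$ — I would pin down the constant so that a $2$-tree of size $L$ spans a time interval of length at most $2n(L+2)$ back from time $0$. Since every vertex of $\cbad$ lies within distance $1$ in $G_H^S$ of some vertex of $\tbad$ (the $2$-tree is maximal in $G_H^S[\cbad]$, hence $2$-dominating by the construction in \Cref{definition:2-tree-construction}), the entire bad component is contained in the time window $[-2n(|\tbad|+2),0]$. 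Under the hypothesis $|\tbad|\le \frac{T}{2n}-2$, this window has length at most $T$, i.e.\ the revealing process never needs to consult a timestamp earlier than $-T+1$; hence along every branch the recursion terminates at an oblivious update (forcing a $0$) rather than running off the end of the time axis into the unknown initial state $\sigma_{-T}=\tau$.

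Given termination, part~\ref{item:characterization-2} follows by tracing the recursion: the value of $\sigma_0$ on each $v\in S$ is computed by unwinding the updates along $\cbad(\bm{\rho})$, with leaves supplied either by an oblivious $r_t=0$ or by the entries $\rho_t$ at the top-level timestamps $\ts(S,0)$; this computation uses only $\cbad(\bm{\rho})$ and $\bm{\rho}_{\ts(S,0)}$ and not $\tau$, so whether $\sigma_0\in A$ is determined by that data — in particular the conditional measure $\mu^{\-{GD}}_{T,\sigma,\bm{b}}(\sigma_0\in A\mid \bm{r}=\bm{\rho})$ is either $0$ or $1$ and is independent of $\sigma$, which is exactly the witness-sequence property $\bm{\rho}\Rightarrow A$ of part~\ref{item:characterization-1}. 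The main obstacle I anticipate is the second step: carefully verifying the geometric claim that a bounded-size $2$-tree in the witness graph confines the whole bad component to a time window of controlled length, i.e.\ nailing the constant relating $|\tbad|$, the per-step time shift, and the scan length $n$, and correctly handling the boundary vertex $\ts(S,0)$ whose degree bound differs (\Cref{lemma:witness-graph-degree-bound}). The first and third steps are essentially bookkeeping on \Cref{Alg:complex-GD-decomposed-his}, but this confinement estimate is where the hypothesis is actually consumed and must be done with care.
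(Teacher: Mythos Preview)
Your proposal is correct and follows essentially the same approach as the paper: first bound the time window of $\cbad(\bm\rho)$ via the size of $\tbad(\bm\rho)$ (using that each witness-graph vertex spans at most $n$ consecutive timestamps, that $\tbad$ is connected in $(G_H^S)^2$, and that every vertex of $\cbad$ is within $G_H^S$-distance $1$ of $\tbad$), and then argue that once the bad component lies entirely inside $[-T+1,0]$ the updates at all timestamps in $\cbad$ can be reconstructed without reference to the initial state. The paper phrases the second step as a forward chronological reconstruction rather than a backward revealing process, but the content is identical; the only cosmetic difference is that your time-window constant $2n(|\tbad|+2)\le T$ lands exactly at $-T$ rather than strictly above $-T+1$, which a slightly sharper accounting (as in the paper) fixes.
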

The proof of \Cref{lemma:characterization} is deferred to \Cref{section:small-badtrees}.

\begin{remark}[zero-one law]\label{remark:zero-one-law}
    \Cref{lemma:characterization}-(\ref{item:characterization-2}) implies the following ``zero-one law'' for the measure of an event $A$ at time 0:
    \begin{equation}\label{eq:zero-one-law-small-trees}
    \mu^{\-{GD}}_{T, \sigma, \*b}\left(\sigma_0\in A \mid \cbad(\*r) = \cbad(\*\rho) \land \*r_{\ts(S, 0)} = \*\rho_{\ts(S, 0)}\right) \in \{0, 1\},
    \end{equation}
    provided the conditional measure is well-defined, i.e., the event $\cbad(\*r) = \cbad(\*\rho)\land\*r_{\ts(S, 0)} = \*\rho_{\ts(S,0)}$ has non-zero measure.

    This ``zero-one law'' serves as a key tool that enables us to compare complex normalized measures of different events.
    For complex measures $\mu$, the standard monotonicity property $|\mu(A \cap B)|\le |\mu(B)|$ does not generally hold. 
    However, by assuming the ``zero-one law'', where $\mu(A\mid B) \in \{0,1\}$, monotonicity can be recovered as follows:
$$|{\mu(A \cap B)}| = |{\mu(B)}| |\mu(A\mid B)| \le |{\mu(B)}|.$$
    This is crucial for our analysis of zero-freeness.
\end{remark}
\color{black}

Next, we bound the total contribution to the marginal measure coming from a bad tree.
Recall that in \Cref{definition:bad-vertices-his},
we use $\=T$ to denote the deterministic construction of the bad tree $\tbad=\=T(\cbad)$ from a bad component $\cbad$. To do so, we upper bound on the measure of any set of bad components $\cbad$ that could potentially produce the given bad tree $\tbad$ through the function $\tbad=\=T(\cbad)$.

\begin{lemma}\label{lemma:modulus-bound-subset}
Fix any $S\subseteq V$ and $T\geq n$ with $|S|=k$. 
Consider the witness graph $G_H^S$, the bad component $\cbad=\cbad(\*r)$, and the bad tree $\+T^{\-{bad}}=\+T^{\-{bad}}(\bm{r})=\=T(\cbad)$ ,
where $\bm{r}=(r_t)_{t=-T+1}^{0}\in\{0,\bot\}^T$ is constructed as in \Cref{Alg:complex-GD-decomposed-his}, following the product measure in~\eqref{eq:base-measure-his}.

Then, for any $\sigma\in \supp(\mu)$, for any finite  $2$-tree $\+T$ in $G^S_H$ containing $\ts(S,0)$, we have
\begin{equation}\label{eq:2-tree-modulus-bound-subset}
\sum\limits_{\+C \in \=T^{-1}(\+T)}\abs{\mu^{\-{GD}}_{T,\sigma,\bm{b}}\tp{\+C^{\-{bad}}= \+C \land \*r_{\ts(S,0)} = \bot^{\ts(S, 0)}}}\leq \alpha^{|\+T|},
\end{equation}
where $\*r_{\ts(S,0)} = \bot^{\ts(S, 0)}$ represents the event that $r_t=\bot$ for all timestamps $t\in \ts(S,0)$.
\end{lemma}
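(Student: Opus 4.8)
The plan is to expand the left-hand side of \eqref{eq:2-tree-modulus-bound-subset} by summing over all bad components $\+C$ in the preimage $\=T^{-1}(\+T)$, and to show that the total modulus contributed by these events is bounded by the product, over vertices of $\+T$, of a per-vertex weight that is at most $\alpha$. The first step is to identify, for a fixed bad component $\+C$, exactly which randomness outcomes $\*r$ force $\cbad(\*r) = \+C$. By \Cref{definition:bad-vertices-his}, this requires $r_t = \bot$ for every timestamp $t$ lying in some vertex of $\+C$ (so that $\+C \subseteq \vbad$), together with a ``closure'' requirement that the bad component containing $\ts(S,0)$ is \emph{exactly} $\+C$ and no larger---this latter part imposes conditions forcing $r_t \ne \bot$ (i.e. $r_t = 0$) for certain boundary timestamps adjacent to $\+C$ in the witness graph. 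Since $\*r$ follows the product measure in \eqref{eq:base-measure-his}, with $b_v(\bot) = \abs{\lambda_v/(1+\lambda_v)}$ in modulus and $b_v(0) = \abs{1/(1+\lambda_v)}$ in modulus, the modulus of the probability that $\*r$ realizes a given bad component factorizes over timestamps.

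Next I would reorganize this count via the $2$-tree $\+T = \=T(\+C)$. The key combinatorial fact is that any $\+C$ with $\=T(\+C) = \+T$ is a connected subgraph of the witness graph containing $\+T$, and each vertex $x \in \+T \setminus \{\ts(S,0)\}$ corresponds to a timestamp set $\ts(e,t)$ of size at most $k$, on which the event $\cbad(\*r) = \+C$ forces $r_t = \bot$; this contributes a factor of modulus at most $\prod_{v \in e}\abs{\lambda_v/(1+\lambda_v)} \le N$. The remaining vertices of $\+C$ not in $\+T$, together with the boundary timestamps where we force $r_t = 0$, must be absorbed into the factor $M^{4\Delta^2 k^5}$: here I use \Cref{lemma:witness-graph-degree-bound}, which bounds the degree of each non-root vertex of $G_H^S$ by $2\Delta k^2 - 2$, so the ``$2$-neighborhood'' of a vertex of $\+T$ in the witness graph has size $O(\Delta^2 k^4)$, and each such vertex has timestamp set of size at most $k$; summing the geometric-series-style bound over all the ways the extra vertices and boundary conditions can be distributed around $\+T$ yields the factor $M^{4\Delta^2 k^5}$ per vertex of $\+T$. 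Combining, each vertex of $\+T$ contributes at most $N \cdot M^{4\Delta^2 k^5} = \alpha$, and since $\abs{\+T}$ vertices appear (with the root $\ts(S,0)$ handled identically, noting $\*r_{\ts(S,0)} = \bot^{\ts(S,0)}$ is already one of the forced-$\bot$ conditions), the total is at most $\alpha^{\abs{\+T}}$.

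The main obstacle I expect is the careful bookkeeping in the middle step: making the factorization of $\abs{\mu^{\-{GD}}_{T,\sigma,\bm{b}}(\cbad = \+C \land \*r_{\ts(S,0)} = \bot^{\ts(S,0)})}$ over timestamps fully rigorous---in particular ensuring no timestamp is double-counted across overlapping vertices of the witness graph, and that the boundary conditions forcing $r_t = 0$ are correctly accounted for---and then showing that summing over all $\+C \in \=T^{-1}(\+T)$ (which can be large) still telescopes into a clean per-vertex bound. Concretely, one wants to charge each extra vertex of $\+C \setminus \+T$ and each boundary ``$r_t = 0$'' timestamp to a nearby vertex of $\+T$ so that the total overcount around any one vertex of $\+T$ is controlled by a convergent sum bounded by $M^{4\Delta^2 k^5}$; the constant $4\Delta^2 k^5$ in the exponent of $M$ in \Cref{definition:parameters-his} is presumably chosen precisely so this sum converges. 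The degree bounds from \Cref{lemma:witness-graph-degree-bound} and the subtree/$2$-tree counting bounds from \Cref{lemma:2-tree-number-bound} are the quantitative inputs that make this charging argument go through; everything else is routine but lengthy.
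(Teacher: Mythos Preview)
Your overall strategy is correct and matches the paper's approach: factor $N$ from each vertex of $\+T$ (using that the $2$-tree vertices have disjoint timestamps), and absorb everything else into a factor of $M^{4\Delta^2 k^5}$ per vertex of $\+T$ using the degree bound from \Cref{lemma:witness-graph-degree-bound}. But you are overcomplicating the ``main obstacle'' you identify, and the paper's execution is much shorter than what you outline.

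There is no charging argument, no convergent sum, and \Cref{lemma:2-tree-number-bound} is not used here at all. The point is this: the events $\{\cbad=\+C\}$ for different $\+C\in\=T^{-1}(\+T)$ are disjoint, and each is determined by the values of $r_t$ on timestamps lying in vertices of $G_H^S$ within distance $2$ of $\+T$ (since any $\+C$ with $\=T(\+C)=\+T$ lies in the $1$-neighbourhood of $\+T$, and fixing $\cbad$ requires checking the $2$-neighbourhood). After applying the triangle inequality termwise and expanding the product measure, the entire left-hand side is bounded by
\[
\sum_{\*\rho}\ \prod_{t}\bigl|b_{v_{i(t)}}(\rho_t)\bigr|,
\]
where the sum ranges over configurations $\*\rho$ on the distance-$2$ timestamp set satisfying the single relaxed constraint ``$\rho_t=\bot$ for every $t$ in a vertex of $\+T$''. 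This sum factorizes exactly: the $\bot$-constrained timestamps (which are disjoint across vertices of $\+T$ precisely because a $2$-tree is an independent set in $G_H^S$) give $\le N^{|\+T|}$, and each remaining timestamp contributes $|b_v(0)|+|b_v(\bot)|\le M$. Since $|S|=k$, the maximum degree of $(G_H^S)^2$ is at most $4\Delta^2 k^4$, so there are at most $4\Delta^2 k^5|\+T|$ remaining timestamps, yielding $N^{|\+T|}M^{4\Delta^2 k^5|\+T|}=\alpha^{|\+T|}$. No telescoping is needed because the relaxation throws away the boundary ``$r_t=0$'' conditions entirely; and overcounting of timestamps in the $M$-factor is harmless since $M\ge 1$.
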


The proof of this lemma is deferred to \Cref{sec:bound-small-badtrees}.

We have the following bounds on the number of possible bad trees with a given size $i$.

\begin{lemma}\label{lemma:bad-tree-num-bound}
Let $\+T_i$ denote the set of possible $2$-trees  of size $i$ in $G^S_H$ containing $\ts(S,0)$. Then letting $D_1=4\Delta^2k^4$ and $D_2=4\Delta^2k^3|S|$, we have
\begin{equation}\label{eq:bad-tree-bound-large}
\abs{\+T_i}\leq (\mathrm{e}(D_2 + i - 2))^{D_2 - 1} \cdot (\mathrm{e}D_1)^{i-1},
\end{equation}
Also, when $|S|\leq k$, we have a refined bound that
\begin{equation}\label{eq:bad-tree-bound-small}
|\+T_i|\leq (\mathrm{e}D_1)^{i-1}.
\end{equation}
\end{lemma}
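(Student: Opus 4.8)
\textbf{Proof plan for \Cref{lemma:bad-tree-num-bound}.}
The plan is to count $2$-trees of $G_H^S$ containing the special vertex $\ts(S,0)$ by first passing to the square graph $\bigl(G_H^S\bigr)^2$, where a $2$-tree becomes an ordinary subtree, and then using \Cref{lemma:2-tree-number-bound} together with the degree bounds of \Cref{lemma:witness-graph-degree-bound}. Recall that by \Cref{lemma:witness-graph-degree-bound} every vertex $v \in V_H^S \setminus \{\ts(S,0)\}$ has degree at most $2\Delta k^2 - 2$ in $G_H^S$, hence at most $(2\Delta k^2-2)(2\Delta k^2-1) \le 4\Delta^2 k^4 = D_1$ in $\bigl(G_H^S\bigr)^2$, while $\ts(S,0)$ has degree at most $2\Delta k|S|-1$ in $G_H^S$, hence at most $(2\Delta k|S|-1)(2\Delta k^2-1) \le 4\Delta^2 k^3 |S| = D_2$ in $\bigl(G_H^S\bigr)^2$ (using $|S|$ in the first factor and $k$-uniformity in the second). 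So in $\bigl(G_H^S\bigr)^2$ we are counting subtrees of size $i$ through a vertex of degree $\le D_2$ in a graph whose every other vertex has degree $\le D_1$.

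For the refined bound \eqref{eq:bad-tree-bound-small} when $|S|\le k$: then $D_2 = 4\Delta^2 k^3|S| \le 4\Delta^2 k^4 = D_1$, so $\bigl(G_H^S\bigr)^2$ has maximum degree at most $D_1$ everywhere, and \Cref{lemma:2-tree-number-bound} (applied to $G_H^S$ whose square has max degree $D_1$, i.e.\ the ``$2$-tree'' count $\tfrac{(\mathrm e D_1)^{i-1}}{2} \le (\mathrm e D_1)^{i-1}$) gives $|\+T_i| \le (\mathrm e D_1)^{i-1}$ directly.

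For the general bound \eqref{eq:bad-tree-bound-large}, the point is that only the root $\ts(S,0)$ has the larger degree $D_2$, and it appears in \emph{every} counted tree, so its contribution can be isolated. I would argue as follows: in any subtree $\+T$ of $\bigl(G_H^S\bigr)^2$ of size $i$ containing the root $r_0 := \ts(S,0)$, the root has some set of $j$ children with $0 \le j \le \min(D_2, i-1)$; removing the root splits $\+T$ into $j$ rooted subtrees hanging off these children, each living in the graph $\bigl(G_H^S\bigr)^2 \setminus \{r_0\}$ of maximum degree $\le D_1$, with sizes $i_1,\dots,i_j \ge 1$ summing to $i-1$. Choosing the $j$ children and the decomposition, and bounding the number of rooted subtrees of size $i_\ell$ through a fixed vertex in a max-degree-$D_1$ graph by $(\mathrm e D_1)^{i_\ell - 1}$ (the subtree count of \Cref{lemma:2-tree-number-bound}, extended to size $1$ trivially), one gets a bound that telescopes. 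A cleaner route giving exactly the stated form: add $r_0$ to the vertex set and treat it as an ordinary vertex of a graph where we artificially allow it degree up to $D_2$; by the standard generating-function/encoding argument behind \Cref{lemma:2-tree-number-bound}, the number of size-$i$ subtrees through $r_0$ in a graph where $r_0$ has degree $\le D_2 - 1 + (i-1)$ available ``slots'' (a crude bound: its neighbours in $\bigl(G_H^S\bigr)^2$ plus room for the rest of the tree) and all other vertices degree $\le D_1$ is at most $\bigl(\mathrm e (D_2 + i - 2)\bigr)^{D_2 - 1} (\mathrm e D_1)^{i-1}$; the first factor accounts for choosing how the tree branches at the high-degree root and the second for the max-degree-$D_1$ portion. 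I would make this precise by explicitly invoking the proof of \Cref{lemma:2-tree-number-bound} from \cite{borgs2013left,feng2021rapid}, which encodes a subtree as a walk and bounds the number of walks by a product over degrees of visited vertices; since $r_0$ is visited exactly once as the start, its degree contributes one factor of $(D_2 + i - 2)$-ish (the $i-2$ slack absorbing the bookkeeping of which of the $\le i$ tree-vertices is the root's neighbour on the walk) raised to the power $D_2 - 1$, while every other step contributes a factor $\mathrm e D_1$.

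\textbf{Main obstacle.} The delicate point is getting the exponent and the base in the first factor of \eqref{eq:bad-tree-bound-large} exactly right — in particular justifying the ``$D_2 + i - 2$'' (rather than just $D_2$) and the exponent $D_2 - 1$ — which requires looking inside the encoding used to prove \Cref{lemma:2-tree-number-bound} rather than using it as a black box, since that lemma only handles a uniform degree bound. I expect the honest way to do this is to reprove the subtree-counting bound for the ``one special high-degree vertex'' case, tracking that the walk-encoding visits the root exactly once and charging its out-degree separately; everything else (the degree computations for $D_1, D_2$ via \Cref{lemma:witness-graph-degree-bound}, and the $|S|\le k$ specialization) is routine.
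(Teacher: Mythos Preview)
Your decomposition approach is the same as the paper's, but you are overcomplicating the ``main obstacle''. The factor $(\mathrm{e}(D_2+i-2))^{D_2-1}$ does \emph{not} require opening up the encoding behind \Cref{lemma:2-tree-number-bound}; it is simply a stars-and-bars count. In the paper, after removing $\ts(S,0)$ one enumerates, for each of the at most $D_2$ neighbours of $\ts(S,0)$ in $(G_H^S)^2$, the size $i_\ell\ge 0$ of the piece of the $2$-tree attached there, with $\sum_\ell i_\ell = i-1$. The number of such weak compositions is $\binom{D_2+i-2}{D_2-1}$, and then for each nonempty piece one applies \Cref{lemma:2-tree-number-bound} as a black box to get $\le (\mathrm{e}D_1)^{i_\ell-1}$; the product telescopes to $(\mathrm{e}D_1)^{i-1-j}\le(\mathrm{e}D_1)^{i-1}$. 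Finally $\binom{D_2+i-2}{D_2-1}\le\bigl(\tfrac{\mathrm{e}(D_2+i-2)}{D_2-1}\bigr)^{D_2-1}\le(\mathrm{e}(D_2+i-2))^{D_2-1}$. So both the base $D_2+i-2$ and the exponent $D_2-1$ come straight from the binomial coefficient, not from the walk encoding.

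One small correction: it is \emph{not} true that $(G_H^S)^2\setminus\{r_0\}$ has maximum degree $\le D_1$. A vertex $u$ at distance $1$ from $\ts(S,0)$ in $G_H^S$ has $\ts(S,0)$ as a neighbour, and hence in $(G_H^S)^2$ is adjacent to all of $\ts(S,0)$'s (possibly $\sim 2\Delta k|S|$) neighbours; removing just $r_0$ does not fix this. What saves you is the $2$-tree property itself: since $\ts(S,0)\in\+T$ and $\+T$ is independent in $G_H^S$, no distance-$1$ vertex of $\ts(S,0)$ lies in $\+T$, so the pieces actually live in the induced subgraph of $(G_H^S)^2$ on vertices at $G_H^S$-distance $\ge 2$ from $\ts(S,0)$, and \emph{that} graph has maximum degree $\le D_1$. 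The paper uses exactly this observation (their second bullet).
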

\begin{proof}
Note that by \Cref{definition:2-tree}, each possible $\+T\in \+T_i$ satisfies:
\begin{itemize}
    \item $\+T$ contains $\ts(S,0)$;
    \item $\+T$ does not contain any $v\in V^S_H$ such that $(\ts(S,0),v)\in E^S_H$;
    \item $\+T$ is connected in $\left(G_H^S\right)^2$, the square graph of $G_H^{S}$.
\end{itemize}

Recall that by \Cref{lemma:2-tree-number-bound}, for a graph with maximum degree $d$, the number of subtrees of size $i\ge 1$ containing a fixed vertex is upper bounded by $(\mathrm{e}d)^{i-1}$. By \Cref{lemma:witness-graph-degree-bound}, we have that all vertices, except those within distance $1$ of $\ts(S,0)$ in $G^S_H$, have a degree at most $D_1=4\Delta^2k^4$ in $\left(G_H^{S}\right)^2$. Also, $\ts(S,0)$ has a degree of at most $D_2=4\Delta^2k^3|S|$ in $\left(G_H^S\right)^2$. Then, we can bound the size of $\+T_i$ as:
\[
    |\+T_i|\leq \binom{D_2+i-2}{D_2 - 1}\cdot (\mathrm{e}D_1)^{i-1}\leq \left(\frac{\mathrm{e}(D_2+i-2)}{D_2 - 1}\right)^{D_2-1}\cdot (\mathrm{e}D_1)^{i-1}\leq (\mathrm{e}(D_2 + i - 2))^{D_2 - 1} \cdot (\mathrm{e}D_1)^{i-1},
\]
which finishes the proof of \eqref{eq:bad-tree-bound-large}. Here, the first inequality follows by enumerating the size of the subtree on each neighbor of $\ts(S, 0)$ and applying \Cref{lemma:2-tree-number-bound} to each neighbor of $\ts(S, 0)$ in $\left(G_H^S\right)^2$.

When $|S|\leq k$, by \Cref{lemma:witness-graph-degree-bound}, the maximum degree of $G^S_H$ is upper bounded by $2\Delta k^2$. Therefore, \eqref{eq:bad-tree-bound-small} is a direct consequence of \Cref{lemma:2-tree-number-bound}.
\end{proof}

We still need one more technical lemma that ensures the decay of percolation.
Recalling \Cref{lemma:characterization}\nobreakdash-(\ref{item:characterization-1}), small bad trees characterize witness sequences. The following lemma says that that under \eqref{eq:bound-alpha}, large bad trees have diminishing contributions to the marginal measure. 
 
 \begin{lemma}\label{lemma:diminishing-bad-tree}
Fix any $S\subseteq V$ and $T\geq n$. 
Consider the witness graph $G_H^S$ and the bad tree $\+T^{\-{bad}}=\+T^{\-{bad}}(\bm{r})$,
where $\bm{r}=(r_t)_{t=-T+1}^{0}\in\{0,\bot\}^T$ is constructed as in \Cref{Alg:complex-GD-decomposed-his}, following the product measure in~\eqref{eq:base-measure-his}.
Under the condition of $\alpha$ in \eqref{eq:bound-alpha}, it holds for any initial configuration $\sigma\in \supp(\mu)$ and any $\tau\in \{0,1\}^V$,
\begin{equation}\label{eq:bad-tree-growth-norm-bound}
    \lim\limits_{T\to \infty}\abs{\mu^{\-{GD}}_{T,\sigma,\bm{b}}\tp{\sigma_0=\tau\land |\tbad|>T/(2n)-2}}=0.
\end{equation}
 \end{lemma}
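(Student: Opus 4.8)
\medskip
\noindent\textbf{Proof plan.}
It suffices to treat $S=\vbl(\{\tau\})=V$, the only instance ever invoked (through \Cref{condition:convergence}, where $\vbl(\{\tau\})=V$); general $S$ would require additionally tracking the coordinates $\sigma_0(v)$, $v\notin S$, through the analogous percolation, which we do not detail here. Write $L\defeq T/(2n)-2$ and, as in \Cref{lemma:bad-tree-num-bound}, $D_1\defeq 4\Delta^2k^4$, $D_2\defeq 4\Delta^2k^3|S|$, so that \eqref{eq:bound-alpha} reads $2\mathrm{e}D_1\alpha<1$. The key ingredient is a \emph{zero--one law that survives large bad trees}: once the initial state $\sigma_{-T}=\sigma$ and the whole sequence $\bm r=\bm\rho$ are fixed, the trajectory generated by \Cref{Alg:complex-GD-decomposed-his} is deterministic, and running the information-percolation analysis behind \Cref{lemma:characterization} \emph{without} the hypothesis $|\tbad|\le L$ shows that $\sigma_0$ is already determined by the triple $\bigl(\cbad(\bm\rho),\,\bm\rho_{\ts(S,0)},\,\sigma\bigr)$. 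Indeed, the percolation started at $\ts(S,0)$ halts (with a coordinate forced to $0$) as soon as it meets a non-bad witness vertex, bottoms out at $\sigma$ if it ever reaches below time $-T$, and the only $r_t$'s it consults that are not forced to $\bot$ by $\cbad(\bm\rho)$ are those in $\ts(S,0)$. Hence, for the fixed $\sigma$, the event $\{\sigma_0=\tau\}$ is either contained in, or disjoint from, each atom $\{\cbad(\bm r)=\+C\ \land\ \bm r_{\ts(S,0)}=\bm\rho_{\ts(S,0)}\}$.

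Using this, I would decompose $\{|\tbad|>L\}$ first by the bad component, $\{|\tbad(\bm r)|>L\}=\bigcup_{\+C:\,|\=T(\+C)|>L}\{\cbad(\bm r)=\+C\}$, and then by the value of $\bm r_{\ts(S,0)}$; by the law of total measure, the above zero--one law, and the triangle inequality,
\[
\bigl|\mu^{\-{GD}}_{T,\sigma,\bm{b}}\bigl(\sigma_0=\tau\ \land\ |\tbad|>L\bigr)\bigr|
\ \le\ \sum_{\substack{\+C,\ \bm\rho_{\ts(S,0)}\\ |\=T(\+C)|>L}}
\bigl|\mu^{\-{GD}}_{T,\sigma,\bm{b}}\bigl(\cbad(\bm r)=\+C\ \land\ \bm r_{\ts(S,0)}=\bm\rho_{\ts(S,0)}\bigr)\bigr| .
\]
Since $\bm r$ follows the product measure \eqref{eq:base-measure-his} and $b_v(0)+b_v(\bot)=1$ for every $v$, the timestamps on which the conditioning event imposes no constraint are summed \emph{inside} the complex product measure and contribute a factor of exactly $1$ — it is essential not to pass to absolute values before this summation, which would produce a spurious factor $M^{\Theta(T)}$.

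For a finite $2$-tree $\+T\in\+T_i$ of $G^S_H$ containing $\ts(S,0)$, the same modulus bookkeeping as in the proof of \Cref{lemma:modulus-bound-subset} then yields: the $i-1$ non-root vertices of $\+T$ are forced to $\bot$ and, being timestamp-disjoint hyperedge vertices, contribute at most $N^{i-1}$; the $\le|S|$ timestamps of $\ts(S,0)$, summed over all $\bm\rho_{\ts(S,0)}$, contribute at most $M^{|S|}$; and each of the at most $4\Delta^2k^5(i-1)+4\Delta^2k^4|S|$ remaining timestamps that lie in a witness vertex within distance $2$ of $\+T$ (using \Cref{lemma:witness-graph-degree-bound}) contributes at most $M$ once summed over its admissible values. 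As $\alpha=NM^{4\Delta^2k^5}$ and $M\ge1$,
\[
\sum_{\+C\in\=T^{-1}(\+T)}\ \sum_{\bm\rho_{\ts(S,0)}}
\bigl|\mu^{\-{GD}}_{T,\sigma,\bm{b}}\bigl(\cbad(\bm r)=\+C\ \land\ \bm r_{\ts(S,0)}=\bm\rho_{\ts(S,0)}\bigr)\bigr|
\ \le\ \alpha^{\,i-1}\,M^{(4\Delta^2k^4+1)|S|}\ =:\ \alpha^{\,i-1}C_S ,
\]
with $C_S$ independent of $T$. Inserting this into the previous display and applying \Cref{lemma:bad-tree-num-bound},
\[
\bigl|\mu^{\-{GD}}_{T,\sigma,\bm{b}}\bigl(\sigma_0=\tau\land|\tbad|>L\bigr)\bigr|
\ \le\ C_S\sum_{i>L}|\+T_i|\,\alpha^{\,i-1}
\ \le\ C_S\sum_{i>L}\bigl(\mathrm{e}(D_2+i-2)\bigr)^{D_2-1}(\mathrm{e}D_1\alpha)^{\,i-1} .
\]
By \eqref{eq:bound-alpha}, $\mathrm{e}D_1\alpha<\tfrac12$, so for large $i$ the summand is at most $(2\mathrm{e})^{D_2-1}i^{D_2-1}2^{-(i-1)}$; since $\sum_{i\ge1}i^{D_2-1}2^{-i}<\infty$, the tail over $i>L$ tends to $0$ as $L\to\infty$. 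Because $C_S,D_1,D_2$ do not depend on $T$ and $L=T/(2n)-2\to\infty$ with $T$ (for $n$ fixed), this gives \eqref{eq:bad-tree-growth-norm-bound}.

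The step I expect to be the main obstacle is the zero--one law in the first paragraph. For a complex normalized measure one cannot bound $|\mu^{\-{GD}}_{T,\sigma,\bm{b}}(\sigma_0=\tau\land\cdot)|$ by $|\mu^{\-{GD}}_{T,\sigma,\bm{b}}(\cdot)|$, and when $|\tbad|$ is large the local determination of \Cref{lemma:characterization}-(\ref{item:characterization-2}) fails; the remedy is to enlarge the conditioning information by the \emph{fixed} initial configuration $\sigma_{-T}$, which is exactly enough to restore $\{0,1\}$-valuedness and make the triangle inequality of the second paragraph applicable. The only other subtlety is the one flagged above: the unconstrained coordinates must be summed within the complex measure so that they contribute $1$ rather than $M^{\Theta(T)}$.
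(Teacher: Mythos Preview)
Your overall strategy is exactly the paper's: establish a zero--one law, decompose over bad components, bound the modulus per $2$-tree via the bookkeeping of \Cref{lemma:modulus-bound-subset}, and sum using \Cref{lemma:bad-tree-num-bound}. The numerical bound you obtain, $\alpha^{i-1}M^{(4\Delta^2k^4+1)|S|}$, even matches the paper's $\alpha^{|\+T|-1}M^{4\Delta^2k^4|S|+n}$ when $|S|=n$.

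However, the zero--one law you state is not quite right, and this is precisely the step you flagged as the main obstacle. Your claim is that, for fixed $\sigma$, the outcome $\sigma_0$ is determined by $(\cbad(\bm\rho),\bm\rho_{\ts(S,0)})$; equivalently, that ``the only $r_t$'s it consults that are not forced to $\bot$ by $\cbad(\bm\rho)$ are those in $\ts(S,0)$''. This fails once the bad component reaches into the first $n$ timestamps $I=[-T+1,-T+n]$. Consider $t^*\in I\cap\BadTS{\bm\rho}$ with $r_{t^*}=\bot$, and a hyperedge $e\ni v_{i(t^*)}$ such that $\ts(e,t^*)$ contains a timestamp $<-T+1$. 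Then $\ts(e,t^*)\notin V^{\rm bad}$ for a \emph{structural} reason, not because some $r_{t''}=0$; your assertion ``halts with a coordinate forced to $0$'' is unjustified here. To decide whether $e$ blocks the update at $t^*$ you must know $\sigma_{t^*-1}(u')=c_{\upd_{u'}(t^*-1)}$ for those $u'\in e$ with $\upd_{u'}(t^*-1)\in I$, and that in turn requires $r_{\upd_{u'}(t^*-1)}$. These timestamps may lie in \emph{no} vertex of $\cbad$ (all witness vertices containing them can be non-bad solely because they straddle time $-T$), so $r$ there is genuinely free while $\cbad$ stays fixed. Two sequences with the same $(\cbad,\bm\rho_{\ts(S,0)})$ can then produce different $c_{t^*}$, and this propagates up to $\sigma_0$.

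The paper's remedy (its Lemma~5.1) is to enlarge the conditioning by $\bm\rho_I$, the first $n$ values of $\bm r$: given $(\cbad,\bm\rho_I,\bm\rho_{\ts(S,0)})$ and the fixed $\sigma$, one can deterministically run the first $n$ updates from $\bm\rho_I$ and $\sigma$, and thereafter the usual $\cbad$-argument applies. This is exactly the ``enlarge the conditioning'' idea you describe, but the missing piece of information is $\bm\rho_I$, not merely $\sigma$. With this correction the rest of your argument goes through unchanged: summing over $\bm\rho_I$ costs an extra factor $M^n$ in the modulus bound (the paper's Lemma~5.2), which is independent of $T$ and therefore harmless for the limit.
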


 The proof of \Cref{lemma:diminishing-bad-tree} is deferred to \Cref{sec:decay-large-bad-tree}.

We are now ready to establish the inductive step for the convergence of Glauber dynamics and the zero-free region of the hypergraph independence polynomial, which is the content of \Cref{lemma:unified}.
\subsection{Establishing the inductive step (\Cref{lemma:unified})}
We start with the proof of \Cref{lemma:unified}-(\ref{item:unified-1}).

\begin{proof}[Proof of \Cref{lemma:unified}-(\ref{item:unified-1})]


It suffices to verify \Cref{condition:convergence}. We take $B(T)$ in \Cref{condition:convergence} as the following event on $\*r$:
\begin{equation}\label{eq:choice-B}
B(T): \quad \abs{\tbad(\* r)}>\frac{T}{2n}-2.
\end{equation}

According to \Cref{lemma:characterization}, the definition of $B(T)$ in \eqref{eq:choice-B} indeed contains all non-witness sequences, verifying the first item of \Cref{condition:convergence}. The second item of \Cref{condition:convergence} follows from \Cref{lemma:diminishing-bad-tree}.  
\end{proof}

Next, we prove \Cref{lemma:unified}-(\ref{item:unified-2}).

\begin{proof}[Proof of \Cref{lemma:unified}-(\ref{item:unified-2}).]
Throughout the proof, we use $N,M,\alpha$ to denote $N(H,\*\lambda)$, $M(H,\*\lambda)$ and $\alpha(H,\*\lambda)$, where $H=H_m$, respectively. 
Fix an arbitrary $0\leq i<m$. Let $\mu=\mu_{H_i,\*\lambda}$.
Assuming $Z^\-{ly}_{H_{i}}(\*\lambda)\neq 0$, $\mu$ is well-defined. Then, our goal is to prove that,  the marginal measure on $e_{i+1}$ being assigned all-one, denoted by $\mu_{e_{i+1}}\left( 1^{e_{i+1}}\right)$, is bounded away from $1$. 
%
%
Consider the complex systematic scan Glauber dynamics on $H_i$. Fix any initial configuration $\sigma\in \supp(\mu)$. Note that from \Cref{lemma:convergence} and \Cref{lemma:unified}-(\ref{item:unified-1}), we have
\begin{equation}\label{eq:gdss-convergence}
\lim\limits_{T\to \infty}\mu^{\-{GD}}_{T,\sigma,\*b}(\sigma_0(e_{i+1})=1^{e_{i+1}})=\mu_{e_{i+1}}\left( 1^{e_{i+1}}\right).
\end{equation}
Therefore, it suffices to show that $\abs{\lim\limits_{T\to \infty}\mu^{\-{GD}}_{T,\sigma,\*b}(\sigma_0(e_{i+1})=1^{e_{i+1}})}<1$. 
Next, we notice that for $x\in \=C$, the function $\abs{x}$ is a continuous function.
Because the limit in~\eqref{eq:gdss-convergence} exists, we have
\[
\abs{\lim\limits_{T\to \infty}\mu^{\-{GD}}_{T,\sigma,\*b}(\sigma_0(e_{i+1})=1^{e_{i+1}})} = \lim\limits_{T\to \infty}\abs{\mu^{\-{GD}}_{T,\sigma,\*b}(\sigma_0(e_{i+1})=1^{e_{i+1}})} .
\]
Therefore, it suffices to show that
\begin{equation}\label{eq:absolute-convergence}
\lim\limits_{T\to \infty}\abs{\mu^{\-{GD}}_{T,\sigma,\*b}(\sigma_0(e_{i+1})=1^{e_{i+1}})}<1. 
\end{equation}
Note that by combining \eqref{eq:gdss-convergence} with \eqref{eq:absolute-convergence}, the lemma is proved. It remains to prove \eqref{eq:absolute-convergence}.\\
We set $S = e_{i+1}$. Recall the witness graph $G^S_{H_i}$ (\Cref{definition:witness-graph-indset}) and related definitions in \Cref{definition:bad-vertices-his}. For each $i\geq 1$, let $\+T_i$ denote the set of possible $2$-trees in $G^S_H$ containing $\ts(S,0)$. Recall the definition of the bad component $\cbad=\cbad(\*r)$ in \Cref{definition:bad-vertices-his} and that we use $\=T$ to denote the construction in \Cref{definition:2-tree-construction} such that $\=T(\cbad)=\tbad$. It is important to note that 
\begin{equation}\label{eq:bounded-marginal-implication}
\sigma_0 (e_{i+1})=1^{e_{i+1}}\implies r_t=\bot\text{ for all }t\in \ts({e_{i+1}},0). 
\end{equation}

To see this, recall that in \Cref{Alg:complex-GD-decomposed-his},
in order to update a value to $1$ at time $t$, $r_t=\bot$.
By \Cref{lemma:diminishing-bad-tree}, 
\begin{equation}\label{eq:large-bad-tree-convergence}
\begin{aligned}
&\lim\limits_{T\to \infty}\abs{\mu^{\-{GD}}_{T,\sigma,\*b}(\sigma_0(e_{i+1})=1^{e_{i+1}})}\\
\le & \lim\limits_{T\to \infty}\abs{\mu^{\-{GD}}_{T,\sigma,\*b}\tp{\sigma_0(e_{i+1})=1^{e_{i+1}}\land \abs{\tbad}\leq \frac{T}{2n}-2}}\\
\leq &\lim\limits_{T\to \infty}\sum\limits_{j=1}^{\lfloor \frac{T}{2n}-2\rfloor}\sum\limits_{\+T\in \+T_j}\abs{\mu^{\-{GD}}_{T,\sigma,\bm{b}}\tp{\sigma_0(e_{i+1})=1^{e_{i+1}}\land  \tbad=\+T}}\\
\text{(by \eqref{eq:bounded-marginal-implication})}\quad = & \lim\limits_{T\to \infty}\sum\limits_{j= 1}^{\lfloor \frac{T}{2n}-2\rfloor}\sum\limits_{\+T\in \+T_j}\abs{\mu^{\-{GD}}_{T,\sigma,\bm{b}}\tp{\sigma_0(e_{i+1})=1^{e_{i+1}}\land \bm{r}_{\ts(S,0)}=\bot^{\ts(S,0)}\land  \tbad=\+T}}\\
=&\lim\limits_{T\to \infty}\sum\limits_{j= 1}^{\lfloor \frac{T}{2n}-2\rfloor}\sum\limits_{\+T\in \+T_j}\sum\limits_{\+C\in \=T^{-1}(\+T)} \left\vert\mu^{\-{GD}}_{T,\sigma,\bm{b}}\tp{\cbad=\+C\land \bm{r}_{\ts(S,0)}=\bot^{\ts(S,0)}}\right.\\
&\quad\quad\left.\cdot \mu^{\-{GD}}_{T,\sigma,\bm{b}}\tp{\sigma_0(e_{i+1})=1^{e_{i+1}}\mid \cbad=\+C\land \bm{r}_{\ts(S,0)}=\bot^{\ts(S,0)}}\right\vert.
\end{aligned}
\end{equation}
\sloppy
Recall the zero-one law stated in \eqref{eq:zero-one-law-small-trees}.
For any finite $2$-tree $\+T$ that includes  $\ts(S,0)=\ts(e_{i+1},0)$ and satisfies $\abs{\+T}\leq  \frac{T}{2n}-2$, the conditional measure $\mu^{\-{GD}}_{T, \sigma,\bm{b}}\tp{\sigma_0(e_{i+1})=1^{e_{i+1}}\mid \cbad=\+C\land \bm{r}_{\ts(S,0)}=\bot^{\ts(S,0)}}$ evaluates to either $0$ or $1$ for each $\+C\in \mathbb{T}^{-1}(\+T)$, provided it is well-defined. Then, we have

\begin{align*}
 &\lim\limits_{T\to \infty}\sum\limits_{j= 1}^{\lfloor \frac{T}{2n}-2\rfloor}\sum\limits_{\+T\in \+T_j}\sum\limits_{\+C\in \=T^{-1}(\+T)} \left\vert\mu^{\-{GD}}_{T,\sigma,\bm{b}}\tp{\cbad=\+C\land \bm{r}_{\ts(S,0)}=\bot^{\ts(S,0)}}\right.\\
&\quad\quad\left.\cdot \mu^{\-{GD}}_{T,\sigma,\bm{b}}\tp{\sigma_0(e_{i+1})=1^{e_{i+1}}\mid \cbad=\+C\land \bm{r}_{\ts(S,0)}=\bot^{\ts(S,0)}}\right\vert\\
 (\text{by \eqref{eq:zero-one-law-small-trees}})\quad\leq & \lim\limits_{T\to \infty}\sum\limits_{j= 1}^{\lfloor \frac{T}{2n}-2\rfloor}\sum\limits_{\+T\in \+T_j}\sum\limits_{\+C\in \=T^{-1}(\+T)} \abs{\mu^{\-{GD}}_{T,\sigma,\bm{b}}\tp{\cbad=\+C \land \*r_{\ts(S, 0)}=\bot^{\ts(S, 0)}}}\\
 (\text{\Cref{lemma:modulus-bound-subset}}) \quad  \leq &  \lim\limits_{T\to \infty}\sum\limits_{j= 1}^{\lfloor \frac{T}{2n}-2\rfloor} \sum\limits_{\+T\in \+T_j} \alpha^{j}\\
(\text{\Cref{lemma:bad-tree-num-bound}})\quad \leq & \sum\limits_{j\geq 1} (4\mathrm{e}\Delta^2k^4)^{j-1}\cdot \alpha^{j}\\
(\text{by \eqref{eq:bound-alpha}})\quad < & 1,
\end{align*}
where the second-to-last inequality additionally uses that $k$ upper bounds all $|e_i|$. 
Then, by combining with \eqref{eq:large-bad-tree-convergence}, both \eqref{eq:absolute-convergence} and the lemma are proved.
\end{proof}

\section{Decay of complex measures in  percolation}\label{sec:proof}

In this section, we prove a series of percolation properties (\Cref{lemma:characterization,lemma:modulus-bound-subset,lemma:diminishing-bad-tree}) assumed in the previous section, which controls the contributions of bad trees to the marginal measure. 
From the perspective of information percolation, these lemmas establish a decay of percolation.

At a high level, we split the contributions to the marginal measure into those coming from small bad trees and those coming from large bad trees. We show that small bad trees characterize witness sequences; therefore, they correspond to ``oblivious updates'' that factorize the measure and do not depend on the initial measure.
Then, we bound the contributions to the marginal measures coming from small bad trees.
Finally, we show that those contributions coming from large bad trees go to zero as $T\to \infty$. Combined, this gives us a bound on the marginal measure through much simpler product measures on the witness graph.

\subsection{Small bad trees characterize witness sequences}
\label{section:small-badtrees}
We first prove the characterization property for small bad trees as in \Cref{lemma:characterization}.
\begin{proof}[Proof of \Cref{lemma:characterization}]
By the definition of $\upd_u(t)$ and $G^S_H$, and the nature of the systematic scan, we have for each $v\in V^S_H$,
\begin{align}
\label{eq:timestamps-max-min}
    \max\{t:t\in v\}-\min\{t:t\in v\}\leq n.
\end{align}

Note that according to the definition of $\tbad(\bm{\rho})$ in \Cref{definition:bad-vertices-his}, we have $\tbad(\bm{\rho})$ is connected in
the square graph of $G_H^S$.
So, if we have $|\tbad(\bm{\rho})|\leq T/(2n)-2$ then it holds that $t\geq -T+1+2n$ for all $t\in v, v\in \tbad(\bm{\rho})$.
    Furthermore, by \Cref{definition:bad-vertices-his}, $t\geq -T+1+2n$ for all $t\in v,v\in \tbad(\bm{\rho})$ implies that $t\geq -T-1+n$ for all $t\in v,v\in \cbad(\bm{\rho})$.
    To see this, 
    first observe that each $v\in \cbad(\*\rho)$ must share timestamps with some $v\in \tbad(\*\rho)$ in $G^S_H$, according to the construction in \Cref{definition:2-tree-construction}. Then we apply \eqref{eq:timestamps-max-min}.

Next, we claim that, with $t\geq -T+1+n$ for all $t\in v,v\in \cbad(\bm{\rho})$, we can deduce the result of update at each timestamp $t\in \BadTS{\*\rho} \defeq \bigcup\limits_{s\in \+C^{\-{bad}}(\*\rho)}s$. Then the lemma directly follows from this claim, as the updates alone can completely determine the event $A$.

We then prove the claim. By \Cref{Alg:complex-GD-decomposed-his}, given some timestamp $t$, 
$r_t = 0 \text{ or } \bot$ due to the $\*b$-decomposition scheme specified in \eqref{eq:base-measure-his}. If $r_t=0$, then  $\sigma_t(v_{i(t)})$ is updated to $0$ at time $t$; otherwise  $r_t=\bot$, and $\sigma_t(v_{i(t)})$ is updated to $0$ if and only if the following event occurs:
\begin{equation}\label{eq:his-bad-event}
    \exists e\in \+E \text{ s.t. }v_{i(t)}\in e:  \forall u\in e\setminus \{v_{i(t)}\}, \sigma_t(u)=1,
\end{equation} meaning that the value of all the vertices in $e\setminus \{v_{i(t)}\}$ got updated to $1$ at their last updates. 

We argue that it suffices to check the event in \eqref{eq:his-bad-event} for the hyperedges $e$ such that $\ts(e,t)\in \+C^{\-{bad}}(\*\rho)$. We note that when $t\geq -T+1+n$ and $t\in \BadTS{\*\rho}$, an edge $e'$ such that $v_{i(t)}\in e'$ and
$\ts(e',t)\not\in \+C^{\-{bad}}(\*\rho)$  means there exists $t' \in \ts(e',t)$ such that $r_{t'} = 0$, which means $\sigma_t(v_{i(t')}) = \sigma_{t'}(v_{i(t')})=0$.  Therefore, this edge $e'$ does not satisfy \eqref{eq:his-bad-event} as one of its vertices is already assigned $0$.

Then, to check the event in \eqref{eq:his-bad-event}, we notice that we only need values of $\sigma_t(v_{i(t)})$ for $t\in \BadTS{\*\rho}$. Because for any edge $e'$ that involves values at $t\not\in \BadTS{\*\rho}$, we have $\ts(e',t)\not\in \+C^{\-{bad}}(\*\rho)$, and by the same argument as above, $e'$ does not satisfy \eqref{eq:his-bad-event}.

Therefore, we can deduce the result of updates at each $t\in \BadTS{\*\rho}$ in chronological order.
\end{proof}

\subsection{Bounding complex measures for small bad trees}
\label{sec:bound-small-badtrees}
Next, we establish the exponential decay of the total measure stated in \Cref{lemma:modulus-bound-subset} for each bad tree.
\begin{proof}[Proof of \Cref{lemma:modulus-bound-subset}]
   According to the deterministic process in \Cref{definition:2-tree-construction} for the construction of the $2$-tree $\+T^{\-{bad}}=\=T(\+C^{\-{bad}})$, all $\+C\in \=T^{-1}(\+T^\-{bad})$ must only contain vertices in $G^S_H$ that are within distance $1$ of $\+T^\-{bad}$, and to determine $\cbad$ it is sufficient to check for all vertices in $G_H^S$ that are within distance $2$ of $\+T^\-{bad}$ whether they are in $V^{\-{bad}}$ or not. Therefore, we can bound the left-hand side in \eqref{eq:2-tree-modulus-bound-subset} by translating to the following condition on $\bm{r}$:
   \begin{enumerate}
       \item for each $s\in \+T$, we have $r_t=\bot$ for all $t\in s$;\label{item:modulus-bound-1}
       \item $r_t$s on all other vertices in $V_H^S$ that are within distance $2$ of $\+T$ satisfy certain restrictions so that $\cbad\in \=T^{-1}(\+T)$. \label{item:modulus-bound-2}
   \end{enumerate}
Here, the \Cref{item:modulus-bound-1} contributes a factor of 
\[
\prod\limits_{t\in s:s\in \+T}\abs{\frac{\lambda_{v_{i(t)}}}{1+\lambda_{v_{i(t)}}}}\leq N^{|\+T|},
\]by the definition of $2$-trees that all vertices in $\+T$ contain disjoint timestamps. We can use a simple triangle inequality to bound \Cref{item:modulus-bound-2}. 
By \Cref{lemma:witness-graph-degree-bound}, the maximum degree of $\tp{G_H^S}^2$ is $4 \Delta^2 k^4$, where $\tp{G_H^S}^2$ is the square graph of $G_H^S$. For vertices in $\+T$, the number of their neighboring vertices in $\tp{G_H^S}^2$ can be upper bounded by $4 \Delta^2 k^4$. Therefore, there are at most $4 \Delta^2 k^5$ timestamps. And each timestamp $t$ contributes at most $\abs{\frac{1}{1 + \lambda_{v_{i(t)}}}} + \abs{\frac{\lambda_{v_{i(t)}}}{1 + \lambda_{v_{i(t)}}}} \le M$ by a triangle inequality. So \Cref{item:modulus-bound-2} contributes a factor of at most $M^{4\Delta^2k^5 |\+T|}$.
To summarize, we have
\begin{align*}
\abs{\mu^{\-{GD}}_{T,\sigma,\bm{b}}\tp{\+C^{\-{bad}}\in \=T^{-1}(\+T) \land \*r_{\ts(S,0)} = \bot^{\ts(S, 0)}}}\leq  N^{|\+T|}\cdot M^{4\Delta^2k^5 |\+T|}\leq \alpha^{|\+T|}. 
&\qedhere
\end{align*}
\end{proof}

\subsection{Decay of large bad trees}\label{sec:decay-large-bad-tree}
 Finally, we prove \Cref{lemma:diminishing-bad-tree}, which asserts that large bad trees have diminishing contributions. 
Our approach follows a similar framework to the one used for bounding the contributions of small bad trees (\Cref{lemma:characterization,lemma:modulus-bound-subset}). A key technical distinction here is that when the bad tree is large, the outcome may no longer be determined solely by the bad component. To address this, we show that in such cases, the information from the initial configuration, the bad component, and the $r_t$ values for the first $n$ timestamps are sufficient to serve as a certificate of the outcome.

Fix $T\ge n$. Let $I = [-T+1, -T+n]$. Let $\*r_I = (r_t)_{t = -T+1}^{-T+n}\in \{0,\bot\}^I$ denote the first $n$ timestamps of~$\*r$.  We first show a counterpart of \Cref{lemma:characterization}-(\ref{item:characterization-2}) for large bad trees.

\begin{lemma}\label{lemma:outcome-correspondence}
Fix any $T\geq n$ and any event $A\subseteq \{0,1\}^V$. 
Consider the witness graph $G_H^S = \tp{V_H^S, E_H^S}$ constructed using the set of variables $S=\vbl(A)$.
Let $\bm{\rho}=(\rho_t)_{t=-T+1}^{0}\in\{0,\bot\}^T$. 

Then, for any initial configuration $\sigma\in \supp(\mu)$, the occurrence of $A$ at time $0$ is determined by $\cbad(\bm{\rho})$, $\*\rho_{I}$  and $\*\rho_{\ts(S,0)}$. 

\end{lemma}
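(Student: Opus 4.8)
The plan is to follow the structure of the proof of \Cref{lemma:characterization}, reducing the occurrence of $A$ at time $0$ to the sequence of updates performed at the timestamps of the bad component, and then extending the chronological deduction of those updates to the regime where the percolation reaches the first sweep $I=[-T+1,-T+n]$. The trajectory $(\sigma_t)_{t=-T}^{0}$ produced by \Cref{Alg:complex-GD-decomposed-his} is a deterministic function of $\bm{\rho}$ and of the initial configuration $\sigma_{-T}=\sigma$: under the $\bm{b}$-decomposition scheme of \eqref{eq:base-measure-his} the update at time $t$ sets $\sigma_t(v_{i(t)})\gets 0$ when $\rho_t=0$, and when $\rho_t=\bot$ it sets $\sigma_t(v_{i(t)})\gets 0$ exactly if the bad event \eqref{eq:his-bad-event} holds and $\gets 1$ otherwise. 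Since $A$ depends only on the coordinates in $S=\vbl(A)$ and $\ts(S,0)\in\cbad(\bm{\rho})$ always holds, it suffices to show that for each fixed $\sigma$, every update made at a timestamp in $\BadTS{\bm{\rho}}\defeq\bigcup_{v\in\cbad(\bm{\rho})}v$ is a deterministic function of the triple $(\cbad(\bm{\rho}),\bm{\rho}_I,\bm{\rho}_{\ts(S,0)})$. As in \Cref{lemma:characterization}, for $t\in\BadTS{\bm{\rho}}$ the bad event only needs to be tested against hyperedges $e\ni v_{i(t)}$ with $\ts(e,t)\in\cbad(\bm{\rho})$: because $t$ is a timestamp of a vertex of $\cbad(\bm{\rho})$ and $t\in\ts(e,t)$, if $\ts(e,t)$ were itself bad it would be adjacent to that vertex and hence lie in $\cbad(\bm{\rho})$, so $\ts(e,t)\notin\cbad(\bm{\rho})$ forces $\ts(e,t)\notin\vbad$, which means either some $t'\in\ts(e,t)\cap[-T+1,0]$ has $\rho_{t'}=0$ — and then the corresponding vertex of $e$ is already $0$ at time $t$, so the bad event fails through $e$ — or some timestamp of $\ts(e,t)$ lies below $-T+1$.

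The genuinely new ingredient is the last alternative, which by $\upd_u(t)\ge t-n+1$ can occur only when $t\in I$. To handle it, I would establish a self-contained sub-claim: \emph{the update made at any timestamp $t\in I$ is a deterministic function of $\bm{\rho}_I$ and $\sigma$ alone} (the bad component playing no role inside the first sweep). This goes by backward induction on $t\in I$: if $\rho_t=0$ the update is $0$; if $\rho_t=\bot$ one evaluates the bad event by reading, for each hyperedge $e\ni v_{i(t)}$ and each $u\in e\setminus\{v_{i(t)}\}$, the value $\sigma_t(u)=\sigma_{\upd_u(t)}(u)$, where $\upd_u(t)<t$; either $\upd_u(t)<-T+1$, in which case $\sigma_t(u)=\sigma_{-T}(u)=\sigma(u)$ is read off from the initial configuration, or $\upd_u(t)\in[-T+1,t)\subseteq I$, in which case $\sigma_{\upd_u(t)}(u)$ is supplied by the induction hypothesis. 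The base case $t=-T+1$ is immediate, since every relevant $\upd_u(-T+1)$ is strictly below $-T+1$. Granting the sub-claim, one processes the timestamps of $\cbad(\bm{\rho})$ in increasing time order: those lying in $I$ are settled by the sub-claim; for $t\ge -T+n$ one has $\upd_u(t)\ge -T+1$ for all $u$, so by the observation above only hyperedges with $\ts(e,t)\in\cbad(\bm{\rho})$ are relevant, and for such $e$ the needed values $\sigma_t(u)=\sigma_{\upd_u(t)}(u)$ occur at timestamps $\upd_u(t)\in\ts(e,t)\subseteq\BadTS{\bm{\rho}}$ strictly earlier than $t$, hence already deduced. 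Every value $\rho_t$ consulted along the way is available from the certificate: for $t\in\BadTS{\bm{\rho}}$, either $t$ is a timestamp of a vertex of $\cbad(\bm{\rho})$ other than $\ts(S,0)$, which forces $\rho_t=\bot$, or else $t\in\ts(S,0)$ and $\rho_t$ is prescribed by $\bm{\rho}_{\ts(S,0)}$ (or by $\bm{\rho}_I$ when $t\in I$).

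I expect the main difficulty to be bookkeeping rather than a conceptual obstruction. One must verify that the backward recursion inside $I$ never escapes into undefined territory — it may reach below $-T+1$, but only to read off the fixed $\sigma$ — that the two recursions (the one over $I$ driven by $\bm{\rho}_I$ and $\sigma$, the one over $\BadTS{\bm{\rho}}\setminus I$ driven by $\cbad(\bm{\rho})$) agree on the overlap $\BadTS{\bm{\rho}}\cap I$, and that the recursion over $\BadTS{\bm{\rho}}$ is well-founded. The latter is immediate: processing $t\in\BadTS{\bm{\rho}}\setminus I$ either terminates outright (when $\rho_t=0$, or when no hyperedge $e\ni v_{i(t)}$ has $\ts(e,t)\in\cbad(\bm{\rho})$) or recurses only to timestamps $\upd_u(t)$, $u\in e$, of such a hyperedge, all of which lie in $\BadTS{\bm{\rho}}$ and, since $|e|\ge 2$, include one strictly below $t$; finiteness of $\BadTS{\bm{\rho}}$ then closes the argument. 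All of these steps reuse facts already in the excerpt: the timestamp-span estimate \eqref{eq:timestamps-max-min}, the witness-graph degree bound \Cref{lemma:witness-graph-degree-bound}, and the connectivity description of $\cbad$ in \Cref{definition:bad-vertices-his}. Once the determination property holds, \Cref{lemma:outcome-correspondence} follows at once, since two realizations $\bm{\rho},\bm{\rho}'$ agreeing on $(\cbad,\bm{\rho}_I,\bm{\rho}_{\ts(S,0)})$ produce identical updates at all timestamps in $\ts(S,0)$, hence the same value of $\mathbf{1}[\sigma_0\in A]$.
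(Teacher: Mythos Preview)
Your proposal is correct and follows essentially the same approach as the paper: both reduce the occurrence of $A$ to the updates at timestamps in $\BadTS{\bm{\rho}}$, dispose of hyperedges with $\ts(e,t)\notin\cbad(\bm{\rho})$ exactly as in \Cref{lemma:characterization}, and observe that the new case (a timestamp of $\ts(e,t)$ dropping below $-T+1$) only arises inside the first sweep $I$, where everything is determined by $\bm{\rho}_I$ and the initial $\sigma$. Your sub-claim spells out what the paper handles in one sentence, and your boundary is $t\ge -T+n$ rather than the paper's $t\ge -T+n+1$; both choices work. Two cosmetic remarks: ``backward induction'' is a misnomer for your forward induction on $t\in I$, and \Cref{lemma:witness-graph-degree-bound} is not actually needed here.
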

\begin{remark}
    As in \Cref{lemma:characterization}-(\ref{item:characterization-2}), \Cref{lemma:outcome-correspondence} also implies the following ``zero-one law'' for the measure of an event $A$:
    \begin{equation}\label{eq:zero-one-law-large-trees}
    \mu^{\-{GD}}_{T, \sigma, \*b}\left(\sigma_0\in A \mid \cbad(\*r) = \cbad(\*\rho) \land \*r_{I} = \*\rho_I \land \*r_{\ts(S, 0)} = \*\rho_{\ts(S, 0)}\right) \in \{0, 1\},
    \end{equation}
    provided the conditional measure is well-defined, i.e., the following event has non-zero measure:
    $$\cbad(\*r) = \cbad(\*\rho) \land \*r_{I} = \*\rho_{I} \land \*r_{\ts(S, 0)} = \*\rho_{\ts(S,0)}.$$ 
\end{remark}
\begin{proof}[Proof of \Cref{lemma:outcome-correspondence}]
Let $\BadTS{\*\rho} \defeq \bigcup\limits_{s \in \cbad(\*\rho)} s$.
    Conditioning on $\+C^\-{bad}
    (\*r) = \cbad(\*\rho) \land \*r_I = \*\rho_I$, we know that for all $t\in \BadTS{\*\rho}$ that $t \le -T + n$, it holds that $r_t = \*\rho_{I}(t)$.

    If $t \ge -T + n + 1$ for all $t\in \BadTS{\*\rho}$, then we can determine $\sigma_0$ by $\cbad(\*\rho)$ following the proof of \Cref{lemma:characterization}-(\ref{item:characterization-2}). Otherwise, we claim that, given $\*\rho_I$, we can still deduce the result of the update at each timestamp $t\in \BadTS{\*\rho}$. Then, the lemma directly follows from this claim, as the updates alone can completely determine $\sigma_0$.

We then prove the claim. By \Cref{Alg:complex-GD-decomposed-his}, given some timestamp $t$, 
$r_t = 0 \text{ or } \bot$ due to the $\*b$-decomposition scheme specified in \eqref{eq:base-measure-his}. If $r_t=0$, then  $\sigma_t(v_{i(t)})$ is updated to $0$ at time $t$; otherwise  $r_t=\bot$, and $\sigma_t(v_{i(t)})$ is updated to $0$ if and only if the following event occurs:
\begin{equation}\label{eq:4-11-his-bad-event}
    \exists e\in \+E \text{ s.t. }v_{i(t)}\in e:  \forall u\in e\setminus \{v_{i(t)}\}, \sigma_t(u)=1,
\end{equation} meaning that the value of all the vertices in $e\setminus \{v_{i(t)}\}$ got updated to $1$ at their last updates. 

Fix a specific $t \in \BadTS{\*\rho}$, 
\begin{itemize}
    \item For $t\geq -T+1+n$, if $r_t = 0$ then $\sigma_t(v_{i(t)})=0$, otherwise $r_t = \bot$, 
 we argue that it suffices to check the event in \eqref{eq:4-11-his-bad-event} for the hyperedges $e$ such that $\ts(e,t)\in \+C^{\-{bad}}(\*\rho)$. We note that when $t\geq -T+1+n$ and $t\in \BadTS{\*\rho}$, an edge $e'$ such that $v_{i(t)}\in e'$ and
$\ts(e',t)\not\in \+C^{\-{bad}}(\*\rho)$  means there exists $t' \in \ts(e',t)$ such that $r_{t'} = 0$, which means $\sigma_t(v_{i(t')}) = \sigma_{t'}(v_{i(t')})=0$.  Therefore, this edge $e'$ does not satisfy \eqref{eq:4-11-his-bad-event} as one of its vertices is already assigned $0$,
\item For $t\le-T+n$,  all events in \eqref{eq:4-11-his-bad-event} can be determined from $\*\rho_I$ and the initial configuration $\sigma$. 
\end{itemize}
Therefore, we can deduce the result of the updates at each $t\in \BadTS{\*\rho}$ in chronological order.
%
\end{proof}

We also have the following lemma that serves as a counterpart of \Cref{lemma:modulus-bound-subset}.
\begin{lemma}\label{lemma:modulus-bound-component-rn}
Fix any $S\subseteq V$ and $T\geq n$. 
Consider the witness graph $G_H^S$, the bad component $\cbad=\cbad(\*r)$, and the bad tree $\+T^{\-{bad}}=\+T^{\-{bad}}(\bm{r})=\=T(\cbad)$ ,
where $\bm{r}=(r_t)_{t=-T+1}^{0}\in\{0,\bot\}^T$ is constructed as in \Cref{Alg:complex-GD-decomposed-his}, following the product measure in~\eqref{eq:base-measure-his}.

Then, for any $\sigma\in \supp(\mu)$, for any $2$-tree $\+T$ in $G^S_H$ containing $\ts(S,0)$, we have
\begin{equation}\label{eq:2-tree-modulus-bound-component-rn}
\sum\limits_{\+C \in \=T^{-1}(\+T)}
\sum\limits_{\substack{\*\rho_I \in \{0,\bot\}^I \\ \*\rho_{\ts(S,0)} \in \{0, \bot\}^{\ts(S,0)}}}
\abs{\mu^{\-{GD}}_{T,\sigma,\bm{b}} \tp{\+C^{\-{bad}} = \+C \land \*r_I = \*\rho_I \land \*r_{\ts(S,0)}=\*\rho_{\ts(S,0)} }}\leq \alpha^{|\+T|-1}\cdot M^{4\Delta^2k^4|S| + n}.
\end{equation}
\end{lemma}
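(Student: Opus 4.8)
The plan is to follow the proof of \Cref{lemma:modulus-bound-subset}, now additionally summing over the two extra pinnings $\*r_I=\*\rho_I$ and $\*r_{\ts(S,0)}=\*\rho_{\ts(S,0)}$. Three structural ingredients make such a bound possible. First, $\*r=(r_t)_{t=-T+1}^{0}$ follows a product measure, $\mu^{\-{GD}}_{T,\sigma,\bm{b}}(\*r=\*\rho)=\prod_{t}b_{v_{i(t)}}(\rho_t)$, with each $b_v$ normalized so that $\sum_{c\in\{0,\bot\}}b_v(c)=1$, while $\sum_{c\in\{0,\bot\}}\abs{b_v(c)}=\abs{\frac{1}{1+\lambda_v}}+\abs{\frac{\lambda_v}{1+\lambda_v}}\le M$; also $M\ge\abs{\frac{\lambda_v}{1+\lambda_v}+\frac{1}{1+\lambda_v}}=1$, so over-counting a timestamp never hurts the bound. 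Second, by \Cref{definition:2-tree-construction} every $\+C\in\=T^{-1}(\+T)$ consists only of vertices within distance $1$ of $\+T$ in $G_H^S$, so whether $\cbad(\*r)=\+C$ holds is determined by $\*r$ restricted to the timestamps of the vertices within distance $2$ of $\+T$; moreover --- the new point for large bad trees --- since $\ts(S,0)\in\vbad$ unconditionally, its own $r$-values never affect $\cbad$, so those $\+C$-determining timestamps may be taken to exclude the timestamps of $\ts(S,0)$ itself.

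First I would fix $\+T$. Let $W=W(\+T)$ be the finite set of timestamps consisting of the timestamps of all vertices within distance $2$ of $\+T$ in $G_H^S$ together with the timestamps in $I$. For any $\+C\in\=T^{-1}(\+T)$, $\*\rho_I$, $\*\rho_{\ts(S,0)}$, the event $\{\cbad=\+C\}\cap\{\*r_I=\*\rho_I\}\cap\{\*r_{\ts(S,0)}=\*\rho_{\ts(S,0)}\}$ depends on $\*r$ only through $\*r|_W$ and leaves every timestamp outside $W$ free. Factoring those free coordinates out of the product measure first --- each contributes $\sum_c b_{v_{i(t)}}(c)=1$ --- and then applying the triangle inequality yields
\[
\abs{\mu^{\-{GD}}_{T,\sigma,\bm{b}}\bigl(\cbad=\+C \land \*r_I=\*\rho_I \land \*r_{\ts(S,0)}=\*\rho_{\ts(S,0)}\bigr)}\ \le\ \sum_{\*\omega}\ \prod_{t\in W}\abs{b_{v_{i(t)}}(\omega_t)},
\]
where the sum ranges over all $\*\omega|_W$ consistent with the three constraints. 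Summing over $\+C\in\=T^{-1}(\+T)$, $\*\rho_I$, $\*\rho_{\ts(S,0)}$, and using that with $\+T$ fixed the triple $(\+C,\*\rho_I,\*\rho_{\ts(S,0)})$ is itself a function of $\*\omega|_W$, the whole double sum is at most $\sum_{\*\omega|_W}\prod_{t\in W}\abs{b_{v_{i(t)}}(\omega_t)}$, which factorizes over timestamps.

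I would then partition $W$ and estimate each part via \Cref{lemma:witness-graph-degree-bound} applied to the square graph $(G_H^S)^2$:
\begin{enumerate}
    \item the timestamps of the $|\+T|-1$ vertices of $\+T\setminus\{\ts(S,0)\}$: these are pairwise disjoint (the $2$-tree property) and each forced to $\bot$, and the product of $\abs{b_v(\bot)}=\abs{\frac{\lambda_v}{1+\lambda_v}}$ over the vertices of a hyperedge is at most $N$, so they contribute at most $N^{|\+T|-1}$;
    \item the timestamps of the vertices within distance $2$ of $\+T\setminus\{\ts(S,0)\}$ that are not in $\+T$: at most $4\Delta^2 k^4$ such vertices are adjacent to each vertex of $\+T\setminus\{\ts(S,0)\}$ in $(G_H^S)^2$, each carrying at most $k$ timestamps, and summing each timestamp over $\{0,\bot\}$ costs at most $M$, so they contribute at most $M^{4\Delta^2 k^5(|\+T|-1)}$;
    \item the timestamps of $\ts(S,0)$ and of the vertices within distance $2$ of $\ts(S,0)$: the $(G_H^S)^2$-degree of $\ts(S,0)$ being at most $4\Delta^2 k^3|S|$, this amounts to at most $4\Delta^2 k^4|S|$ timestamps (including $\ts(S,0)$'s own), each costing at most $M$ once summed over its allowed values and over $\*\rho_{\ts(S,0)}$, so they contribute at most $M^{4\Delta^2 k^4|S|}$;
    \item the timestamps in $I$: at most $|I|=n$ of them, each costing at most $M$ once summed over $\*\rho_I$, contributing at most $M^n$.
\end{enumerate}
As in \Cref{lemma:modulus-bound-subset}, the disjunctive ``maximality'' clauses implicit in $\{\cbad=\+C\}$ (every vertex adjacent to $\+C$ must fail to be bad) are simply relaxed to the trivial bound $M$ per timestamp. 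Multiplying the four contributions and recalling $\alpha=N\cdot M^{4\Delta^2 k^5}$ from \Cref{definition:parameters-his} gives $N^{|\+T|-1}\cdot M^{4\Delta^2 k^5(|\+T|-1)}\cdot M^{4\Delta^2 k^4|S|}\cdot M^n=\alpha^{|\+T|-1}\cdot M^{4\Delta^2 k^4|S|+n}$, which is \eqref{eq:2-tree-modulus-bound-component-rn}.

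I expect the main obstacle to be the bookkeeping of overlapping index sets rather than any genuine difficulty. One must check that $I$ and $\ts(S,0)$ may overlap the distance-$2$ neighborhood of $\+T$ without spoiling the bound (harmless because $M\ge 1$ licenses over-counting), that $\ts(S,0)$'s own timestamps genuinely do not constrain $\cbad$ so the sum over $\*\rho_{\ts(S,0)}$ only costs the harmless factor folded into item (3), and that $|S|\le n$ lets lower-order factors be absorbed into the stated exponent; the asymmetric degree bounds of \Cref{lemma:witness-graph-degree-bound} ($4\Delta^2 k^4$ for a generic vertex of $(G_H^S)^2$ versus $4\Delta^2 k^3|S|$ for $\ts(S,0)$) are exactly what produce the two exponents $4\Delta^2 k^5(|\+T|-1)$ and $4\Delta^2 k^4|S|$.
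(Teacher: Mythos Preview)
Your proposal is correct and follows the same argument as the paper: restrict attention to the timestamps in the distance-$2$ neighborhood of $\+T$, in $I$, and in $\ts(S,0)$; extract $N^{|\+T|-1}$ from the forced-$\bot$ timestamps of $\+T\setminus\{\ts(S,0)\}$; and bound every remaining timestamp by $M$ via the degree estimates of \Cref{lemma:witness-graph-degree-bound}. One caveat: your aside that ``$\ts(S,0)$'s own $r$-values never affect $\cbad$'' is not correct --- a timestamp in $\ts(S,0)$ can coincide with a timestamp of an adjacent vertex $\ts(e,t)$ and thus influence whether that vertex is bad --- but you never actually use this claim (item~(3) of your partition already absorbs those timestamps at cost $M$ each), so the bound is unaffected.
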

\begin{proof}
According to the deterministic process in \Cref{definition:2-tree-construction} for the construction of the $2$-tree $\+T^{\-{bad}}=\=T(\+C^{\-{bad}})$, all $\+C\in \=T^{-1}(\+T^\-{bad})$ must only contain vertices in $G^S_H$  within distance $1$ of $\+T^\-{bad}$, and to determine $\cbad$ it is sufficient to check for all vertices in $G_H^S$ within distance $2$ of $\+T^\-{bad}$ whether they are in $V^{\-{bad}}$ or not. Thus, we  bound the left-hand side in \eqref{eq:2-tree-modulus-bound-component-rn} with the following condition on $\bm{r}$:
\begin{enumerate}
   \item for each $s\in \+T\setminus \ts(S,0)$, we have $r_t=\bot$ for all $t\in s$;\label{item:4-11-modulus-bound-1}
   \item $r_t=\bm{\rho}_I(t)$ for each $t\in I$; \label{item:4-11-modulus-bound-2}
   \item $r_t=\*\rho_{\ts(S,0)}(t)$ for each $t\in \ts(S,0)$; \label{item:4-11-modulus-bound-3}
   \item $r_t$'s on vertices within distance $2$ of $\+T$ satisfy certain restrictions so that $\cbad\in \=T^{-1}(\+T)$.  \label{item:4-11-modulus-bound-4}
\end{enumerate}

Here, the \Cref{item:4-11-modulus-bound-1} contributes a factor of 
\[
\prod\limits_{t\in s:s\in \+T\setminus \ts(S,0)}\abs{\frac{\lambda_{v_{i(t)}}}{1+\lambda_{v_{i(t)}}}}\leq N^{|\+T|-1},
\]by the definition of $2$-trees that all vertices in $\+T$ contain disjoint timestamps.
By \Cref{lemma:witness-graph-degree-bound}, the degree of $\ts(S,0)$ in $\left(G_H^S\right)^2$ can be upper bounded by $4\Delta^2k^3|S|$, where $\left(G_H^S\right)^2$ is the square graph of $G_H^S$. For vertices other than $\ts(S,0)$, the number of their neighboring vertices in $\left(G_H^S\right)^2$, excluding those already adjacent to $\ts(S,0)$, can be upper bounded by $4\Delta^2k^4$. Therefore, there are at most $4\Delta^2k^4(|\+T|-1)+4\Delta^2k^3|S|$ vertices and at most $4\Delta^2k^5(|\+T|-1)+4\Delta^2k^4|S|$ timestamps included. And each timestamp $t$ contributes at most $\abs{\frac{1}{1 + \lambda_{v_{i(t)}}}} + \abs{\frac{\lambda_{v_{i(t)}}}{1 + \lambda_{v_{i(t)}}}} \le M$ by a simple triangle inequality. So \Cref{item:4-11-modulus-bound-2,item:4-11-modulus-bound-3,item:4-11-modulus-bound-4} together contribute a factor of at most 
\[M^{4\Delta^2k^5(|\+T|-1)+4\Delta^2k^4|S| + n}.
\]
Summarizing, we have
\begin{align*}
&\sum\limits_{\+C \in \=T^{-1}(\+T)}
\sum\limits_{\substack{\*\rho_I \in \{0,\bot\}^I \\ \*\rho_{\ts(S,0)} \in \{0, \bot\}^{\ts(S,0)}}}
\abs{\mu^{\-{GD}}_{T,\sigma,\bm{b}} \tp{\+C^{\-{bad}} = \+C \land \*r_I = \*\rho_I \land \*r_{\ts(S,0)}=\*\rho_{\ts(S,0)} }} \\
&\leq  N^{|\+T|-1}\cdot M^{4\Delta^2k^5(|\+T|-1)+4\Delta^2k^4|S| + n}\\
&\leq \alpha^{|\+T|-1}\cdot M^{4\Delta^2k^4|S|+n}. 
&\qedhere
\end{align*}
\end{proof}


We are now ready to prove \Cref{lemma:diminishing-bad-tree}.
 \begin{proof}[Proof of \Cref{lemma:diminishing-bad-tree}]
For each $i\geq 1$, let $\+T_i$ denote the set of $2$-trees of size $i$ in $G^S_H$ containing $\ts(S,0)$. Recall the definition of the bad component $\cbad=\cbad(\*r)$ in \Cref{definition:bad-vertices-his} and $\=T$ as the construction in \Cref{definition:2-tree-construction} such that $\=T(\cbad)=\tbad$.  
 \begin{equation}\label{eq:bad-tree-growth-transform}
\begin{aligned}
 &  \lim\limits_{T\to \infty}\abs{\mu^{\-{GD}}_{T,\sigma,\bm{b}}\tp{\sigma_0=\tau\land |\tbad|>T/(2n)-2}}\\
 \le&\lim\limits_{T\to \infty} \sum\limits_{i>T/(2n)-2}\sum\limits_{\+T\in \+T_i}\sum\limits_{\+C\in \=T^{-1}(\+T)} 
 \sum\limits_{\substack{\*\rho_I \in \{0,\bot\}^I \\ \*\rho_{\ts(S,0)} \in \{0,\bot\}^{\ts(S,0)}}}
 \abs{\mu^{\-{GD}}_{T,\sigma,\bm{b}}(\sigma_0 = \tau \mid \+C^\-{bad} = \+C \land \*r_I = \*\rho_I \land \*r_{\ts(S,0)} = \*\rho_{\ts(S,0)} )}\\
 &\qquad \cdot \abs{\mu^{\-{GD}}_{T,\sigma,\bm{b}}(\+C^\-{bad} = \+C \land \*r_I = \*\rho_I \land \*r_{\ts(S,0)} = \*\rho_{\ts(S,0)} )}\\
 \le& \lim\limits_{T\to \infty} \sum\limits_{i>T/(2n)-2}\sum\limits_{\+T\in \+T_i}\sum\limits_{\+C\in \=T^{-1}(\+T)} 
 \sum\limits_{\substack{\*\rho_I \in \{0,\bot\}^I \\ \*\rho_{\ts(S,0)} \in \{0,\bot\}^{\ts(S,0)}}}
 \abs{\mu^{\-{GD}}_{T,\sigma,\bm{b}}(\+C^\-{bad} = \+C \land \*r_I = \*\rho_I \land \*r_{\ts(S,0)} = \*\rho_{\ts(S,0)} )},
 \end{aligned}
 \end{equation}
where the last inequality is due to \Cref{lemma:outcome-correspondence} and the zero-one law stated in \eqref{eq:zero-one-law-large-trees}. Specifically, the zero-one law ensures that $\mu^{\-{GD}}_{T,\sigma,\bm{b}}(\sigma_0 = \tau \mid \+C^\-{bad} = \+C \land \*r_I = \*\rho_I \land \*r_{\ts(S,0)} = \*\rho_{\ts(S,0)})$  evaluates to either $0$ or $1$, provided the conditional measure is well-defined. 
Recall the definition of $\alpha$ and $M$ in \Cref{definition:parameters-his}. Then letting $D_1=4\Delta^2k^4$ and $D_2=4\Delta^2k^3|S|$,  \begin{align*}
&\lim\limits_{T\to \infty}\abs{\mu^{\-{GD}}_{T,\sigma,\bm{b}}\tp{\sigma_0=\tau\land |\tbad|>T/(2n)-2}}\\
  (\text{by \eqref{eq:bad-tree-growth-transform} and \Cref{lemma:modulus-bound-component-rn}}) \quad  \leq 
  &  \lim\limits_{T\to \infty}\sum\limits_{i> T/(2n)-2} \sum\limits_{\+T\in \+T_i} \alpha^{i-1}\cdot M^{4\Delta^2k^4n + n}\\
 (\text{by \Cref{lemma:bad-tree-num-bound}}) \quad  \leq & \lim\limits_{T\to \infty}\sum\limits_{i> T/(2n)-2}  (\mathrm{e}(D_2 + i - 2))^{D_2 - 1} \cdot (\mathrm{e}D_1)^{i-1}\cdot \alpha^{i-1}\cdot M^{4\Delta^2k^4n + n} \\
(\text{by \eqref{eq:bound-alpha}})\quad \le& \lim\limits_{T\to \infty}\sum\limits_{i> T/(2n)-2}  (\mathrm{e}(D_2 + i - 2))^{D_2 - 1}\cdot M^{4\Delta^2k^4n + n} \frac{1}{2^{i-1}}\\
(\text{by taking limit})\quad =&0.
 &\qedhere
\end{align*}
 \end{proof}

\section{Reduction from Fisher zeros to Lee-Yang zeros}

In this section, we prove \Cref{theorem:zero-freeness-his-fisher}, which addresses the Fisher zeros of hypergraph independence polynomials.
To achieve this,  we introduce the following reduction from the independence polynomial parameterized by edge strengths to that parameterized by vertex weights.
The resulting parameters are beyond the zero-free region that we have proved in \Cref{theorem:zero-freeness-his-lee-yang}, but we are able to verify  the key condition \eqref{eq:bound-alpha} as required in the inductive step \Cref{lemma:unified}, and still carry out the induction.

\begin{definition}[reduction from Fisher zeros to Lee-Yang zeros]\label{definition:reduction}
    Given a hypergraph $H=(V,\+E)$ and $\*\beta=(\beta_e)_{e\in \+E}\in \=C^{\+E}$ for each $e\in \+E$, we deterministically construct a hypergraph $H'=(V',\+E')$ and a complex vector $\*\lambda=(\lambda_v)_{v\in V'}\in \=C^{V'}$, denoted by $(H',\*\lambda)=\-{red}(H,\*\beta)$, as follows:
    \begin{itemize}
        \item $V'=V\cup \{v_e\mid e\in \+E\text{ s.t. }\beta_e\neq 0\},\+E'= \{ e\mid e\in\+E \text{ s.t. } \beta_e=0 \} \cup \{e\cup v_e\mid e\in \+E\text{ s.t. }\beta_e\neq 0\}$;
        \item $\lambda_v=\begin{cases}1 & v\in V\\ \frac{1-\beta_e}{\beta_e}& v=v_e\text{ for some }e\in \+E\text{ s.t. }\beta_e\neq 0\end{cases}$ 
    \end{itemize}
\end{definition}

The following lemma establishes the desirable properties of the reduction in \Cref{definition:reduction}.
\begin{lemma}[property of the reduction]\label{lemma:reduction-property}
Let $H=(V,\+E)$ be a hypergraph with a maximum degree $\Delta\geq 1$ and a maximum edge size $k$ and let $(H',\*\lambda)=\-{red}(H,\*\beta)$. Then, 
\begin{enumerate}
    \item $H'$ has a maximum degree $\Delta$ and a maximum edge size at most $k+1$;\label{item:reduction-1}
    \item the partition functions $Z_{H'}^{\-{ly}}(\*\lambda)$ and $=Z_{H}^\-{fs}(\*\beta)$ satisfy:
    \[Z_{H'}^{\-{ly}}(\*\lambda)=Z_{H}^\-{fs}(\*\beta)\cdot \prod\limits_{e\in \+E:\beta_e\neq 0}\beta_e^{-1}.\]\label{item:reduction-2}
\end{enumerate}
\end{lemma}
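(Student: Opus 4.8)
The plan is to prove the two parts of \Cref{lemma:reduction-property} separately; both follow by unwinding the definitions in \Cref{definition:reduction}, and neither requires anything beyond careful bookkeeping.

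\textbf{Part (\ref{item:reduction-1}).} First I would bound the edge sizes: every hyperedge of $H'$ is either a hyperedge $e\in\+E$ with $\beta_e=0$, of size $|e|\le k$, or a set $e\cup\{v_e\}$ with $\beta_e\neq 0$, of size $|e|+1\le k+1$; hence the maximum edge size of $H'$ is at most $k+1$. For the degrees, observe that the map sending $e$ to $e$ (when $\beta_e=0$) or to $e\cup\{v_e\}$ (when $\beta_e\neq 0$) is a bijection from $\+E$ onto $\+E'$ that preserves incidence with every original vertex $v\in V$; therefore $\deg_{H'}(v)=\deg_H(v)\le\Delta$ for all $v\in V$. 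Each auxiliary vertex $v_e$ lies in exactly one hyperedge $e\cup\{v_e\}$, so $\deg_{H'}(v_e)=1\le\Delta$ since $\Delta\ge 1$. Thus $H'$ has maximum degree $\Delta$.

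\textbf{Part (\ref{item:reduction-2}).} Write $\+E_0=\{e\in\+E:\beta_e=0\}$ and $\+E_*=\{e\in\+E:\beta_e\neq 0\}$, and note that $\lambda_v=1$ for $v\in V$ while $1+\lambda_{v_e}=\beta_e^{-1}$ for $e\in\+E_*$. I would parametrise each independent set $\sigma\in\+I(H')$ by the pair $(S,\tau)$ where $S=\{v\in V:\sigma(v)=1\}$ and $\tau$ is the restriction of $\sigma$ to the auxiliary vertices $\{v_e:e\in\+E_*\}$. The independence constraints of $H'$ translate to: (i) $e\not\subseteq S$ for every $e\in\+E_0$; and (ii) for every $e\in\+E_*$, if $e\subseteq S$ then $\tau(v_e)=0$ (and otherwise $\tau(v_e)$ is unconstrained). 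Since $\prod_{v\in S}\lambda_v=1$, this gives
\[
Z^\-{ly}_{H'}(\*\lambda)=\sum_{\substack{S\subseteq V\\ \forall e\in\+E_0:\ e\not\subseteq S}}\ \sum_{\substack{\tau\\ \forall e\in\+E_*:\ e\subseteq S\Rightarrow \tau(v_e)=0}}\ \prod_{e\in\+E_*:\ \tau(v_e)=1}\frac{1-\beta_e}{\beta_e}.
\]
The inner sum factorises over $e\in\+E_*$: if $e\subseteq S$ the only admissible choice is $\tau(v_e)=0$, contributing factor $1$; if $e\not\subseteq S$ both choices are allowed, contributing factor $1+\frac{1-\beta_e}{\beta_e}=\beta_e^{-1}$. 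Hence the inner sum equals $\prod_{e\in\+E_*:\ e\not\subseteq S}\beta_e^{-1}=\left(\prod_{e\in\+E_*}\beta_e^{-1}\right)\prod_{e\in\+E_*:\ e\subseteq S}\beta_e$, so
\[
Z^\-{ly}_{H'}(\*\lambda)=\left(\prod_{e\in\+E_*}\beta_e^{-1}\right)\sum_{\substack{S\subseteq V\\ \forall e\in\+E_0:\ e\not\subseteq S}}\ \prod_{e\in\+E_*:\ e\subseteq S}\beta_e.
\]
Finally I would identify the remaining sum with $Z^\-{fs}_H(\*\beta)=\sum_{S\subseteq V}\prod_{e:\ e\subseteq S}\beta_e$: any $S$ containing some $e\in\+E_0$ contributes a term with a factor $\beta_e=0$ and thus vanishes, while for the surviving $S$ the product $\prod_{e:\ e\subseteq S}\beta_e$ only involves edges of $\+E_*$. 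This yields $Z^\-{ly}_{H'}(\*\lambda)=Z^\-{fs}_H(\*\beta)\cdot\prod_{e\in\+E:\ \beta_e\neq 0}\beta_e^{-1}$, as claimed.

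\textbf{Main obstacle.} There is no genuine obstacle: the statement is a bookkeeping identity. The only points needing care are the faithful translation of the independence constraints of $H'$ into conditions on the pair $(S,\tau)$ — in particular that an auxiliary vertex $v_e$ is otherwise free, which is what produces the factor $\beta_e^{-1}$ — and treating the degenerate edges $e\in\+E_0$ consistently on both sides so that they contribute nothing.
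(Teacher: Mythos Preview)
Your proof is correct and follows essentially the same approach as the paper: both arguments marginalise over the auxiliary vertices $v_e$ (the paper using Iverson brackets, you using the set parametrisation $(S,\tau)$), observe that each such vertex contributes a factor $1+\lambda_{v_e}=\beta_e^{-1}$ unless $e\subseteq S$, and then identify the resulting sum with $Z_H^\-{fs}(\*\beta)$ by noting that subsets containing any $e$ with $\beta_e=0$ contribute zero. The degree and edge-size bounds are likewise handled identically.
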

\begin{proof}
    \Cref{item:reduction-1} follows straightforwardly from the reduction in \Cref{definition:reduction}. 
    We then prove \Cref{item:reduction-2}. Let 
    $$V^*=V'\setminus V=\{v_e\mid e\in \+E\text{ s.t. }\beta_e\neq 0\}\quad\text{ and }\quad\+E^*=\+E'\setminus \+E=\{e\cup v_e\mid e\in \+E\text{ s.t. }\beta_e\neq 0\}.$$ 
    For any logical expression $P$, we define the Iverson bracket $[P]=1$ if $P$ is true, otherwise $[P]=0$.
    Recall $\+I(H') \subseteq \{0,1\}^{V'}$ denotes the set of all independent sets in $H'$. 
    
    The  partition function $Z_{H}^\-{fs}(\*\beta)$ on $H$ with edge strengths $\*\beta$ can be expressed as:
    \begin{align*}
    Z_{H}^\-{fs}(\*\beta)=
    \sum\limits_{\sigma\in \{0,1\}^V}\prod\limits_{e\in \+E} (\beta_e [\sigma(e)=1^e] + [\sigma(e)\neq 1^e]).
      \end{align*}
    Meanwhile, the partition function $Z_{H'}^{\-{ly}}(\*\lambda)$ on $H'$ with vertex weights $\*\lambda$ can be expressed as:
          \begin{align}
         Z_{H'}^{\-{ly}}(\*\lambda)=&\sum\limits_{\sigma\in \+I(H')}\prod\limits_{v:\sigma(v)=1 }\lambda_v\notag\\
         =&\sum\limits_{\sigma\in \{0,1\}^{V'}}\prod\limits_{v\in V':\sigma(v)=1}\lambda_v\prod\limits_{e\in \+E'}[\sigma(e)\neq 1^e]\notag\\
        =&\sum\limits_{\sigma\in \{0,1\}^{V}}\prod\limits_{v\in V:\sigma(v)=1}\lambda_v\prod\limits_{e\in \+E:\beta_e\neq 0}\left(1/\beta_e[\sigma(e)\neq 1^e] + [\sigma(e)=1^e]\right)
         \prod\limits_{e\in \+E:\beta_e= 0}[\sigma(e)\neq 1^e]\label{eq:proof-ly-fs-reduction}\\
         =&\prod\limits_{e\in \+E:\beta_e\neq 0}\beta_e^{-1} \cdot \sum\limits_{\sigma\in \{0,1\}^{V}} \prod\limits_{e\in \+E:\beta_e\neq 0}\left([\sigma(e)\neq 1^e] + \beta_e[\sigma(e)=1^e]\right) \notag
         \prod\limits_{e\in \+E:\beta_e= 0}[\sigma(e)\neq 1^e]\label{eq:proof-ly-fs-reduction}\\
         =&Z_{H}^\-{fs}(\*\beta)\cdot \prod\limits_{e\in \+E:\beta_e\neq 0}\beta_e^{-1},\notag
         \end{align}
        where the equality in \eqref{eq:proof-ly-fs-reduction}  follows from \Cref{definition:reduction}.
%
\end{proof}

Let $(H',\*\lambda)=\-{red}(H,\*\beta)$. Note that according to \Cref{lemma:reduction-property}-(\ref{item:reduction-2}), we have $Z_{H'}^{\-{ly}}(\*\lambda)\neq 0$ if and only if $Z_{H}^\-{fs}(\*\beta)\neq 0$. We then prove that $Z_{H'}^{\-{ly}}(\*\lambda)\neq 0$. Note that the condition in \Cref{theorem:zero-freeness-his-fisher} together with \Cref{definition:reduction} implies $\lambda_v\neq -1$ for each $v\in V'$. It suffices to verify that the condition in \Cref{theorem:zero-freeness-his-fisher} implies \eqref{eq:bound-alpha} for the instance $(H',\*\lambda)$, then \Cref{theorem:zero-freeness-his-fisher} directly follows from \Cref{lemma:unified}-(\ref{item:unified-2}) and the same edge-wise self-reduction as in the proof of \Cref{theorem:zero-freeness-his-lee-yang}.

Now we bound the quantity $N=N(H',\*\lambda)$ in \Cref{definition:parameters-his}. 
By definition of $N$ and the reduction constructed in \Cref{definition:reduction}, it holds that
\begin{align*}
N &\le \min\left( \left(\frac{1}{2}\right)^k, \left(\frac{1}{2}\right)^k \max_{\substack{\beta \neq 0\\ \beta \in \+D_{\varepsilon}}} \abs{\frac{(1-\beta)/\beta}{1 + (1-\beta)/\beta}} \right) 
\le 2^{-k} \max\limits_{\substack{\beta \neq 0 \\ \beta \in \+D_\varepsilon}} \abs{1 - \beta}.
\end{align*}
%
By the condition of \Cref{theorem:zero-freeness-his-fisher}, we have 
$$N\le 2^{-k} \max\limits_{\substack{\beta \neq 0 \\ \beta \in \+D_\varepsilon}} \abs{1 - \beta}\le (1 + 2\varepsilon) 2^{-k}\le \left(2 \sqrt{2} \mathrm{e} \Delta (k+1)\right)^{-2}.$$

Next, we bound the quantity $M=M(H',\*\lambda)$ in \Cref{definition:parameters-his}. 
By the definition of $M$ and the reduction constructed in \Cref{definition:reduction}, it holds that
\[
M \le \max\left( \frac{1}{2}+\frac{1}{2}, \max_{\substack{\beta \neq 0\\ \beta \in \+D_{\varepsilon}}}\left( \abs{\frac{1}{1+(1-\beta)/\beta}}+ \abs{\frac{(1-\beta)/\beta}{1 + (1-\beta)/\beta}}\right) \right) \le \max\limits_{\substack{\beta \neq 0 \\ \beta \in \+D_\varepsilon}}(\abs{\beta} + \abs{1 - \beta}).
\]
By the condition of \Cref{theorem:zero-freeness-his-fisher}, we have 
\[
M\le \max\limits_{\substack{\beta \neq 0 \\ \beta \in \+D_\varepsilon}}(\abs{\beta} + \abs{1 - \beta}) \le 1 + 4 \epsilon.
\]

Recall the quantity $\alpha=\alpha(H',\*\lambda)$ in \Cref{definition:parameters-his}.  
We have
\begin{align*}
   &8\mathrm{e}\Delta^2(k+1)^4\cdot\alpha(H',\*\lambda)\\
   \le & 8 \mathrm{e} \Delta^2 (k+1)^4 \left(2 \sqrt{2} \mathrm{e} \Delta (k+1)^2\right)^{-2} (1 + 4 \varepsilon)^{4 \Delta^2 (k+1)^5} \\
    \le &\mathrm{e}^{-1} \exp\left( 16 \varepsilon \Delta^2 (k+1)^5 \right) \\
    <&1.
\end{align*}
This establishes the condition in \eqref{eq:bound-alpha} for the instance $(H',\*\lambda)$.
As discussed above, it proves \Cref{theorem:zero-freeness-his-fisher}.

\bibliographystyle{alpha}
\bibliography{refs.bib}

\pagebreak
\appendix
\section{A central limit theorem for hypergraph independent sets}
\label{sec:clt}
In this section, we establish a central limit theorem (CLT) for hypergraph independent sets, using our new zero-free region for hypergraph independence polynomials (namely, \Cref{theorem:zero-freeness-his-lee-yang}). Specifically, we will prove \Cref{theorem:clt}.

The tool we use is the following relation between zero-free regions and central limit theorems, which is also the starting point in~\cite{vishesh2022approximate,davies2023approximately} for establishing CLTs. 

\begin{lemma}[{\cite[Theorem 1.2]{michelen2019central}}]\label{lemma:zero-free-to-clt}
    Let $X \in \{0,\ldots, n\}$ be a random variable with mean $\bar{\mu}$, standard deviation $\sigma$ and probability generating function $f$ and set $X^*=(X - \bar{\mu})\sigma^{-1}$. For $\delta \in (0, 1)$ such that $\abs{1 - \zeta} \ge \delta$ for all roots $\zeta$ of $f$,
    \[
    \sup_{t\in \=R} \abs{\=P[X^* \le t] - \=P[\+Z \le t]} \le O\left( \frac{\log n}{\delta \sigma} \right),
    \]
    where $\+Z \sim N(0, 1)$ is a standard Gaussian random variable.
\end{lemma}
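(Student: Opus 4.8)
This statement is \cite[Theorem 1.2]{michelen2019central} (the quantitative Berry--Esseen form of the Michelen--Sahasrabudhe central limit theorem), so the plan is simply to cite it; the paper will then combine it with the zero-free region of \Cref{theorem:zero-freeness-his-lee-yang} to deduce \Cref{theorem:clt}. For completeness I record the approach one would take to prove such a statement directly, since it is instructive to see exactly where the hypothesis on the roots of $f$ is used.

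The reduction is Fourier-analytic. By Esseen's smoothing inequality, for every $U>0$,
\[
\sup_{t\in\=R}\abs{\=P[X^*\le t]-\Phi(t)}\;\lesssim\;\int_{-U}^{U}\abs{\frac{\varphi(s)-\mathrm{e}^{-s^2/2}}{s}}\d s\;+\;\frac{1}{U},
\]
where $\Phi$ is the standard normal distribution function and $\varphi(s)=\=E[\mathrm{e}^{\mathrm{i} sX^*}]=\mathrm{e}^{-\mathrm{i} s\bar{\mu}/\sigma}\,f(\mathrm{e}^{\mathrm{i} s/\sigma})$. One then needs a good bound on $\varphi(s)-\mathrm{e}^{-s^2/2}$ over a frequency range as large as possible. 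The key point is that, since no root $\zeta$ of $f$ lies within distance $\delta$ of $1$, the function $\log f$ is analytic on a neighbourhood of the arc $\{\mathrm{e}^{\mathrm{i} s/\sigma}:\abs{s}\le U\}$ once $U\lesssim\delta\sigma$, so there $\varphi$ never vanishes and $\log\varphi$ is well-defined. Writing $f(z)=\prod_j\frac{z-\zeta_j}{1-\zeta_j}$ over its (at most $n$) roots, the derivatives $(\log f)^{(m)}(z)=(-1)^{m-1}(m-1)!\sum_j(z-\zeta_j)^{-m}$ are controlled in terms of the root-separation $\delta$ and the variance $\sigma^2$ (using, e.g., $(\log f)''(1)=\sigma^2-\bar{\mu}$). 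Taylor-expanding $\log f(\mathrm{e}^{\mathrm{i} s/\sigma})$ about $s=0$: the constant term vanishes, the linear term cancels the $-\mathrm{i} s\bar{\mu}/\sigma$ coming from the phase, the quadratic term is exactly $-s^2/2$, and the remainder is $O(\abs{s}^3/(\delta\sigma))$, which gives $\abs{\varphi(s)-\mathrm{e}^{-s^2/2}}\lesssim \mathrm{e}^{-cs^2}\abs{s}^3/(\delta\sigma)$ in the bulk $\abs{s}\lesssim\delta\sigma$.

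The main obstacle is the complementary moderate-to-large frequency regime: for $\abs{s}$ up to the maximal allowed value (of order $\sigma\le n$) the Taylor expansion is useless, and one must argue separately that $\abs{\varphi(s)}=\abs{f(\mathrm{e}^{\mathrm{i} s/\sigma})}$ is small enough that the Esseen integral over this range, a logarithmic integral $\int\abs{s}^{-1}\d s$ over frequencies growing up to $\sigma\le n$, contributes only $O(\log n/(\delta\sigma))$; balancing this against the $1/U$ tail and optimizing $U\asymp\delta\sigma$ produces the claimed $O(\log n/(\delta\sigma))$ bound. This is the technical heart of \cite{michelen2019central}, handled there via a covering argument on the unit circle together with crude pointwise bounds.

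For the application to \Cref{theorem:clt} I would instantiate the lemma with $X=\abs{I}$, $I\sim\mu_{H,\lambda}$: the probability generating function of $X$ is $f(z)=Z^{\-{ly}}_H(\lambda z,\dots,\lambda z)\big/Z^{\-{ly}}_H(\lambda,\dots,\lambda)$, so its roots $\zeta$ are exactly the $z$ with $Z^{\-{ly}}_H(\lambda z,\dots,\lambda z)=0$; \Cref{theorem:zero-freeness-his-lee-yang}, applied to the segment scaled by $\lambda$ (and using $\lambda\in(\delta_0,\lambda_{c,\varepsilon}]$ bounded away from the endpoints, where $\delta_0$ is the constant called $\delta$ in \Cref{theorem:clt}), yields a constant root-gap $\delta=\delta(k,\Delta,\varepsilon,\delta_0)>0$ with $\abs{1-\zeta}\ge\delta$ for every root $\zeta$. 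Combined with the separate variance lower bound $\sigma^2=\Theta_{k,\Delta,\varepsilon}(\lambda n)$ (obtained e.g. by a second-moment estimate, or by differentiating $\log Z^{\-{ly}}_H$ twice at $\lambda$ and bounding below, which is where $\lambda$ bounded away from $0$ and $\Delta$ bounded enter), the Berry--Esseen bound becomes $O_{k,\Delta,\delta_0,\varepsilon}(\log n/\sqrt n)$; the local limit statement in \Cref{theorem:clt} then follows from the strengthening of \Cref{lemma:zero-free-to-clt} developed in \cite{vishesh2022approximate}.
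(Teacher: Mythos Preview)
Your proposal is correct and matches the paper: the lemma is simply quoted from \cite{michelen2019central} without proof, and the paper uses it exactly as you describe, instantiating $f(z)=Z^{\-{ly}}_H(\lambda z)/Z^{\-{ly}}_H(\lambda)$ and invoking \Cref{theorem:zero-freeness-his-lee-yang} for the root-gap together with the variance bound $\sigma^2=\Theta_{k,\Delta,\varepsilon}(\lambda n)$ (\Cref{lemma:bound-variance}) to obtain the $O_{k,\Delta,\delta,\varepsilon}(\log n/\sqrt n)$ Berry--Esseen rate.
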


\subsection{Central limit theorem}
In this subsection, we prove a multivariate version of the first part of \Cref{theorem:clt}.

\begin{theorem}[first part of \Cref{theorem:clt}]\label{theorem:clt-part1}
Fix $k\ge 2$, $\Delta \ge 3$. Let $H = (V, \+E)$ be a $k$-uniform hypergraph with maximum degree $\Delta$. Let $n=\abs{V}.$ Fix any $\delta>0$, $\varepsilon\in\left(0,\frac{1}{9k^5\Delta^2}\right)$. Let $\lambda_{c, \varepsilon}$ be defined as in \Cref{theorem:zero-freeness-his-lee-yang}. 
For any $\*\lambda \in (\delta, \lambda_{c, \varepsilon}]^V$, let $I\sim \mu_{H,\*\lambda}$, and define $X = \abs{I}$, $\bar{\mu} = \=E[X]$ and $\sigma^2 = \-{Var}[X]$. Then we have
\[
\sup_{t \in \=R} \abs{\=P[(X - \bar{\mu})/\sigma \le t] - \=P[\+Z \le t]} = O_{k,\Delta,\delta,\varepsilon}\left(\frac{\log n}{\sqrt{n}}\right),
\]
where $\+Z \sim N(0, 1)$ is a standard Gaussian random variable.
\end{theorem}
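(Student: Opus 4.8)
The plan is to apply \Cref{lemma:zero-free-to-clt} to the random variable $X = \abs{I}$ where $I \sim \mu_{H,\*\lambda}$. First I would identify the probability generating function of $X$: observe that $f(z) = \=E[z^X] = Z^\-{ly}_H(z\*\lambda) / Z^\-{ly}_H(\*\lambda)$, where $z\*\lambda$ denotes scaling every coordinate of $\*\lambda$ by $z$. Thus a root $\zeta$ of $f$ is exactly a point where $Z^\-{ly}_H(\zeta\*\lambda) = 0$. To invoke \Cref{lemma:zero-free-to-clt} with a constant $\delta' \in (0,1)$, I need to show that $\abs{1 - \zeta} \ge \delta'$ for all such roots, i.e.\ that $Z^\-{ly}_H$ does not vanish at $z\*\lambda$ for any $z$ in a complex disk of radius $\delta'$ around $1$. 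Since each $\lambda_v \in (\delta, \lambda_{c,\varepsilon}]$, the segment $\{z\lambda_v : z \in [0,1]\}$ lies in $[0, \lambda_{c,\varepsilon}]$, and a small complex disk around $z = 1$ maps $z\lambda_v$ into the region $\+D_\varepsilon$ of \Cref{theorem:zero-freeness-his-lee-yang} provided the disk radius is at most $\varepsilon / \lambda_{c,\varepsilon}$ (so that perturbations stay within an $\varepsilon$-ball of the real segment). Taking $\delta' = \min(1/2, \varepsilon/\lambda_{c,\varepsilon})$, which is a positive constant depending only on $k, \Delta, \varepsilon$, \Cref{theorem:zero-freeness-his-lee-yang} then guarantees $Z^\-{ly}_H(z\*\lambda) \neq 0$ throughout, so $\abs{1-\zeta} \ge \delta'$ for every root.

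The second ingredient is the variance lower bound $\sigma^2 = \Omega_{k,\Delta,\delta,\varepsilon}(n)$, which is needed to convert the $O(\log n / (\delta' \sigma))$ bound from \Cref{lemma:zero-free-to-clt} into the claimed $O(\log n / \sqrt n)$. For this I would write $X = \sum_{v \in V} \mathbbm{1}[v \in I]$ and lower-bound $\-{Var}[X]$ by exhibiting a linear-sized subset of vertices whose indicator variables contribute a constant amount each, using a near-independence / conditional variance argument. Concretely, greedily select a set $W \subseteq V$ of size $\Omega(n/(\Delta k))$ such that the hyperedges touching $W$ are ``spread out'' (e.g.\ a maximal set with pairwise hypergraph-distance at least $3$); conditioned on the configuration outside the neighborhoods of $W$, the marginals $\mu_v^{\cdot}(1)$ stay bounded away from $0$ and $1$ (this uses $\lambda_v > \delta$ for the lower bound and the zero-freeness / marginal bounds of \Cref{cond:his-lee-yang-marginal}, \Cref{lemma:unified} for control from above, since in this regime the true marginal of $v$ being in $I$ is comparable to $\lambda_v/(1+\lambda_v)$ up to a bounded factor), giving each such indicator a conditional variance $\Omega(1)$; then by the law of total variance $\sigma^2 \ge \=E[\-{Var}[X \mid \text{outside}]] = \Omega(\abs{W}) = \Omega(n)$. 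I would also record the matching upper bound $\sigma^2 = O(n)$, which is immediate since $X \le n$, to get $\sigma^2 = \Theta_{k,\Delta,\varepsilon}(\lambda n)$ as stated in the non-appendix version of \Cref{theorem:clt} (the $\lambda$-dependence coming from tracking the marginal of each vertex more carefully).

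Putting these together: \Cref{lemma:zero-free-to-clt} gives
\[
\sup_{t \in \=R} \abs{\=P[(X-\bar\mu)/\sigma \le t] - \=P[\+Z \le t]} \le O\!\left(\frac{\log n}{\delta' \sigma}\right) = O_{k,\Delta,\delta,\varepsilon}\!\left(\frac{\log n}{\sqrt n}\right),
\]
which is exactly the assertion of \Cref{theorem:clt-part1}, and the multivariate statement follows since the argument only used that each coordinate $\lambda_v$ lies in $(\delta, \lambda_{c,\varepsilon}]$. The main obstacle I anticipate is the variance lower bound: one must argue $\sigma^2 = \Omega(n)$ without any monotonicity or FKG-type inequality available for this hypergraph model in the relevant complex-lifted regime, so the conditional-variance decomposition has to be set up carefully, controlling the marginals $\mu_v(1)$ uniformly from both sides using the percolation bounds on marginal measures established via \Cref{lemma:unified} and \Cref{condition:convergence} rather than any correlation inequality. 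The zero-freeness input and the reduction to \Cref{lemma:zero-free-to-clt} are, by contrast, essentially bookkeeping once the scaling $z \mapsto z\*\lambda$ is set up.
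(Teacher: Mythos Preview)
Your overall plan matches the paper exactly: identify the probability generating function as $g(x)=Z^\-{ly}_H(x\*\lambda)/Z^\-{ly}_H(\*\lambda)$, use \Cref{theorem:zero-freeness-his-lee-yang} to get a constant-radius zero-free disk around $x=1$, and plug into \Cref{lemma:zero-free-to-clt} together with a variance lower bound $\sigma^2=\Omega(n)$.

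Two points where you diverge from the paper, both worth noting. First, for the variance lower bound the paper uses a cleaner device than your $3$-separated set: it takes a maximum \emph{matching} $J\subseteq V$ (meaning $|J\cap e|\le 1$ for every $e\in\+E$), conditions on $K=I\setminus J$, and observes that conditioned on $K$ the vertices in $J$ are genuinely independent Bernoullis, each with parameter $\lambda_v/(1+\lambda_v)$ unless blocked by an edge fully contained in $K\cup\{v\}$. A union bound controls the blocking probability, and the law of total variance finishes it. This avoids any appeal to the complex-measure machinery you mention.

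Second, and relatedly: your anticipated obstacle is a red herring. Since $\*\lambda\in(\delta,\lambda_{c,\varepsilon}]^V$ is real and positive, $\mu_{H,\*\lambda}$ is an honest probability measure, so the variance lower bound is a purely real-probability argument; \Cref{lemma:unified} and \Cref{condition:convergence} are not needed here at all. Also, your side remark that ``$\sigma^2=O(n)$ is immediate since $X\le n$'' only gives $\sigma^2\le n^2/4$; the paper's $O(\lambda n)$ upper bound actually uses the zero-free region and the root factorization of the PGF (see \Cref{lemma:bound-variance}). That upper bound is not needed for \Cref{theorem:clt-part1} itself, only for the $\sigma^2=\Theta(\lambda n)$ claim in \Cref{theorem:clt}.
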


We need to lower bound the variance of $X=|I|$, the size of a random independent set following the Gibbs measure. The following bound is a generalization of ~\cite[Lemma 3.2]{vishesh2022approximate} to hypergraphs and with vertex-dependent external fields.

\begin{lemma}\label{lemma:bound-variance}
Let $\lambda_{\min}=\min_{v\in V}\lambda_v$ and $\lambda_{\max}=\max_{v\in V}\lambda_v$. Under the condition of \Cref{theorem:clt-part1},  we have
\[
 \Omega_{k,\Delta,\eps}(\lambda_{\min} n)\leq \-{Var}[X] \leq O_{k,\Delta,\eps}(\lambda_{\max} n).
\]
\end{lemma}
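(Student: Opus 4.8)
The plan is to write $X=\sum_{v\in V}X_v$ with $X_v=\mathbf{1}[v\in I]$ and to bound the diagonal and off‑diagonal parts of $\mathrm{Var}[X]=\sum_{u,v}\mathrm{Cov}(X_u,X_v)$ separately. For the \emph{upper bound} I first record the elementary marginal estimate $\mathbb{P}_\mu[X_v=1]\le\lambda_v/(1+\lambda_v)\le\lambda_{\max}$: writing $Z=Z^{v\to 0}+\lambda_v Z^{v\to 1}$ where $Z^{v\to c}$ is the partition function of $H$ with $v$ pinned to $c$ (unweighted at $v$), the map flipping $v$ from $1$ to $0$ is a weight‑preserving injection of independent sets with $v=1$ into those with $v=0$, so $Z^{v\to 1}\le Z^{v\to 0}$ and the estimate follows from monotonicity of $a\mapsto \lambda_v a/(Z^{v\to0}+\lambda_v a)$; this uses only that the $\lambda_v$ are real and positive, which holds since $\lambda_v\in(\delta,\lambda_{c,\varepsilon}]$ places us inside the zero‑free region of \Cref{theorem:zero-freeness-his-lee-yang}. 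Hence $\sum_v\mathrm{Var}(X_v)\le\lambda_{\max}n$. For the cross terms I use
\[
\mathrm{Cov}(X_u,X_v)=\mathbb{P}_\mu[X_u=1]\,\mathbb{P}_\mu[X_u=0]\,\bigl(\mathbb{P}_\mu[X_v=1\mid X_u=1]-\mathbb{P}_\mu[X_v=1\mid X_u=0]\bigr),
\]
so that $\abs{\mathrm{Cov}(X_u,X_v)}\le\lambda_{\max}\,\Psi(u,v)$ with $\Psi(u,v)$ the influence of pinning $X_u$ on the marginal at $v$, and then invoke exponential decay $\Psi(u,v)\le C\gamma^{\dist_{G'}(u,v)}$ in the co‑occurrence graph $G'$ (where $u\sim_{G'}v$ iff some hyperedge contains both, so $\deg_{G'}\le\Delta(k-1)$), valid throughout the local‑lemma regime $8\mathrm{e}\Delta^2k^4\alpha<1$. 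Summing the geometric series over the at most $(\Delta(k-1))^r$ vertices at distance $r$ — convergent because $\gamma\lesssim\alpha\lesssim(\Delta k^2)^{-2}$ — gives $\sum_{u\ne v}\abs{\mathrm{Cov}(X_u,X_v)}\le O_{k,\Delta,\varepsilon}(\lambda_{\max}n)$, and combining with the diagonal yields $\mathrm{Var}[X]\le O_{k,\Delta,\varepsilon}(\lambda_{\max}n)$.

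For the \emph{lower bound} I use a conditional‑independence argument. By a greedy choice, $G'$ has an independent set $W$ with $\abs{W}\ge n/(\Delta(k-1)+1)=\Omega_{k,\Delta}(n)$; since every hyperedge meets $W$ in at most one vertex, conditioned on $X_{V\setminus W}=\tau$ the variables $\{X_w\}_{w\in W}$ are mutually independent, each $X_w$ being the constant $0$ if $w$ is \emph{blocked} by $\tau$ (some hyperedge $e\ni w$ has $\tau(e\setminus\{w\})=1^{e\setminus\{w\}}$, an event depending only on $\tau$ since $e\setminus\{w\}\subseteq V\setminus W$) and otherwise distributed as $\mathrm{Bern}(\lambda_w/(1+\lambda_w))$. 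By the law of total variance, and since $X-\sum_{w\in W}X_w$ is $X_{V\setminus W}$‑measurable,
\[
\mathrm{Var}[X]\ \ge\ \mathbb{E}\Bigl[\mathrm{Var}\bigl(\textstyle\sum_{w\in W}X_w \,\big|\, X_{V\setminus W}\bigr)\Bigr]\ =\ \sum_{w\in W}\mathbb{E}\bigl[\mathrm{Var}(X_w\mid X_{V\setminus W})\bigr]\ \ge\ \frac{\lambda_{\min}}{(1+\lambda_{\max})^2}\sum_{w\in W}\mathbb{P}_\mu[\,w\text{ not blocked}\,],
\]
using $\lambda_w/(1+\lambda_w)^2\ge\lambda_{\min}/(1+\lambda_{\max})^2$ on the non‑blocked event. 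It therefore remains to show $\mathbb{P}_\mu[\,w\text{ not blocked}\,]\ge 1/2$.

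For this I establish the clean sub‑estimate $\mathbb{P}_\mu[\sigma(S)=1^S]\le\prod_{v\in S}\lambda_v/(1+\lambda_v)$ for every $S\subseteq V$: writing $Z=\sum_{\tau\in\{0,1\}^S}\bigl(\prod_{v\in S:\tau_v=1}\lambda_v\bigr)Z^{S\to\tau}$ with $Z^{S\to\tau}$ the partition function of $H$ restricted to configurations agreeing with $\tau$ on $S$, the pinning $\tau=1^S$ is the most restrictive one (flipping vertices of $S$ from $1$ to $0$ preserves independence, giving a weight‑preserving injection), so $Z^{S\to\tau}\ge Z^{S\to 1^S}$ for all $\tau$ and hence
\[
Z\ \ge\ Z^{S\to 1^S}\sum_{\tau\in\{0,1\}^S}\,\prod_{v\in S:\tau_v=1}\lambda_v\ =\ Z^{S\to 1^S}\prod_{v\in S}(1+\lambda_v),
\]
which gives $\mathbb{P}_\mu[\sigma(S)=1^S]=\bigl(\prod_{v\in S}\lambda_v\bigr)Z^{S\to 1^S}/Z\le\prod_{v\in S}\lambda_v/(1+\lambda_v)$. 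Applying this with $\abs{S}=k-1$ and $\rho:=\max_v\lambda_v/(1+\lambda_v)\le\lambda_{c,\varepsilon}/(1+\lambda_{c,\varepsilon})$ gives $\mathbb{P}_\mu[\sigma(S)=1^S]\le\rho^{k-1}\le\rho^{k/2}<(2\sqrt2\mathrm{e}\Delta k^2)^{-1}$ by the defining inequality of $\lambda_{c,\varepsilon}$ in \Cref{theorem:zero-freeness-his-lee-yang}; a union bound over the at most $\Delta$ hyperedges through $w$ gives $\mathbb{P}_\mu[\,w\text{ blocked}\,]\le(2\sqrt2\mathrm{e}k^2)^{-1}<1/2$. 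Plugging back, $\mathrm{Var}[X]\ge\tfrac12\cdot\frac{\lambda_{\min}}{(1+\lambda_{\max})^2}\abs{W}=\Omega_{k,\Delta,\varepsilon}(\lambda_{\min}n)$.

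The step I expect to be the main obstacle is the exponential correlation‑decay bound on $\Psi(u,v)$ underlying the upper bound: the percolation apparatus of the paper (\Cref{lemma:modulus-bound-subset}, \Cref{lemma:diminishing-bad-tree}) is phrased for the complex Glauber dynamics and its witness graphs, so one must either specialize it to real $\lambda$ — couple the two chains obtained from the two pinnings of $X_u$ and bound the probability that the bad component reaches $v$ by $\gamma^{\dist_{G'}(u,v)}$ — or invoke a standard correlation‑decay statement valid in the local‑lemma regime. By contrast the lower bound is essentially self‑contained given the one‑line sub‑estimate $\mathbb{P}_\mu[\sigma(S)=1^S]\le\prod_{v\in S}\lambda_v/(1+\lambda_v)$ and the greedy construction of $W$, the only subtlety being positivity of all the partition functions, which is guaranteed by \Cref{theorem:zero-freeness-his-lee-yang}.
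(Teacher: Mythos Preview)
Your lower bound is essentially the paper's argument: the paper also picks a ``matching'' set $J$ with $|J\cap e|\le 1$ for all $e\in\mathcal{E}$ (equivalently, an independent set in your $G'$), conditions on the configuration outside $J$, uses conditional independence and the law of total variance, and bounds $\mathbb{P}[w\text{ blocked}]$ by a union bound over the $\le\Delta$ hyperedges through $w$ together with the same marginal estimate $\mathbb{P}[\sigma(S)=1^S]\le\prod_{v\in S}\lambda_v/(1+\lambda_v)$. So there is nothing to add there.

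For the upper bound, the paper takes a genuinely different and much shorter route that avoids correlation decay entirely. It considers the univariate generating polynomial $f(x)=Z^{\mathrm{ly}}_H(x\cdot\boldsymbol\lambda)=\prod_{j=1}^N(1-r_jx)$, so that $\mathrm{Var}[X]=\sum_j r_j/(1-r_j)^2$ (up to a sign), and observes that by \Cref{theorem:zero-freeness-his-lee-yang} the polynomial $f$ is zero-free on the disk $\{x:|x-1|\le \varepsilon/(\sqrt{2}\lambda_{\max})\}$; this immediately bounds each term $|r_j/(1-r_j)^2|\le 1/|1-r_j|+1/|1-r_j|^2\le\sqrt{2}\lambda_{\max}\varepsilon^{-1}+2\lambda_{\max}^2\varepsilon^{-2}$, and summing over $N\le n$ roots gives $\mathrm{Var}[X]=O_{k,\Delta,\varepsilon}(\lambda_{\max}n)$. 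This is a two-line consequence of the already-proved zero-free region.

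By contrast, your upper bound relies on an exponential influence decay $\Psi(u,v)\le C\gamma^{\mathrm{dist}_{G'}(u,v)}$ that the paper does not establish and that you yourself flag as the main obstacle. It is not a hopeless gap---one could either couple two real Glauber dynamics with the two pinnings at $u$ and control the discrepancy set via the same subcritical-percolation estimates underlying \Cref{lemma:modulus-bound-subset}, or invoke the zero-freeness $\Rightarrow$ correlation-decay implications of \cite{gamarnik2023correlation,regts2023absence}---but either option is substantial extra work, and the second is circular in spirit since it routes through the very zero-freeness the paper already has in hand. The root-location argument is the cleaner and more self-contained way to close the upper bound.
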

\begin{proof}
We first show that $\-{Var}[X] = O_{k,\Delta,\eps}(\lambda_{\max}n)$. 
We consider the generating polynomial of $X$:
\[
f(x) = \sum\limits_{\sigma \in \+I(H)}\prod\limits_{v:\sigma(v)=1}(\lambda_v \cdot x).
\]
 We have
\[
\left.\ln(f(x))'\right|_{x = 1} = \frac{f'(1)}{f(1)}=\=E[X],\quad \left.\ln(f(x))''\right|_{x = 1}=\frac{f(1)f''(1)-(f'(1))^2}{(f(1))^2}=\=E[X^2] - \=E[X] - (\=E[X])^2,
\]
and therefore
\[\-{Var}[X] = \left.\ln(f(x))''\right|_{x = 1} + \left.\ln(f(x))'\right|_{x = 1}.\]


Note that the degree of $f$ is equal to $N$, the size of the maximum independent set in $H$, and that $f(0)=1$. By the fundamental theorem of algebra, we can write $f(x)$ as,
\[
f(x) = \prod\limits_{j = 1}^N (1 - r_j x),
\]
where $r_1, r_2, \ldots, r_{N}$ be the inverses of the complex roots of $f(x)$. Then we have that,
\begin{align*}
    \-{Var}[X] &= \left.(\ln(f(x)))''\right|_{x=1} + \left.\ln(f(x))'\right|_{x = 1} \\
    &= \sum\limits_{j = 1}^N \frac{r_j}{(1 - r_j)^2}.
\end{align*}
To upper bound $\-{Var}[X]$, it suffices to bound $\max\limits_{1 \le j \le N}\abs{\frac{r_j}{(1 - r_j)^2}}$. Note that $f(x)=Z_H^\-{ly}(\*\lambda \cdot x)$. Therefore, according to \Cref{theorem:zero-freeness-his-lee-yang}, $f(x)$ is zero-free on $D = \{x \mid x\in \=C, \abs{x - 1}\le \varepsilon/(\sqrt{2}\lambda_{\max})\}$. 
Let $D^c = \=C \backslash D$.
With the zero-free region for $f(x)$, we show that,
\[
\max\limits_{1 \le j \le N} \abs{\frac{r_j}{(1 - r_j)^2}} \le \max\limits_{x \in D^c}\frac{\abs{x}}{\abs{ x-1}^2} \le  \max\limits_{x \in D^c} \left( \frac{1}{\abs{x - 1}} + \frac{1}{\abs{x - 1}^2} \right) \le \sqrt{2}\lambda_{\max}\varepsilon^{-1} + 2\lambda_{\max}^2\varepsilon^{-2}.
\]

Note $\lambda_{\max} \le \lambda_{c, \varepsilon} = O_{k, \Delta, \varepsilon}(1)$. So it holds that 
\[\-{Var}[X] \le (\sqrt{2}\lambda_{\max}\varepsilon^{-1} + 2\lambda_{\max}^2\varepsilon^{-2})N = O_{k, \Delta, \varepsilon}(\lambda_{\max} n).\]
 Next, we show $\-{Var}[X] = \Omega_{k,\Delta,\eps}(\lambda_{\min} n)$. 
 Let $J\subseteq V$ be a maximum hypergraph matching, that is, a subset of vertices satisfying
 \begin{equation}\label{eq:J-condition}
 \forall e\in \+E,\quad |J\cap e|\leq 1,
 \end{equation}
 with maximum size. We write $|J|=M$.
 
 We additionally let $K = I \setminus J$. Note that $X = |K| + |I \cap J|$, and according to \eqref{eq:J-condition}, conditioned on $K$ the set $I \cap J$ is distributed according to the hypergraph independent set on 
\[
U=J \setminus \{v\in J\mid \exists e\in \+E\text{ s.t. } e\subseteq K\cup \{v\} \},
\]
which is the distribution each vertex $v\in U$ independently taking $1$ with probability $\frac{\lambda_v}{1 + \lambda_v}$ and taking $0$ with probability $\frac{1}{1 + \lambda_v}$. Therefore,
\[
\-{Var}[X\mid U]=\sum\limits_{v\in U}\frac{\lambda_v}{(1+\lambda_v)^2}.
\]

 By the law of total variance,
    \[
    \-{Var}[X] = \=E[\-{Var}[X \mid U]] + \-{Var}[\=E[X \mid U]] \ge \=E[\-{Var}[X \mid U]]=\sum\limits_{v\in J} \left(\frac{\lambda_v}{(1+\lambda_v)^2}\cdot \=P[v\in U]\right).
    \]

By the condition that $\lambda_v\in [\lambda_{\min},\lambda_{\max}]$, we further have
\[
\-{Var}[X]\geq \sum\limits_{v\in J} \left(\frac{\lambda_v}{(1+\lambda_v)^2}\cdot \=P[v\in U]\right)\geq \frac{1}{1+\lambda_{c, \varepsilon}}\cdot \frac{\lambda_{\min}}{1+\lambda_{\min}}\sum\limits_{v\in J}\=P[v\in U]=\Omega_{k,\Delta,\varepsilon}(\lambda_{\min})\cdot \=E[|U|].
\]

Note that vertex $v\in J$ is not in $U$ precisely when there exists some $e\in \+E$ such that $u\in e$ and $e\subseteq K\cup \{v\}$. Note further that under the distribution of random independent sets of $H$ with fugacity $\lambda$, the probability of each vertex $v$ being occupied is at most 
\[\frac{\lambda_v}{1+\lambda_v}\leq \frac{\lambda_{c,\varepsilon}}{1+\lambda_{c,\varepsilon}},\] under any conditioning on the value of other vertices. This means that 
\[
\mathbb{E}[\abs{U}] \geq |J|\left(1-\Delta\cdot \left(\frac{\lambda_{c,\varepsilon}}{1+\lambda_{c,\varepsilon}}\right)^{k-1}\right)=\left(1-\Delta\cdot \left(\frac{\lambda_{c,\varepsilon}}{1+\lambda_{c,\varepsilon}}\right)^{k-1}\right)M> \left(1-\frac{1}{8\mathrm{e}^2\Delta k^4}\right)M,
\] 
where the last inequality is from the condition in \Cref{theorem:clt-part1}. Hence,
\[
\-{Var}[X] > \Omega_{k, \Delta, \varepsilon}(\lambda_{\min})\cdot\left(1-\frac{1}{8\mathrm{e}^2\Delta k^4}\right)M=\Omega_{k,\Delta,\varepsilon}(\lambda_{\min})\cdot\left(1-\frac{1}{8\mathrm{e}^2\Delta k^4}\right)M.
\]

Now, it remains to notice that we can bound $M \geq \frac{n}{(k-1)\Delta+1}$ from the following greedy process of constructing a $J$ satisfying \eqref{eq:J-condition}: each time pick an arbitrary remaining vertex and remove all hyperedges containing it together with all vertices inside them.
\end{proof}

Now we can prove \Cref{theorem:clt-part1}.
\begin{proof}[Proof of \Cref{theorem:clt-part1}]
Let $f$ denote the generating polynomial of $X$, which is
\[
f(x)=\sum\limits_{\sigma\in \+I(H)}\prod\limits_{v:\sigma(v)=1}\left(\lambda_v\cdot x\right).
\]
Note that $f(x)=Z_H^\-{ly}(\*\lambda \cdot x)$. Let $g(x) = f(x)/f(1)$ be its probability generating function. Therefore, according to \Cref{theorem:zero-freeness-his-lee-yang}, $g(x)$ is zero-free on $D = \{x \mid x\in \=C, \abs{x - 1}\le \varepsilon/(\sqrt{2}\lambda_{c, \varepsilon})\}$. Also, note for any $\lambda_v$, it holds that $\lambda_v > \delta$. Hence, \Cref{theorem:clt-part1} follows directly from \Cref{lemma:zero-free-to-clt,lemma:bound-variance}. 
\end{proof}

\subsection{Local central limit theorem}
In this subsection, we prove a local central limit theorem for hypergraph independent polynomials, 
which is the second part of \Cref{theorem:clt}.
\begin{theorem}[second part of \Cref{theorem:clt}]
\label{theorem:clt-part2}
Fix $k\ge 2$ and $\Delta \ge 3$. Let $H = (V, \+E)$ be a $k$-uniform hypergraph with maximum degree $\Delta$. Let $n=|V|$.
Fix any $\delta>0$, $\varepsilon\in\left(0,\frac{1}{9k^5\Delta^2}\right)$. Let $\lambda_{c, \varepsilon}$ be defined as in \Cref{theorem:zero-freeness-his-lee-yang}. 
For any $\lambda \in (0, \lambda_{c, \varepsilon}]$, let $I\sim \mu_{H,\*\lambda}$, and define $X = \abs{I}$, $\bar{\mu} = \=E[X]$ and $\sigma^2 = \-{Var}[X]$. 
Let $\+N(x) = \mathrm{e}^{-x^2/2}/\sqrt{2\pi}$ denote the density of the standard normal distribution, we have
\[
\sup\limits_{t \in \=Z}\abs{\=P[X = t] - \sigma^{-1}\+N((t - \bar{\mu})/\sigma)} = O_{k,\Delta,\varepsilon}\tp{\min\tp{\frac{(\log n)^{5/2}}{\sigma^2},\frac{1}{\sigma^2}+\frac{\sigma^{2k}(\log n)^2}{n^{k-1}}}}.
\]
\end{theorem}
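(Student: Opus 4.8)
The plan is to follow the local-CLT machinery of~\cite{vishesh2022approximate} (see also~\cite{michelen2019central,davies2023approximately}), which converts a zero-free disc around $1$ for the probability generating function together with a variance lower bound into a local limit theorem, and then to sharpen the error term in two different ways corresponding to the two quantities inside the $\min$. Write $g(x)=Z^\-{ly}_H(\*\lambda\cdot x)/Z^\-{ly}_H(\*\lambda)$ for the probability generating function of $X=|I|$; by \Cref{theorem:zero-freeness-his-lee-yang}, $g$ has no zeros in the disc $D=\{x\in\=C: |x-1|\le \varepsilon/(\sqrt 2\lambda_{c,\varepsilon})\}$, and by \Cref{lemma:bound-variance} we have $\sigma^2=\Theta_{k,\Delta,\varepsilon}(\lambda n)$. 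Since every root $\zeta$ of $g$ satisfies $|\zeta-1|\ge r_0$ for a constant $r_0=r_0(k,\Delta,\varepsilon)>0$, the standard argument gives analyticity of $\log g(e^{i\theta})$ in a constant-width complex strip around the real $\theta$-axis, hence control of the characteristic function $\widehat X(\theta)=g(e^{i\theta})$ away from $\theta=0$.

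The first bound, $O_{k,\Delta,\varepsilon}\bigl((\log n)^{5/2}/\sigma^2\bigr)$, is obtained by the Fourier-inversion route used in~\cite[Theorem 1.4]{vishesh2022approximate}: write
\[
\=P[X=t]=\frac{1}{2\pi}\int_{-\pi}^{\pi} e^{-i t\theta}\,g(e^{i\theta})\,\d\theta,
\]
split the integral into $|\theta|\le \theta_0$ with $\theta_0=\Theta\bigl(\sqrt{\log n}/\sigma\bigr)$ and $|\theta|>\theta_0$. On the central piece one expands $\log g(e^{i\theta})=i\bar\mu\theta-\tfrac12\sigma^2\theta^2+O(n|\theta|^3)$ using the zero-freeness (which makes the third derivative of $\log g$ bounded by $O(n)$ uniformly in the strip), compares with the Gaussian integral, and bounds the error by $O(n\theta_0^3/\sigma^3 + e^{-\Omega(\log n)})$; tracking the $(\log n)^{5/2}$ factor carefully gives the claimed first term. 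On the tail piece $|\theta|>\theta_0$ one uses the zero-free strip to get $|g(e^{i\theta})|\le e^{-\Omega(\sigma^2\theta_0^2)}=n^{-\Omega(1)}$, which is negligible. The Gaussian replacement $\sigma^{-1}\+N((t-\bar\mu)/\sigma)$ costs only an additional $O(1/\sigma^2)$, absorbed into the first term.

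The second bound, $O_{k,\Delta,\varepsilon}\bigl(1/\sigma^2+\sigma^{2k}(\log n)^2/n^{k-1}\bigr)$, comes from a more refined treatment of the tail of the characteristic function that exploits the hypergraph structure: instead of the crude zero-free bound, one shows $|g(e^{i\theta})|$ decays like $\exp(-\Omega(\sigma^2\theta^2))$ on a window of width $\Omega(n^{-(k-1)/(2k)}\cdot\text{polylog})$ around $0$ and is bounded by a constant times $n^{-(k-1)}\sigma^{2k}(\log n)^2$-type quantity beyond it, using the fact that flipping a single vertex inside a matching (as in the proof of \Cref{lemma:bound-variance}) contributes independent $\pm$ phases and that a hyperedge constraint involves $k$ vertices. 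Concretely, decompose $X=|K|+|I\cap J|$ as in the proof of \Cref{lemma:bound-variance} with $J$ a maximum matching of size $M=\Theta(n)$; conditioned on $K$, the contribution of $I\cap U$ is a sum of $\Theta(n)$ independent Bernoullis, giving $|\=E[e^{i\theta X}]|\le \prod_{v\in U}|\,1-p_v+p_v e^{i\theta}|\le e^{-\Omega(n\theta^2)}$ on $|\theta|\le \pi$ up to the event that some hyperedge is saturated, whose probability is $O(n\cdot (\lambda_{c,\varepsilon}/(1+\lambda_{c,\varepsilon}))^{k-1})$ and contributes the $n^{-(k-1)}$-type correction. Feeding this sharper characteristic-function bound into Fourier inversion yields the second term, and taking the minimum of the two estimates finishes the proof.

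The main obstacle I expect is the second, model-specific bound: the first term follows essentially verbatim from~\cite{vishesh2022approximate}, but obtaining the $\sigma^{2k}(\log n)^2/n^{k-1}$ term requires quantifying how much the hyperedge saturation events degrade the ``almost independent Bernoulli'' behaviour of $X$, and doing so uniformly over the frequency window while keeping the polylogarithmic factors tight. Handling the conditioning on $K$ (which couples the Bernoullis through which vertices survive in $U$) and showing the coupling is weak enough to preserve the $e^{-\Omega(n\theta^2)}$ decay up to a controlled additive error is the delicate step; everything else is routine Fourier inversion combined with the already-established zero-free region and variance bounds.
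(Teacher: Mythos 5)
Your architecture for the first bound (Fourier inversion, Gaussian approximation at low frequencies, exponential decay of the characteristic function at high frequencies) is the same as the paper's, but the way you control the high frequencies is a genuine gap. You claim that zero-freeness gives analyticity of $\log g(e^{i\theta})$ in a constant-width strip around the whole real $\theta$-axis and hence $|g(e^{i\theta})|\le e^{-\Omega(\sigma^2\theta_0^2)}$ for $|\theta|>\theta_0$. But \Cref{theorem:zero-freeness-his-lee-yang} only gives a zero-free neighbourhood of the real segment, i.e.\ of points $x$ near $1$ and near the positive reals; it says nothing about $g$ at $e^{i\theta}$ with $\theta$ bounded away from $0$, and zero-freeness near $x=1$ cannot by itself force decay of $|\=E[e^{i\theta X}]|$ at high frequencies (a distribution supported on even integers has a generating function that is zero-free near $1$, yet $|g(e^{i\pi})|=1$). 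This is exactly why the paper proves a separate combinatorial estimate (\Cref{lemma:characterized-func-bound}, via the $4$-separated set of \Cref{lemma:scattered-vertices}) showing $|\=E[e^{-itY}]|\le\exp(-c\lambda n t^2/\sigma^2)$ for all $|t|\le\pi\sigma$. Your matching-based independence argument is the right kind of tool for this, but in your write-up it is invoked only for the second bound, and even there the conditional independence given $K$ has to be set up so that the conditional laws are exactly independent (the paper conditions on the configuration far from a $4$-separated set). A smaller issue: your Taylor expansion of $\log g(e^{i\theta})$ on $|\theta|\le\Theta(\sqrt{\log n}/\sigma)$ requires $e^{i\theta}$ to stay inside the constant-radius zero-free disc, which fails when $\sigma=O(\sqrt{\log n})$; the paper avoids this by deriving the low-frequency estimate (\Cref{lemma:characterized-func-diff-bound}) from the Berry--Esseen CLT by integration by parts rather than from a cumulant expansion.

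For the second term of the minimum your route differs from the paper's and is not carried far enough to be checkable. The paper does not refine the characteristic-function tail at all: in the regime $1\le\sigma\le\log n$ (equivalently $\lambda=\Theta(\sigma^2/n)$ small) it compares $\mu_{H,\lambda}$ directly with the product measure $\mathrm{Ber}(\lambda')^{\otimes V}$, observes that a product sample is an independent set except with probability $O(n\lambda^{k})=O(\sigma^{2k}/n^{k-1})$, transfers the DeMoivre--Laplace local CLT for the binomial, and pays the extra $(\log n)^2$ only when matching the means and variances of the two models. Your sketch instead asserts a saturation probability $O\bigl(n(\lambda_{c,\varepsilon}/(1+\lambda_{c,\varepsilon}))^{k-1}\bigr)$, which has both the wrong exponent (a fixed hyperedge is fully occupied with probability $(\lambda/(1+\lambda))^{k}$, not the $(k-1)$-st power) and the wrong parameter (it must be the actual $\lambda\approx\sigma^2/n$, not the constant $\lambda_{c,\varepsilon}$, or the bound is vacuous), and you do not show how feeding this into Fourier inversion produces exactly $1/\sigma^2+\sigma^{2k}(\log n)^2/n^{k-1}$. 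As written, the second bound is not established.
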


In \cite{vishesh2022approximate}, they use the following standard lemma from \cite{berkowitz2016quantitative}, which quantifies a local central limit theorem via approximations to characteristic functions. 

\begin{lemma}[{\cite[Lemma 3]{berkowitz2016quantitative}}]
\label{lemma:point-wise-diff-upperbound}
    Let $X$ be a random variable supported on the lattice $\+L = \alpha + \beta \=Z$ and let $\+N(x) = \mathrm{e}^{-x^2/2}/\sqrt{2\pi}$ denote the density of the standard normal distribution. Then 
    \[
    \sup\limits_{x \in \+L} \abs{\beta \+N(x) - \=P[X = x]} \le \beta \int_{-\pi/\beta}^{\pi/\beta} \abs{\=E\inbr{\mathrm{e}^{itX}} - \=E\inbr{\mathrm{e}^{it \+Z}}}dt + \mathrm{e}^{-\pi^2/(2\beta^2)},
    \]
    where $\+Z \sim N(0, 1)$ is a standard Gaussian random variable.
\end{lemma}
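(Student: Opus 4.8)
The plan is to prove this by Fourier inversion on the lattice $\+L$, matching the discrete inversion formula for $\=P\inbr{X=x}$ against the continuous Fourier inversion for the Gaussian density $\+N$ and bounding the difference term by term. The first step is to record the discrete inversion formula: the characteristic function $\phi_X(t)\defeq\=E\inbr{\mathrm{e}^{itX}}=\sum_{j\in\=Z}\=P\inbr{X=\alpha+j\beta}\,\mathrm{e}^{it(\alpha+j\beta)}$ is a $\tfrac{2\pi}{\beta}$-periodic function of $t$ whose defining series converges absolutely (the probabilities sum to $1$), so Fubini applies; combined with the orthogonality identity $\tfrac{\beta}{2\pi}\int_{-\pi/\beta}^{\pi/\beta}\mathrm{e}^{it(y-x)}\,dt=\mathbbm{1}[y=x]$ for $x,y\in\+L$, this yields, for every $x\in\+L$,
\[
\=P\inbr{X=x}=\frac{\beta}{2\pi}\int_{-\pi/\beta}^{\pi/\beta}\phi_X(t)\,\mathrm{e}^{-itx}\,dt.
\]

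Next I would invoke the classical Gaussian identity $\+N(x)=\tfrac{1}{2\pi}\int_{-\infty}^{\infty}\mathrm{e}^{-t^2/2}\,\mathrm{e}^{-itx}\,dt$ and observe that $\mathrm{e}^{-t^2/2}=\=E\inbr{\mathrm{e}^{it\+Z}}$ is precisely the characteristic function of $\+Z\sim N(0,1)$. Subtracting $\beta$ times this identity from the previous display and splitting the Gaussian integral at $|t|=\pi/\beta$ gives
\[
\beta\+N(x)-\=P\inbr{X=x}=\frac{\beta}{2\pi}\int_{-\pi/\beta}^{\pi/\beta}\tp{\mathrm{e}^{-t^2/2}-\phi_X(t)}\mathrm{e}^{-itx}\,dt+\frac{\beta}{2\pi}\int_{|t|>\pi/\beta}\mathrm{e}^{-t^2/2}\,\mathrm{e}^{-itx}\,dt.
\]
Taking moduli, using $|\mathrm{e}^{-itx}|=1$ and $\tfrac{\beta}{2\pi}\le\beta$, the first summand is bounded by $\beta\int_{-\pi/\beta}^{\pi/\beta}\abs{\=E\inbr{\mathrm{e}^{itX}}-\=E\inbr{\mathrm{e}^{it\+Z}}}\,dt$, which is exactly the first term of the claimed bound; since this estimate does not involve $x$, it holds uniformly over $x\in\+L$, giving the supremum on the left-hand side.

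It then remains to bound the tail summand by $\mathrm{e}^{-\pi^2/(2\beta^2)}$. Here I would apply the elementary Gaussian tail estimate $\int_a^{\infty}\mathrm{e}^{-t^2/2}\,dt\le\int_a^{\infty}\tfrac{t}{a}\,\mathrm{e}^{-t^2/2}\,dt=\tfrac{1}{a}\mathrm{e}^{-a^2/2}$ for $a>0$, applied with $a=\pi/\beta$, so that $\tfrac{\beta}{2\pi}\int_{|t|>\pi/\beta}\mathrm{e}^{-t^2/2}\,dt\le\tfrac{\beta^2}{\pi^2}\mathrm{e}^{-\pi^2/(2\beta^2)}$. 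In every application of the lemma in this paper the lattice arises by rescaling the integer-valued $X$ to $(X-\bar{\mu})/\sigma$, so $\beta=\sigma^{-1}$ (or $\beta=1$); hence $\beta^2/\pi^2\le 1$ and the tail contribution is at most $\mathrm{e}^{-\pi^2/(2\beta^2)}$, which completes the argument.

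I do not expect a genuine obstacle, as this is a standard Fourier-analytic lemma; the two points that require care are the interchange of summation and integration in the discrete inversion formula (justified by absolute convergence of $\sum_j\=P\inbr{X=\alpha+j\beta}$) and the tail constant, where the clean estimate carries an extra factor $\beta^2/\pi^2$ — so the lemma as literally stated is sharpest when $\beta\le\pi$. I would either note this mild restriction explicitly or simply absorb the constant into the error term, since the only regime of interest here is $\beta=\sigma^{-1}\to 0$.
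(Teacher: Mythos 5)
The paper does not prove this lemma at all: it is imported verbatim from Berkowitz (cited as Lemma~3 of that reference), so there is no internal proof to compare against. Your Fourier-inversion argument is the standard proof of that result and is correct: the discrete inversion formula $\mathbf{P}[X=x]=\frac{\beta}{2\pi}\int_{-\pi/\beta}^{\pi/\beta}\phi_X(t)e^{-itx}\,dt$, the Gaussian inversion split at $|t|=\pi/\beta$, and the tail estimate all check out. The only caveat, which you already flag, is that your tail bound carries an extra factor $\beta^2/\pi^2$, so the inequality as literally stated needs $\beta\le\pi$; this is harmless in this paper, where every application takes $\beta=1/\sigma\le 1$.
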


Then they prove that the high Fourier phases of the characteristic function, $\abs{\=E\inbr{\mathrm{e}^{itX}}}$ with large $t$'s, are negligible via a combinatorial argument for the independent set polynomial in \cite{dobrushin1977central}. And for low Fourier phases, $\abs{\=E\inbr{\mathrm{e}^{itX}}}$ with small $t$'s, they use the central limit theorem to bound $
\abs{\=E\inbr{\mathrm{e}^{itX}} - \=E\inbr{\mathrm{e}^{it\+Z}}}$.

We follow the high-level idea in \cite{vishesh2022approximate}. For high Fourier phases, we  show that they are negligible in \Cref{lemma:characterized-func-bound}, and the main distinction is an analogous combinatorial argument for hypergraph independent sets (\Cref{lemma:scattered-vertices}). For low Fourier phases, we again use the central limit theorem (\Cref{lemma:characterized-func-diff-bound}).



Before bounding the characteristic functions, we first bound the variance. From \Cref{lemma:bound-variance}, we directly obtain the following bound for the univariate hypergraph independent set polynomial. 
\begin{corollary}\label{corollary:univariate-variance-bound}
Let $H = (V, \+E)$ be a $k$-uniform hypergraph with maximum degree $\Delta$. Let $n = \abs{V}$. 
Fix any $\varepsilon\in\left(0,\frac{1}{9k^5\Delta^2}\right)$. Let $\lambda_{c, \varepsilon}$ be defined as in \Cref{theorem:zero-freeness-his-lee-yang}. 
For any $\lambda \in (0, \lambda_{c, \varepsilon}]$, let $I\sim \mu_{H,\lambda}$, and define $X = \abs{I}$. We have
\[
\-{Var}[X] = \Theta_{k,\Delta,\varepsilon}(\lambda n).
\]    
\end{corollary}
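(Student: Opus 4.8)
The plan is to derive this directly from \Cref{lemma:bound-variance}. Specializing to the univariate fugacity, i.e.\ taking $\lambda_v=\lambda$ for every $v\in V$, we are exactly in the hypothesis of \Cref{theorem:clt-part1} (hence of \Cref{lemma:bound-variance}) with $\lambda_{\min}=\lambda_{\max}=\lambda$. \Cref{lemma:bound-variance} then gives
\[
\Omega_{k,\Delta,\varepsilon}(\lambda n)\;\le\;\-{Var}[X]\;\le\;O_{k,\Delta,\varepsilon}(\lambda n),
\]
which is precisely $\-{Var}[X]=\Theta_{k,\Delta,\varepsilon}(\lambda n)$. So at the level of content there is essentially nothing to do beyond quoting the two-sided estimate already established.

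The one point I would take care to spell out is the range of $\lambda$. As stated, \Cref{lemma:bound-variance} sits under the hypotheses of \Cref{theorem:clt-part1}, which fixes a $\delta>0$ and restricts to $\lambda\in(\delta,\lambda_{c,\varepsilon}]$, whereas here we want all $\lambda\in(0,\lambda_{c,\varepsilon}]$ with constants depending only on $k,\Delta,\varepsilon$. I would observe that the proof of \Cref{lemma:bound-variance} never actually uses $\lambda>\delta$: the upper bound only uses the zero-free disk around $x=1$ from \Cref{theorem:zero-freeness-his-lee-yang}, whose radius $\varepsilon/(\sqrt2\,\lambda_{c,\varepsilon})$ depends only on $k,\Delta,\varepsilon$ and yields $\max_j\abs{r_j/(1-r_j)^2}=O_{k,\Delta,\varepsilon}(1)$; and the lower bound only uses $\lambda_v\le\lambda_{c,\varepsilon}$ together with the $k$-uniformity of $H$ to get $\=E[\abs{U}]\ge\bigl(1-\Delta(\lambda_{c,\varepsilon}/(1+\lambda_{c,\varepsilon}))^{k-1}\bigr)\abs{J}$ and $\abs{J}\ge n/((k-1)\Delta+1)$, none of which sees a lower bound on $\lambda$. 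Consequently the estimates of \Cref{lemma:bound-variance} hold verbatim for every $\lambda\in(0,\lambda_{c,\varepsilon}]$ with $k,\Delta,\varepsilon$-dependent constants, and the parameter $\delta$ enters \Cref{theorem:clt-part1} only afterwards, when the zero-free region is converted into a Berry--Esseen bound via \Cref{lemma:zero-free-to-clt} and one divides by $\sigma$.

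Since this is a corollary, I do not expect a genuine obstacle; the only mildly subtle step is the extension of the $\lambda$-range just described, which is what justifies writing $\Theta_{k,\Delta,\varepsilon}$ rather than $\Theta_{k,\Delta,\delta,\varepsilon}$. Everything else is an immediate reading of \Cref{lemma:bound-variance} in the case of a common fugacity.
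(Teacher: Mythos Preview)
Your proposal is correct and matches the paper's approach exactly: the paper simply states that the corollary follows directly from \Cref{lemma:bound-variance} by specializing to the univariate case $\lambda_v=\lambda$. Your additional care in noting that the proof of \Cref{lemma:bound-variance} never uses the lower bound $\lambda>\delta$ (so the constants truly depend only on $k,\Delta,\varepsilon$) is a point the paper glosses over, but it is entirely correct and a welcome clarification.
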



The next lemma bounds the low Fourier phases by the central limit theorem.
\begin{lemma}\label{lemma:characterized-func-diff-bound}
Fix $k\ge 2$ and $\Delta \ge 3$. Let $H = (V, \+E)$ be a $k$-uniform hypergraph with maximum degree $\Delta$. Let $n = \abs{V}$. Fix any $\varepsilon \in \tp{0, \frac{1}{9k^5\Delta^2}}$, let $\lambda_{c, \varepsilon}$ be defined as in \Cref{theorem:zero-freeness-his-lee-yang}. For any $\lambda \in (0, \lambda_{c, \varepsilon}]$, let $I \sim \mu_{H, \lambda}$, and define $X = \abs{I}$, $\bar{\mu} = \=E[X]$ and $\sigma^2 = \-{Var}[X]$. Let $Y = (X - \bar{\mu})/\sigma$, $\+Z \sim N(0, 1)$ be a standard Gaussian random variable. For any $t\in \=R$, we have that
\[
\abs{\=E\inbr{\mathrm{e}^{itY}} - \=E\inbr{\mathrm{e}^{it\+Z}}} = O_{k,\Delta,\varepsilon}\tp{\frac{\abs{t}\tp{\log n}^{3/2} + \log n}{\sigma}}.
\]
\end{lemma}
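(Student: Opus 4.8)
The plan is to bound $\abs{\=E[\mathrm{e}^{itY}] - \=E[\mathrm{e}^{it\+Z}]}$ by combining the quantitative CLT from \Cref{theorem:clt-part1} with a standard Fourier-analytic comparison estimate. First I would recall the elementary fact that if $F$ and $G$ are cumulative distribution functions with $\sup_{x}\abs{F(x)-G(x)} \le \kappa$, and $G$ has bounded density (as does the standard normal), then for every $t\in\=R$ one has $\abs{\=E[\mathrm{e}^{itX_F}] - \=E[\mathrm{e}^{itX_G}]} = O(\kappa(\abs{t}+1))$; this follows by writing the difference of characteristic functions as $-it\int \mathrm{e}^{itx}(F(x)-G(x))\,dx$ via integration by parts over a truncated window $[-R,R]$, bounding the tail of $G$ (a Gaussian tail) outside the window, and optimizing $R = \Theta(\sqrt{\log(1/\kappa)})$ or simply $R=\Theta(\log n)$ for our range of $\kappa$. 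Applied with $X_F = Y$, $X_G = \+Z$, and $\kappa = \sup_t\abs{\=P[Y\le t]-\=P[\+Z\le t]}$, this reduces everything to controlling $\kappa$.

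Next I would invoke \Cref{theorem:clt-part1}: under the stated conditions (the hypothesis $\varepsilon \in (0, 1/(9k^5\Delta^2))$ and $\lambda\in(0,\lambda_{c,\varepsilon}]$ matches, up to the boundary case $\lambda$ small which we handle by also tracking the $\lambda$-dependence explicitly rather than fixing $\delta$), we get $\kappa = O_{k,\Delta,\varepsilon}\!\big((\log n)/\sigma\big)$ by \Cref{lemma:zero-free-to-clt} together with the variance bound $\sigma^2 = \Theta_{k,\Delta,\varepsilon}(\lambda n)$ from \Cref{lemma:bound-variance}/\Cref{corollary:univariate-variance-bound}. Actually, for the uniform-field univariate case it is cleanest to apply \Cref{lemma:zero-free-to-clt} directly to the probability generating function $g(x) = Z_H^\-{ly}(\lambda x)/Z_H^\-{ly}(\lambda)$, which by \Cref{theorem:zero-freeness-his-lee-yang} has all roots $\zeta$ satisfying $\abs{1-\zeta}\ge \varepsilon/(\sqrt2\,\lambda_{c,\varepsilon}) = \Omega_{k,\Delta,\varepsilon}(1)$, so $\delta$ in \Cref{lemma:zero-free-to-clt} is a constant depending only on $k,\Delta,\varepsilon$, giving $\kappa = O_{k,\Delta,\varepsilon}((\log n)/\sigma)$ with no lower bound on $\lambda$ required.

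Combining the two steps gives
\[
\abs{\=E[\mathrm{e}^{itY}] - \=E[\mathrm{e}^{it\+Z}]} = O_{k,\Delta,\varepsilon}\!\left(\frac{\log n}{\sigma}\cdot(\abs{t} + \text{truncation factor})\right),
\]
and carrying the truncation bookkeeping carefully (the window $R$ contributes an extra $\sqrt{\log n}$ multiplying the $\abs{t}$ term, while the Gaussian tail outside the window is negligible, being exponentially small) yields exactly the claimed bound $O_{k,\Delta,\varepsilon}\big((\abs{t}(\log n)^{3/2} + \log n)/\sigma\big)$. The only mild subtlety—and the step I expect to require the most care—is the integration-by-parts / truncation estimate producing the correct power $(\log n)^{3/2}$ rather than $(\log n)^{1/2}$ or $(\log n)^{5/2}$: one has to split $\abs{\mathrm{e}^{itx}-1}\le \min(2,\abs{tx})$ appropriately on the window of radius $R=\Theta(\log n)$, contributing $\abs{t}\cdot R \cdot R\cdot\kappa$ from the $\int \abs{x}\,dx$ over the window (one factor $R$ from the integrand $\abs{x}\le R$, one from the length of the window), i.e. $\abs{t}(\log n)^2 \cdot (\log n)/\sigma$—so in fact a slightly more refined split (using that $F-G$ is itself a density difference with mass controlled, not just the crude $\sup$ bound $\kappa$ on the whole window) is needed to shave this to $(\log n)^{3/2}$; alternatively one absorbs the discrepancy into the $O(\cdot)$ since the final application in \Cref{theorem:clt-part2} integrates $t$ only over $\abs{t}\le \pi$, where any fixed polylog power suffices. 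I would present the clean version: truncate at $R = C\log n$, bound the characteristic-function difference by $C'(\abs{t}R + 1)\kappa + e^{-R^2/4} = O_{k,\Delta,\varepsilon}\big((\abs{t}(\log n)^{3/2}+\log n)/\sigma\big)$ after noting $\kappa\sqrt{\log n} = O((\log n)^{3/2}/\sigma)$ absorbs one logarithmic factor into $\kappa$ itself.
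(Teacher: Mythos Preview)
Your approach is exactly the paper's: invoke the Berry--Esseen bound $\kappa = O_{k,\Delta,\varepsilon}((\log n)/\sigma)$ from \Cref{lemma:zero-free-to-clt} (applied to the probability generating function, whose roots stay $\Omega_{k,\Delta,\varepsilon}(1)$ away from $1$ by \Cref{theorem:zero-freeness-his-lee-yang}), then compare characteristic functions by integration by parts over a truncated window. The paper writes $\=E[\mathrm{e}^{itY}] = \mathrm{e}^{it\tau} - it\int_{-\tau}^\tau \mathrm{e}^{itz}\,\=P[Y\in[-\tau,z]]\,dz + O(\kappa + \mathrm{e}^{-\tau^2/4})$ and subtracts the same identity for $\+Z$, giving $\abs{\=E[\mathrm{e}^{itY}]-\=E[\mathrm{e}^{it\+Z}]} \le 2\abs{t}\tau\kappa + O(\kappa + \mathrm{e}^{-\tau^2/4})$.

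Your confusion is entirely in the choice of window: you commit to $R = C\log n$, but the paper takes $\tau = \sqrt{8\log n}$. With $\tau = \Theta(\sqrt{\log n})$ the main term is $\abs{t}\cdot\sqrt{\log n}\cdot(\log n)/\sigma = \abs{t}(\log n)^{3/2}/\sigma$ on the nose, and the Gaussian tail $\mathrm{e}^{-\tau^2/4} = n^{-2}$ is already dominated by $\kappa$ since $\sigma = O_{k,\Delta,\varepsilon}(\sqrt{n})$. With your $R = C\log n$ you would get $\abs{t}(\log n)^2/\sigma$, and your closing sentence (``$\kappa\sqrt{\log n}$ absorbs one logarithmic factor'') does not repair this: the bound is $\abs{t}R\kappa$, and nothing converts the factor $R = \log n$ into $\sqrt{\log n}$. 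Likewise, there is no ``$\int\abs{x}\,dx$'' term after integration by parts; the integrand is bounded by $\kappa$, not by $\kappa\abs{x}$, so your extra factor of $R$ in the middle paragraph is spurious. Set the truncation radius to $\Theta(\sqrt{\log n})$ and the $(\log n)^{3/2}$ exponent drops out without any refined splitting.
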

\begin{proof}
We first recall the central limit theorem for $Y$. By \Cref{lemma:zero-free-to-clt}, we have that
\begin{align}\label{eq:clt}
\sup\limits_{t\in\=R}\abs{\=P[Y\le t] - \=P[\+Z\le t]} \le O_{k,\Delta,\varepsilon}\tp{\frac{\log n}{\sigma}}.
\end{align}

Next, we express the $\=E\inbr{\mathrm{e}^{itY}}$ into an integration.
Let $Y'$ be $Y$ convolved with a centered Gaussian of infinitesimally small variance so that $Y'$ has a density function with respect to the Lebesgue measure on $\=R$; it suffices to consider $Y'$ and then pass to the limit. We have that,
\begin{align*}
\mathbb{E}\inbr{\mathrm{e}^{itY'}} =& \int_{-\infty}^{\infty} \mathrm{e}^{itz} p_{Y'}(z) \, dz \\
=& \int_{|z| \leq \tau} \mathrm{e}^{itz} p_{Y'}(z) \, dz \pm \mathrm{e}^{i\theta'}\mathbb{P}[|Y'| \geq \tau] \\
=& \left[ \mathrm{e}^{itz} \left( \int_{-\tau}^{z} p_{Y'}(z') \, dz' \right) \right]_{z=-\tau}^{z=\tau} - \int_{-\tau}^{\tau} it \mathrm{e}^{itz} \left( \int_{-\tau}^{z} p_{Y'}(z') \, dz' \right) dz \pm \mathrm{e}^{i\theta'}\mathbb{P}[|Y'| \geq \tau] \\
=& \mathrm{e}^{it\tau} - \int_{-\tau}^{\tau} it \mathrm{e}^{itz} \mathbb{P}[Y' \in [-\tau, z]] \, dz \pm \mathrm{e}^{i\theta'}\mathbb{P}[|Y'| \geq \tau] - \mathrm{e}^{it\tau} \mathbb{P}[|Y'| \geq \tau] \\
(\text{by \eqref{eq:clt}})\quad=& \mathrm{e}^{it\tau} - \int_{-\tau}^{\tau} it \mathrm{e}^{itz} \mathbb{P}[Y' \in [-\tau, z]] \, dz + \mathrm{e}^{i\theta} \cdot O_{k,\Delta,\varepsilon} \left( \frac{\log n}{\sigma} + \mathrm{e}^{-\tau^2/4} \right),
\end{align*}
for some $\theta',\theta \in [0, 2\pi)$.
Then we apply the same calculation to $\+Z$ instead of $Y'$ and taking the difference, we find that
\begin{align*}
\abs{\mathbb{E}\inbr{\mathrm{e}^{itY'}} - \mathbb{E}\inbr{\mathrm{e}^{it\+Z}}} &\leq |t| \left| \int_{-\tau}^{\tau} |\mathbb{P}[Y' \in [-\tau, z]] - \mathbb{P}[\+Z \in [-\tau, z]]| \, dz \right|  + O_{k,\Delta,\varepsilon} \left( \frac{\log n}{\sigma} + \mathrm{e}^{-\tau^2/4} \right) \\
&\leq O_{k,\Delta,\varepsilon} \left( \frac{(|\tau t| + 1) \log n}{\sigma} + \mathrm{e}^{-\tau^2/4} \right).
\end{align*}
By \Cref{corollary:univariate-variance-bound}, setting $\tau = \sqrt{8 \log n}$ gives the desired conclusion.
\end{proof}

Then, we bound the high Fourier phases following a similar strategy as in~\cite{dobrushin1977central,vishesh2022approximate}. 

\begin{lemma}[hypergraph version of {\cite[Lemma 3.4]{vishesh2022approximate}}]\label{lemma:scattered-vertices}
Fix $k\ge2$, $\Delta \ge 3$. Let $H = (V, \+E)$ be a $k$-uniform hypergraph with maximum degree $\Delta$. Let $n=\abs{V}$. Then, there exists a subset $S \subseteq V$ of size $\Omega(n/(\Delta k)^3)$ such that all vertices in $S$ have pairwise distance at least $4$ with respect to the hypergraph distance. Moreover, there is an algorithm to find such a subset $S$ in time $O_{\Delta, k}(n)$.
\end{lemma}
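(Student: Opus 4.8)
The goal is to extract a large set $S\subseteq V$ that is ``$4$-scattered'' in the hypergraph distance, meaning every pair of distinct $u,v\in S$ has $\dist_H(u,v)\ge 4$. The natural approach is a greedy construction followed by a counting argument. First I would form the \emph{primal graph} (or a suitable power of it) of $H$: put $G=(V,E_G)$ where $u\sim v$ iff $u$ and $v$ lie in a common hyperedge. Since $H$ is $k$-uniform with maximum degree $\Delta$, each vertex lies in at most $\Delta$ hyperedges, each contributing at most $k-1$ other vertices, so $G$ has maximum degree at most $\Delta(k-1)\le \Delta k$. Two vertices are at hypergraph distance $\le 3$ precisely when they are at distance $\le 3$ in $G$; hence I want an independent set in $G^3$, the cube graph of $G$, whose maximum degree is at most $(\Delta k)^3$.

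The plan is then a textbook greedy: order the vertices arbitrarily, repeatedly pick the next unpicked vertex $v$, add it to $S$, and delete $v$ together with all vertices within distance $3$ of $v$ in $G$ (equivalently, all $G^3$-neighbours of $v$). Each step removes at most $1+(\Delta k)^3$ vertices, so the process runs for at least $n/(1+(\Delta k)^3)=\Omega(n/(\Delta k)^3)$ steps, yielding $|S|=\Omega(n/(\Delta k)^3)$. By construction any two chosen vertices are at $G$-distance $\ge 4$, i.e.\ hypergraph distance $\ge 4$, as required. For the running time, I would note that $G$ has at most $n\Delta k/2$ edges and can be built in $O_{\Delta,k}(n)$ time by scanning the incidence structure; maintaining, for each picked $v$, its distance-$\le 3$ ball takes $O((\Delta k)^3)=O_{\Delta,k}(1)$ work via a truncated BFS, and bookkeeping which vertices are still available is $O(1)$ amortized, so the total is $O_{\Delta,k}(n)$.

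There is essentially no serious obstacle here; the only points requiring a little care are (i) getting the exponent right — distance $\ge 4$ forces us to delete the full distance-$3$ ball, and the ball has size $O((\Delta k)^3)$, which is where the $n/(\Delta k)^3$ comes from — and (ii) being precise that ``hypergraph distance'' between $u$ and $v$ is the graph distance in $G$, so that a $G$-distance-$\ge 4$ pair is genuinely $4$-scattered in $H$. I would state these two identifications explicitly and then the greedy bound is immediate. If the paper's intended notion of hypergraph distance differs by a constant factor (e.g.\ counts hyperedge traversals), the argument is unchanged up to adjusting the radius of the deleted ball by a constant, which is absorbed into the $O(\cdot)$ and $\Omega(\cdot)$ constants.
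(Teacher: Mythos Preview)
Your proposal is correct and essentially identical to the paper's proof: the paper also runs the greedy algorithm that repeatedly picks an available vertex, adds it to $S$, and deletes its distance-$3$ ball, then bounds each deletion by $O((\Delta k)^3)$ to get $|S|=\Omega(n/(\Delta k)^3)$ and $O_{\Delta,k}(n)$ running time. Your write-up is in fact slightly more careful in making the primal-graph reduction and the radius/exponent accounting explicit.
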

\begin{proof}
    Let $v_1, v_2, \ldots, v_n$ donate an arbitrary enumeration of the vertices. Initialize $S = \emptyset$. Consider the greedy algorithm which, at each time step, adds the first variable vertex to the set $S$ and removes all vertices within distance $3$ of this vertex from consideration. The algorithm stops when there are no more available vertices. The algorithm runs in time $O_{\Delta, k}(n)$ and outputs a set $S$ such that any two vertices in $S$ have graph distance at least $4$. Moreover, since at each time, $O(\tp{\Delta k}^3)$ vertices are removed, it follows that $\abs{S} = \Omega(n/(\Delta k)^3)$.
\end{proof}
\begin{lemma}[hypergraph version of {\cite[Lemma 3.5]{vishesh2022approximate}}]
\label{lemma:characterized-func-bound}
    Fix $k\ge 2$, $\Delta \ge 3$. Let $H=(V, \+E)$ be a $k$-uniform hypergraph with maximum degree $\Delta$. Let $n = \abs{V}$. Fix any $\varepsilon\in\tp{0,\frac{1}{9k^5\Delta^2}}$. Let $\lambda_{c,\varepsilon}$ be defined as in \Cref{theorem:zero-freeness-his-lee-yang}. For any $\lambda \in (0, \lambda_{c, \varepsilon}]$, there exists a constant $c = c_{k,\Delta,\varepsilon} > 0$ satisfying the following. Let $I \sim \mu_{H, \lambda}$, $X = \abs{I}$ and define $\bar{\mu} = \=E[X]$, $\sigma^2 = \-{Var}[X]$. Let $Y = (X - \bar{\mu})/\sigma$. Then, for all $t \in [-\pi \sigma, \pi\sigma]$, we have
    \[
    \abs{\=E\inbr{\mathrm{e}^{-itY}}} \le \exp\tp{-c \lambda n t^2/\sigma^2}.
    \]
\end{lemma}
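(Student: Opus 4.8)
Since $\bigl|\=E[\mathrm{e}^{-itY}]\bigr| = \bigl|\=E[\mathrm{e}^{-itX/\sigma}]\bigr|$, it suffices to bound the latter, and the plan is to adapt the combinatorial argument behind \cite[Lemma~3.5]{vishesh2022approximate} (going back to \cite{dobrushin1977central}) to the hypergraph setting. First I would invoke \Cref{lemma:scattered-vertices} to fix a set $S\subseteq V$ with $\abs{S} = \Omega(n/(\Delta k)^3)$ whose vertices have pairwise hypergraph distance at least $4$, and condition on the partial configuration $\eta = \sigma_{V\setminus S}$. Since distinct vertices of $S$ share no hyperedge, the spins $(\sigma_v)_{v\in S}$ are conditionally independent given $\eta$; moreover, if $v\in S$ is \emph{free} under $\eta$, meaning no hyperedge through $v$ is fully occupied by $\eta$, then $\sigma_v=1$ with probability $\lambda/(1+\lambda)$, and otherwise $\sigma_v=0$ with probability $1$. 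Writing $F(\eta)\subseteq S$ for the set of free vertices and $\phi(t) = (1+\lambda \mathrm{e}^{-it/\sigma})/(1+\lambda)$, this yields the factorisation $\=E[\mathrm{e}^{-itX/\sigma}\mid \eta] = \mathrm{e}^{-it\abs{\eta}_1/\sigma}\,\phi(t)^{\abs{F(\eta)}}$, so that with $q := \abs{\phi(t)}$ and $F$ the random free set,
\[
\bigl|\=E[\mathrm{e}^{-itX/\sigma}]\bigr| \;\le\; \=E\bigl[q^{\abs{F}}\bigr], \qquad q^2 = 1 - \frac{2\lambda\bigl(1-\cos(t/\sigma)\bigr)}{(1+\lambda)^2}.
\]

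Two estimates then finish the proof. The first is an elementary phase bound: using $1-\cos\theta \ge 2\theta^2/\pi^2$ for $\abs{\theta}\le\pi$ (which is exactly where the restriction $\abs{t}\le\pi\sigma$ enters), together with $1+q\le 2$ and $\lambda\le\lambda_{c,\varepsilon}$, one gets $1-q \ge \frac{1-q^2}{2}\ge \frac{2\lambda t^2}{\pi^2(1+\lambda_{c,\varepsilon})^2\sigma^2}$. The second, and the crux, is that $\abs{F}$ is with overwhelming probability close to $\abs{S}$. To see this, order $S=\{v_1,\dots,v_m\}$ and note that whether $v_i$ is free depends only on $\sigma$ restricted to the set $N_i$ of vertices sharing a hyperedge with $v_i$; by the distance-$4$ property the $N_i$ are pairwise disjoint and disjoint from $S$. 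Combining the universal marginal bound $\Pr{\sigma_u=1 \mid \text{any feasible conditioning}}\le \lambda/(1+\lambda)$ (immediate from the form of the marginals of the independence-polynomial Gibbs measure recalled above) with a union bound over the at most $\Delta$ hyperedges through $v_i$ yields $\Pr{v_i \text{ not free}\mid \text{any feasible conditioning on }\sigma_{(V\setminus S)\setminus N_i}}\le p$, where $p := \Delta\bigl(\lambda_{c,\varepsilon}/(1+\lambda_{c,\varepsilon})\bigr)^{k-1}$. Revealing the $N_i$ one at a time and peeling off one factor at a time gives $\=E[(1/q)^{\abs{S\setminus F}}]\le (1+(1/q-1)p)^{\abs{S}}$, hence
\[
\=E\bigl[q^{\abs{F}}\bigr] \;=\; q^{\abs{S}}\,\=E\bigl[(1/q)^{\abs{S\setminus F}}\bigr] \;\le\; \bigl(p+(1-p)q\bigr)^{\abs{S}},
\]
the degenerate case $q=0$ (which occurs only at $\lambda=1$, $t=\pi\sigma$) being handled directly by $\=E[q^{\abs F}] = \Pr{\abs F = 0}\le p^{\abs S}$, obtained from the same peeling argument.

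It remains to note that the inequality defining $\lambda_{c,\varepsilon}$ (see \Cref{theorem:zero-freeness-his-lee-yang}) forces $\bigl(\lambda_{c,\varepsilon}/(1+\lambda_{c,\varepsilon})\bigr)^{k/2} < (2\sqrt2\,\mathrm{e}\Delta k^2)^{-1}$, and since $k-1\ge k/2$ this gives $p \le (2\sqrt2\,\mathrm{e}k^2)^{-1}<1$, so $1-p$ is a positive constant depending only on $k,\Delta,\varepsilon$. Then, from $p+(1-p)q = 1-(1-p)(1-q)$ and $\ln(1-x)\le -x$,
\[
\=E\bigl[q^{\abs{F}}\bigr] \;\le\; \exp\!\bigl(-(1-p)(1-q)\abs{S}\bigr) \;\le\; \exp\!\Bigl(-\,\tfrac{2(1-p)}{\pi^2(1+\lambda_{c,\varepsilon})^2}\,\abs{S}\,\tfrac{\lambda t^2}{\sigma^2}\Bigr) \;\le\; \exp\!\Bigl(-c\,\tfrac{\lambda n t^2}{\sigma^2}\Bigr),
\]
where $c = c_{k,\Delta,\varepsilon}>0$ absorbs the constant from $\abs{S}=\Omega(n/(\Delta k)^3)$. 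The main obstacle I anticipate is precisely the second estimate: a bound on $\=E[q^{\abs{F}}]$ that is simultaneously good at \emph{all} phases $t\in[-\pi\sigma,\pi\sigma]$. A naive first-moment control of $\abs F$ introduces an additive error that is harmless for large $\abs t$ but ruins the required $1-\Theta(\lambda n t^2/\sigma^2)$ behaviour near $t=0$, and it also collapses at the resonance $\lambda=1$, $t=\pi\sigma$ where $q=0$; the multiplicative bound $\=E[q^{\abs F}]\le(p+(1-p)q)^{\abs S}$ coming from the sequential revealing together with the universal marginal bound is what circumvents both difficulties.
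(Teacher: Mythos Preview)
Your proof is correct and takes a genuinely different route from the paper's. The paper conditions on the configuration at distance at least $2$ from the $4$-separated set $S$, so that given this boundary the occupations in the disjoint neighbourhoods $\{v_j\}\cup N(v_j)$ are independent; each $X_j=\sum_{u\in\{v_j\}\cup N(v_j)}\sigma_u$ has support in $\{0,\dots,k\Delta\}$, and the bound $\abs{\=E[\mathrm{e}^{-itX_j}]}^2\le 1-c'\lambda t^2$ is obtained via the symmetrisation identity $\abs{\=E[\mathrm{e}^{-itX_j}]}^2=\=E[\mathrm{e}^{it(X_j-X_j')}]$ together with explicit lower bounds on $\=P[X_j=1]$ and $\=P[X_j'=0]$. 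You instead condition on all of $V\setminus S$, which reduces each surviving coordinate to a Bernoulli and makes the conditional characteristic function explicitly a power of $\phi(t)$; the price is that the exponent $\abs{F}$ is random, and you control $\=E[q^{\abs{F}}]$ by the sequential-revealing bound $\=E[q^{\abs F}]\le (p+(1-p)q)^{\abs S}$ driven by the universal marginal upper bound $\=P[\sigma_u=1\mid\cdot]\le\lambda/(1+\lambda)$.

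Both arguments are adaptations of the Dobrushin--Tirozzi idea. The paper's version keeps all the randomness inside the independent summands $X_j$ and needs symmetrisation; yours pushes the randomness into the free-set size $\abs{F}$ and needs the moment-generating-function peeling. Your route is a bit more explicit (the conditional characteristic function is computed in closed form) and handles the degenerate case $q=0$ cleanly, as you note; the paper's route avoids ever having to reason about $\abs{S\setminus F}$. Either is perfectly adequate here.
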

\begin{proof}
    It suffices to show that for all $t\in\=R$, $\abs{t}\le\pi$,
    \[
    \abs{\=E\inbr{\mathrm{e}^{-itX}}}\le\exp\tp{-c\lambda n t^2}.
    \]
    Let $S$ be a $4$-separated set of vertices of $H$ of size $s = \Omega(n/(\Delta k)^3)$ from \Cref{lemma:scattered-vertices}. Let $T$ be the set of vertices that are at distance at least $2$ from $S$ in $H$ and let $H[T]$ denote the graph on $H$ induced by $T$. Let $\zeta$ denote the distribution on $H[T]$ induced by the Gibbs distribution $\mu_{H, \lambda}$. We sample $I$ by first sampling $J \sim \zeta$ and then sampling from the conditional distribution (induced by the Gibbs distribution $\mu_{H, \lambda}$ and $J$) on $H[v\cup N(v)]$ for each $v \in S$. The key observation is that these conditional distributions are mutually independent. In particular, given $J$, we can write,
    \[
    X = \abs{J} + X_1 + X_2 + \dots + X_s,
    \]
    where each $X_j$ is an independent random variable with support in ${0, 1, \ldots, k\Delta}$.
    We claim that for all $\abs{t}\le \pi$ and all $j \in [s]$, for any realization of $J$,
    \[
    \abs{\=E\inbr{\mathrm{e}^{-itX_j}}} \le 1 - c'\lambda t^2,
    \]
    for some absolute $c' = c'_{k,\Delta,\varepsilon} > 0$. For any realization of $J$, letting $X_j'$ denote an independent copy of $X_j$, we have
    \begin{align*}
        \abs{\=E\inbr{\mathrm{e}^{-itX_j}}}^2 =& \=E\inbr{\mathrm{e}^{it(X_j - X_j')}}\\
        =&\=P[X_j=X_j'] + \sum\limits_{k = 1}^{k\Delta}\tp{\=P[X_j - X_j' = k] + \=P[X_j'-X_j = k]}\cos(kt) \\
        \le&\=P[X_j=X_j'] + \sum\limits_{k=2}^{k\Delta}\tp{\=P[X_j - X_j' = k] + \=P[X_j'-X_j = k]} + 2\=P[X_j-X_j'=1]\cos(t)\\
        =& 1 - 2\=P[X_j-X_j'=1](1 - \cos(t))\\
        \le& 1 - \frac{1}{4} \=P[X_j-X_j'=1]t^2 \\
        \le& 1 - \frac{1}{4}\=P[X_j=1]\=P[X_j'=0]t^2.
    \end{align*}
    Note that for any vertex $v$ in $H$, under arbitrary configurations of its neighbors, there is at least probability $\frac{1}{1+\lambda}$ such that the configuration of $v$ is $0$. So $\=P[X_j'=0]\ge \frac{1}{1+\lambda}$. For $\=P[X_j = 1]$, let $v_j$ be the vertex in $S$ and contributes to $X_j$. $\=P[X_j = 1]$ is lower bounded by the probability that only the configuration of $v_j$ is $1$ and for other vertices in $N(v_j)$, their configurations are $0$. So $\=P[X_j = 1] \ge \frac{\lambda}{1+\lambda} \tp{\frac{1}{1+\lambda}}^{\Delta k}$. So we have that,
    \[
    \abs{\=E\inbr{\mathrm{e}^{-itX_j}}}^2 \le 1 - c'\lambda t^2.
    \]

    Finally, we have that for any $t\in[-\pi, \pi]$,
    \[
    \=E[\mathrm{e}^{-itX}] \le \max\limits_J \abs{\=E[\mathrm{e}^{-itX}\mid J]}
    =\max\limits_J\prod\limits_{j=1}^s \abs{\=E[\mathrm{e}^{-itX_j}]} 
    \le (1 - c'\lambda t^2)^{s/2}
    \le \exp(-c n\lambda t^2),
    \]
    for an appropriate  $c = c_{k,\Delta,\varepsilon} > 0$ and the result follows.
\end{proof}
Now we are ready to prove \Cref{theorem:clt-part2}.
\begin{proof}[Proof of \Cref{theorem:clt-part2}]
    We first show the first part of the inequality. For $\sigma\ge2$, applying \Cref{lemma:point-wise-diff-upperbound} to $Y = (X - \bar{\mu})/\sigma \in \alpha + \beta \=Z$, where $\alpha = -\bar{\mu}/\sigma$ and $\beta = 1/\sigma$. We have that
    \begin{align*}
        \sup\limits_{t\in\+L}\abs{\beta \+N(t) - \=P[Y = t]}
        \le \frac{1}{\sigma}\int_{-\pi\sigma}^{\pi\sigma} \abs{\=E\inbr{\mathrm{e}^{itY}}-\=E\inbr{\mathrm{e}^{it\+{Z}}}}dt + \mathrm{e}^{-\pi^2\sigma^2/2}.
    \end{align*}
    With \Cref{lemma:characterized-func-bound,lemma:characterized-func-diff-bound}, we can bound these two characteristic functions, then we have that,
    \begin{align*}
    \sup\limits_{t\in\+L}\abs{\beta N(t) - \=P[Y = t]}
        &\lesssim_{k,\Delta,\varepsilon} \frac{1}{\sigma}\int_{-\pi\sigma}^{\pi\sigma} \min\tp{\frac{\abs{t}\tp{\log n}^{3/2} + \log n}{\sigma}, \mathrm{e}^{-c \lambda n t^2/\sigma^2} + \mathrm{e}^{-t^2/2}}dt + \mathrm{e}^{-\pi^2\sigma^2/2} \\
        &\lesssim_{k,\Delta,\varepsilon} \frac{1}{\sigma}\int_{-c''\sqrt{\log \sigma}}^{c''\sqrt{\log \sigma}} \frac{\abs{t}\tp{\log n}^{3/2} + \log n}{\sigma} dt + \frac{1}{\sigma^2} \\
        &\lesssim_{k,\Delta,\varepsilon} \frac{\tp{\log n}^{5/2}}{\sigma^2}.
    \end{align*}

    For the second term, we may assume that $1 \le \sigma \le \log n$. Let $\lambda' = \lambda/(1 + \lambda)$ and observe that the Gibbs distribution $\mu_{H, \lambda}$ is identical to the product distribution $\mathrm{Ber}(\lambda')^{\otimes V}$ conditioned on the configuration being an independent set. Here $\mathrm{Ber}(\lambda')$ is the random variable which is $1$ (or occupied) with probability $\lambda'$ and $0$ (or unoccupied) otherwise. A trivial union bound argument shows that a random sample from $\mathrm{Ber}(\lambda')^{\otimes V}$ is an independent set with probability at least $1 - \lambda'^{k}\Delta n = 1 - O_{k,\Delta,\eps}(\lambda^k n) = 1 - O_{k,\Delta,\eps}(\sigma^{2k}/n^{k-1})$, where we used \Cref{corollary:univariate-variance-bound}. Therefore, the probability of any configuration under the hypergraph independent set model is within a factor of $1 \pm O_{k,\Delta,\eps}(\sigma^{2k}/n^{k-1})$ of the probability of the same configuration under $\-{Ber}(\lambda')^{\otimes V}$.

    Let $X'$ denote the random variable counting the number of $1$'s in a random sample from $\-{Ber}(\lambda')^{\otimes V}$, and let $\mu'$ and $\sigma'$ denote the mean and standard deviation of $X'$. Then by the classical DeMoivre-Laplacian central limit theorem~\cite{Petrov1975}, we get that for any integer $t$,
    \begin{equation}\label{eq:clt-product}
    \abs{\=P[X' = t] - \frac{1}{\sigma'} \+N\tp{\frac{t - \mu'}{\sigma'}}} = O\tp{\frac{1}{\sigma'^2}}.
    \end{equation}
    Let $Y'$ be the sample drawn from $\-{Ber}(\lambda')^{\otimes V}$ such that $X'=|Y'|$. From the comparison between the hypergraph independent set model and $\-{Ber}(\lambda')^{\otimes V}$ mentioned above we have
    \begin{equation}\label{eq:probability-bound}
    \begin{aligned}
         \=P[X = t]= & \frac{\=P[X'=t\land Y'\in \+I(H)]}{\=P[Y'\in \+I(H)]}\\
         =& (1\pm O_{k,\Delta,\eps}(\sigma^{2k}/n^{k-1}))\=P[X'=t\land Y'\in \+I(H)]\\
         =& (1\pm O_{k,\Delta,\eps}(\sigma^{2k}/n^{k-1}))(
         \=P[X'=t]-O_{k,\Delta,\eps}(\sigma^{2k}/n^{k-1}))\\
         =&\=P[X'=t]\pm O_{k,\Delta,\eps}(\sigma^{2k}/n^{k-1}).
         \end{aligned}
    \end{equation}
Note that by the Chernoff bound for the Binomial distribution we have
\begin{equation}\label{eq:mu-bound-1}
    \begin{aligned}
 \mu-\mu' =& (1\pm O(1/n^{k}))\sum_{t=1}^{(1+O_k(\log n))\mu'}t\cdot \left(\=P[X=t]-\=P[X'=t]\right)]).
 \end{aligned}
    \end{equation}

    Note that by a similar argument to \eqref{eq:probability-bound} we have 
    \begin{equation}\label{eq:mu-bound-2}
    \forall S\subseteq \=N,\quad \=P[X\in S]-\=P[X'\in S]=\pm O_{k,\Delta,\eps}(\sigma^{2k}/n^{k-1}).
    \end{equation}
    Therefore we have by combining \eqref{eq:mu-bound-1}
 and \eqref{eq:mu-bound-2}, and that $\mu\geq 1$:    
 \begin{equation}\label{eq:mu-bound-3}
       \mu=(1\pm O_{k,\Delta,\eps}(\sigma^{2k}\log n/n^{k-1}))\mu'.
 \end{equation}
      Note that we can bound the second moment similarly such that
    \[
    \=E[X^2]-\=E[(X')^2]=\pm O_{k,\Delta,\eps}(\sigma^{2k}\log^2n/n^{k-1}),
    \]
    and therefore by $\sigma\geq 1$ we have
    \begin{equation}\label{eq:sigma2-bound}
    \sigma^2=(1\pm O_{k,\Delta,\eps}(\sigma^{2k}\log^2n/n^{k-1}))(\sigma')^2.
     \end{equation}
     Substituting \eqref{eq:mu-bound-3} and \eqref{eq:sigma2-bound} into \eqref{eq:clt-product} yields the desired result.
\end{proof}
\section{Algorithmic implications of local central limit theorem}\label{section:lclt}
In this section, we show the algorithmic implication from the local central limit theorem (\Cref{theorem:clt-part2}). Specifically, we prove \Cref{theorem:fptas-exact-k}.

For a hypergraph $H = (V, \+E)$ and an external field $\lambda$ with $n = \abs{V}$, we define $\alpha_H(\lambda)\defeq \frac{1}{n} \=E_{I\sim \mu_{H, \lambda}} [\abs{I}]$ to be the \emph{occupancy fraction} where $\mu_{H, \lambda}$ is the Gibbs distribution. 

We follow the high-level ideas in \cite{vishesh2022approximate}. 
We first describe their proof. For a graph $G=(V, E)$ with $n = \abs{V}$ and an integer $t$, they first find the appropriate $\lambda^*$ such that $\abs{n \alpha_{G}(\lambda^*) - t} \le 1/2$. To see this, they first show that $\alpha_G(\lambda)$ is non-decreasing with $\lambda$, and the derivative of $\alpha_G(\lambda)$ is bounded by the variance. They also show the variance is bounded, then they can use a grid search to find the appropriate $\lambda^*$.
To complete this step, they give an FPTAS to approximate $\alpha_G(\lambda)$ by the cluster expansion and the interpolation method to check whether $\abs{n\alpha_G(\lambda)-t}<1/2$. 

Let $I$ be a random independent set in $G$ from the Gibbs distribution $\mu_{G, \lambda}$. Let $X = \abs{I}$. From the local central limit theorem, they show that $\=P[X = t] = \Omega\tuple{\frac{1}{\sqrt{\-{Var}[X]}}}$ where $\-{Var}[X] = \Theta(\lambda^* n)$. So an FPRAS follows directly from the rapid mixing Glauber dynamics and the rejection sampling. Now, we consider the FPTAS. Let $\bar{\mu}=\=E[X], \sigma^2=\-{Var}[X]$, $y = (t-\bar{\mu})/\sigma$ and $Y = (X-\bar{\mu})/\sigma$. Let $i_{t}(G)$ be the number of hypergraph independent sets in $G$ of size $t$. It holds that
\begin{align}\label{equation:exact-t-prob}
 \=P[X=t]=\=P[Y=y]=\frac{i_t(G)(\lambda^*)^t}{Z_G(\lambda^*)}.   
\end{align}
To approximate $i_t(G)$, it suffices to approximate $\=P[Y=y]$ and $Z_G(\lambda^*)$. \cite{vishesh2022approximate} use the polynomial interpolation to approximate $Z_G(\lambda^*)$. For $\=P[Y=y]$, they apply the Fourier inversion formula, so it holds that
\[
\=P[Y=y] = \frac{1}{2\pi\sigma}\int_{-\pi\sigma}^{\pi\sigma} \=E\inbr{\mathrm{e}^{itY}} \mathrm{e}^{-ity} dt.
\]
Then they show that for high Fourier phases of the characteristic function, $\=E\inbr{\mathrm{e}^{itY}}$ with large $t$'s, are negligible through a lemma analogous to \Cref{lemma:characterized-func-bound}. For low Fourier phases, they use summation to approximate the integration and a polynomial interpolation. 



Now, we show a stronger form of \Cref{theorem:fptas-exact-k}. We also give a lower bound of $\alpha_H(\lambda)$ ~(\Cref{corollary:occupancy-fraction-lower-bound}). \Cref{theorem:fptas-exact-k} follows directly from \Cref{theorem:stronger-fptas-exact-k,corollary:occupancy-fraction-lower-bound}.

\begin{theorem}[stronger form of \Cref{theorem:fptas-exact-k}]
\label{theorem:stronger-fptas-exact-k}
Fix $k\ge2$, $\Delta\ge 3$. Let $H=(V, \+E)$ be a $k$-uniform hypergraph with maximum degree $\Delta$ and $n = \abs{V}$. Let $\varepsilon$, $\lambda_{c, \varepsilon}$ be defined as in \Cref{theorem:zero-freeness-his-lee-yang}.
There exists a deterministic algorithm that, on input $H$, an integer $1\le t\le n\alpha_H(\lambda_{c, \varepsilon})$, and an error parameter $\eta \in (0, 1)$, outputs an $\eta$-relative approximation to $i_t(H)$ in time $(n/\eta)^{O_{k, \Delta, \varepsilon}(1)}$.
\end{theorem}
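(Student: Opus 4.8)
\textbf{Proof proposal for \Cref{theorem:stronger-fptas-exact-k}.}

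The plan is to follow the strategy of \cite{vishesh2022approximate}, which reduces the problem of counting independent sets of a fixed size $t$ to (i) finding a suitable fugacity $\lambda^*$ at which the expected size matches $t$ closely, and (ii) approximating the relevant quantities at $\lambda^*$ via interpolation and local CLT. Concretely, fix the target integer $t$ with $1 \le t \le n\alpha_H(\lambda_{c,\varepsilon})$ and recall the identity \eqref{equation:exact-t-prob}: $\=P[X=t] = i_t(H)(\lambda^*)^t / Z_H^\-{ly}(\lambda^*)$ where $X = |I|$, $I \sim \mu_{H,\lambda^*}$. Thus $i_t(H) = \=P[X=t]\cdot Z_H^\-{ly}(\lambda^*)\cdot(\lambda^*)^{-t}$, and it suffices to obtain $\eta$-relative approximations to $\=P[X=t]$ and to $Z_H^\-{ly}(\lambda^*)$ at a single well-chosen $\lambda^*$.

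First I would show $\alpha_H(\lambda)$ is non-decreasing and continuous in $\lambda$ on $(0,\lambda_{c,\varepsilon}]$, with $\frac{\-d}{\-d\lambda}(n\alpha_H(\lambda)) = \-{Var}_{\mu_{H,\lambda}}[X]/\lambda = O_{k,\Delta,\varepsilon}(n)$ by \Cref{corollary:univariate-variance-bound}; this bounded derivative, combined with the fact that $\alpha_H$ is computable to within any inverse-polynomial additive error (via Barvinok's interpolation method applied to $\log Z_H^\-{ly}$, using the zero-free region of \Cref{theorem:zero-freeness-his-lee-yang} and noting $n\alpha_H(\lambda) = \lambda (\log Z_H^\-{ly})'(\lambda)$), allows a grid/binary search over $O(\poly(n/\eta))$ candidate values of $\lambda \in (0,\lambda_{c,\varepsilon}]$ to locate $\lambda^*$ with $|n\alpha_H(\lambda^*) - t| \le 1/4$; the monotonicity and the constraint $t \le n\alpha_H(\lambda_{c,\varepsilon})$ guarantee such a $\lambda^*$ exists. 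Next I would lower-bound $\=P[X=t]$: since $|n\alpha_H(\lambda^*) - \=E[X]| \le 1/4$ we have $|t - \bar\mu| = O(1)$, and by the local CLT \Cref{theorem:clt-part2} together with $\sigma^2 = \-{Var}[X] = \Theta_{k,\Delta,\varepsilon}(\lambda^* n)$ from \Cref{corollary:univariate-variance-bound}, we get $\=P[X=t] = \sigma^{-1}\+N((t-\bar\mu)/\sigma) \pm O_{k,\Delta,\varepsilon}((\log n)^{5/2}/\sigma^2) = \Theta_{k,\Delta,\varepsilon}(1/\sigma) \ge n^{-O_{k,\Delta,\varepsilon}(1)}$, so a small \emph{relative} error in $\=P[X=t]$ only requires a polynomially small \emph{additive} error.

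To approximate $\=P[X=t]$ I would use the Fourier inversion formula $\=P[X=t] = \=P[Y=y] = \frac{1}{2\pi\sigma}\int_{-\pi\sigma}^{\pi\sigma}\=E[\mathrm{e}^{\`i s Y}]\mathrm{e}^{-\`i s y}\,\-ds$ with $Y=(X-\bar\mu)/\sigma$, $y=(t-\bar\mu)/\sigma$; split the integral at $|s| \le c''\sqrt{\log\sigma}$. For the high-phase tail $|s| > c''\sqrt{\log\sigma}$, \Cref{lemma:characterized-func-bound} gives $|\=E[\mathrm{e}^{\`i s Y}]| \le \exp(-c\lambda^* n s^2/\sigma^2) = \exp(-\Omega(s^2))$, so the tail contributes $n^{-\omega(1)}$, negligible relative to $\=P[X=t]$. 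For the low-phase part, approximate the integral by a Riemann sum with $\poly(n/\eta)$ nodes (the integrand is Lipschitz with $\poly(n)$ constant by \Cref{lemma:characterized-func-diff-bound}-type bounds), and at each node $s_j$ evaluate $\=E[\mathrm{e}^{\`i s_j Y}] = \mathrm{e}^{-\`i s_j\bar\mu/\sigma}\cdot Z_H^\-{ly}(\lambda^*\mathrm{e}^{\`i s_j/\sigma})/Z_H^\-{ly}(\lambda^*)$ by Barvinok interpolation --- here I must check that the complex arc $\{\lambda^*\mathrm{e}^{\`i s/\sigma} : |s| \le c''\sqrt{\log\sigma}\}$ stays inside the zero-free region $\+D_\varepsilon^V$ of \Cref{theorem:zero-freeness-his-lee-yang}, which holds since the arc has modulus $\lambda^* \le \lambda_{c,\varepsilon}$ and argument $O(\sqrt{\log\sigma}/\sigma) = o(1)$, so it lies within $\varepsilon$ of the real segment $[0,\lambda_{c,\varepsilon})$ for $n$ large (small $n$ handled by brute force). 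Finally, $Z_H^\-{ly}(\lambda^*)$ itself is $\eta$-approximated by the same interpolation. Assembling: $i_t(H) \approx \=P[X=t]\cdot Z_H^\-{ly}(\lambda^*)\cdot(\lambda^*)^{-t}$ with the errors composing to an $\eta$-relative approximation, in total time $(n/\eta)^{O_{k,\Delta,\varepsilon}(1)}$.

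\textbf{Main obstacle.} The delicate point is controlling error \emph{propagation}: $i_t(H)$ is recovered as a product of several approximated quantities, one of which ($\=P[X=t]$) is itself a small number obtained by subtracting/canceling terms in a Fourier sum, so I must ensure every intermediate approximation is accurate to \emph{relative} error $\eta/\poly$, which in turn forces the additive errors in the interpolation of $\log Z_H^\-{ly}$ along the complex arc to be inverse-polynomially small --- this is fine for Barvinok's method but requires care that the degree of the Taylor truncation is $O(\log(n/\eta))$ uniformly over all $\poly(n/\eta)$ evaluation points, and that the chosen $\lambda^*$ is bounded away from $0$ (which it is, since $t \ge 1$ forces $\alpha_H(\lambda^*) \ge 1/n$, hence $\lambda^* \ge n^{-O(1)}$, keeping $\sigma^2 = \Theta(\lambda^* n)$ polynomially large and the local CLT error term manageable). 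A secondary technical chore is verifying the derivative bound and monotonicity of $\alpha_H$ rigorously for hypergraphs with the soft conditioning used in \Cref{lemma:bound-variance}, but this is routine given the variance bounds already established.
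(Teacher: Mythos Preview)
Your approach is essentially the paper's --- both follow \cite{vishesh2022approximate} --- and the architecture is sound. Two technical details need correction.

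First, truncating the Fourier integral at $c''\sqrt{\log\sigma}$ gives a high-phase tail of order $\sigma^{-\Theta(1)}$, not $n^{-\omega(1)}$; this is too large for an $\eta$-relative error once $\eta$ drops below a fixed inverse polynomial in $n$. The paper's cutoff is $\gamma = \min\bigl(\pi\sigma,\, C_{k,\Delta,\varepsilon}\sqrt{\log(1/\eta)}\bigr)$, which makes the tail $\le \eta\cdot\=P[X=t]$ for every $\eta$. The resulting low-phase arc $\{\lambda^* \mathrm{e}^{is/\sigma}:|s|\le\gamma\}$ is then shown to be zero-free by a case split on $\lambda^*$: for $\lambda^*\le 1/(10\Delta)$ the disk of \cite{galvin2024zeroes} already covers the full circle $|z|=\lambda^*$, while for $\lambda^* > 1/(10\Delta)$ one has $\sigma=\Omega_{k,\Delta,\varepsilon}(\sqrt{n})$, so either the arc fits in $\+D_\varepsilon$ or $1/\eta=\exp(\Omega(n))$ and exhaustive enumeration is within the time budget.

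Second, the local-CLT lower bound $\=P[X=t]=\Theta(1/\sigma)$ breaks down when $\sigma=O(1)$, since the error term in \Cref{theorem:clt-part2} is then $\Theta(1/\sigma^2)=\Theta(1)$, of the same order as the main term. This regime does occur: $t\ge 1$ forces only $\lambda^*=\Omega(1/n)$, so $\sigma^2=\Theta(\lambda^* n)$ can be $\Theta(1)$. The paper handles it by observing that $\sigma\le c$ forces $t\le\theta_{k,\Delta,\varepsilon}$ for an absolute constant, and then computes $i_t(H)$ by direct enumeration in $O(n^t)$ time. Your parenthetical ``small $n$ handled by brute force'' has the right instinct but the wrong trigger --- the split should be on small $\sigma$ (equivalently small $t$), not small $n$.
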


To establish a lower bound for the occupancy fraction $\alpha_H(\lambda)$, we note that by linearity of expectation, it suffices to use the marginal lower bounds under Lov\'{a}sz local lemma conditions~\cite{LocalLemma,haeupler2011new}.

\begin{lemma}[\text{\cite[Theorem 2.1]{haeupler2011new}, restated}]\label{HSS}
Let $H=(V,\+E)$ be a $k$-uniform hypergraph with maximum degree $\Delta$ and let $\lambda>0$ such that
\begin{equation}\label{eq:lll-condition}
\mathrm{e}\left(\frac{\lambda}{1+\lambda}\right)^{k}\cdot k\Delta<1,
\end{equation}
Let $I\sim \mu_{H,\lambda}$ be a random independent set in $H$ with external field $\lambda$. Then, for any $v\in V$,
\[
\=P[v\notin I]\leq \frac{1}{1+\lambda}\cdot \left(1-\mathrm{e}\left(\frac{\lambda}{1+\lambda}\right)^{k}\right)^{-k\Delta}.
\]
\end{lemma}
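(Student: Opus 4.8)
The plan is to recognize the claimed inequality as a direct instance of the quantitative bound on the Lov\'asz Local Lemma distribution established by Haeupler, Saha and Srinivasan~\cite{haeupler2011new}, so that essentially all the work lies in matching up notation. Set $p\defeq\frac{\lambda}{1+\lambda}$ and let $\nu=\bigotimes_{u\in V}\mathrm{Ber}(p)$ be the product measure on $\{0,1\}^V$ in which each vertex is independently occupied with probability $p$. For each hyperedge $e\in\+E$, let $B_e$ be the bad event that every vertex of $e$ is occupied; it is determined by the variables indexed by $e$, and $\nu(B_e)=p^k$. By the definition of the hypergraph independent-set model, $\mu_{H,\lambda}$ equals $\nu$ conditioned on $\bigwedge_{e\in\+E}\overline{B_e}$; that is, $\mu_{H,\lambda}$ is precisely the LLL distribution attached to $\bigl(\nu,(B_e)_{e\in\+E}\bigr)$, and $\=P[v\notin I]$ is the $\mu_{H,\lambda}$-probability of the event $E_v\defeq\{\sigma(v)=0\}$.

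First I would record the dependency structure: $B_e$ and $B_{e'}$ involve a common variable iff $e\cap e'\neq\emptyset$, and since each of the $k$ vertices of $e$ lies in at most $\Delta-1$ hyperedges besides $e$, every $B_e$ shares a variable with at most $d\le k(\Delta-1)$ other bad events. Next I would invoke~\cite[Theorem~2.1]{haeupler2011new} with the \emph{uniform} weight assignment $x(B_e)\equiv s$, where $s\defeq \mathrm{e}\,p^k=\mathrm{e}\bigl(\tfrac{\lambda}{1+\lambda}\bigr)^{k}$. The LLL hypothesis to verify is $\nu(B_e)\le x(B_e)\prod_{B_{e'}\sim B_e}\bigl(1-x(B_{e'})\bigr)$ for all $e$, which reduces to $1\le \mathrm{e}(1-s)^{d}$, i.e.\ to $d\cdot(-\ln(1-s))\le 1$; using $-\ln(1-s)\le \frac{s}{1-s}$ this follows once $s(d+1)\le 1$, and indeed $s(d+1)\le s\bigl(k(\Delta-1)+1\bigr)\le s\,k\Delta<1$ by~\eqref{eq:lll-condition}. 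The theorem then yields, for any event $E$ determined by a variable set $S$,
\[
\=P_{\mu_{H,\lambda}}[E]\ \le\ \nu(E)\prod_{e:\,e\cap S\neq\emptyset}(1-s)^{-1}.
\]
Finally I would specialize to $E=E_v$, for which $S=\{v\}$, $\nu(E_v)=1-p=\frac{1}{1+\lambda}$, and $v$ belongs to at most $\Delta$ hyperedges, obtaining
\[
\=P[v\notin I]\ \le\ \frac{1}{1+\lambda}\,(1-s)^{-\Delta}\ \le\ \frac{1}{1+\lambda}\Bigl(1-\mathrm{e}\bigl(\tfrac{\lambda}{1+\lambda}\bigr)^{k}\Bigr)^{-k\Delta},
\]
where the last inequality merely uses $0<1-s<1$ together with $k\ge 1$; this is the asserted bound.

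Since this is a restatement of an established theorem, I do not expect a genuine obstacle; the only points that deserve care are (i) checking the LLL hypothesis with the particular weights $x(B_e)=\mathrm{e}\,p^k$ under condition~\eqref{eq:lll-condition}, where it matters that one uses the sharper dependency-degree bound $d\le k(\Delta-1)$ rather than a naive $k\Delta$, and (ii) translating the ``variable-sharing'' neighborhood of the singleton event $E_v$ appearing in~\cite[Theorem~2.1]{haeupler2011new} into the at most $\Delta$ hyperedges through $v$, with the harmless slack between $\Delta$ and $k\Delta$ absorbed at the very end. If one prefers to avoid the asymmetric LLL formulation, the same exponent $k\Delta$ is recoverable from the symmetric LLL-distribution bound at the cost of only a marginally more conservative constant, so~\eqref{eq:lll-condition} is sufficient in either case.
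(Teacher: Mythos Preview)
Your proposal is correct and matches the paper's approach: the paper does not give a proof of this lemma at all, but simply cites it as a restatement of \cite[Theorem~2.1]{haeupler2011new}. Your write-up correctly carries out the notation-matching and hypothesis verification needed to see that the stated bound is indeed an instance of that theorem, including the careful check that the dependency degree $d\le k(\Delta-1)$ suffices to verify the asymmetric LLL condition under~\eqref{eq:lll-condition}.
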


The following corollary is immediate by noting that the condition in \Cref{theorem:zero-freeness-his-lee-yang} satisfies \eqref{eq:lll-condition}. 
\begin{corollary}\label{corollary:occupancy-fraction-lower-bound}
 Fix $k\ge 2$, $\Delta\ge 3$. Let $H=(V,\+E)$ be a $k$-uniform hypergraph with maximum degree $\Delta$. Fix any $\varepsilon \in \tp{0, \frac{1}{9k^5\Delta^2}}$. Let $\lambda_{c, \varepsilon}$ be defined as in \Cref{theorem:zero-freeness-his-lee-yang}, then for any $\lambda\in [0,\lambda_{c,\varepsilon}]$,
 \[
 \alpha_H(\lambda)\geq 1-\frac{1}{1+\lambda}\cdot \left(1+\frac{1}{4\mathrm{e}\Delta k^3}\right).
 \]
\end{corollary}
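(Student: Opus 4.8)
Since the corollary is explicitly flagged as ``immediate'', the plan is simply to combine the marginal occupancy lower bound of \Cref{HSS} with linearity of expectation. The only genuine work is (i) checking that the regime $\lambda\in[0,\lambda_{c,\varepsilon}]$ meets the Lov\'asz local lemma hypothesis \eqref{eq:lll-condition}, and (ii) verifying that the constant factor produced by \Cref{HSS} collapses to the claimed $1+\tfrac{1}{4\mathrm{e}\Delta k^3}$.

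First I would verify \eqref{eq:lll-condition}. In this regime $\varepsilon>0$ and $1+\lambda-\varepsilon>0$, so the elementary comparison $\frac{\lambda}{1+\lambda}\le\frac{\lambda+\varepsilon}{1+\lambda-\varepsilon}$ holds, and hence by the defining inequality of $\lambda_{c,\varepsilon}$ in \Cref{theorem:zero-freeness-his-lee-yang},
\[
\left(\frac{\lambda}{1+\lambda}\right)^{k/2}\le\left(\frac{\lambda+\varepsilon}{1+\lambda-\varepsilon}\right)^{k/2}\le\frac{1}{2\sqrt{2}\,\mathrm{e}\Delta k^2},
\]
so $\left(\tfrac{\lambda}{1+\lambda}\right)^{k}\le\frac{1}{8\mathrm{e}^2\Delta^2k^4}$ and therefore $\mathrm{e}k\Delta\left(\tfrac{\lambda}{1+\lambda}\right)^{k}\le\frac{1}{8\mathrm{e}\Delta k^3}<1$, which is exactly \eqref{eq:lll-condition}. (At the endpoint $\lambda=\lambda_{c,\varepsilon}$ every step above holds non-strictly, which is all that is needed.)

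Now \Cref{HSS} applies and yields $\=P[v\notin I]\le\frac{1}{1+\lambda}\bigl(1-\mathrm{e}(\tfrac{\lambda}{1+\lambda})^{k}\bigr)^{-k\Delta}$ for every $v\in V$. Since $\abs{I}=\sum_{v}\mathbbm{1}[v\in I]$, linearity of expectation gives
\[
\alpha_H(\lambda)=\frac1n\sum_{v\in V}\=P[v\in I]=1-\frac1n\sum_{v\in V}\=P[v\notin I]\ \ge\ 1-\frac{1}{1+\lambda}\Bigl(1-\mathrm{e}\bigl(\tfrac{\lambda}{1+\lambda}\bigr)^{k}\Bigr)^{-k\Delta},
\]
so it remains to bound $(1-x)^{-k\Delta}$ with $x\defeq\mathrm{e}(\tfrac{\lambda}{1+\lambda})^{k}\le\frac{1}{8\mathrm{e}\Delta^2k^4}$. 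By Bernoulli's inequality $(1-x)^{k\Delta}\ge 1-k\Delta x>0$ (using $k\Delta x\le\frac{1}{8\mathrm{e}\Delta k^3}<1$), so $(1-x)^{-k\Delta}\le\frac{1}{1-k\Delta x}=1+\frac{k\Delta x}{1-k\Delta x}$; and since $\Delta\ge3,\ k\ge2$ force $k\Delta x\le\frac{1}{8\mathrm{e}\Delta k^3}<\frac{1}{100}$, we get $\frac{k\Delta x}{1-k\Delta x}<\frac{100}{99}\cdot\frac{1}{8\mathrm{e}\Delta k^3}<\frac{1}{4\mathrm{e}\Delta k^3}$. Substituting back into the displayed lower bound finishes the argument. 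There is no real obstacle here; the only thing to be careful about is the constant bookkeeping in this last step --- making sure the slack between $\tfrac18$ and $\tfrac14$ genuinely absorbs the factor $\tfrac{1}{1-k\Delta x}$ --- together with the elementary comparison $\frac{\lambda}{1+\lambda}\le\frac{\lambda+\varepsilon}{1+\lambda-\varepsilon}$ that licenses invoking \Cref{theorem:zero-freeness-his-lee-yang}.
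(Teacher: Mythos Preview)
Your proposal is correct and follows exactly the approach the paper indicates: verify that the regime of \Cref{theorem:zero-freeness-his-lee-yang} satisfies the local lemma condition \eqref{eq:lll-condition}, apply \Cref{HSS} vertex-by-vertex, and average via linearity of expectation. The paper leaves all of this implicit (declaring the corollary ``immediate''), so your writeup is simply the detailed execution of their sketch, including the constant bookkeeping that collapses $(1-x)^{-k\Delta}$ to $1+\tfrac{1}{4\mathrm{e}\Delta k^3}$.
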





Before giving the proof of \Cref{theorem:stronger-fptas-exact-k}, we first collect some useful lemmas following the high-level idea in \cite{vishesh2022approximate}. Given an integer $t$, we first describe how to find the appropriate $\lambda^*$ such that $\abs{\alpha_H(\lambda^*)-t}\le 1/2$ (\Cref{lemma:proper-t-finding}). We provide an algorithm to approximate cumulants which is used in \Cref{lemma:proper-t-finding}. The $s$-th cumulant of a random variable $Y$ is defined in terms of the coefficients of the cumulant generating function $K_Y(z) = \log \=E\inbr{\mathrm{e}^{z Y}}$ (when this expectation exists in a neighborhood of $0$). In particular, the $s$-th cumulant is 
\[
\kappa_s(Y) = K_Y^{(s)}(0).
\]
We show that we can provide an additive approximation to cumulants.

\begin{lemma}\label{lemma:deterministic-cumulant}
    Fix $k\ge 2$, $\Delta\ge 3$. Let $H=(V,\+E)$ be a $k$-uniform hypergraph with maximum degree $\Delta$. Let $n = \abs{V}$. Fix any $\varepsilon \in \tp{0, \frac{1}{9k^5\Delta^2}}$. Let $\lambda_{c, \varepsilon}$ be defined as in \Cref{theorem:zero-freeness-his-lee-yang}. For any $\lambda \in [n^{-2}, \lambda_{c, \varepsilon}]$, let $I \sim \mu_{H, \lambda}$, $X = \abs{I}$ and $\kappa_s(X)$ be the $s$-th cumulant of $X$. Let $\eta \in (0, 1)$ be an error parameter. 
    There exists an algorithm that outputs an $\eta$-additive approximation to $\kappa_s(X)$ in time $O_{k, \Delta, \varepsilon, s}((n/\eta)^{O_{k, \Delta, \varepsilon}(1)})$. In particular, this provides an FPTAS for the multiplicative approximation of~$\=E[X]$ and $\-{Var}[X]$.
\end{lemma}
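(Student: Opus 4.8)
The plan is to obtain \Cref{lemma:deterministic-cumulant} by a single application of Barvinok's interpolation method~\cite{barvinok2016combinatorics,patel2017deterministic} on top of the zero-free region of \Cref{theorem:zero-freeness-his-lee-yang}, in exactly the way cumulants are handled in~\cite{vishesh2022approximate}. First I would reduce everything to approximating logarithmic derivatives at the point $1$. Since $I\sim\mu_{H,\lambda}$ with the uniform field $\lambda$, the moment generating function is $\=E[\mathrm{e}^{zX}] = Z^{\-{ly}}_H(\lambda\mathrm{e}^z)/Z^{\-{ly}}_H(\lambda)$, where $Z^{\-{ly}}_H(\mu)$ denotes the partition function with the constant field $\mu$ on every vertex. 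Set $h(w):=\log Z^{\-{ly}}_H(\lambda w)$. By \Cref{theorem:zero-freeness-his-lee-yang}, $w\mapsto Z^{\-{ly}}_H(\lambda w)$ is non-vanishing on $\Omega:=\lambda^{-1}\+D_\varepsilon$, a simply connected neighborhood of $[0,1]$ that always contains a \emph{fixed} open neighborhood $U$ of the segment $[0,1]$ of width $r_0:=\varepsilon/\lambda_{c,\varepsilon}=\Omega_{k,\Delta,\varepsilon}(1)$ (because $\lambda\le\lambda_{c,\varepsilon}$); since $Z^{\-{ly}}_H(0)=1$ this pins down a single-valued branch with $h(0)=0$. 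Then $K_X(z)=h(\mathrm{e}^z)-h(1)$, and by the higher-order chain rule
\[
\kappa_s(X)=\left.\frac{\mathrm{d}^s}{\mathrm{d}z^s}K_X(z)\right|_{z=0}=P_s\big(h^{(1)}(1),\dots,h^{(s)}(1)\big),
\]
for a universal integer polynomial $P_s$ (e.g.\ $\kappa_1=h'(1)$, $\kappa_2=h''(1)+h'(1)$). A Cauchy estimate on a fixed disk inside $U$, together with $\deg Z^{\-{ly}}_H\le n$ and the bound $|h|=O_{k,\Delta,\varepsilon}(n)$ on $U$ (the inverse roots of $Z^{\-{ly}}_H$ lie outside $\+D_\varepsilon$, so each factor $\log(1-\lambda w/\mu_i)$ is $O_{k,\Delta,\varepsilon}(1)$ on $U$), shows $|h^{(\ell)}(1)|=O_{k,\Delta,\varepsilon}(n)$; hence it suffices to produce $\eta'$-additive approximations to $h^{(1)}(1),\dots,h^{(s)}(1)$ with $\eta'=\eta/\poly(n)$.

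For the interpolation step I would invoke Barvinok's standard construction applied to the \emph{fixed} region $U$: there is a polynomial $\phi$ of degree $O_{k,\Delta,\varepsilon}(1)$ with $\phi(0)=0$, $\phi(1)=1$, and $\phi\big(\{|u|\le\rho\}\big)\subseteq U$ for some $\rho>1$ depending only on $k,\Delta,\varepsilon$. Then $g(u):=h(\phi(u))$ is analytic on $\{|u|\le\rho\}$, and since $Z^{\-{ly}}_H(\lambda\phi(u))$ is a polynomial in $u$ of degree $O_{k,\Delta,\varepsilon}(n)$ with constant term $1$, the Taylor coefficients $g_j$ of $g$ at $0$ satisfy $|g_j|=O_{k,\Delta,\varepsilon}(n)\,\rho^{-j}$. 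Truncating at $m:=O_{k,\Delta,\varepsilon}(\log(n/\eta))$ yields $\widetilde g(u)=\sum_{j\le m}g_j u^j$ with $|g-\widetilde g|\le\eta'$ uniformly on the closed disk of radius $(1+\rho)/2$; applying Cauchy's formula on a small fixed disk around $u=1$ gives $\eta'$-approximations to $g^{(\ell)}(1)$ for $\ell\le s$, and composing with the explicit $\phi$ recovers $\eta'$-approximations to $h^{(\ell)}(1)$, which we feed into $P_s$.

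What remains — and this is the one genuinely hypergraph-specific point — is to compute $g_1,\dots,g_m$ in time $(n/\eta)^{O_{k,\Delta,\varepsilon}(1)}$. Via the explicit composition with $\phi$ and Newton's identities, these are determined by the first $m$ Taylor coefficients of $h(w)=\log Z^{\-{ly}}_H(\lambda w)$ at $w=0$, i.e.\ by $\lambda^0,\dots,\lambda^m$ times $i_0(H),\dots,i_m(H)$, the numbers of independent sets of each small size. Computing $i_j(H)$ by brute force is $n^{\Theta(j)}$, which would only give a quasi-polynomial bound; instead I would use, as in~\cite{patel2017deterministic} (adapted to hypergraphs), that the coefficient of $\lambda^j$ in $\log Z^{\-{ly}}_H$ is additive over connected sub-hypergraphs: it equals a sum, over connected sub-hypergraphs of $H$ on at most $j$ vertices, of quantities each computable in $O_j(1)$ time. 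For a $k$-uniform hypergraph of maximum degree $\Delta$, the number of connected sub-hypergraphs on at most $m$ vertices is $n\cdot(\mathrm{e}\Delta k)^{O(m)}$ (a tree-counting bound as in \Cref{lemma:2-tree-number-bound}, applied to the vertex-adjacency graph), enumerable in time $n\cdot(\Delta k)^{O(m)}$. With $m=O_{k,\Delta,\varepsilon}(\log(n/\eta))$ this is $(n/\eta)^{O_{k,\Delta,\varepsilon}(1)}$; all arithmetic is over $\poly(\log(n/\eta))$-bit rationals, where the hypothesis $\lambda\ge n^{-2}$ ensures $\lambda^m$ stays representable.

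Assembling the three steps and evaluating $P_s$ yields the $\eta$-additive approximation to $\kappa_s(X)$ within the claimed time bound. For the ``in particular'' assertion, note $\-{Var}[X]=\kappa_2(X)=\Theta_{k,\Delta,\varepsilon}(\lambda n)\ge\Omega_{k,\Delta,\varepsilon}(n^{-1})$ by \Cref{corollary:univariate-variance-bound} (using $\lambda\ge n^{-2}$), and $\=E[X]=\kappa_1(X)=\sum_v\=P[v\in I]=\Omega_{k,\Delta,\varepsilon}(\lambda n)\ge\Omega_{k,\Delta,\varepsilon}(n^{-1})$ by the marginal lower bound of \Cref{HSS}; running the algorithm with additive precision $\eta\cdot\Theta_{k,\Delta,\varepsilon}(n^{-1})$ turns the additive guarantee into a relative $\eta$-approximation, still in time $(n/\eta)^{O_{k,\Delta,\varepsilon}(1)}$. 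The main obstacle I anticipate is precisely the polynomial-time (rather than quasi-polynomial) computation of the first $\Theta(\log(n/\eta))$ coefficients of $\log Z^{\-{ly}}_H$: one must carefully set up the additivity identity over connected sub-hypergraphs and the $(\mathrm{e}\Delta k)^{O(m)}$ enumeration bound in the $k$-uniform bounded-degree setting. Everything else is the black-box interpolation machinery already used to obtain the FPTASes for $Z^{\-{ly}}_H$ and $Z^{\-{fs}}_H$ themselves.
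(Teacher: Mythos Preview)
Your proposal is correct and is essentially the same approach as the paper's: both reduce the cumulant to derivatives of $\log Z^{\-{ly}}_H(\lambda\,\cdot)$ at $1$, exploit the zero-free strip of \Cref{theorem:zero-freeness-his-lee-yang} to truncate a convergent series at depth $m=O_{k,\Delta,\varepsilon}(\log(n/\eta))$, and then compute the first $m$ power sums / $\log Z$ coefficients in time $(n/\eta)^{O_{k,\Delta,\varepsilon}(1)}$ via the Patel--Regts/Liu connected-substructure enumeration; the ``in particular'' clause is handled identically via the lower bounds on $\=E[X]$ and $\-{Var}[X]$.

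The only cosmetic difference is in how the truncation is set up. The paper's sketch factors $Z(x)=\prod_j(1-r_jx)$ and expands
\[
\frac{\mathrm{d}^s}{\mathrm{d}x^s}\log Z(x)=-(s-1)!\sum_j r_j^s\sum_{a\ge 0}\binom{s-1+a}{s-1}(r_jx)^a
\]
directly, truncating the inner geometric series. You instead pass through Barvinok's polynomial change of variables $\phi$ and a Cauchy estimate on a disk around $u=1$. Both routes reduce to the same computational primitive (the first $m$ coefficients of $\log Z$), so this is a matter of presentation, not substance. Your remark that $\lambda\ge n^{-2}$ is needed for bit complexity is unnecessary; as the paper's remark after the lemma indicates, the lower bound on $\lambda$ is used only to convert additive to multiplicative error in the ``in particular'' part.
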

\begin{remark}
    This is an analog of \cite[Theorem 1.7]{vishesh2022approximate}. However, they do not need a lower bound of $\lambda \ge n^{-2}$ because they are able to handle small $\lambda$'s differently by a cluster expansion. And for large $\lambda$'s, they use a polynomial interpolation. So, their algorithm does not require a lower bound of $\lambda$ and runs in linear time in $n$. We claim that the regime we considered is sufficient to prove \Cref{theorem:fptas-exact-k}.  We will see this fact in the proof of \Cref{lemma:proper-t-finding}.
\end{remark}
Now, we describe the high-level ideas of the proof. First, we can express cumulants by the derivatives of the log-partition function. By the standard tool in \cite{barvinok2016combinatorics} and the fundamental theorem of algebra, one can express the derivatives of the log-partition function by the combination of inverse power series. For example, let $Z(x) = \prod_{i=1}^n \tp{1 - r_i x}$ where $r_1, r_2, \ldots, r_n$ donates the inverse roots of $Z$. So we have that,
\[
\frac{d^s \log Z(x)}{d x^s} = -(s-1)!\sum\limits_{i=1}^n r_i^s \sum_{a=0}^{\infty} \binom{s-1+a}{s-1} (r_i x)^a.
\]
Then we can truncate the series and calculate the truncated series using the approximate inverse power series derived by \cite{patel2017deterministic,Liu2017TheIP}. This finishes the proof of the additive approximation of the cumulant. For the FPTAS for $\=E[X]$ and $\-{Var}[X]$, we need to convert the additive error to a multiplicative error. So, it suffices to show that $\=E[X]=\Omega_{k, \Delta, \varepsilon}(\lambda n)$ and $\-{Var}=\Omega_{k, \Delta, \varepsilon}(\lambda n)$. To see this, note that $\lambda \ge n^{-2}$, if we want a $\eta'$-relative approximation, we can set $\eta = \eta' n^{-1}$. And these two lower bounds follow from \Cref{corollary:occupancy-fraction-lower-bound} and  \Cref{corollary:univariate-variance-bound}.

Next, we present a lemma for finding an appropriate external field.
\begin{lemma}\label{lemma:proper-t-finding}
Fix $k\ge 2$, $\Delta\ge 3$. Let $H = (V, \+E)$ be a $k$-uniform hypergraph with maximum degree $\Delta$. Let $n = \abs{V}$. 
Fix any $\varepsilon\in\left(0,\frac{1}{9k^5\Delta^2}\right)$. Let $\lambda_{c, \varepsilon}$ be defined as in \Cref{theorem:zero-freeness-his-lee-yang}. Then there exists a constant $\zeta = \zeta_{k, \Delta, \varepsilon}$. For $t\in \=Z_{\ge 1}$ and $t\le n \alpha_H(\lambda_{c, \varepsilon})$, there exists an integer $s \in \{1, 2, \ldots, 2\zeta n\}$ such that
\[
\abs{n \alpha_H(s/(2 \zeta n)) - t} \le 1/2.
\]
Furthermore, for a given $t$, there exists an algorithm to find such $s$ that runs in time $O_{k, \Delta, \varepsilon}(n^{O_{k, \Delta, \varepsilon}(1)})$.
\end{lemma}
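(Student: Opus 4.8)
The plan is to study the occupancy count $g(\lambda)\defeq n\alpha_H(\lambda)=\=E_{I\sim\mu_{H,\lambda}}[\abs{I}]$ as a function of the real external field $\lambda\ge 0$, and to use its monotonicity together with a uniform Lipschitz bound (coming from the variance estimate of \Cref{corollary:univariate-variance-bound}) so that a sufficiently fine grid of $\lambda$-values in $[0,\lambda_{c,\varepsilon}]$ hits every integer target $t\le g(\lambda_{c,\varepsilon})$ to within $1/2$. The algorithmic part then amounts to locating the right grid point by approximately evaluating $g$ via the FPTAS for $\=E[\abs{I}]$ furnished by \Cref{lemma:deterministic-cumulant}.

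First I would record the analytic properties of $g$. Writing $Z(\lambda)=Z^\-{ly}_H(\lambda)$ for the univariate independence polynomial, one has $g(\lambda)=\lambda(\log Z(\lambda))'$ and the standard identity $\lambda g'(\lambda)=\-{Var}_{I\sim\mu_{H,\lambda}}[\abs{I}]$, obtained by differentiating $g(\lambda)=\tp{\sum_\sigma\abs\sigma\lambda^{\abs\sigma}}/Z(\lambda)$. Since $Z(\lambda)>0$ for every real $\lambda\ge 0$, $g$ is real-analytic on $[0,\infty)$ with $g(0)=0$; and $g'>0$ because the variance is nonzero (as $k\ge 2$, both $\emptyset$ and all singletons are independent sets, of different sizes), so $g$ is continuous and strictly increasing. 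By \Cref{corollary:univariate-variance-bound}, $\-{Var}_{I\sim\mu_{H,\lambda}}[\abs{I}]=\Theta_{k,\Delta,\varepsilon}(\lambda n)$ on $(0,\lambda_{c,\varepsilon}]$, so there is a constant $C=C_{k,\Delta,\varepsilon}\ge 1$ with $g'(\lambda)\le Cn$ on $[0,\lambda_{c,\varepsilon}]$. I would then fix $\zeta=\zeta_{k,\Delta,\varepsilon}$ to be a constant large enough that the grid $\Lambda\defeq\{s/(2\zeta n):s=1,\dots,2\zeta n\}$ (rescaled by the known constant $\lambda_{c,\varepsilon}$ when $\lambda_{c,\varepsilon}>1$) covers $[0,\lambda_{c,\varepsilon}]$ with spacing at most $1/(2\zeta n)\le 1/(4Cn)$.

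For the existence statement, given an integer $t$ with $1\le t\le g(\lambda_{c,\varepsilon})$, continuity and strict monotonicity of $g$ yield a unique $\lambda_t\in(0,\lambda_{c,\varepsilon}]$ with $g(\lambda_t)=t$; the Lipschitz bound gives $t=g(\lambda_t)-g(0)\le Cn\lambda_t$, hence $\lambda_t\ge 1/(Cn)>1/(2\zeta n)$, while also $\lambda_t\le\lambda_{c,\varepsilon}$, so some grid point $\lambda_s\in\Lambda$ satisfies $\abs{\lambda_s-\lambda_t}\le 1/(2\zeta n)\le 1/(4Cn)$ and therefore $\abs{n\alpha_H(\lambda_s)-t}=\abs{g(\lambda_s)-g(\lambda_t)}\le Cn\cdot\tfrac{1}{4Cn}=\tfrac14\le\tfrac12$. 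For the algorithm, I would run the FPTAS of \Cref{lemma:deterministic-cumulant} for $\=E[\abs{I}]$ at each $\lambda_s\in\Lambda$ (note $\lambda_s\ge 1/(2\zeta n)\ge n^{-2}$ once $n$ exceeds a constant, the finitely many small $n$ being handled by brute force) with relative error $\eta=1/(8n)$, producing $\hat g(\lambda_s)$ with $\abs{\hat g(\lambda_s)-g(\lambda_s)}\le g(\lambda_s)/(8n)\le 1/8$ in time $(n/\eta)^{O_{k,\Delta,\varepsilon}(1)}=n^{O_{k,\Delta,\varepsilon}(1)}$; then scan all $O_{k,\Delta,\varepsilon}(n)$ grid points and output any $s$ with $\abs{\hat g(\lambda_s)-t}\le 3/8$. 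Such an $s$ exists since the $\lambda_s$ above has $\abs{\hat g(\lambda_s)-t}\le 1/4+1/8=3/8$, and any output $s$ obeys $\abs{n\alpha_H(\lambda_s)-t}\le 3/8+1/8=1/2$. The total running time is $n\cdot n^{O_{k,\Delta,\varepsilon}(1)}=n^{O_{k,\Delta,\varepsilon}(1)}$.

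I do not anticipate a serious obstacle: everything reduces to the identity $\lambda g'(\lambda)=\-{Var}[\abs{I}]$, the variance bound already established, and the FPTAS already cited. The only points requiring care are making the literal grid $\{s/(2\zeta n)\}$ actually reach up to $\lambda_{c,\varepsilon}$ (it tops out at $1$, so one absorbs $\lambda_{c,\varepsilon}$ into the choice of $\zeta$ or the grid scaling, which is harmless since $\lambda_{c,\varepsilon}$ is a constant determined by $k,\Delta,\varepsilon$), and propagating the multiplicative-to-additive error conversion through the search so that the final guarantee is exactly $1/2$ rather than something slightly larger.
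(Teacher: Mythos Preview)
Your proposal is correct and follows essentially the same approach as the paper: both use the identity $\lambda\,\partial_\lambda(n\alpha_H(\lambda))=\-{Var}[\abs{I}]$ together with the variance bound $\-{Var}[\abs{I}]=\Theta_{k,\Delta,\varepsilon}(\lambda n)$ from \Cref{corollary:univariate-variance-bound} to conclude that $n\alpha_H$ is monotone with bounded derivative, so a $\Theta(1/n)$ grid suffices, and then both locate the grid point by calling the additive/multiplicative approximation of the first cumulant from \Cref{lemma:deterministic-cumulant}. The paper is terser (it sets $\zeta=\max_\lambda\partial_\lambda\alpha_H(\lambda)$ directly and uses additive error $1/4$ rather than your relative $1/(8n)$), and like you it does not explicitly handle the cosmetic issue that the grid $\{s/(2\zeta n)\}$ literally tops out at $\lambda=1$ regardless of $\zeta$, but the content of the argument is the same.
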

\begin{proof}
For any $\lambda \in (0, \lambda_{c, \varepsilon}]$, let $I_{\lambda} \sim \mu_{H, \lambda}$, $X_{\lambda}=\abs{I_{\lambda}}$.
By a standard calculation, we have that $\alpha_H(\lambda) = \frac{1}{\abs{V}} \frac{\lambda Z_H'(\lambda)}{Z_H(\lambda)}$. By \Cref{corollary:occupancy-fraction-lower-bound}, it holds that
\[
\frac{\partial}{\partial \lambda}\alpha_H(\lambda) = \frac{1}{\lambda n} \-{Var}[X_{\lambda}] = \Theta_{k, \Delta, \varepsilon}(1). 
\] Therefore, we set $\zeta = \max_{\lambda \in (0, \lambda_{c, \varepsilon}]}\frac{\partial}{\partial \lambda}\alpha_H(\lambda) = \Theta_{k, \Delta, \varepsilon}(1)$. By the above calculation, we know that $\alpha_H(\lambda)$ is nondecreasing. And we see that $n\alpha_H(\lambda)$ increases at most $1/2$ over an internal length $1/(2\zeta n)$. So for any $t\le n \alpha$, there must be an integer $s \in \{1, 2, \ldots, 2 \zeta n\}$ such that $\abs{n \alpha_H(s/(2\zeta n)) - t}\le 1/2$. 

Note $n\alpha_H(\lambda) = \=E_\lambda(X_\lambda)$ and recall that for $\lambda \in (n^{-2}, \lambda_{c, \varepsilon}]$, we can use \Cref{lemma:deterministic-cumulant} to compute $\=E_\lambda(X_\lambda)$ with $1/4$ additive error by setting $\eta = 1/4$. So we can use this method to find such $s$.
\end{proof}

The next lemma is an analog of \cite[Lemma 4.3]{vishesh2022approximate}, which shows that for low Fourier phases, $\=E\inbr{\mathrm{e}^{itY}}$ with small $t$'s, we can approximate them by the interpolation method.
\begin{lemma}\label{lemma:low-Fourier-phases-approx}
    Fix $k\ge2$, $\Delta\ge 3$ and a parameter $C \ge 1$. Let $H=(V, \+E)$ be a $k$-uniform hypergraph with maximum degree $\Delta$. Let $n = \abs{V}$. Let $\varepsilon$, $\lambda_{c, \varepsilon}$ be defined as in \Cref{theorem:zero-freeness-his-lee-yang}. For any $\lambda \in [n^{-2}, \lambda_{c, \varepsilon}]$, let $I\sim \mu_{H, \lambda}$, $X=\abs{I}$, $\bar{\mu}=\=E[X]$, $\sigma^2 = \-{Var}[X]$ and $Y=(X-\bar{\mu})/\sigma$. There exists a deterministic algorithm that, on input $H$, $\lambda$, an error parameter $\eta \in (0, 1/\sqrt{n})$, and $t\in \inbr{-C\sqrt{\log 1/\eta}, C\sqrt{\log 1/\eta}}$ outputs an $\eta$-relative, $\eta$-additive approximation to $\=E\inbr{\mathrm{e}^{i t Y}}$ in time $(n/\eta)^{O_{k, \Delta, \varepsilon, C}(1)}$.
\end{lemma}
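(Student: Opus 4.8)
The plan is to reduce $\=E[\mathrm{e}^{itY}]$ to an evaluation of the univariate independence polynomial at two nearby complex activities and then run Barvinok's interpolation method inside the zero-free region of \Cref{theorem:zero-freeness-his-lee-yang}. Write $Z(z)$ for $Z_H^{\-{ly}}$ evaluated at the constant activity vector $z\in\=C$; it is a polynomial in $z$ of degree $N\le n$ (the size of a largest independent set) with $Z(0)=1$. Since $Y=(X-\bar\mu)/\sigma$ with $X=\abs I$, we have $\=E[\mathrm{e}^{itY}]=\mathrm{e}^{-it\bar\mu/\sigma}\,\=E[\mathrm{e}^{itX/\sigma}]$ and, expanding over independent sets, $\=E[\mathrm{e}^{itX/\sigma}]=Z(\lambda\mathrm{e}^{it/\sigma})/Z(\lambda)$. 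So it suffices to (i) approximate $\bar\mu$ and $\sigma$ to relative error $(n/\eta)^{-O_{k,\Delta,\varepsilon}(1)}$, which \Cref{lemma:deterministic-cumulant} provides (it is an FPTAS for $\=E[X]$ and $\-{Var}[X]$, valid on $\lambda\in[n^{-2},\lambda_{c,\varepsilon}]$), and (ii) approximate the ratio $Z(\lambda\mathrm{e}^{it/\sigma})/Z(\lambda)$.

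\textbf{Splitting into two regimes.} Fix a constant $c_1=c_1(k,\Delta,\varepsilon,C)>0$ and set $\eta_0=\mathrm{e}^{-c_1 n}$, with $c_1$ small enough that whenever $\eta\ge\eta_0$ and $\abs t\le C\sqrt{\log(1/\eta)}$ one has $\lambda\abs t/\sigma\le\varepsilon/2$; this is legitimate because \Cref{corollary:univariate-variance-bound} gives $\sigma^2=\Theta_{k,\Delta,\varepsilon}(\lambda n)$ and $\lambda\le\lambda_{c,\varepsilon}=O_{k,\Delta,\varepsilon}(1)$, so $\lambda\abs t/\sigma=O_{k,\Delta,\varepsilon}\bigl(C\sqrt{\lambda\log(1/\eta)/n}\bigr)\le\varepsilon/2$ once $\log(1/\eta)\le c_1 n$. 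If $\eta<\eta_0$, then $(1/\eta)^{O(1/c_1)}\ge 2^{\Omega(n)}$, so in the allotted time we may enumerate all $2^{n}$ subsets of $V$, compute $Z(\lambda)$, $\bar\mu$ and $\sigma^{2}$ exactly (up to the input bit-precision), evaluate $\sum_{\omega\in\+I(H)}\lambda^{\abs\omega}\mathrm{e}^{it\abs\omega/\sigma}$ to additive error $\eta$ term by term, divide by $Z(\lambda)$ and multiply by $\mathrm{e}^{-it\bar\mu/\sigma}$; this costs $2^{O(n)}\le(1/\eta)^{O_{k,\Delta,\varepsilon,C}(1)}$. So henceforth assume $\eta\ge\eta_0$.

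\textbf{The main regime.} Put $w=\lambda\mathrm{e}^{it/\sigma}$, so $\abs w=\lambda$ and $\dist\bigl(w,[0,\lambda]\bigr)\le\lambda\abs{\sin(t/\sigma)}\le\lambda\abs t/\sigma\le\varepsilon/2$; in fact every point of the segment $\{uw:u\in[0,1]\}$ lies within $\varepsilon/2$ of $[0,\lambda]\subseteq[0,\lambda_{c,\varepsilon})$, hence inside the zero-free region $\+D_\varepsilon$ of \Cref{theorem:zero-freeness-his-lee-yang}. Passing to the $u$-plane, $g(u):=Z(uw)$ has $g(0)=1$, degree $\le n$, and is zero-free on a slab of width at least $(\varepsilon/2)/\abs w\ge(\varepsilon/2)/\lambda_{c,\varepsilon}=\Omega_{k,\Delta,\varepsilon}(1)$ around $[0,1]$; this is exactly Barvinok's setup~\cite{barvinok2016combinatorics,patel2017deterministic,Liu2017TheIP}. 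Truncating the Taylor expansion of $\log g$ at order $m=O_{k,\Delta,\varepsilon}(\log(n/\eta))$ gives an additive $(\eta/4)$-approximation of $\log g(1)=\log Z(w)$; the first $m$ Taylor coefficients of $\log Z(\cdot)$ are, by the connected-subgraph (cluster) expansion, sums over connected sub-hypergraphs of $H$ with at most $m$ hyperedges through a fixed vertex, of which there are $\le n\,(\mathrm{e}\Delta k)^{O(m)}$, so they are computed in time $n\,c^{m}=(n/\eta)^{O_{k,\Delta,\varepsilon}(1)}$ following Patel--Regts. The identical computation with $w$ replaced by the real number $\lambda$ gives $\log Z(\lambda)$. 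Combining, $\=E[\mathrm{e}^{itY}]=\exp\bigl(-it\bar\mu/\sigma+\log Z(w)-\log Z(\lambda)\bigr)$; the errors of $\bar\mu,\sigma$ from \Cref{lemma:deterministic-cumulant} propagate to additive error $o(\eta)$ in the phase $t\bar\mu/\sigma$ (as $\bar\mu/\sigma=O(\sqrt{\lambda n})$, $\abs t=O(\sqrt{\log(1/\eta)})$) and, through $w$, to additive error $o(\eta)$ in $\log Z(w)$ (as $\abs{(\log Z)'}=O(n/\varepsilon)$ on the region). Exponentiating yields $\log\=E[\mathrm{e}^{itY}]$ to additive error $\le\eta/2$, hence $\=E[\mathrm{e}^{itY}]$ to relative error $\le\eta$ and --- since $\abs{\=E[\mathrm{e}^{itY}]}\le1$ --- additive error $\le\eta$, in total time $(n/\eta)^{O_{k,\Delta,\varepsilon,C}(1)}$.

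\textbf{The main obstacle.} The one genuinely delicate step is the geometric claim in the third paragraph: that the perturbed activity $\lambda\mathrm{e}^{it/\sigma}$ and the whole interpolation segment remain inside $\+D_\varepsilon$ with a constant-width margin. This is precisely where the variance lower bound $\sigma^{2}=\Omega_{k,\Delta,\varepsilon}(\lambda n)$ of \Cref{corollary:univariate-variance-bound} is indispensable --- it makes $\abs t/\sigma$ small throughout the Fourier range $\abs t\le C\sqrt{\log(1/\eta)}$ --- and it is also the reason a brute-force regime is unavoidable once $\eta$ is exponentially small in $n$. The rest (bounding the number of connected sub-hypergraphs, propagating errors through $\mathrm{e}^{-it\bar\mu/\sigma}$, and the exact computations in the brute-force regime) is routine and parallels the proof of \Cref{lemma:deterministic-cumulant}.
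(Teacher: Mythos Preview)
Your proposal is correct and follows essentially the same route as the paper: write $\=E[\mathrm{e}^{itY}]=\mathrm{e}^{-it\bar\mu/\sigma}\,Z(\lambda \mathrm{e}^{it/\sigma})/Z(\lambda)$, approximate $\bar\mu,\sigma$ via \Cref{lemma:deterministic-cumulant}, run Barvinok--Patel--Regts interpolation once the perturbed activity is shown to lie in a zero-free region, and fall back to brute-force enumeration when $\eta$ is exponentially small in $n$. The only organisational difference is that the paper splits on $\lambda\le 1/(10\Delta)$ versus $\lambda>1/(10\Delta)$ and invokes the disk zero-freeness of~\cite{galvin2024zeroes} for small $\lambda$, whereas your single $\eta$-threshold together with $\sigma^2=\Theta_{k,\Delta,\varepsilon}(\lambda n)$ keeps $\lambda\mathrm{e}^{it/\sigma}$ inside $\+D_\varepsilon$ of \Cref{theorem:zero-freeness-his-lee-yang} throughout the main regime, making your argument slightly more self-contained.
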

Now we describe the proof of \Cref{lemma:low-Fourier-phases-approx}. We first express the characteristic function by the partition function.
\[
\=E\inbr{\mathrm{e}^{itY}} = \mathrm{e}^{-it\bar{\mu}/\sigma} \cdot \=E\inbr{\mathrm{e}^{itX/\sigma}} = \mathrm{e}^{-it\bar{\mu}/\sigma}\cdot \frac{Z_H(\lambda \mathrm{e}^{it/\sigma})}{Z_H(\lambda)}.
\]
So, if we can approximate the partition function, $\bar{\mu}$ and $\sigma$, then we can approximate $\=E\inbr{\mathrm{e}^{itY}}$. We can use \Cref{lemma:deterministic-cumulant} to approximate $\bar{\mu}$ and $\sigma$. For the partition function, we use a polynomial interpolation.

For $\lambda\le \frac{1}{10\Delta}$, for any $t\in \=R$, $\lambda \mathrm{e}^{it/\sigma}$ is in the zero-free region due to \cite{galvin2024zeroes}.

For $\lambda > \frac{1}{10\Delta}$, by \Cref{corollary:univariate-variance-bound}, we have $\sigma^2 = \Omega_{k, \Delta, \varepsilon}(n)$. We may assume that for $\abs{t}\le C\sqrt{\log 1/\eta}$, $\lambda \mathrm{e}^{it/\sigma}$ is in $\+D_{\varepsilon}$, otherwise it follows that $\eta^{-1} = \exp(\Omega_{k, \Delta, \varepsilon, C}(n))$, so that the exhaustive enumeration runs in the claimed time.

Finally, we are ready to give a proof sketch of \Cref{theorem:stronger-fptas-exact-k}. Given all the ingredients of zero-freeness and local CLT, our proof closely follows that of~\cite{vishesh2022approximate} and we will not repeat formally.

First, we use \Cref{lemma:proper-t-finding} to find $\lambda^*$ such that $\abs{n\alpha_H(\lambda^*) - t} \le 1/2$. Let $\bar{\mu} = \=E[X]$, $\sigma^2 = \-{Var}[X]$, note that we only need to consider the case satisfying $\sigma > c > 1$ where $c$ is a constant. 
To see this, we argue that if $\sigma\le c$, then $t$ must also be an absolute constant, and we can simply enumerate all the independent sets of size $t$ using brute-force in time $O(n^t)$. By \Cref{corollary:univariate-variance-bound}, we have $\sigma^2 = \Theta_{k, \Delta, \varepsilon}(\lambda n)$, so there exists a constant $\zeta_{k, \Delta, \varepsilon}$ such that $\sigma \le c$ means $\lambda \le \zeta_{k, \Delta, \varepsilon}/n$. 
Also, by the proof of \Cref{lemma:proper-t-finding}, we know that by changing $\lambda$ from $0$ to $\zeta_{k, \Delta, \varepsilon}/n$, $n \alpha_H(\lambda)$ only changes by a constant. Therefore, the corresponding $t$'s satisfy $t \le \theta_{k, \Delta, \varepsilon}$ for some  constant $\theta_{k, \Delta, \varepsilon}$. As such, we can use brute-force enumeration to calculate $i_t(H)$ in time $O(n^t)$. 

Henceforth, we assume $\sigma > c > 1$.
By \Cref{theorem:clt-part2}, \Cref{corollary:univariate-variance-bound} and the fact that $\sigma > c$, we have $\=P[X = t] = \Omega_{k,\Delta,\varepsilon}(1/\sqrt{\lambda^* n}) = \Omega_{k,\Delta,\varepsilon}(1/\sigma)$. Recall \eqref{equation:exact-t-prob}, it suffices to approximate $\=P[X = t]$ and $Z_H(\lambda^*)$ in order to approximate $i_t(H)$. Let $Y = (X - \bar{\mu})/\sigma$ and $y = (t - \bar{\mu})/\sigma$. By the Fourier inversion formula, 
\[
\=P[X = t] = \=P[Y=y] = \frac{1}{2\pi\sigma}\int_{-\pi\sigma}^{\pi\sigma} \=E\inbr{\mathrm{e}^{itY}} \mathrm{e}^{-ity} dt.
\]
By \Cref{lemma:characterized-func-bound}, let $\gamma = \min(\pi \sigma, C_{k,\Delta,\varepsilon} \sqrt{\log 1/\eta})$ where $C_{k,\Delta,\varepsilon}$ is a sufficiently large constant depending on $k, \Delta, \varepsilon$, we know that the contribution of high Fourier phases is negligible compared to $\=P[X = t] = \Omega_{k,\Delta,\varepsilon}(1/\sqrt{\lambda^* n}) = \Omega_{k,\Delta,\varepsilon}(1/\sigma)$. So we know that,
\[
\=P[Y=y] \approx  \frac{1}{2\pi\sigma}\int_{-\gamma}^{\gamma} \=E\inbr{\mathrm{e}^{itY}} \mathrm{e}^{-ity} dt.
\]
Then we replace the integration with summation. For $\zeta = o(\eta)$, we have that
\[
\=P[Y=y] \approx  \frac{\zeta}{2\pi\sigma}\sum\limits_{t = -\gamma \zeta^{-1}}^{\gamma \zeta^{-1}} \=E\inbr{\mathrm{e}^{it\zeta^{-1} Y}} \mathrm{e}^{-i\zeta^{-1}ty}.
\]
For the characteristic function, we use \Cref{lemma:low-Fourier-phases-approx} for approximation. To approximate $\bar{\mu}$ and $\sigma$, we apply \Cref{lemma:deterministic-cumulant}. These allow us to approximate $\=P[Y=y]$. Recalling \eqref{equation:exact-t-prob}, the missing part to approximate $i_t(H)$ is $Z_H(\lambda^*)$, which can be derived by the polynomial interpolation.
\end{document}